\renewcommand{\Pr}[1]{\mathbf{Pr}[#1]}
\newcommand{\PrBig}[1]{\mathbf{Pr}\Big[#1\Big]}
\newcommand{\Var}[1]{\mathbf{Var}\left(#1\right)}
\DeclareMathOperator{\Tr}{Tr}
\DeclareMathOperator{\argmin}{argmin}
\newcommand{\period}{\,.}
\newcommand{\eps}{\varepsilon}
\newcommand{\R}{\mathbb{R}}
\newcommand{\N}{\mathbb{N}}
\newcommand{\poly}{\text{poly}}
\newcommand{\nnz}{\text{nnz}}
\newcommand{\calL}{\mathcal{L}}
\newcommand{\tensorvec}{\text{vec}}
\newcommand{\calD}{\mathcal{D}}
\newcommand{\calF}{\mathcal{F}}
\newcommand{\calE}{\mathcal{E}}
\newcommand{\calS}{\mathcal{S}}
\newcommand{\calG}{\mathcal{G}}
\newcommand{\calQ}{\mathcal{Q}}
\newcommand{\calT}{\mathcal{T}}
\newcommand{\calI}{\mathcal{I}}
\newcommand{\tw}{\text{tw}}
\newtheorem{theorem}{Theorem}
\newtheorem{definition}{Definition}
\newtheorem{lemma}{Lemma}
\newtheorem{problem}{Problem}
\newtheorem{remark}{Remark}
\newtheorem{conjecture}{Conjecture}
\newcommand{\AM}[1]{\textcolor{blue}{#1}}
\begin{document}

\begin{titlepage}

\title{Near-Linear Time and Fixed-Parameter Tractable Algorithms for Tensor Decompositions}
\author{Arvind V. Mahankali\\Stanford University\\amahanka@stanford.edu \and David P. Woodruff\\Carnegie Mellon University\\dwoodruf@cs.cmu.edu \and Ziyu Zhang\\Massachusetts Institute of Technology\\sylziyuz@csail.mit.edu}
\date{}

\maketitle

\begin{abstract}%
We study low rank approximation of tensors, focusing on the Tensor Train and Tucker decompositions, as well as approximations with tree tensor networks and general tensor networks. As suggested by hardness results also shown in this work, obtaining $(1+\varepsilon)-$approximation algorithms for rank $k$ tensor train and Tucker decompositions efficiently may be computationally hard for these problems. Therefore, we propose different algorithms that respectively satisfy some of the objectives above while violating some others within a bound, known as bicriteria algorithms. On the one hand, for rank-$k$ tensor train decomposition for tensors with $q$ modes, we give a $(1 + \varepsilon)$-approximation algorithm with a small bicriteria rank ($O(qk/\varepsilon)$ up to logarithmic factors) and $O(q \cdot \text{nnz}(A))$ running time, up to lower order terms. Here $\text{nnz}(A)$ denotes the number of non-zero entries in the input tensor $A$. We also show how to convert the algorithm of \cite{huber2017randomized} into a relative error approximation algorithm, but their algorithm necessarily has a running time of $O(qr^2 \cdot \text{nnz}(A)) + n \cdot \text{poly}(qk/\varepsilon)$ when converted to a $(1 + \varepsilon)$-approximation algorithm with bicriteria rank $r$. Thus, the running time of our algorithm is better by at least a $k^2$ factor. To the best of our knowledge, our work is the first to achieve a near-input-sparsity time relative error approximation algorithm for tensor train decomposition. Our key technique is a method for efficiently obtaining subspace embeddings for a matrix which is the flattening of a Tensor Train of $q$ tensors - the number of rows in the subspace embeddings is polynomial in $q$, thus avoiding the curse of dimensionality. We extend our algorithm to tree tensor networks and tensor networks on arbitrary graphs.
Another way of coping with intractability is by looking at fixed-parameter tractable (FPT) algorithms. We give FPT algorithms for the tensor train, Tucker, and Canonical Polyadic (CP) decompositions, which are simpler than the FPT algorithms of \cite{swz19_tensor_low_rank}, since our algorithms do not make use of polynomial system solvers. Our technique of using an exponential number of Gaussian subspace embeddings with exactly $k$ rows (and thus exponentially small success probability) may be of independent interest.
\end{abstract}

\setcounter{page}{0}
\thispagestyle{empty}
\end{titlepage}

\section{Introduction}

Data dimensionality reduction has played an important role in numerical linear algebra tasks, such as regression and low rank approximation. Typically in such problems, one has as input a matrix $A \in \mathbb{R}^{n \times d}$, and applies a random linear map $S \in \mathbb{R}^{k \times n}$ for $A$, obtaining $S \cdot A$, where the dimension $k$ is much smaller than $n$. This provides a compression of $A$, and often still retains many of its useful properties, e.g., the row span of $SA$ includes a good low rank approximation to $A$. We refer the reader to \cite{woodruff14_sketching_as_a_tool} for an extensive overview of the applications of sketching for dimensionality reduction.

We investigate the role of randomized dimensionality reduction for tasks involving \textit{tensors} --- in many settings, it is more effective to represent data as a multidimensional tensor $T \in \R^{n \times n \times \ldots \times n}$, rather than as a matrix. This work proposes new algorithms for \textit{tensor low-rank approximation}. In cases where data is represented as a very large matrix $A \in \R^{n \times d}$, significant space savings can be achieved using a low-rank approximation of $A$ --- concretely, if we find matrices $U \in \R^{n \times k}$ and $V \in \R^{k \times d}$ such that $UV \approx A$ and $k \ll n, d$, the space required to represent the information in $A$ is $nk + kd$, which is much less than the space required for $A$ itself. Similarly, significant space savings can be achieved with low rank tensor decompositions of a $q$-mode tensor $A \in \R^{n \times \ldots \times n}$ with $n$ and $q$ large. For instance, a Tensor Train decomposition of $A$ with rank $k$ requires $O(qnk^2)$ parameters. If $k$ is small, this can be much fewer than the $n^q$ parameters required to store $A$.

There are multiple generalizations of the concept of rank to tensors with more modes. Here, we use \textit{mode} to refer to the number of indices required to access an entry of a tensor --- thus, a matrix $A \in \R^{n \times n}$ has 2 modes, and a tensor $A \in \R^{n \times n \times n}$ has 3 modes. We frequently use the letter $q$ to denote the number of modes that an input tensor $A$ has. We investigate Tensor Train, Tucker, and CP decompositions, as well as general tensor networks in this work. These decompositions are introduced shortly below.

Tensor low rank approximations have the potential to allow us to handle high-dimensional data in many applications. Tucker and Tensor Train decompositions have been applied in various fields such as simulation data and robotics (\cite{tt_robotics}), machine learning model compression (\cite{tt_cnn_compress}), scientific computing (\cite{tucker_acs_dfc_app}), and machine learning theory (\cite{tt_bayesian_filter}). Tensor networks are especially suited for solving high-dimensional scientific computing problems~\cite{Khoromskij2012, Dolgov2012, Kazeev2012}. In particular, they have found extensive use in quantum computing (\cite{tt_quantum}). They have also been used recently to solve parametric PDEs~(\cite{Dolgov2015, Dektor2021}), Hamilton-Jacobi-Bellman PDE~(\cite{Horowitz2014, Gorodetsky2018}), and others~(\cite{Khoromskij2012, Kormann2015, Lubasch2018}).

\medskip

The CP decomposition is one notion of low-rank approximation of tensors that we study in this paper. To define the CP decomposition, we first define the outer product of two tensors: \footnotemark \footnotetext{We refer the interested reader to \cite{sdfhpf17_tensor_decomposition_survey} for an overview of tensor decompositions.}

\begin{definition}[Outer Product]
Let $v_1, v_2, \ldots, v_q \in \R^n$. Then, the \textbf{outer product} of $v_1, \ldots, v_q$, denoted $v_1 \otimes \ldots \otimes v_q \in \R^{n \times n \ldots \times n}$, is the $q$-mode tensor whose entry in the index $(i_1, \ldots, i_q)$ is $v_1(i_1) \ldots v_q(i_q)$. More generally, given two tensors $A\in \R^{n_1\times \cdots \times n_p}, B\in \R^{m_1\times \cdots m_q}$, the \textbf{tensor outer product} of $A$ and $B$, denoted $A\otimes B$, is a tensor with dimensions $(n_1, \cdots, n_p, m_1, \cdots, m_q)$. The $(i_1, \cdots, i_p, j_1, \cdots, j_q)^{th}$ entry of $A\otimes B$ is $A_{i_1, \cdots, i_p}B_{j_1, \cdots, j_q}$.
\end{definition}

We can now define the rank-$k$ CP decomposition:

\begin{definition}[Tensor Rank and CP Decomposition] \label{def:tensor_rank_cp_decomposition}
Let $A \in \R^{n \times n \times \ldots \times n}$ be a $q$-mode tensor. We say $A$ has \textbf{CP rank} $k$ if there matrices $U^1, \cdots, U^q \in \R^{n \times k}$ such that $A = \sum_{i = 1}^k U_i^1 \otimes \cdots \otimes U_i^q$. 

For a general tensor $A \in \R^{n \times \ldots \times n}$, rank-$k$ CP decomposition is the problem of finding a tensor $B \in \R^{n \times \ldots \times n}$ with CP rank at most $k$ such that $\|A - B\|_F$ is minimized. In other words, rank-$k$ CP decomposition is the problem of finding
\begin{equation}
\begin{split}
U^1_*, U^2_*, \ldots, U^q_*
= \argmin_{U^1, U^2, \ldots, U^q \in \R^{n \times k}} \Big\| \sum_{i = 1}^k U^1_i \otimes U^2_i \otimes \cdots \otimes U^q_i - A\Big \|_F \period
\end{split}
\end{equation}
Here, given a tensor $A \in \R^{n \times \ldots \times n}$ with $q$ modes, its Frobenius norm is defined as $\|A\|_F = \sqrt{\sum_{i_1, i_2, \ldots, i_q = 1}^n A_{i_1, i_2, \ldots, i_q}^2}$.
\end{definition}

The work of \cite{swz19_tensor_low_rank} previously made a significant advance on sketching for low CP rank decomposition --- this work gave bicriteria and fixed parameter tractable algorithms for low CP rank factorizations under a variety of loss functions, such as the squared Frobenius norm, sum of Frobenius norms of faces, and so on. Here, by \textit{bicriteria}, we mean that the rank of the output can be slightly larger than $k$, though the approximation quality is compared with the best CP rank-$k$ approximation. When considering other notions of tensor rank, e.g. Tensor Train rank, we use the term bicriteria to mean that the Tensor Train rank of the output can be larger than $k$, while we compare to the best approximation of Tensor Train rank at most $k$.

Unfortunately, the best CP low rank approximation is difficult to define, given border rank issues \footnotemark \footnotetext{The border rank of a tensor $A$ is defined as the minimum $k\in \N$ such that $\forall \eps > 0$, there exists a tensor $A' = \sum_{i=1}^k\otimes_{j=1}^q U^{j}_i$ such that $\lVert A - A'\rVert < \eps$. Border rank issues refer to that the CP rank of $A$ is not necessarily equal to the border rank of $A$.}, and even computing the CP rank is NP-hard, which rules out any relative error low rank approximation in polynomial time. Due to these issues, practical work on tensor decomposition has often studied other notions of rank, such as the Tensor Train rank \cite{huber2017randomized, oseledets2011tensor} (Matrix Product State rank) and the Tucker rank \cite{LMV00_multilinear_svd_tucker,kolda2009tensor}. We first define the Tucker rank, and the corresponding problem of low Tucker rank decomposition, using the concepts of Kronecker product and matricization of tensors:

\begin{definition}[Kronecker Product]
Let $A \in \R^{a \times b}$ and $B \in \R^{c \times d}$. Then, their \textbf{Kronecker product} is the matrix $A \otimes B \in \R^{ac \times bd}$ whose entry in row $(i, j)$ and column $(k, l)$ is $A_{ik} B_{jl}$. We also occasionally denote the Kronecker product by $A \times B$.
\end{definition}

\begin{definition}[Vectorization and Matricization of Tensors]
Let $A \in \R^{n \times n \times \ldots \times n}$ be a $q$-mode tensor. Then the $\textbf{vectorization}$ of $A$, denoted $\tensorvec(A) \in \R^{n^q}$, is the vector whose $(1 + \sum_{j = 1}^q (i_j - 1)n^{q - j})^{th}$ entry is $A(i_1, \ldots, i_q)$. The \textbf{mode-$t$ matricization} of $A$, which we denote by $M_t(A) \in \R^{n \times n^{q - 1}}$, is the matrix whose $j^{th}$ row is $\tensorvec(A(:, \ldots, :, j, :, \ldots, :))$, where $A(:, \ldots, :, j, :, \ldots, :)$ denotes the slice of $A$ whose index in the $t^{th}$ mode is $j$.

More generally, for $m < q$, the matricization $M_{i_1, \cdots, i_m}(A)$ of a $q$-mode tensor $A\in \R^{n \times n \times \ldots \times n}$ is the $n^m \times n^{q-m}$ matrix whose $\left(\sum_{t=1}^m j_t n^{m-t}\right)^{th}$ row is the vectorization of the slice of $A$ whose index in the $i_t^{th}$ mode is $j_t$. For instance, given a $q$-mode tensor $A \in \R^{n \times \ldots \times n}$ for $q \geq 3$, the matricization $M_{1, 2}(A)$ is the $n^2 \times n^{q - 2}$ matrix whose $(j_1 n + j_2)^{th}$ row is $\tensorvec(A(j_1, j_2, \ldots))$. Recall that $A(j_1, j_2, \ldots)$ denotes the slice of $A$ containing the entries whose entries in the first and second modes are $j_1$ and $j_2$ respectively.
\end{definition}

Thus, we can define the Tucker rank of a tensor:

\begin{definition}[Tucker Rank]
\label{def:tucker_decomposition}
Let $A \in \R^{n \times n \times \ldots \times n}$ be a $q$-mode tensor. We say $A$ has Tucker rank at most $(k_1, \ldots, k_q)$ if there exist $U^{1} \in \R^{n \times k_1}, U^{2} \in \R^{n \times k_2}, \ldots, U^{q} \in \R^{n \times k_q}$, and a $q$-mode tensor $G \in \R^{k_1 \times k_2 \times \ldots \times k_q}$, such that $A = \sum_{i_1 \in [k_1], i_2 \in [k_2], \ldots, i_q \in [k_q]} G(i_1, \ldots, i_q) U^{1}(:, i_1) \otimes \ldots \otimes U^{q}(:, i_q) = G\times_1 U^{1}\times_2\cdots\times_q U^{q}$, with $\times_j$ being the $j^{th}$-mode product. As noted in \cite{sdfhpf17_tensor_decomposition_survey}, this can be rewritten as $\tensorvec(A) = (U^{1} \times \ldots \times U^{q})\tensorvec(G)$, or as $M_t(A) = U^{t} M_t(G) (U^{1} \times \ldots U^{t - 1} \times U^{t + 1} \times \ldots U^{q})^T$. This notion of rank is also referred to in the literature as a \textbf{multilinear rank}.
\end{definition}

One of the main problems that we study in this paper is that of approximating an arbitrary tensor $A$ by another tensor $B$ which has a low Tucker rank. For simplicity, we consider a special case for the tuple $(k_1, \ldots, k_q)$ in the definition above:

\begin{problem}[Tucker-$(p, q)$ Decomposition]
Let $A \in \R^{n \times n \ldots \times n}$ be a $q$-mode tensor, and $k \in \N$. Then, we wish to find a $q$-mode tensor $B \in \R^{n \times n \times \ldots \times n}$ for which $\|B - A\|_F^2$ is as small as possible, subject to the constraint that $B$ has multilinear rank at most $(k, \ldots, k, n, \ldots, n)$ (where this tuple has $k$ in the first $p$ coordinates and $n$ in the last $q - p$ coordinates). In other words, the first $p$ factors in the Tucker decomposition of $B$ have $k$ columns, and the last $q - p$ factors have $n$ columns. Note that without loss of generality, this means that the last $q - p$ factors can be taken to be the $n \times n$ identity matrix. We refer to the Tucker-$(q, q)$ decomposition problem (i.e., the case where the tensor $B$ has multilinear rank at most $(k, k, \ldots, k)$) as the Tucker-$q$ decomposition problem.
\end{problem}

We next define the Tensor Train rank, based on the concept of tensor contractions.

\begin{definition}[Tensor Contraction $\circ_{i, j}$]
Let $A\in \R^{n_1\times \cdots \times n_p}$ be a $p-$mode tensor and $B\in \R^{m_1\times \cdots \times m_q}$ a $q-$mode tensor. Assuming that $n_i = m_j$, their \textbf{tensor inner product} or \textbf{tensor contraction} $A\circ_{i, j}B$ is defined as $\sum_{u=1}^{u=n_i=m_j}A_{:, \cdots, u, \cdots, :}\otimes B_{:, \cdots, u, \cdots, :}$, where $u$ is an index for the $i^{th}$ mode of $A$ and $j^{th}$ mode of $B$. The operation $\circ_i$ is short for $A\circ_{p, i}B$ and the operation $\circ$ is short for $A\circ_{p, 1}B$. For example, when $p=q=2$, $A\circ B$ corresponds to matrix multiplication.

Additionally, we can take the tensor contraction for several modes at once, as follows. Let $A \in \R^{n_1 \times \ldots \times n_p}$ be a $p$-mode tensor and $B \in \R^{m_1 \times \ldots \times m_q}$ a $q$-mode tensor. Then, given two tuples $(i_1, \ldots, i_d)$ and $(j_1, \ldots, j_d)$ of modes (and assuming $n_{i_r} = m_{j_r}$ for all $r \in [d]$), we can define the contraction $A \circ_{(i_1, \ldots, i_d), (j_1, \ldots, j_d)} B$ as $\sum_{k_1 = 1}^{n_{i_1}} \sum_{k_2 = 1}^{n_{i_2}} \cdots \sum_{k_d = 1}^{n_{i_d}} A(i_1 = k_1, \ldots, i_d = k_d) \otimes B(j_1 = k_1, \ldots, j_d = k_d)$. Here we have used $A(i_1 = k_1, \ldots, i_d = k_d)$ to denote the slice of $A$ in which the index for mode $i_1$ is $k_1$, and so on.
\end{definition}

\begin{remark}
This operation is referred to in the literature as tensor contraction because it is equivalent to the self-contraction at modes $(i, p+j)$ of $A\otimes B$.
\end{remark}

We now define Tensor Train rank, and the corresponding problem of rank-$k$ Tensor Train decomposition, which is another of the main problems that we study in this paper.

\begin{definition}[Tensor Train Rank and Tensor Train Decomposition (introduced by \cite{oseledets2011tensor})]
\label{def:tensor_train_decomposition_problem}
Let $A \in \R^{n \times \ldots \times n}$ be a $q$-mode tensor for some $q \in \N$. We say $A$ has \textbf{Tensor Train rank} $(k_1, \ldots, k_{q - 1})$ for some $k_1, \ldots, k_{q - 1} \in \N$ if there exist $U_1 \in \R^{n \times k_1}$, $U_q \in \R^{k_{q - 1} \times n}$, and $U_i \in \R^{k_{i - 1} \times n \times k_i}$ for $i \in [q] \setminus \{1, q\}$, such that $A = U_1 \circ \cdots \circ U_q$. We define the problem of \textbf{Tensor Train decomposition} as follows. Let $A \in \R^{n \times \ldots \times n}$ be a $q$-mode tensor for some $q \in \N$, and let $k \in \N$. Then, we wish to solve
$$\min_{U_1 \in \R^{n \times k}, U_q \in \R^{k \times n}, U_i \in \R^{k \times n \times k}} \|U_1 \circ \cdots \circ U_q - A\|_F$$
In other words, we wish to find a tensor $B$ of Tensor Train rank $(k, \ldots, k)$ such that $\|A - B\|_F$ is minimized.
\end{definition}

Although the formulation of the Tensor Train Decomposition approximation problem in this work uses the original $q-$mode tensor $A$ as the input, our subspace embedding and algorithm could be adapted to perform dimensionality reduction when the input is given in the Tensor Train format with a greater rank.

\subsection{Our Contributions} \label{subsec:contributions}

As mentioned in the last section, even computing the CP rank is NP-hard. However, for both the Tensor Train and Tucker decompositions, an $O(\sqrt{q})$-relative error low rank approximation in Frobenius norm is computable in polynomial time (see, e.g., the discussion and references in \cite{huber2017randomized}), thus suggesting that there are more efficient low rank approximation algorithms for these notions of rank than for the CP rank. However, as suggested by hardness results in this work, it is difficult to achieve both $(1+\eps)$ relative error and rank exactly $k$ in polynomial time. Thus, we present bicriteria algorithms in both directions.

We obtain the first polynomial-time $(1 + \eps)$-approximation algorithm for the Tensor Train decomposition with any non-trivial bicriteria rank. In particular, our bicriteria rank is  $O(\frac{qk}{\epsilon} \log(\frac{q}{\delta}))$.

\begin{theorem}[Special Case of Theorem \ref{thm:analysis_of_tree_network_bicriteria} for Tensor Train Decomposition]
Let $A \in \R^{n \times \ldots \times n}$ be a $q$-mode tensor. Then, Algorithm \ref{algorithm:tensor_train_bicriteria} outputs the factors $U^1, \ldots, U^q$ of a tensor $M$ with Tensor Train rank at most $O(\frac{qk}{\epsilon} \log (\frac{q}{\delta}))$ such that
$$\|M - A\|_F \leq (1 + \epsilon) \min_{T} \|T - A\|_F$$
where the minimum on the right-hand side is taken over all tensors $T$ with Tensor Train rank at most $k$. The running time of our algorithm is $O(q \cdot \nnz(A))$ up to lower-order terms.
\end{theorem}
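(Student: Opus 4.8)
The plan is to run the classical TT-SVD sweep but to replace each exact truncated SVD by a sketched low-rank approximation coming from a sketch tailored to tensor-train flattenings; this is the path-graph instance of the tree-network argument behind Theorem~\ref{thm:analysis_of_tree_network_bicriteria}, and I would derive it by specializing that argument. The starting structural fact is the one underlying TT-SVD: if $T$ has tensor train rank at most $k$, then for every bond $i\in\{1,\dots,q-1\}$ the flattening $T_{(\le i)}\in\R^{n^i\times n^{q-i}}$ (modes $1,\dots,i$ against modes $i+1,\dots,q$) has rank at most $k$, and its row space is the column span of the flattening of the tail train $U^{(i+1)}\circ\cdots\circ U^{(q)}$, which is itself a tensor train on $q-i$ tensors with bond dimension $k$. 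Hence, to fit the first factor it suffices to produce a subspace $\widehat U^{(1)}\in\R^{n\times m}$ with $\min_Y\|A_{(1)}-\widehat U^{(1)}Y\|_F\le(1+\eps')\min_{\mathrm{rank}\ k}\|A_{(1)}-(\cdot)\|_F\le(1+\eps')\min_T\|A-T\|_F$, and by the standard ``sketch the other factor of a regression'' argument the column span of $A_{(1)}R_1^\top$ contains such a subspace whenever $R_1$ is simultaneously a subspace embedding with approximate matrix product for the column span of \emph{every} flattening of a tensor train on $q-1$ tensors of bond dimension $k$. That object — an oblivious subspace embedding with only $\poly(qk/\eps)$ rows (not $\exp(q)$) for the whole family of tensor-train flattening column spaces, built by composing sparse base sketches along the contraction tree so that each of the $O(\log q)$ levels loses only a $(1+\eps/\log q)$ factor and $n$ never enters the row count — is exactly the paper's main technical tool, which I would invoke here rather than reprove.

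Next I would iterate over bonds. Having committed to orthonormal factors $\widehat U^{(1)},\dots,\widehat U^{(i-1)}$, the residual task is to tensor-train-approximate the working tensor $B_i$ obtained by contracting those factors into $A$, and by the same argument its next factor is read off from the column span of a structured sketch of the appropriate flattening of $B_i$. The key to keeping the running time at $O(q\cdot\nnz(A))$ up to lower-order terms is never to form $B_i$ explicitly: instead I would, in a single initial pass, contract $A$ against all the needed structured sketches — the right-side sketches for the early bonds, symmetric left-side sketches for the late bonds, and two-sided sketches producing $\poly(qk/\eps)\times\poly(qk/\eps)$ matrices for the middle bonds — using that each structured sketch is a composition of sparse sketches acting mode-wise, so the whole pass costs $O(q\cdot\nnz(A))$ and yields objects of size $n\cdot\poly(qk/\eps)$ or $\poly(qk/\eps)$. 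All subsequent work — orthonormalizations, the small regressions that glue sketched blocks together, and assembling the cores $U^1,\dots,U^q$ — then costs only the lower-order $n\cdot\poly(qk/\eps)$.

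The delicate part of the analysis — and, together with the construction of the $\poly(q)$-row structured subspace embedding, the step I expect to be the main obstacle — is showing that the greedy, bond-by-bond choices do not let the error accumulate past a $(1+\eps)$ factor: a naive union over the $q$ bonds only reproduces the known $\sqrt q$ relative error. I would control this with a Pythagorean decomposition of $\|A-M\|_F^2$ across the bonds. Since the output $M$ is consistent with the committed orthonormal subspace $\widehat U^{(i)}$ at each cut, $\|A-M\|_F^2$ splits exactly into one ``truncation'' term per bond plus the final residual; the bond-$i$ term is bounded by applying the subspace-embedding/approximate-matrix-product guarantee against the optimal tensor train $T^\star$ itself (not against an intermediate SVD), and one argues the monotone/telescoping inequality that the bond-$i$ term plus the bond-$(i{+}1)$ subproblem optimum is at most a $(1+\eps')$ factor times the bond-$i$ subproblem optimum, with the subproblem optima nonincreasing and bounded by $\min_T\|A-T\|_F$. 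Taking $\eps'=\Theta(\eps/q)$ then makes the whole sweep contribute a factor $(1+\eps')^q\le e^{\eps}\le 1+2\eps$, and rescaling $\eps$ gives the claim; the $\log(q/\delta)$ in the bicriteria rank $O(\tfrac{qk}{\eps}\log\tfrac{q}{\delta})$ comes from a union bound forcing each of the $q$ structured sketches to succeed with probability $1-\delta/q$.
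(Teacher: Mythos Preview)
Your plan is broadly correct and lands on the same architecture as the paper, but two aspects are framed differently---and more elaborately---than necessary.

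\textbf{Error analysis.} You set up a Pythagorean split of $\|A-M\|_F^2$ across bonds and a telescoping inequality. The paper's argument is more direct: at step $i$, with $\widetilde U^1,\dots,\widetilde U^{i-1}$ committed and the remaining factors held at their optimal values $U^{i+1}_*,\dots,U^q_*$, finding $U^i$ is a multiple-response $\ell_2$ regression, and the sketch is chosen to be an \emph{affine embedding} for that regression. The sketched minimizer is then a $(1+O(\eps/q))$-approximate solution outright, so the factors multiply to $(1+O(\eps/q))^q\le 1+O(\eps)$. No Pythagorean bookkeeping is needed; the $\sqrt{q}$ in prior work came from analyzing each step as a best-rank-$k$ truncation rather than as regression against the optimal tail. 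Your telescoping inequality is in fact equivalent to this multiplicative step once unpacked, but the affine-embedding view is cleaner and is what the paper actually does.

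\textbf{Role of the tensor-train sketch.} You describe the structured sketch as an oblivious embedding ``for the whole family of tensor-train flattening column spaces'' and deploy it to handle the \emph{unknown} optimal tail. In the paper, the unknown tail $M_1(U^{i+1}_*\circ\cdots\circ U^q_*)$ is just a fixed rank-$k$ matrix, so an ordinary Countsketch followed by a dense sign matrix (the latter giving the $O(\tfrac{qk}{\eps}\log\tfrac{q}{\delta})$ row count) already provides the needed affine embedding---no TT structure is exploited there. The TT-structured sketch $\mathcal L$ is instead applied to the \emph{already-computed} prefix $\widetilde U^1\circ\cdots\circ\widetilde U^{i-1}$; its purpose is purely computational, replacing the expensive contraction of $A$ with those factors (which costs $O(qs^2\,\nnz(A))$ in \cite{huber2017randomized}) by one Countsketch per mode, yielding the $O(q\cdot\nnz(A))$ term. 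Relatedly, the paper's $\mathcal L$ is built sequentially (one sketch per mode, each losing $(1\pm\eps/q)$), not via the $O(\log q)$-level tree you sketch; the paper mentions the tree variant but notes it gives worse constants for tensor train.
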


Our analysis is novel compared to \cite{huber2017randomized} (which proposed an $O(\sqrt{q})$-approximation algorithm for Tensor Train decomposition) as we introduce new techniques for efficiently computing subspace embeddings for matrices which are the sequential contraction of several tensors. We further discuss our analysis in Subsection \ref{subsec:contributions}. 
Compared to a tensor with Tensor Train rank $k$, which requires $O(qnk^2)$ parameters to express, the tensor output by our algorithm requires only $\widetilde{O}(\frac{q^3 nk^2}{\epsilon^2})$ parameters. Thus, the number of parameters we obtain has an optimal dependence on $n$ and $k$.

The other direction where we make contributions for these same decompositions is simpler fixed-parameter tractable algorithms for exact rank $k$ approximations under the Tensor Train and Tucker decompositions. The work of \cite{swz19_tensor_low_rank} gives $(1 + \varepsilon)$-approximation algorithms for the Tensor Train and Tucker decompositions, with output rank exactly $k$ and 
running time $2^{\poly(k/\epsilon)}$ (ignoring the dependence on $q$ and $\nnz(A)$). However, the algorithms of \cite{swz19_tensor_low_rank} have the following drawbacks: (1) the $\poly(k/\varepsilon)$ factors in the $2^{\poly(k/\varepsilon)}$ time are large and perhaps suboptimal, and (2) their algorithms run polynomial system solvers for deciding the existential theory of the reals (see, e.g., \cite{polyDecider}), which was shown to be equivalent to deciding CP-rank \cite{ss16_tensor_rank_polynomial_systems}. Note that generic polynomial system solvers are highly complex, raising the question of whether simpler fixed-parameter tractable algorithms for the Tucker and Tensor Train decompositions can be obtained. For (1), our fixed-parameter tractable algorithm for Tensor Train decomposition improves the dependence on the $\poly(k/\epsilon)$ factor in the exponent by an $O(k^2)$ factor. For (2), our algorithms for both the Tensor Train and Tucker decompositions only make use of dense Gaussian sketching matrices rather than polynomial system solvers. We note that in addition to \cite{swz19_tensor_low_rank}, there are also  works, see, e.g., \cite{fahrbach2021fast_ridge_regression,faster_tt_sparse_2019,LMV00_multilinear_svd_tucker,lmv00_best_multilinear_rank_approximation, TensorSketch,  MinsterRachel2020RAfL, oseledets2011tensor, huber2017randomized} that compute low rank approximations in polynomial time, but only provide additive error guarantees or take a prohibitive amount of time. We state the guarantees that our fixed-parameter tractable algorithms achieve:

\begin{theorem} \label{thm:strongly_fpt_time_algorithm_tucker}
There is an algorithm for Tucker-$(p, q)$ decomposition (Algorithm \ref{algorithm:fpt_rank_exactly_k_1_plus_eps_approximation}) which, given a $q$-mode tensor $A \in \R^{n \times n \times \ldots \times n}$, and $k \in \N$, outputs $U^1, \ldots, U^p \in \R^{n \times k}$ for which, with probability at least $\frac{4}{5}$,
\begin{equation}
\begin{split}
\min_{G} \|(U^1 \times \ldots \times U^p \times I_n & \times \ldots \times I_n)  \tensorvec(G) - \tensorvec(A) \|_F \\
& \leq (1 + \eps) \min_{U^1, \ldots, U^p, G} \|(U^1 \times \ldots \times U^p \times I_n \times \ldots \times I_n) \tensorvec(G) - \tensorvec(A)\|_F
\end{split}
\end{equation}
with running time $O(p \cdot \nnz(A)) + n \cdot (\frac{pk}{\eps})^{O(\frac{p^2 k^2 \log p}{\eps})}$.
\end{theorem}

\begin{theorem} \label{thm:fpt_tensor_train_rank_exactly_k}
There is an algorithm for Tensor Train decomposition (Algorithm \ref{algorithm:fpt_rank_exactly_k_tensor_train}) which, given a $q$-mode tensor $A \in \R^{n \times \ldots \times n}$, and $k \in \N$, outputs $U^1 \in \R^{n \times k}$, $U^2, \ldots, U^{q - 1} \in \R^{k \times n \times k}$, and $U^q \in \R^{k \times n}$, such that with probability at least $\frac{2}{3}$,
$$\|U^1 \circ \cdots \circ U^q - A\|_F \leq (1 + \eps) \min_{U^1, \ldots, U^q} \|U^1 \circ \cdots \circ U^q - A\|_F$$
with running time $O(q \cdot \nnz(A)) + n \cdot \poly(\frac{qk}{\eps \delta}) \cdot e^{\Theta(\frac{q^2 k^2}{\eps} \log(\frac{qk}{\eps}))}$ and polynomial space.
\end{theorem}

\begin{algorithm}
\caption{$(1 + \eps)$-approximation algorithm for Tensor Train decomposition with output rank $t = O(\frac{qk}{\eps} \log(\frac{q}{\delta}))$.}
\label{algorithm:tensor_train_bicriteria}
\begin{algorithmic}
\Require $A \in \R^{n \times \ldots \times n}$ with $q$ modes.
\Ensure $U^1 \in \R^{n \times k}$, $U^2, \ldots, U^{q - 1} \in \R^{k \times n \times k}$, $U^q \in \R^{k \times n}$

\item []
\State // Compute $U^1$, processing the first mode, first
\State $T_1 \gets $ An $O(\frac{q^3 k^2}{\eps^2 \delta}) \times n^{q - 1}$ Countsketch matrix
\State $R_1 \gets $ A $t \times O(\frac{q^3 k^2}{\eps^2 \delta})$ matrix whose entries are drawn i.i.d. from $\{-\frac{1}{\sqrt{t}}, \frac{1}{\sqrt{t}}\}$
\State $U^1 \gets M_1(A) T_1^T R_1^T \in \R^{n \times t}$

\item []
\State // Store a sketch of $U^1$, which we denote $M_1$, for future use.
\State $S_1 \gets $ An $s \times n$ Countsketch matrix, where $s = O(\frac{q^4 t^2 d^3}{\eps^2 \delta}) = O(\frac{q^6 k^2 d^3}{\eps^4 \delta} \log^2(\frac{q}{\delta}))$.
\State $M_1 \gets S_1 U^1 \in \R^{s \times t}$
\State $A \gets S_1 \circ_1 A \in \R^{s \times n \times \ldots \times n}$, i.e. a $q$-mode tensor where the last $q - 1$ modes have dimension $n$.

\item []
\State // Now process modes $2$ through $q$. At the beginning of the $i^{th}$ iteration, $A$ has $q - i + 2$ modes, 
\State // where the first has dimension $s$ and the rest have dimension $n$.
\For{$i = 2, \ldots, q$}
    \State // First compute $U^i$ using $M_{i - 1}$ which is a sketch of $U^1 \circ \cdots \circ U^{i - 1}$.
    \State $U^i \gets M_{i - 1}^\dagger \circ_1 A \in \R^{t \times n \times \ldots \times n}$, with $q - i + 2$ modes.
    \If{$i$ is not $q$}
     \State $T_i \gets $ An $O(\frac{q^3 k^2}{\epsilon^2 \delta}) \times n^{q - i}$ Countsketch matrix (i.e. an $O(\frac{q^3 k^2}{\epsilon^2 \delta}) \times n \times \ldots \times n$ tensor)
     \State $R_i \gets $ A $t \times O(\frac{q^3 k^2}{\epsilon^2 \delta})$ matrix whose entries are drawn i.i.d. from $\{-\frac{1}{\sqrt{t}}, \frac{1}{\sqrt{t}}\}$
     \State // Note that in the following we represent $R_i T_i$ as a $t \times n \times \ldots \times n$ tensor.
     \State $U^i \gets U^i \circ_{(3, \ldots, q - i + 2), (2, \ldots, q - i + 1)} (R_i T_i) \in \R^{t \times n \times t}$
     \item []
     \State // Next compute $M_i$
     \State $M_i \gets M_{i - 1} \circ U_i \in \R^{s \times n \times t}$
     \State $S_i \gets $ An $s \times sn$ Countsketch matrix, represented as $s \times s \times n$ tensor
     \State $M_i \gets S_i \circ_{(2, 3), (1, 2)} M_i \in \R^{s \times t}$
     \State $A \gets S_i \circ_{(2, 3), (1, 2)} A \in \R^{s \times n \times \ldots \times n}$
    \EndIf
\EndFor \\
\\

\Return $U^1, \ldots, U^q$
\end{algorithmic}
\end{algorithm}

\begin{remark}
Throughout we assume our input tensor has all modes of the same dimension $n$. This is for presentation purposes only and our techniques can straightforwardly handle tensors where modes have differing dimensions with minor modifications.
\end{remark}

To summarize, we propose a polynomial-time bicriteria algorithm, using a novel sketch for tensor contractions, for $(1 + \eps)$-approximate Tensor Train decomposition, obtaining a bicriteria rank of $O(\frac{qk}{\epsilon} \log(\frac{q}{\delta}))$. In addition, we give fixed-parameter tractable $(1 + \eps)$-approximation algorithms for the Tucker, Tensor Train and CP decompositions which do not depend on polynomial system solvers.

We also include fine-grained hardness results for rank-$1$ Tucker-$(2, 3)$ decomposition, rank-$1$ CP decomposition, and Tensor Train decomposition with $q = 3$, which lower-bound the optimal dependency in terms of $1/\eps$ which any algorithm with output rank $k$ can achieve. Additionally, we consider further generalizations of our bicriteria algorithm for Tensor Train decomposition, to other notions of tensor rank based on \textit{tensor networks}. Lastly, we obtain bicriteria algorithms for Tucker-$(p, q)$ decomposition with a robust loss function.

\subsubsection{Bicriteria Algorithm for Tensor Train Decomposition} \label{subsubsec:tensor_train_bicriteria}

We state known theoretical guarantees for polynomial time Tensor Train decomposition in Table \ref{tab:tensor_train_polynomial_time_results}, along with our result. We stress that no polynomial time relative error $(1 + \eps)$-approximations were known --- even for the case of Tensor Train decomposition, previous work either obtained additive error or $O(\sqrt{q})$-approximation.

\begin{table}
    \centering
    \resizebox{\textwidth}{!}{
    \begin{tabular}{|c|c|c|c|}
    \hline
    Work & Running Time & Approximation Factor & Rank \\ \hline
    \cite{oseledets2011tensor} (TT-SVD) & $n^{O(q)}$ & $\sqrt{q - 1}$ & $k$ \\ \hline
    \cite{huber2017randomized} & $O(q((k+p)^2\nnz(A)+(k+p)^3n))$  & $\sqrt{q-1}\left(1+O(\sqrt{\frac{12k}{p}})+O(\frac{e\sqrt{k+p}}{p+1})\right)$ & $r=k+p$    \\ \hline
    \textbf{This work} & $O(q \cdot \nnz(A)) + n \cdot \poly(qk/\eps)$ & $(1 + \eps)$ &  $O(qk/\eps \log(q/\delta))$    \\ \hline
    \end{tabular}}
    \caption{Known algorithms for Tensor Train decomposition. In the algorithm of \cite{huber2017randomized}, $p$ is a parameter for oversampling on the output bicriteria rank. We can extend their algorithm to a $(1 + \eps)$-approximation algorithm with running time $qr^2 \cdot \nnz(A) + n \cdot \poly(qk/\eps)$, where $r$ is the desired bicriteria rank. The leading order term in the running time is significantly slower than that of our algorithm.}
    \label{tab:tensor_train_polynomial_time_results}
\end{table}

\paragraph{Subspace Embeddings for Tensor Contractions}

The key component in our bicriteria algorithms is a \textit{subspace embedding for matrices which are implicitly defined in terms of tensor contractions} --- in other words, we give a technique for obtaining a subspace embedding of matrices of the form $M_{\{1, \ldots, i - 1\}}(U^1 \circ \cdots \circ U^{i - 1}) \in \R^{n^{i - 1} \times k}$, with $n \cdot \poly(ik/\eps)$ running time, without computing all $k \cdot n^{i - 1}$ entries of this matrix. Recall that, as in \cite{woodruff14_sketching_as_a_tool}, given a matrix $A \in \R^{n \times d}$, a $(1 \pm \eps)$ subspace embedding of $A$ is a matrix $S \in \R^{s \times n}$ such that $\|SAx\|_2 = (1 \pm \eps) \|Ax\|_2$ for all $x \in \R^d$.

We can obtain such a subspace embedding for $M_{\{1, \ldots, i - 1\}}(U^1 \circ \cdots \circ U^{i - 1})$ as follows. We wish to construct a linear map $\mathcal{L} : \R^{n^{i - 1}} \to \R^s$, for $s = \poly(ik/\eps)$, such that for all $x \in \R^k$,
$$\|\mathcal{L} M_{\{1, \ldots, i - 1\}}(U^1 \circ \cdots \circ U^{i - 1}) x\|_2 = (1 \pm \eps) \|M_{\{1, \ldots, i - 1\}}(U^1 \circ \cdots \circ U^{i - 1}) x\|_2$$
However, for $x \in \R^k$, the entries of $M_{\{1, \ldots, i - 1\}}(U^1 \circ \cdots \circ U^{i - 1})x$ are in one-to-one correspondence with those of the $(i - 1)$-mode tensor $U^1 \circ \cdots \circ U^{i - 1} \circ x \in \R^{n \times \ldots \times n}$. We can thus significantly reduce the dimension of $U^1 \circ \cdots \circ U^{i - 1} \circ x$, while preserving its norm, for all $x \in \R^k$, as follows. First let $S_1 \in \R^{s_1 \times n}$ be a $(1 \pm \eps/q)$ subspace embedding for $U^1$ --- then, for all $x \in \R^k$, we have the equality
\begin{equation}
\begin{split}
\|S_1 \circ U^1 \circ U^2 \circ \cdots \circ U^{i - 1} \circ x\|_2
& = \|S_1 U^1 M_1(U^2 \circ \cdots \circ U^{i - 1} \circ x) \|_2 \\
& = (1 \pm \eps/q) \|U^1 M_1(U^2 \circ \cdots \circ U^{i - 1} \circ x)\|_2 \\
& = (1 \pm \eps/q) \|U^1 \circ U^2 \circ \cdots \circ U^{i - 1} \circ x\|_2
\end{split}
\end{equation}
\begin{figure}[h]
\caption{Illustration for Subspace Embeddings for Tensor Contractions}
\centering
\includegraphics[width=0.5\textwidth]{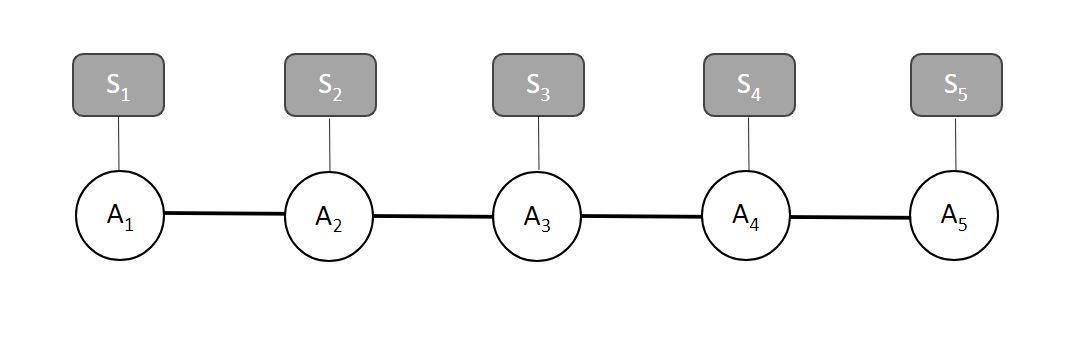}
\end{figure}
In addition, note that
$$S_1 \circ U^1 \circ \cdots \circ U^{i - 1} \circ x = (S_1 U^1) \circ U^2 \circ U^3 \circ \cdots \circ U^{i - 1} \circ x$$
and $S_1 U^1 \in \R^{s_1 \times k}$ can be computed in $O(s_1 nk)$ time, while $(S_1 U^1) \circ U^2 \in \R^{s_1 \times n \times k}$ can be computed in $O(s_1 n k^2)$ time. Now, define $V^2 \in \R^{s_1 n \times k}$ by $V^2 = M_{\{1, 2\}}((S_1 U^1) \circ U^2)$ --- then,
$$\|V^2 \circ U^3 \circ \cdots \circ U^{i - 1} \circ x\|_2 = \|(S_1 U^1) \circ U^2 \circ U^3 \circ \cdots \circ U^{i - 1} \circ x\|_2 = (1 \pm \eps/q) \|U^1 \circ \cdots \circ U^{i - 1} \circ x\|_2$$
and thus the next step is to apply a subspace embedding $S_2 \in \R^{s_2 \times n}$ to $V^2$. In general, we proceed iteratively --- if we have computed $V^j \in \R^{s_{j - 1} n \times k}$, such that for all $x \in \R^k$,
$$\|V^j \circ U^{j + 1} \circ \cdots \circ U^{i - 1} \circ x\|_2 = (1 \pm \eps/q)^{j - 1} \|U^1 \circ \cdots \circ U^{i - 1} \circ x\|_2$$
then we can first compute $S_j V^j$ where $S_j \in \R^{s_j \times s_{j - 1} n}$ is a subspace embedding for $V^j$, followed by computing $V^{j + 1} := M_{\{1, 2\}}((S_j V^j) \circ U^{j + 1}) \in \R^{s_j \times k}$ --- we then find that
$$\|V^{j + 1} \circ U^{j + 2} \circ \cdots \circ U^{i - 1} \circ x\|_2 = (1 \pm \eps/q)^j \|U^1 \circ \cdots \circ U^{i - 1} \circ x\|_2$$
At the end of this procedure, we will have computed $V^{i - 1} \in \R^{s_{i - 2}n \times k}$ such that for all $x \in \R^k$,
$$\|V^{i - 1} x\|_2 = (1 \pm \eps/q)^{i - 1} \|U^1 \circ \cdots \circ U^{i - 1} \circ x\|_2$$
and if we let $W_{i - 1} = S_{i - 1} V^{i - 1} \in \R^{s_{i - 1} \times k}$ where $S_{i - 1} \in \R^{s_{i - 1} \times s_{i - 2} n}$ is a subspace embedding matrix, we find that
$$\|W_{i - 1} x\|_2 = (1 \pm O(\eps)) \|U^1 \circ \cdots \circ U^{i - 1} \circ x\|_2$$
and the overall computation is $O(q s nk \cdot (s + k))$. We can choose $s_j = \poly(qk/\eps)$, meaning the final matrix $W_{i - 1}$ has dimensions $\poly(qk/\eps) \times k$.  We note that our subspace embedding for tensor contractions may be of independent interest, and is part of a growing body of work on obtaining subspace embeddings for matrices that are only represented implicitly \cite{ASW13,SW19,JSWY21,MMMW22}.

In Section \ref{subsection:improving_hsw_17}, we discuss how to apply this technique to improve on previous work of \cite{huber2017randomized} and obtain a bicriteria Tensor Train decomposition algorithm. We show that the algorithm of \cite{huber2017randomized} can be recast as a $(1 + \eps)$-approximation algorithm for Tensor Train decomposition, and show how our subspace embedding can be used to reduce their running time to $O(q \cdot \nnz(A))$, together with lower-order terms. We note that our analysis is significantly different from \cite{huber2017randomized}, since we give a technique for obtaining a subspace embedding of a matricization of a Tensor Train. In addition, our subspace embedding does not follow directly from the subspace embedding of \cite{akkpvwz20_tensor_sketch_outer_product} --- the subspace embedding of \cite{akkpvwz20_tensor_sketch_outer_product} applies only to tensors with low CP rank, and would require a sketch size of at least $k^q$ if applied to Tensor Train decomposition.

The way we have described the above subspace embedding for a tensor contraction is sequential. One could instead matricize each mode in parallel, obtaining a matrix with $n$ rows and rank $k^2$ for each internal mode matricization. One could then build a binary tree of sketches, fusing and sketching two modes at a time for the internal nodes of the tree. While this could help with multiple processors, the concrete $\varepsilon, k,$ and $q$ factors in the bicriteria rank in our Tensor Train application in Section \ref{subsection:improving_hsw_17} are a bit worse. 

We extend our techniques to obtain a $(1 + \epsilon)$-approximation algorithms for decomposing a tensor according to general tree networks as well. (See \cite{clopzm16_tensor_networks} for a survey of more general tensor networks, which generalize the Tucker and Tensor Train ranks.) We do this using a dynamic programming approach, by processing the tensors from the leaves to the root (see \cite{grasedyck10_hierarchical_tucker} for a prior application of this approach to tree tensor networks).

\subsubsection{Fixed Parameter Tractable Algorithms}

We give fixed parameter tractable $(1 + \varepsilon)$-approximation algorithms for CP decomposition, Tucker decomposition, and Tensor Train decompositions. 

\begin{table}
    \centering
    \begin{tabular}{|c|c|c|c|}
    \hline
    Work & Running Time & Decomposition & Bicriteria Rank \\ \hline
    \cite{swz19_tensor_low_rank} & $q \cdot \nnz(A) + n\cdot \poly(qk/\eps)$ & CP & $O((k/\eps)^{q-1})$ \\ \hline
    \cite{swz19_tensor_low_rank} & $(q \cdot \nnz(A) + n \poly(qk/\eps) +\exp{(qk^2/\eps)})\cdot n^{\delta}$ & CP & $k$ \\ \hline
    \cite{swz19_tensor_low_rank} & $\nnz(A) + n \cdot \poly(k, 1/\varepsilon) + 2^{O(k^2/\eps + k^3)}$ & Tucker-$(3, 3)$ & $k$ \\ \hline
    \cite{swz19_tensor_low_rank} & $\nnz(A) + n \cdot \poly(k, 1/\eps) + 2^{O(k^4/\eps)}$ & TT ($q = 3$) & $k$ \\ \hline
    \textbf{This work} & $O(p\cdot \nnz(A))+n\cdot\left(\frac{pk}{\eps}\right)^{O(p^2k^2\log p/\eps)}$ & Tucker-($p, q$) &  $k$    \\ \hline
    \textbf{This work} & $O(q \cdot \nnz(A)) + n \cdot \poly(\frac{qk}{\eps \delta}) \cdot e^{\Theta(\frac{q^2 k^2}{\eps} \log(\frac{qk}{\eps}))}$ & TT &  $k$    \\ \hline
    \textbf{This work} & $\left(q\cdot \nnz(A) + n\poly(k, q/\eps)\right)^{\poly(kq/\eps)}$ & CP &  $k^{q-1}$    \\ \hline
    \end{tabular}
    \caption{Known algorithms for fixed parameter tractable $(1+\eps)-$approximation for different decompositions --- here TT refers to Tensor Train. For each of these decompositions, we remove the use of generic polynomial system solvers that were used in \cite{swz19_tensor_low_rank}. Here $s=\poly(\frac{qk}{\eps}\exp(\poly(qk/\eps)))$ in our Tensor Train algorithm.}
    \label{tab:fpt_results}
\end{table}

The idea behind our algorithms is to generate an exponential (in $k, q,$ and $1/\varepsilon$) number of guesses for the factor matrices and take the best solution found by sketching. Analyzing such algorithms requires understanding common primitives, such as $\ell_2$-regression, subspace embeddings, and approximate matrix product, but in a regime that has not been studied before, namely, where the success probability is exponentially small. The reason the success probability is so small is that we need to choose sketching matrices with exactly $k$ rows (as opposed to say, $\poly(k/\varepsilon)$) in order to ensure that our output has rank exactly $k$. Since the success probability of our primitives is $2^{-\poly(kq/\varepsilon)}$, by repeating $2^{\poly(kq/\varepsilon)}$ times independently, one of the solutions found by sketching will provide a $(1+\varepsilon)$-approximation.

Since we just enumerate over our guesses, we avoid the polynomial system solvers used in \cite{swz19_tensor_low_rank}. Our algorithms are conceptually simpler and easier to implement (e.g., the algorithm proposed in \cite{polyDecider} relies on real algebraic geometry). Our techniques may be useful for other linear algebra problems where known algorithms use polynomial system solvers, such as weighted low rank approximation \cite{bwz19_weighted_lra}, and non-negative matrix factorization \cite{agkm_16_nmf, moitra14_almost_optimal_nonnegative_rank}.

Perhaps of interest independent of our FPT algorithms is Theorem \ref{thm:approximate_multiple_l2_regression}, which shows
that for the same optimal sketching dimension \cite{cw09_numerical_linear_algebra_streaming} considered in
previous work for approximate regression, one can solve
for an {\it approximate} minimizer in the sketch space,
and argue this is a $(1+\varepsilon)$-approximate solution
in the original space. Previous work surprisingly could
only show this for solving for the exact solution in the sketch space. We use this in our Tensor Train FPT algorithm for finding candidate guesses. More details on the techniques and results for each decomposition are summarized below.

\paragraph{Tucker FPT Algorithm.} 
A key fact that guarantees the success probability of our FPT Algorithm (shown in Algorithm \ref{algorithm:fpt_rank_exactly_k_1_plus_eps_approximation}
) for $(1\pm\eps)$ rank-$k$ approximation under the Tucker-$(p, q)$ decomposition is Lemma \ref{lemma:gaussian_with_exactly_k_rows_multiple_response_regression}, reproduced below:

\begin{lemma}[Lemma \ref{lemma:gaussian_with_exactly_k_rows_multiple_response_regression}]
Let $n, k, s, t\in \N, \eps > 0,$ and $k, t < n$. Suppose $A\in \R^{n\times k}$ has rank $k$ and $B\in \R^{n\times s}$ has rank $t$. Let $R\in \R^{k\times n}$ have i.i.d. $\mathcal{N}(0, 1/k)$ entries. If $X^* = \argmin_X \| AX - B\|_F^2$ and $\widehat{X} = \argmin_X \|RAX - RB\|_F^2$, then $\|A\widehat{X} - B\|_F^2\leq (1+\eps)\|AX^* - B\|_F^2$ with probability at least $e^{-\Theta(k^2\log k)}(\frac \eps k)^{O(kt)}$.
\end{lemma}

As we range over the guesses of $U$ produced by our algorithm, at least one guess is a \textit{good} guess, meaning it provides a $(1\pm \frac{\eps}{p^2})$ approximation. We then use an $\ell_2$ Kronecker Product Regression result of \cite{djssw19_kronecker_product_regression} together
with projection cost-preserving sketches
\cite{cemmp15_pcps}, to 
efficiently evaluate the cost of each guess. 

\begin{algorithm}
\caption{Fixed-parameter tractable algorithm for obtaining a $(1 + \eps)$-approximate rank $k$ solution for Tucker-$(p, q)$ decomposition. Here, $\textsc{ErrorEstimate}$ denotes the algorithm referred to in Theorem \ref{thm:kronecker_product_regression} (which is the algorithm of \cite{djssw19_kronecker_product_regression} with certain parameters modified) --- $\textsc{ErrorEstimate}(U^1, \ldots, U^p, M_j, \delta)$ returns an estimate of the error $\min_x \|(U^1 \times \ldots \times U^p)x - M_j\|_2$, with failure probability $\delta$. The variable $\textsc{NewError}$ is referred to as $\widehat{e}$ in the analysis. Note that in order to output the core tensor, we would incur a $n^{q - p}$ term in the running time. In the case $p = q$, we can also output a near-optimal core tensor, by using the algorithm of \cite{djssw19_kronecker_product_regression}, while still achieving the desired running time. In addition, in the case $p = q$, it is not necessary to multiply $A_p$ by a PCP before performing Kronecker product regression.}
\label{algorithm:fpt_rank_exactly_k_1_plus_eps_approximation}
\begin{algorithmic}
\Require A $q$-mode tensor $A \in \R^{n \times \ldots \times n}$, $p \leq q$, $k \in \N$, $\eps \in (0, 1)$
\Ensure $U^1, U^2, \ldots, U^p \in \R^{n \times k}$

\item []
\State // Generating guesses for $U^m$ for $m \in [p]$
\State $T \gets (\frac{pk}{\eps})^{O(\frac{pk^2 \log p}{\eps})}$ \\
\For{$m = 1 \to p$}
    \State $S_m \gets $ An $O(\frac{p^3k^2}{\eps^2}) \times n^{q - 1}$ Countsketch matrix
    \State $T_m \gets $ An $s \times O(\frac{p^3k^2}{\eps^2})$ matrix with i.i.d. $\pm \frac{1}{\sqrt{s}}$ entries, where $s = O(\frac{pk \log p}{\eps})$. 
    \State $\calS_m \gets \varnothing$
    \State $\widehat{A_m} \gets M_m(A) S_m^T T_m^T$ \\
    \For{$t = 1 \to T$}
        \State $R_m \gets $ A $k \times s$ matrix with i.i.d. $N(0, 1/k)$ entries.
        \State $U^m_t \gets \widehat{A_m} R_m^T$
        \State $\calS_m \gets \calS_m \cup \{U^m_t\}$
    \EndFor
\EndFor

\item []
\State // Evaluating error of each tuple in $\calS_1 \times \ldots \times \calS_p$
\State $A_p \gets $ An $n^p \times n^{q - p}$ matrix whose $((i_1, \ldots, i_p), (i_{p + 1}, \ldots, i_q))^{th}$ entry is $A(i_1, \ldots, i_q)$
\State $S_{\textsc{PROJ}} \gets $ An $O(k^{2p}/\eps^2) \times n^{q - p}$ Countsketch matrix.
\State $M \gets A_p S_{\textsc{PROJ}}^T$ \Comment{Used as PCP for $A_p$ in Kronecker product regression}
\State $\widehat{U^1}, \widehat{U^2}, \ldots, \widehat{U^p} \gets 0$
\State $\textsc{MinError} \gets \|A\|_F^2$ \\
\For{$U^1 \in \calS_1, \ldots, U^p \in \calS_p$}
    {\State $\textsc{NewError} \gets 0$ \\
    \For{$j = 1 \to O(k^{2p}/\eps^2)$}
        {\State $r \gets O(k^{2p}/\eps^2)$, the number of columns in $M$
        \State $\textsc{NewError} \gets \textsc{NewError} + \textsc{ErrorEstimate}(U^1, \ldots, U^p, M_j, \delta = \frac{1}{1000r T^p})$
        }
    \EndFor
    \State $\textsc{NewError} \gets \textsc{NewError} + \|M\|_F^2$ \\
    \If{$\textsc{NewError} \leq \textsc{MinError}$}
        {\State $\textsc{MinError} \gets \textsc{NewError}$
        \State $\widehat{U^1}, \widehat{U^2}, \ldots, \widehat{U^p} \gets U^1, U^2, \ldots, U^p$}
    \EndIf
    }
\EndFor \\
\Return $\widehat{U^1}, \widehat{U^2}, \ldots, \widehat{U^p}$
\end{algorithmic}
\end{algorithm}

%
%
%

\paragraph{Tensor Train FPT Algorithm.}
Our FPT algorithm for Tensor Train decompositions uses
similar tools as that for Tucker decompositions, but notably it uses our new subspace embedding for tensor contractions that we introduced above in the context of obtaining efficient bicriteria algorithms.

For Tensor Train and Tucker-$(p, q)$ decompositions, our algorithm achieves output rank exactly $k$ and $(1+\eps)$ relative approximation error, in fixed-parameter tractable time. Previous work, such as \cite{sun_lowranktucker} which uses High Order SVD, generally produces either higher than $(1+\eps)$ approximation error, or higher output rank, or both. A summary of our results and previous fixed parameter tractable algorithms that achieve $(1+\eps)$-approximation is presented in Table \ref{tab:fpt_results}.

\begin{remark}
Note that for CP rank, the best rank $k$ approximation may not exist in general. To deal with the case where the best rank $k$ approximation does not exist, we can add an arbitrarily small additive error term $\gamma$ to the approximation guarantees, as done in \cite{swz19_tensor_low_rank}.
\end{remark}

\paragraph{CP-Rank FPT Algorithm.}
Finally, our toolbox allows us to obtain
new FPT algorithms for CP decompositions. Our approach here is inspired by that of \cite{swz19_tensor_low_rank}, but instead of using polynomial system solvers, uses our  low probability sketching primitives. For CP decomposition, our algorithm gives a tradeoff compared to \cite{swz19_tensor_low_rank}, as we obtain bicriteria rank $k^{q - 1}$, which is better than their bicriteria algorithm but worse than their fixed-parameter tractable algorithm (which obtains output rank $k$), but our running time is fixed-parameter tractable while their bicriteria algorithm is polynomial time. As mentioned above, our algorithm does not use polynomial system solvers unlike the fixed-parameter tractable algorithm of \cite{swz19_tensor_low_rank}. Note that by Theorem 1.1 of \cite{ss16_tensor_rank_polynomial_systems}, computing the CP rank of a tensor is equivalent to solving polynomial systems, meaning obtaining output rank $k$ is at least as hard as solving polynomial systems, which may not be true for the Tucker and Tensor Train decompositions.


\subsubsection{Hardness}

We show new fine-grained hardness results for rank-$1$ Tucker-$(2, 3)$ decomposition, rank-$1$ CP decomposition, and Tensor Train decomposition with $q = 3$. These three are in fact equivalent decompositions. To see why, first note that rank-$1$ CP decomposition is equivalent to rank-$1$ Tensor Train decomposition since a tensor $A \in \R^{n \times n \times n}$ with Tensor Train rank $1$ can be written as $u_1 \circ u_2 \circ u_3$ where $u_1 \in \R^{n \times 1}$, $u_2 \in \R^{1 \times n \times 1}$, and $u_3 \in \R^{1 \times n}$, and the $(i, j, k)^{th}$ entry of $u_1 \circ u_2 \circ u_3$ can be written as $u_{1, i} u_{2, j} u_{3, k}$. Thus, every tensor with CP rank $1$ has Tensor Train rank $1$, and vice versa. Additionally, suppose a tensor $A \in \R^{n \times n \times n}$ can be written as a tensor with multilinear rank $(1, 1, n)$ (corresponding to Tucker-$(2, 3)$ decomposition). Then, we can write $A = (u_1 \times u_2 \times I)G$ for some tensor $G \in \R^{1 \times 1 \times n}$ and $u_1, u_2 \in \R^{n \times 1}$. Thus, $A(i, j, k) = (e_i^\top u_1 \times e_j^\top u_2 \times e_k^\top I) G = u_{1, i} u_{2, j} G(1, 1, k)$ --- in other words, Tucker-$(2, 3)$ decomposition is equivalent to rank-$1$ CP decomposition. It can also be seen that the rank-$1$ Tucker-$3$ decomposition is equivalent to rank-$1$ CP decomposition using the same argument. Thus, using for instance that rank-$1$ CP decomposition is NP-hard \cite{HASTAD1990644,mostNPHard}, all of these problems are NP-hard for $\varepsilon = 0$.

Moreover, as shown in \cite{swz19_tensor_low_rank}, under the Exponential Time Hypothesis, there is a $2^{\Omega(\varepsilon^{1/4})}$ time lower bound for rank-$1$ CP decomposition and thus for all of these problems. We make a stronger assumption based on the $2$-to-$4$ norm defined as follows. 
\begin{definition}
Let $B \in \R^{n \times n}$. Then, we define $\|B\|_{2, 4} = \Big( \sum_{i = 1}^n \|B_{i, :}\|_2^4 \Big)^{1/4}$, where $B_{i, :}$ is the $i^{th}$ row of $B$, and $\|B\|_{2 \to 4} = \sup_{\|x\|_2 = 1} \|Bx\|_4$.
\end{definition}
\begin{conjecture}
\label{conjecture:2_to_4_norm_running_time}
Any algorithm which, given a matrix $A \in \R^{n \times n}$, approximates $\|A\|_{2 \to 4}$ to a multiplicative $O(1)$ factor, requires $2^{\Omega(n)}$ time.
\end{conjecture}

\paragraph{Running Time Lower Bound for Tucker-$(2, 3)$ Decomposition}

We show that under Conjecture \ref{conjecture:2_to_4_norm_running_time}, any \textbf{$(1 + \varepsilon)$-approximation} algorithm for rank-$1$ CP, Tucker and Tensor Train decomposition requires $2^{\Omega(1/\varepsilon)}$ time.

\begin{theorem} \label{thm:hardness_rank1}
For convenience, given a tensor $A \in \R^{n \times n \times n}$, let $A^k := A(:, :, k)$. If Conjecture \ref{conjecture:2_to_4_norm_running_time} is true, then any algorithm which, given $A \in \R^{n \times n \times n}$, finds unit vectors $u, v \in \R^n$, such that
$$\sum_{k = 1}^n \|uu^T A^k vv^T - A^k \|_F^2 \leq \Big(1 + \Theta\Big(\frac{1}{n}\Big)\Big) \min_{\|u_*\|_2 = \|v_*\|_2 = 1} \sum_{k = 1}^n \|u_* u_*^T A^k v_* v_*^T - A^k \|_F^2$$
requires at least $2^{\Omega(n)}$ time. Thus, any algorithm which, given $A \in \R^{n \times n \times n}$ and $\eps \in (0, 1)$, finds unit vectors $u, v \in \R^n$, such that
$$\sum_{k = 1}^n \|uu^T A^k vv^T - A^k \|_F^2 \leq (1 + \epsilon) \min_{\|u_*\|_2 = \|v_*\|_2 = 1} \sum_{k = 1}^n \|u_* u_*^T A^k v_* v_*^T - A^k \|_F^2$$
requires at least $2^{\Omega(1/\eps)}$ time.
\end{theorem}

This theorem implies that the rank-$1$ Tucker-$(2, 3)$ decomposition problem requires $2^{\Omega(1/\epsilon)}$ time if $1 + \epsilon$ relative error is desired. To see why, recall that (see Definition \ref{def:tucker_decomposition}) the rank-$1$ Tucker-$(2, 3)$ decomposition problem corresponds to finding $u_1 \in \R^n, u_2 \in \R^n$ and $G \in \R^{1 \times 1 \times n}$ such that $\|M_3(G) (u_1^\top \times u_2^\top) - M_3(A)\|_F^2$ is minimized. We can thus rewrite the objective as
\begin{equation}
\begin{split}
\min_{g, u_1, u_2} \sum_{k = 1}^n \|g_k (u_1^\top \times u_2^\top) - \tensorvec(A^k) \|_F^2 
& = \min_{g, u_1, u_2} \sum_{k = 1}^n \|g_k u_1 u_2^\top - A^k\|_F^2
\end{split}
\end{equation}
using the notation $g_k = G(1, 1, k)$ and $A^k = A(:, :, k)$. Therefore, given a fixed $u_1$ and $u_2$, we have $g_k := \argmin_t \|t u_1 u_2^\top - A^k \|_F^2$. Additionally, by the Pythagorean theorem, assuming that $u_1$ and $u_2$ have unit norm (as otherwise, we can scale all the $g_k$'s appropriately), we have
\begin{equation}
\begin{split}
\|t u_1 u_2^\top - A^k\|_F^2
& = \|t u_1 u_2^\top - u_1 u_1^\top A^k \|_F^2 + \|u_1 u_1^\top A^k - A^k\|_F^2 \\
& = \|t u_1 u_2^\top - u_1 u_1^\top A^k u_2 u_2^\top \|_F^2 + \|u_1 u_1^\top A^k u_2 u_2^\top - u_1 u_1^\top A^k\|_F^2 + \|u_1 u_1^\top A^k - A^k\|_F^2
\end{split}
\end{equation}
where we have applied the Pythagorean theorem twice. Thus, it is optimal to have $g_k = u_1^\top A^k u_2$, and therefore, for unit vectors $u, v \in \R^n$, the objective
$$\sum_{k = 1}^n \|uu^T A^k vv^T - A^k \|_F^2$$
is equivalent to the Tucker-$(2, 3)$ decomposition objective. Thus, under this conjecture, our fixed-parameter tractable algorithms for the Tucker and Tensor Train decompositions are optimal in terms of their dependency on $1/\varepsilon$ in the running time.

\paragraph{Techniques} Our reduction from $2 \to 4$ norm uses the specific form of a Tucker-$(2,3)$ decomposition, expanding the objective function, together with the fact that for symmetric matrices $A_k$, it holds that $\max_{\|u\|_2 = 1} \sum_{k = 1}^m \langle A_k u, u \rangle^2 = \max_{\|x\|_2 = \|y\|_2 = 1} \sum_{k = 1}^m \langle A_k x, y \rangle^2$. We use an inequality relating $\|A\|_{2 \to 4}$ to the sum of $4$-th powers of Euclidean norms of slices of $A$, which in turn is related to the Tucker-$(2,3)$ objective after applying this fact. 

We note that known upper bounds for approximating $\|A\|_{2 \to 4}$ up to a multiplicative $O(1)$ factor do not contradict this. Indeed, the work of \cite{b12} shows how in $2^{O(n^{1/2})}$ time one can distinguish between the cases $\|A\|_{2 \to 4} \leq c \sigma_{min}(A)$ and $\|A\|_{2 \to 4} \geq C \sigma_{min}(A)$ if $1 < c < C$ are fixed constants, where $\sigma_{min}(A)$ is the smallest singular value of $A$. However, for super-constant $c$ and $C$, one can show their algorithm takes $2^{\Omega(n)}$ time, and thus is no faster than enumerating over a net. \footnotemark \footnotetext{On page 64 of \url{https://arxiv.org/pdf/1205.4484.pdf} it is mentioned that if $\dim(V) \leq C^2 n^{2/q}$ which is $C^2 n^{1/2}$ for $q = 4$, then brute force enumeration can be used and this would take time exponential in $\dim(V)$. Note that the goal is to determine whether $\|A\|_{2 \to 4} \geq C \sigma$ where $\sigma$ is the least singular value of $A$. If no restrictions are placed on $C$, then it could be larger than $n^{1/4}$, since for a fixed $v$, $\frac{\|Av\|_4}{\|Av\|_2}$ could be as large as $n^{1/4}$, and moreover, $\frac{\|Av\|_2}{\|v\|_2}$ could be arbitrarily larger than $\sigma$, meaning $\frac{\|Av\|_4}{\|v\|_2}$ could be larger than $n^{1/4}\sigma$. Thus, brute-force enumeration would take $2^{O(n)}$ time, in this case. If brute-force enumeration is not used, then Corollary 10.2 of that work can be used to obtain the bound $\|V\|_{2->4} \geq \sqrt{\dim(V)}/n^{1/4}$. Since $\dim(V) \leq n$, this bound is at best $n^{1/4}$, while $C$ could be larger. Thus, the algorithm of \cite{b12} can take exponential time if no restrictions are placed on $C$ and $c$.} Thus, either our lower bounds hold, or a major breakthrough will be needed for approximating the $\|A\|_{2 \to 4}$ norm. We note that such reductions are common in fine-grained complexity theory (see \cite{bringmann_fine_grained_complexity_tutorial}).

\subsubsection{Robust Loss for Tucker Decomposition}

As studied in  \cite{swz19_tensor_low_rank}, another question is whether it is possible to obtain low rank decompositions with good relative error for more robust loss functions than the Frobenius norm. A number of works attempt to address this question, see, e.g., \cite{swz19_tensor_low_rank,l1Tucker} for work which looks at the $\ell_1$-norm and sum of Euclidean norm losses. We propose a new algorithm for Tucker decomposition where the loss function is the sum of the Frobenius norms of the faces, where a face is a subtensor obtained by fixing the index along the last mode. We note that \cite{swz19_tensor_low_rank} only consider robust loss functions for the CP decomposition.

We consider the following norm $\|\cdot\|_R$, which is less sensitive to outliers than the Frobenius norm:

\begin{definition}[Sum of Frobenius Norms of Mode-$q$ Faces]
\label{def:robust_loss_function}
Let $A \in \R^{n \times \ldots \times n}$ be a $q$-mode tensor. Then,
$\|A\|_R = \sum_{i \in [n]} \|A(:, \ldots, :, i)\|_F$, 
where $A(:, \ldots, :, i)$ is the slice of $A$ whose index in the last mode is $i$. In other words, $\|A\|_R$ is the sum of the Frobenius norms of the faces along the last mode.
\end{definition}

$\| \cdot \|_R$ is a high dimensional generalization of the $\ell_{2, 1}$ norm for matrices. $\ell_{2, 1}$ norm has been frequently adopted over the Frobenius norm for many optimization and maching learning tasks. For example, \cite{matrix_L12norm_application} uses $\ell_{2, 1}$ norm for discrimitive feature selection.

\begin{problem}[Robust Tucker-$(p, q)$ Decomposition]
\label{problem:robust_tucker_decomposition}
Let $A \in \R^{n \times \ldots \times n}$ be a $q$-mode tensor, with $p < q$, and let $k \in \N$. Then, we wish to find a $q$-mode tensor $X \in \R^{n \times \ldots \times n}$ of multilinear rank at most $(k, \ldots, k, n, \ldots, n)$ (where the first $p$ entries of the tuple are $k$ and the last $q - p$ entries are $n$), such that $\|X - A\|_R$ is minimized.
\end{problem}

We give an $O(p \cdot (\sqrt{k} \log k + \sqrt{\log (np)}))$-approximation algorithm with bicriteria rank $O(k \log^2 k)$ --- we formally state our result in Appendix \ref{sec:robust_loss_fn}. We make the assumption that $p < q$ only for this robust loss function --- all of our results for Frobenius norm Tucker decomposition hold even if $\,p = q$. We make this assumption since it allows us to apply techniques for $\ell_{1, 2}$-norm matrix low rank approximation - different techniques might be required if $p = q$.

\begin{remark}
In all of our results, unidentifiability does not affect our guarantees, since we are not recovering the ground truth. Instead, we compare the error to the best tensor which can be represented according to the given network with a specified rank.
\end{remark}

\subsubsection{Generalization: Tensor Network Approximation}

We extend our results summarized in Subsubsection \ref{subsubsec:tensor_train_bicriteria} to general tensor networks, which we define informally below. See Subsection \ref{subsec:general_tensor_network_contraction} for a formal definition.

\begin{definition}[General Tensor Network Contraction --- Informal]
Let $\calG = (V, E)$ be a graph. For each $v \in V$, suppose $U_v$ is a tensor with $\deg(v)$ modes of dimension $k$ (corresponding to each edge incident to $v$) and one mode of dimension $n$. We define the contraction of $\calG$ to be the tensor obtained by contracting $U_u$ and $U_v$ for every edge $(u, v) \in \calG$, until we are left with a single vertex.  For convenience, we denote this contraction by $\calG(\{U_v \mid v \in V\})$. We will informally refer to $k$ as the rank of the network according to $\calG$.
\end{definition}

The Tensor Train decomposition is a special case of tensor networks with $\calG$ being a line graph. We first extend our Tensor Train decomposition algorithm to obtain a bicriteria $(1+\eps)-$approximation algorithm for the special case of tree network decomposition.

\begin{theorem}[Theorem \ref{thm:analysis_of_tree_network_bicriteria} --- Informal]
Let $A \in \R^{n \times \ldots \times n}$ be a $q$-mode tensor and $\calT$ be a tree. Then, our algorithm outputs a tree network $M$ of rank $O(\frac{qk}{\epsilon} \log(\frac{q}{\delta}))$, such that
$$\|M - A\|_F \leq (1 + \epsilon) \min_T \|T - A\|_F$$
where the minimum on the right-hand side is taken over all tensors $T$ with rank at most $k$ according to $\calT$. The running time of our algorithm is $O(q \cdot \nnz(A)) + n \cdot (\frac{qk}{\epsilon \delta})^{O(d)}$.
\end{theorem}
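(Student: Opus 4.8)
The plan is to lift the tensor-train argument of the previous theorem to an arbitrary tree. Root $\calT$ at an arbitrary vertex $\rho$, and for an edge $e = (u,v)$ with $u$ the parent, let $A_e$ be the flattening of $A$ whose rows are indexed by the modes in the subtree hanging below $e$ and whose columns are indexed by the remaining modes; a tensor $T$ has rank at most $k$ according to $\calT$ exactly when every flattening $T_e$ has rank at most $k$. The algorithm processes the vertices from the leaves toward $\rho$: when vertex $v$ with parent edge $e$ is reached, the legs of $v$ toward its already-processed children have been replaced by coordinates in chosen $r$-dimensional subspaces (here $r = O(\frac{qk}{\eps}\log\frac{q}{\delta})$), and we pick for the leg $e$ an $r$-dimensional subspace $W_e$ obtained by sketching the appropriate flattening on the complementary side, recording the resulting core $U_v$. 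Since each $W_e$ is chosen inside the span already fixed by the children, the pieces automatically contract into a single tree network $M = \calT(\{U_v\})$, so the only thing to verify is the approximation bound; by a hybrid argument over the $O(q)$ edges (exactly as in the tensor-train case) it suffices that each per-edge sketched low-rank step incur at most a $(1 + \eps/q)$ factor and succeed with probability $1 - \delta/q$, which explains both the $q/\eps$ blow-up and the union-bound $\log(q/\delta)$ in the bicriteria rank.

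To carry out each per-edge step in near-linear time we cannot afford the SVD of $A_e$; instead $W_e$ is taken to be the column span of $A_e S_e^\top$, where $S_e$ is not a generic dense sketch but a \emph{tree-structured} composition of small sparse (TensorSketch-type) maps, one per vertex of the complementary subtree, contracted along $\calT$ exactly as the tensors $U_v$ are. The key lemma, which is the heart of the proof and the direct analogue of the subspace-embedding result used for tensor train, states that such a composition of subspace embeddings, each with only $\poly(qk/\eps)$ rows, is with probability $1-\delta$ an affine embedding for the regression problem defining the best approximation of $A_e$ within a given low-dimensional subspace, while keeping the \emph{total} number of rows polynomial in $qk/\eps$ and $\log(1/\delta)$ --- no exponential blow-up in $q$. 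The lemma is proved by induction over the subtree from the leaves inward: at each vertex one shows that composing the sketches already built for the children with one fresh local sketch again yields an embedding, using that the object produced at that vertex has rank bounded in terms of $r$ and the vertex degree, so the fresh local sketch needs only polynomially many (in that bound) rows, and the approximation and failure-probability errors telescope across the tree rather than compounding multiplicatively. The maximum degree $d$ of $\calT$ enters here: an internal vertex of degree $d$ carries $d-1$ already-reduced legs of dimension $\approx r$, so the intermediate flattenings, the local sketch dimensions, and the eventual core tensors all scale like $r^{O(d)}$.

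Given the lemma, the algorithm is a single bottom-up pass: apply the local sketches to $A$ vertex by vertex --- each application is a sparsity-preserving sketch on a single mode, costing $O(\nnz(A))$, for a total of $O(q \cdot \nnz(A))$ up to lower-order terms --- thereby reducing $A$ to a collection of small tensors, then for each edge read off $W_e$ as the column span of the corresponding small matrix $A_e S_e^\top$, solve the resulting small low-rank and regression problems to obtain the factors $U_v$, and finally project $A$ (again approximately, via a sketch) onto $\bigotimes_v W_e$ to obtain the core tensors. All of these residual problems, as well as the output network itself, have size $n \cdot (\frac{qk}{\eps\delta})^{O(d)}$, which accounts for the additive second term in the running time.

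The main obstacle I expect is precisely the inductive subspace/affine-embedding lemma: one must simultaneously track, at every vertex, the rank of the intermediate contracted tensor (to size the next local sketch), the multiplicative approximation error, and the failure probability, and argue that the errors add rather than multiply as one moves up the tree; combining this with a union bound over all $O(q)$ edges so that every $W_e$ is simultaneously good --- while keeping each local sketch dimension polynomial --- is the delicate part, and is what forces the $\log(q/\delta)$ factor in the final bicriteria rank. The degree-dependence $r^{O(d)}$ is unavoidable here since it already appears in the number of parameters of the output tree network, but it must be confined to the additive term and kept out of the $O(q \cdot \nnz(A))$ term, which requires that the maps actually \emph{facing} $A$ have only $\poly(q)$ rows.
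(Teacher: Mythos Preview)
Your proposal is correct and follows essentially the same approach as the paper: root the tree, process vertices bottom-up, replace each child subtree by a small sketch built recursively via the JL-moment composition lemma, sketch the unknown ``above'' part obliviously, and run a hybrid argument losing $(1+O(\eps/q))$ per vertex with failure probability $O(\delta/q)$. The paper differs only in minor packaging: rather than building a tree-structured sketch for the \emph{complementary} side, it applies a single flat Countsketch $T_v$ (plus a dense sign matrix $R_v$) to all above-modes at once---this works because only a rank-$k$ affine embedding is needed there---while the tree-structured composition (the $S_v$'s, assembled into $M_v$) lives on the already-processed subtree side; the inductive embedding lemma is then stated via three short sub-lemmas (one for the $S_v$ composition, one for $T_v$, one for $R_v$) rather than as a single unified statement, but the content is the same.
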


We also show how to apply the tree network decomposition algorithm to approximate tensors with low ranks under a particular graph $\calG$ which may have cycles. Many such network-based decompositions have been used in different applications.

\begin{figure}[htb]
     \centering
     \begin{subfigure}[b]{0.2\textwidth}
         \centering
         \includegraphics[width=0.95\textwidth]{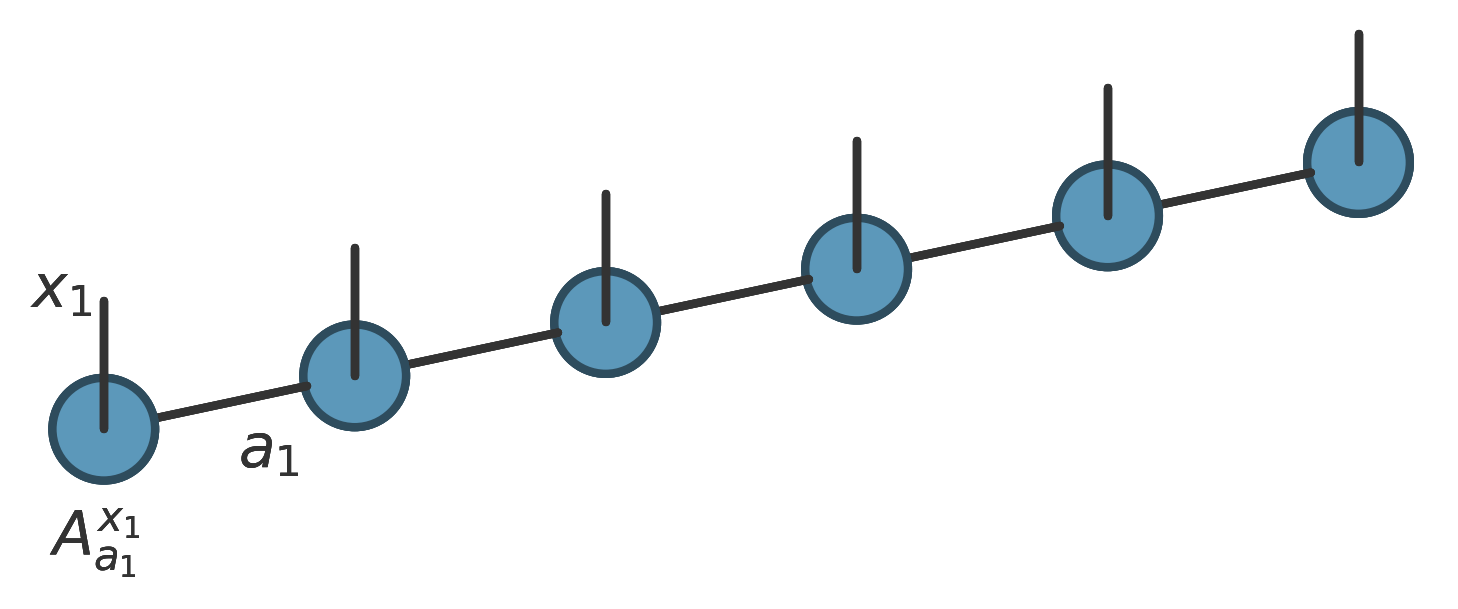}
         \caption{Matrix Product State/Tensor Train}
     \end{subfigure}
     \hfill
     \begin{subfigure}[b]{0.15\textwidth}
         \centering
         \includegraphics[width=0.75\textwidth]{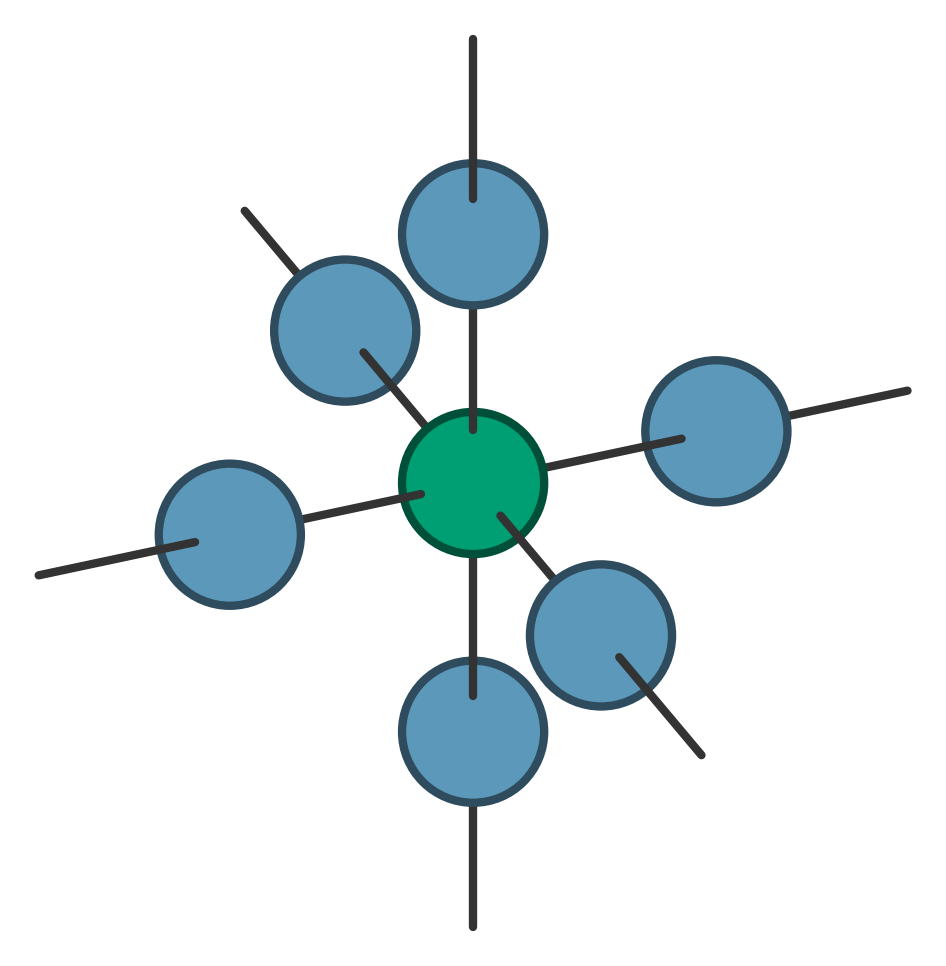}
         \caption{Tucker}
     \end{subfigure}
     \hfill
     \begin{subfigure}[b]{0.2\textwidth}
         \centering
         \includegraphics[width=0.9\textwidth]{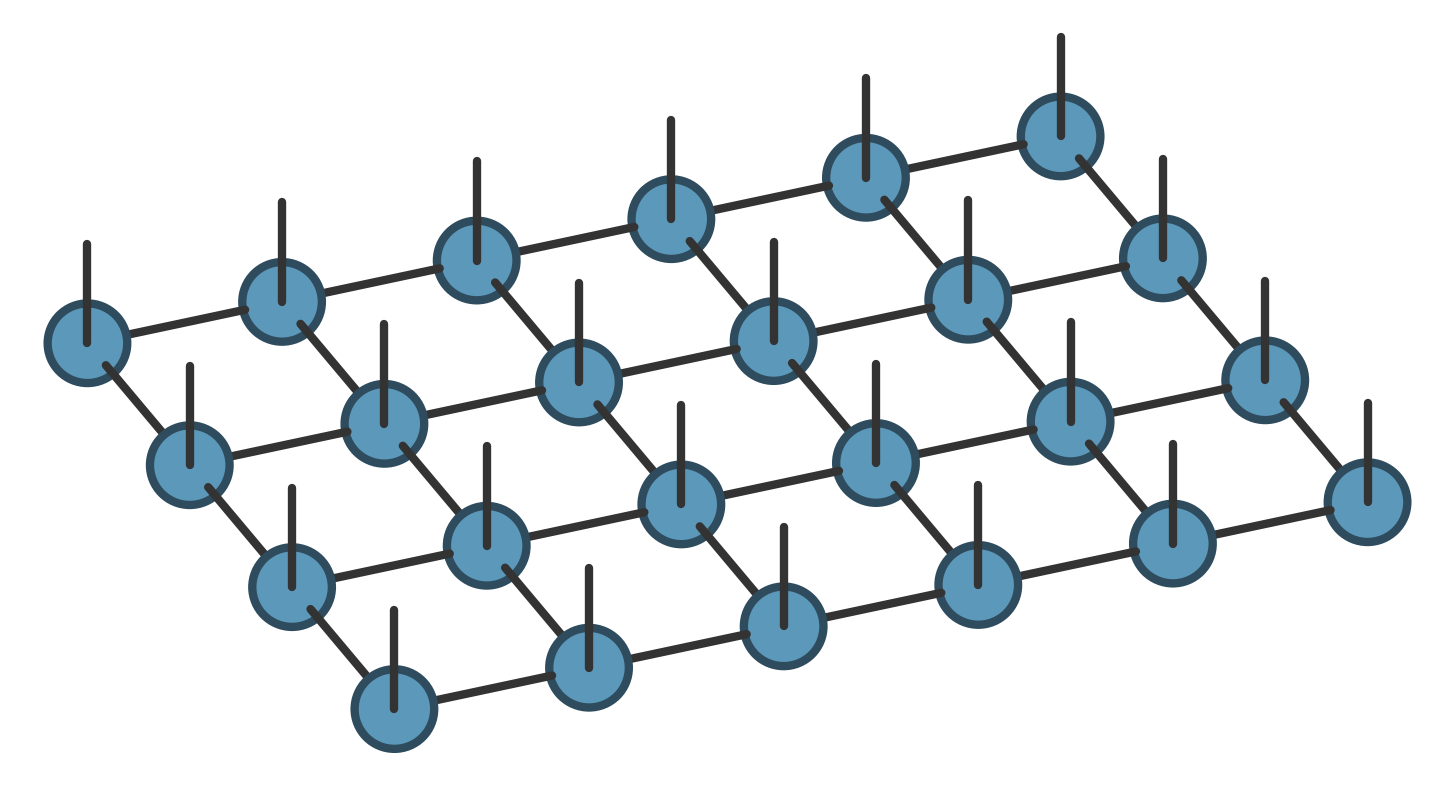}
         \caption{Projected Entangled Pair State}
     \end{subfigure}
     \hfill
   \begin{subfigure}[b]{0.2\textwidth}
         \centering
         \includegraphics[width=0.9\textwidth]{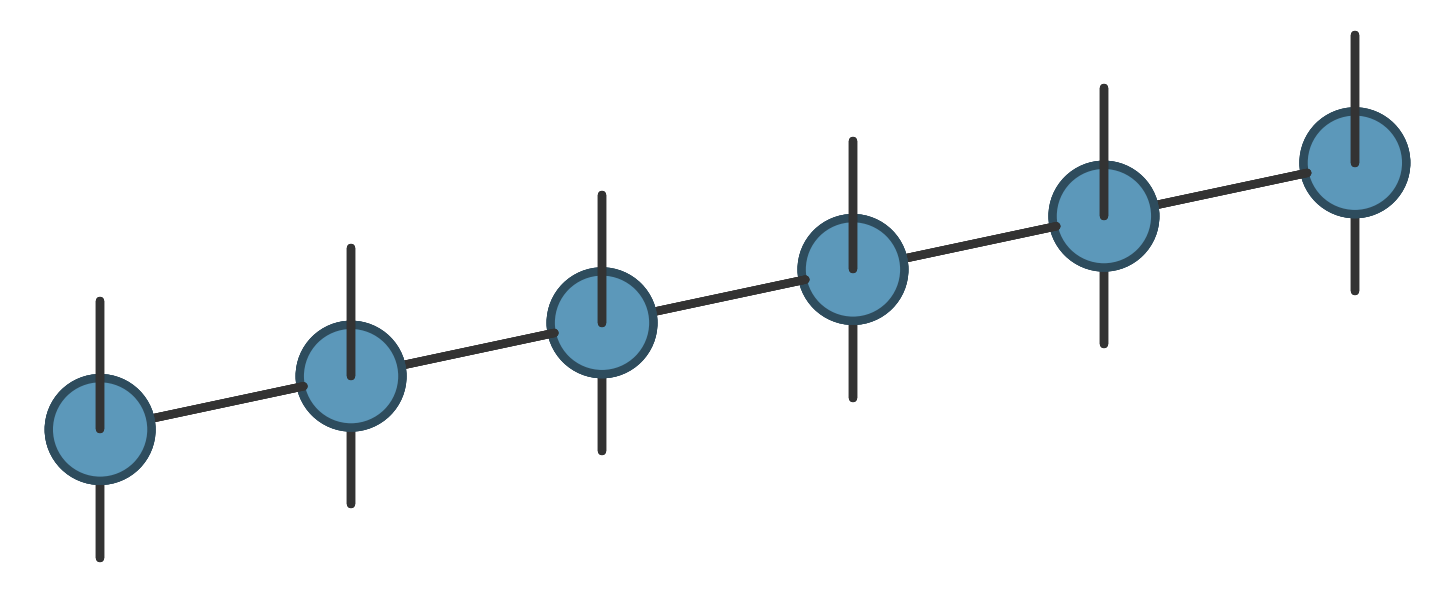}
         \caption{Matrix Product Operator}
     \end{subfigure}
    \hfill
     \begin{subfigure}[b]{0.2\textwidth}
         \centering
         \includegraphics[width=0.9\textwidth]{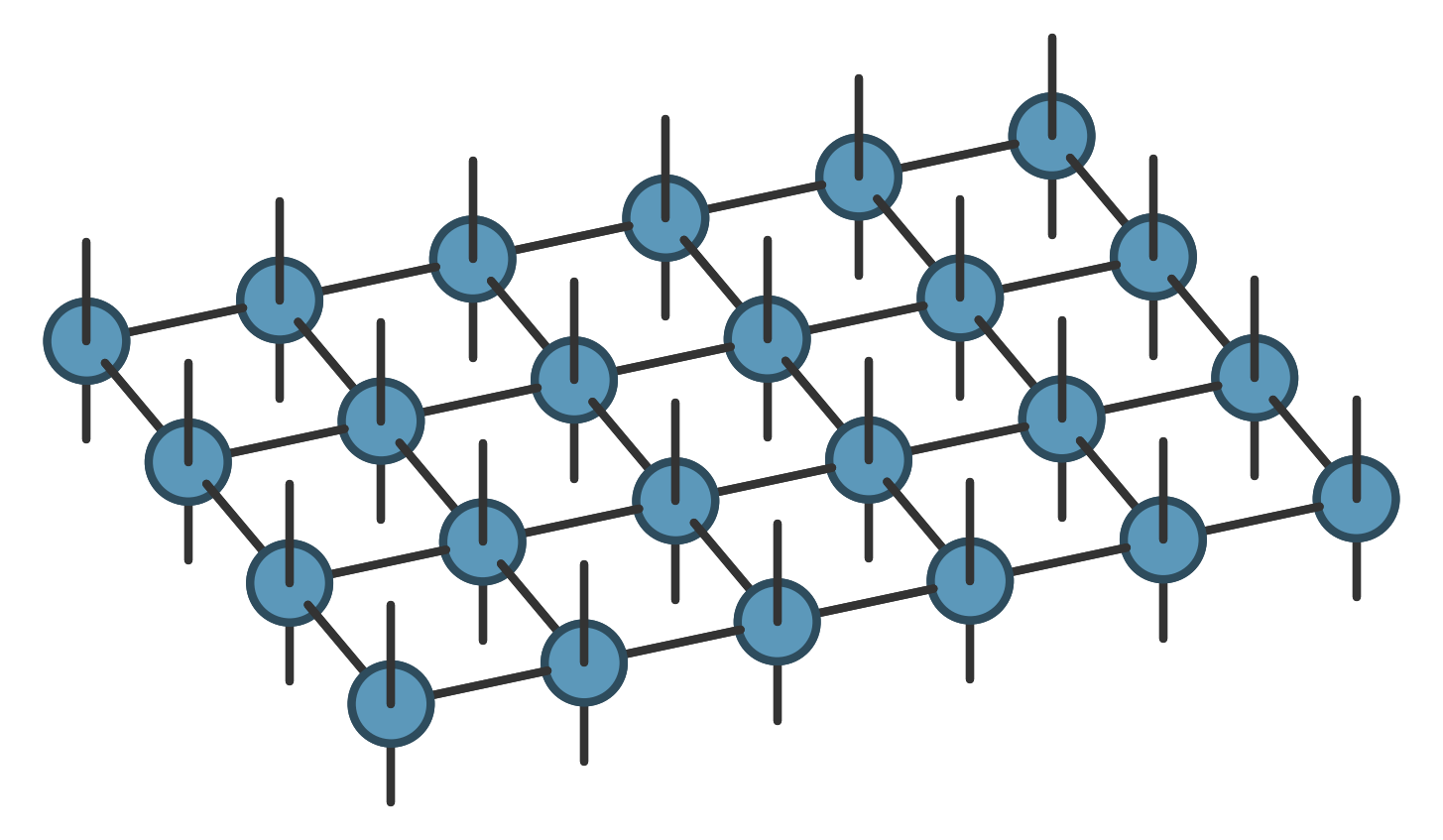}
         \caption{Projected Entangled Pair Operator}
     \end{subfigure}
     \caption{Popular Tensor Networks.}
     \label{fig:tns}
\end{figure}



\paragraph{Tensors with Low General Tensor Network Rank}

We show how to find a binary tree $\calT$ on a vertex set $V'$ containing $O(q)$ nodes, and corresponding tensors $U_v$, again with a mode corresponding to each edge $e \in T$ of dimension poly$(k^{\deg(G)\textrm{tw}(G)}q/\varepsilon)$, for which
$$\|\calT(\{U_v \mid v \in V'\}) -  A\|_F^2 \leq (1+\varepsilon) \cdot \|\calG(\{U^*_v \mid v \in V\}) - A\|_F^2,$$
in $O(\nnz(A) q + n \cdot \poly(k^{\deg(G)\textrm{tw}(G)}q/\varepsilon)$ time, where tw$(G)$ is the treewidth of $G$, $\deg(G)$ is the maximum degree of a vertex, and $\nnz(A)$ is the number of non-zero entries of $A$.  Although our output tensor $\calT(\{U_v \mid v \in V'\})$ does not have the same topology as the original network $\calG$, it performs as well, up to a $(1+\varepsilon)$-factor, as the best tensor with network topology $\calG$, and moreover, the number of parameters in the tree network $\calT$ is at most a factor $k^{O(\deg(G) \textrm{tw}(G))}$ larger than that in $\calG$.
The overall tree network will have $q \cdot n \cdot k^{O(\deg(G)^2 \tw(G)^2)}$ parameters, instead of $n^q$.

The main idea of our algorithm is that, given a graph $\calG = (V, E)$ and a corresponding set of factors $\{U_v \mid v \in V\}$ where the ranks on the edges in $E$ are all of rank $k$, there exists a binary tree network $\calT$ with corresponding factors $\{W_v \mid v \text{ is a vertex of }\calT\}$ which is equivalent, i.e., 
$$\calT(\{W_v \mid v \text{ is a vertex of }\calT\}) = \calG(\{U_v \mid v \in V\})$$
Crucially, $\calT$ also has low rank. Specifically, the ranks on the edges of $\calT$ are at most $k^{\deg(\calG)\tw(\calG)}$. This is shown in subsection \ref{subsec:binary_tree_existential_result}.

Using the observation above, we construct the binary tree $\calT$ by contracting the edges of $\calG$ in a particular order, and each vertex $w$ resulting from the contraction of an edge $(u, v)$ has two children $u'$ and $v'$ corresponding to $u$ and $v$. In order for the edges of $\calT$ to have low rank, it is crucial for the degree of the vertices $w$ obtained from contraction to be small. For this to hold, we make use of the following result from \cite{ms08_simulating_quantum_tensor_contraction}: there is an order in which we can contract the edges of $\calG$, such that the largest degree of any vertex at any point during the contraction is at most $O(\deg(\calG) \tw(\calG))$. Moreover, this order can be computed in $\poly(|V|) e^{O(\deg(\calG) \tw(\calG))}$ time. The technical details of the algorithm are presented in Section \ref{sec:general_networks}.

\subsection{Comparison to Prior Work}

The study of fast algorithms for tensor decompositions is a vast subject with work in numerical linear algebra, scientific computing, and theoretical computer science. We discuss several other approaches to these problems below, but we emphasize the main differences with our work here:
\begin{enumerate}
    \setlength\itemsep{0.1em}
    \item There is no previous work for general tensor network low rank approximation, despite work that motivates studying this question \cite{yl18}.
    \item We obtain relative error $(1+\varepsilon)$-approximations, whereas, with the exception of \cite{swz19_tensor_low_rank}, previous work either gets an additive error or at best a fixed constant approximation factor (e.g. \cite{huber2017randomized} and \cite{grasedyck10_hierarchical_tucker} which obtain $O(\sqrt{q})$ approximation algorithms).
    \item Previous works do not run in $\nnz(A) \cdot q$ time, but rather at least $\nnz(A) \poly(kq)$ time (such as \cite{huber2017randomized} which takes at least $O(qs^2 \nnz(A))$ time where $s$ is their output rank, and \cite{grasedyck10_hierarchical_tucker} for Hierarchical Tucker decomposition which requires $O(n^q)$ time due to the use of SVD).
\end{enumerate}
In addition to the above, a major contribution of our work is to introduce many of the advanced and recent techniques from randomized numerical linear algebra to the study of more advanced tensor decompositions. We note that some such tools have been developed, such as TensorSketch \cite{pham2013fast} and its optimizations \cite{akkpvwz20_tensor_sketch_outer_product}, but they do not directly apply to Tensor Train decompositions, for example. In this work, we develop new algorithmic tools, such as sparse affine and subspace embeddings for tensors represented implicitly, such as those represented as a sequence of $\circ$ operations, 
as well as new analytical tools such as the propagation of JL-moment guarantees across the nodes of a tree network. We note that the idea of implicit tensor maps is seen in prior works such as \cite{implicit_projection_map}. However, the context and goal are vastly different, the map in \cite{implicit_projection_map} is specific to a subspace, where ours is an embedding. \cite{ms2022_tensor_network_embeddings_concurrent_and_independent} that has a similar spirit in developing tensor embeddings does not present an end-to-end algorithm and guarantees for approximate tensor decomposition, discussed in detail in Section \ref{sec:additional_related_work}.

\begin{table}
    \centering
    \begin{tabular}{|c|c|c|c|}
    \hline
    Work & Running Time & Approximation Factor \\ \hline
    \cite{grasedyck10_hierarchical_tucker} & $O(n^{\frac 3 2 q})$ & $\sqrt{q}$ \\ \hline
    \cite{sun_lowranktucker} & $O\left(\left(\frac{s(1 - s/n)^q}{1 - s/n} + qk\right)n^q\right)$  & $2q$    \\ \hline
    \cite{MinsterRachel2020RAfL} & $O(\sum_{j=1}^q (k^jn^{q-j+1} + k^jn^{q-j}))$  & $O(\sqrt{q})$    \\ \hline
    \textbf{This work} & $O(q \cdot \nnz(A)) + n \cdot (qk/\eps)^{O(q^2k^2\log q \frac 1 \eps)}$ & $1 + \eps$    \\ \hline
    \end{tabular}
    \caption{Comparison of algorithms for the Tucker decomposition. $\omega$ is the matrix multiplication complexity exponent.}
    \label{tab:tucker_results_comparison}
\end{table}

Besides \cite{swz19_tensor_low_rank}, there are some prior works on tensor decompositions using the same family of techniques as ours, such as \cite{sun_lowranktucker} with Tucker-factor-wise sketches, \cite{pmlr20_tensorized_random_proj} with randomized projections, \cite{TensorSketch} with subspace embeddings for tensors, and \cite{leverage_score_tensor_ring} with leverage score sampling. Several algorithms using more standard regression techniques have been developed for tensor decompositions. This includes Alternative Least Squares algorithms such as \cite{fahrbach2021fast_ridge_regression}, and SVD-like algorithms such as \cite{MinsterRachel2020RAfL}, \cite{oseledets2011tensor}, \cite{huber2017randomized}, and \cite{faster_tt_sparse_2019}.
These works are generally significantly more computationally expensive, or have worse guarantees, or both. In particular, they do not achieve $(1+\eps)-$approximation within a practical time complexity. We discuss them in more detail later in section \ref{sec:additional_related_work}.

In \cite{dbchmprs2021_arxiv_tt_rounding} which studies the related problem of Tensor Train rounding, a Tensor Train embedding similar to ours is proposed --- the embedding consists of a Tensor Train network where each factor has i.i.d. Gaussian entries. This work also proposes the ``right-to-left partial $\circ$'' operation to quickly apply this embedding to another Tensor Train. The work \cite{dbchmprs2021_arxiv_tt_rounding} does not analyze the error incurred using this embedding in their algorithms, and the embedding is not used as a subspace embedding or affine embedding. Instead, it is used to approximately perform the QR decomposition of matrices of the form $T_{X, n} T_{Y, n + 1 : N}$, where $Y$ is a Tensor Train of length $N$ given as the input to the Tensor Train rounding algorithm, and $T_{Y, n + 1 : N}$ is a matrix formed from cores $n + 1$ through $N$ of $Y$. Cores $n + 1$ through $N$ of the Gaussian Tensor Train network are applied to $T_{Y, n + 1 : N}$, and the column span of the resulting matrix is computed.

The concurrent and independent \cite{sos_tensor_ring} follows a different line of techniques towards $\circ$ based decompositions. Compared to ours, their Tensor Ring decomposition result makes several additional assumptions, notably $\forall i, j, k\in [q] |T_{i, j, k} - Tr(U_i^*U_jU_k^*)| <= \eta$ and the analogous second order constraint for the Sum-of-Squares technique. Their error guarantee is poly$(n, k, \max_{i \in q}||U^*_i||_F, 1/m)\eta^c$ over component-wise parameter distance ($m$ is a lower bound on the condition number of the fused matrix of an arbitrary combination of two modes in the optimal solution), but each output component is of rank $k$. To remove the $n$ dependency in the error guarantee, they need additional smoothed analysis assumptions. Our work takes a different approach to achieve a bicriteria algorithm and gets $(1+\eps)$ relative error guarantees with a comparable runtime.

We note that whether one can efficiently contract tensor networks with cycles is a question that has been studied in quantum physics and quantum computation (e.g.,  \cite{ms08_simulating_quantum_tensor_contraction}, \cite{Haferkamp_2020}, \cite{o2019parameterization}, to name a few examples). To our knowledge, the problem of approximating a given input tensor $A$ by a general tensor network of low rank has not been extensively studied. However, \cite{yl18} gives examples where the rank of a tensor $A$ with respect to one tensor network can be significantly lower than the rank with respect to another. We also note that hardness results for contracting PEPS networks (e.g., \cite{Haferkamp_2020}) suggest that either converting PEPS networks to equivalent tree networks is hard to do efficiently, or that an equivalent tree network would require a much larger rank.
 
%

\newpage
\appendix
\section*{Table of Contents for the Appendices}

\begin{itemize}
    \item In Section \ref{sec:preliminaries} we introduce sketching tools that are used throughout the paper, and also include the formal definition of general tensor network $\circ$.
    \item In Section \ref{sec:tree_networks} we introduce our bicriteria $(1 + \eps)$-approximation algorithm for tree networks.
    \item In Section \ref{sec:general_networks} we give our approximation algorithm for general tensor networks.
    \item In Section \ref{sec:frobenius_norm_1_plus_eps_rank_exactly_k} we give our fixed-parameter tractable algorithms for Tucker decomposition, Tensor train decomposition, and CP decomposition with output rank exactly $k$.
    \item In Section \ref{sec:rank_1_hardness} we give our hardness result based on Conjecture \ref{conjecture:2_to_4_norm_running_time} about approximating the $2 \to 4$ norm.
    \item In Section \ref{sec:robust_loss_fn} we give our algorithm for Tucker-$(p, q)$ decomposition with a robust loss function.
    \item In Section \ref{sec:additional_related_work} we discuss additional related works.
\end{itemize}

\newpage
\section{Preliminaries} 

\label{sec:preliminaries}

\subsection{General Tensor Network Contraction $\circ$}
\label{subsec:general_tensor_network_contraction}

\begin{definition}[General Tensor Network Contraction $\circ$]
\label{def:tensor_network_contraction}
Let $\calG = (V, E)$ be a connected graph, $q, n, k \in \N$, such that $|V| = q$. Suppose $\{U_v \mid v \in V\}$ is a collection of tensors, such that for each $v \in V$, $U_v$ has one mode for each of the edges incident to $v$ (of dimension $k$), and an additional mode (of dimension $n$), for a total of $\deg(v) + 1$ modes. Then, we let $\calG(\{U_v \mid v \in V\})$ be the tensor obtained by contracting $\{U_v \mid v \in V\}$ in the manner specified by $\calG$. Specifically, $\calG(\{U_v \mid v \in V\})$ is the tensor defined as follows:
\begin{enumerate}
    \setlength\itemsep{0.05em}
    \item Let $G = \calG$, and let $\calS = \{U_v \mid v \in V\}$.
    \item While $G$ has more than one vertex:
    \begin{enumerate}
        \setlength\itemsep{0.03em}
        \item Select an arbitrary edge $e = (u, v)$ in $G$.
        \item Let $G' = (V', E')$ be the graph obtained by contracting the edge $e$ in $G$, and let $w$ be the vertex in $G'$ that replaces $u$ and $v$.
        \item Remove $U_u$ and $U_v$ from $\calS$.
        \item Define $U_w$ to be the tensor obtained by contracting $U_u$ and $U_v$ along their respective modes corresponding to the edge $e$.
        \item Add $U_w$ to $\calS$.
    \end{enumerate}
    \item Let $T$ be the single tensor remaining in $\calS$. Then, $T$ is the desired tensor.
\end{enumerate}
\end{definition}

\subsection{Sketching Tools} \label{subsec:prelim_sketching_tools}

\subsubsection{Definitions} \label{subsubsec:sketching_definitions}

Let us first introduce some basic definitions. See \cite{woodruff14_sketching_as_a_tool} for a detailed exposition.

\begin{definition} [Approximate Matrix Product]
\label{def:approx_matrix_product}
Let $\calD$ be a distribution over random matrices $S \in \R^{r \times n}$. We say $\calD$ satisfies the $(\eps, \delta)$-approximate matrix product property if, for any matrices $A \in \R^{n \times d_1}, B \in \R^{n \times d_2}$,
$$\mathbf{Pr}_{S \sim \calD}\Big[\|A^TS^TSB - A^TB\|_F \geq \eps \|A\|_F \|B\|_F\Big] \leq \delta$$
\end{definition}

To show that $S$ satisfies the $(\eps, \delta)$-approximate matrix product property, it suffices to show that it satisfies the $(\eps, \delta, \ell)$-JL Moment Property:

\begin{definition}[$(\eps, \delta, \ell)$ JL Moment Property --- Definition 20 of \cite{kn14_sparser_jl_transforms}] \label{def:jl_moment_property}
Let $\calD$ be a distribution over random matrices in $S \in \R^{r \times d}$. We say $S$ satisfies the $(\eps, \delta, \ell)$ JL moment property if for all $x \in \R^d$ with $\|x\|_2 = 1$, $\mathbb{E}_{S \sim \calD}[|\|Sx\|_2^2 - 1|^\ell] \leq \eps^\ell \delta$.
\end{definition}

The $(\eps, \delta, \ell)$-JL moment property implies the approximate matrix product property:

\begin{theorem}[Theorem 21 of \cite{kn14_sparser_jl_transforms}]
\label{thm:jl_property_implies_approx_matrix_product}
Suppose a distribution $\calD$ over random matrices $S \in \R^{r \times n}$ satisfies the $(\eps, \delta, \ell)$-JL moment property, for some $\ell \geq 2$, and $\eps, \delta \in (0, 1/2)$. Then it satisfies the $(3\eps, \delta)$-approximate matrix product property.
\end{theorem}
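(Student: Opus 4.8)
The plan is to follow the standard two-step reduction: first establish the bound for pairs of \emph{vectors} via a polarization identity and the JL moment property, and then lift to matrices by writing the Frobenius norm as a sum of squares over column pairs. Concretely, in the first step I would show that for \emph{any} $u, v \in \R^n$,
$$\E{|\langle Su, Sv\rangle - \langle u, v\rangle|^\ell}^{1/\ell} \le \eps\, \delta^{1/\ell}\, \|u\|_2 \|v\|_2.$$
This follows from the polarization identity $4(\langle Su, Sv\rangle - \langle u, v\rangle) = (\|S(u+v)\|_2^2 - \|u+v\|_2^2) - (\|S(u-v)\|_2^2 - \|u-v\|_2^2)$: applying Minkowski's inequality in $L^\ell(\calD)$ and then the $(\eps, \delta, \ell)$-JL moment property to each term (after normalizing $u \pm v$ to unit vectors, the cases $u = \pm v$ being immediate) bounds each term by $\eps \delta^{1/\ell} \|u \pm v\|_2^2$, and the parallelogram law $\|u+v\|_2^2 + \|u-v\|_2^2 = 2\|u\|_2^2 + 2\|v\|_2^2$ finishes the step.

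For the lift to matrices, write $A$ and $B$ in terms of their columns $A_1, \ldots, A_{d_1}$ and $B_1, \ldots, B_{d_2}$, and set $X_{ij} = \langle SA_i, SB_j\rangle - \langle A_i, B_j\rangle$, so that $\|A^T S^T S B - A^T B\|_F^2 = \sum_{i, j} X_{ij}^2$. Then
$$\E{\|A^T S^T S B - A^T B\|_F^\ell} = \Big\| \sum_{i,j} X_{ij}^2 \Big\|_{L^{\ell/2}(\calD)}^{\ell/2} \le \Big( \sum_{i,j} \|X_{ij}\|_{L^\ell(\calD)}^2 \Big)^{\ell/2},$$
where the inequality is Minkowski's inequality in $L^{\ell/2}(\calD)$, which is a norm since $\ell \ge 2$, together with $\|X_{ij}^2\|_{L^{\ell/2}} = \|X_{ij}\|_{L^\ell}^2$. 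By the vector bound from the first step, the inner sum is at most $\sum_{i,j} \eps^2 \delta^{2/\ell} \|A_i\|_2^2 \|B_j\|_2^2 = \eps^2 \delta^{2/\ell} \|A\|_F^2 \|B\|_F^2$, and hence $\E{\|A^T S^T S B - A^T B\|_F^\ell} \le \eps^\ell \delta\, \|A\|_F^\ell \|B\|_F^\ell$.

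The theorem then follows from Markov's inequality applied to the nonnegative random variable $\|A^T S^T S B - A^T B\|_F^\ell$:
$$\Pr{\|A^T S^T S B - A^T B\|_F \ge 3\eps \|A\|_F \|B\|_F} \le \frac{\eps^\ell \delta\, \|A\|_F^\ell \|B\|_F^\ell}{(3\eps \|A\|_F \|B\|_F)^\ell} = \frac{\delta}{3^\ell} \le \delta.$$
(The constant $3$ is not tight — the same computation already yields the $(\eps, \delta)$-approximate matrix product property — but matching the stated constant only makes the last inequality more comfortable.) The only point that needs genuine care is the pair of nested applications of Minkowski's inequality — at exponent $\ell$ inside the polarization step and at exponent $\ell/2$ in the matrix step — and the accompanying bookkeeping of the powers of $\delta$ (passing from $\delta^{1/\ell}$ per column pair, to $\delta^{2/\ell}$ after squaring, to $\delta$ after summing over all pairs and raising to the power $\ell/2$). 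Everything else — polarization, the parallelogram law, scaling to unit vectors, and Markov — is routine.
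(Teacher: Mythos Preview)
The paper does not prove this theorem itself but simply cites it as Theorem~21 of \cite{kn14_sparser_jl_transforms}; your proposal is exactly the standard argument given there (polarization for the vector case, Minkowski in $L^{\ell/2}$ to lift to matrices, then Markov). One small bookkeeping point: the parallelogram law alone yields the bound $\tfrac{1}{2}(\|u\|_2^2 + \|v\|_2^2)$ rather than $\|u\|_2\|v\|_2$, but since both sides of your vector inequality are separately homogeneous of degree one in $u$ and in $v$, you may first normalize to $\|u\|_2 = \|v\|_2 = 1$, at which point the two quantities coincide and the step goes through as stated.
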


The $(\eps, \delta, \ell)$-JL moment property is also preserved under certain operations:

\begin{lemma}[Lemma 14 of \cite{akkpvwz20_tensor_sketch_outer_product}]
\label{lemma:jl_moment_with_identity_matrix}
Let $r, d, k \in \N$ and suppose $S \in \R^{r \times d}$ has the $(\eps, \delta, \ell)$-JL moment property for some $\eps, \delta \in (0, 1)$ and $\ell \in \N$. Then, $S \times I_k$ has the $(\eps, \delta, \ell)$-JL moment property, where $I_k$ is the $k \times k$ identity matrix.
\end{lemma}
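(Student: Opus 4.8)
The plan is to reduce the claim about $S \times I_k \in \R^{rk \times dk}$ directly to the $(\eps,\delta,\ell)$-JL moment property of $S$, by exploiting the block structure of a Kronecker product with an identity. First I would fix an arbitrary unit vector $y \in \R^{dk}$ and write it as $y = \sum_{j=1}^{k} z_j \times e_j$, where $e_1,\dots,e_k$ is the standard basis of $\R^k$ and $z_1,\dots,z_k \in \R^d$ are defined by reading off the appropriate coordinates of $y$ (equivalently, reshaping $y$ into a $k \times d$ matrix whose rows are the $z_j$). Using the indexing convention for the Kronecker product from the paper, one checks that $(S \times I_k)(z_j \times e_j) = (S z_j) \times e_j$, so that $(S \times I_k) y = \sum_{j=1}^k (S z_j) \times e_j$. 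Since the vectors $(S z_j) \times e_j$ have disjoint supports, this gives the two identities $\|(S \times I_k) y\|_2^2 = \sum_{j=1}^k \|S z_j\|_2^2$ and $\|y\|_2^2 = \sum_{j=1}^k \|z_j\|_2^2 = 1$.

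Next I would set $\alpha_j = \|z_j\|_2^2$, so that $\alpha_j \ge 0$ and $\sum_{j=1}^k \alpha_j = 1$, and for each $j$ with $\alpha_j > 0$ let $\hat z_j = z_j / \|z_j\|_2$ be the associated unit vector in $\R^d$; then $\|S z_j\|_2^2 = \alpha_j \|S \hat z_j\|_2^2$, and hence
$$\|(S \times I_k) y\|_2^2 - 1 = \sum_{j \,:\, \alpha_j > 0} \alpha_j \big(\|S \hat z_j\|_2^2 - 1\big).$$
The key step is to apply Jensen's inequality to the convex function $t \mapsto |t|^\ell$ (convex since $\ell \ge 1$) with respect to the probability weights $\alpha_j$, which yields
$$\Big|\,\|(S \times I_k) y\|_2^2 - 1\,\Big|^\ell \;\le\; \sum_{j \,:\, \alpha_j > 0} \alpha_j \,\Big|\,\|S \hat z_j\|_2^2 - 1\,\Big|^\ell .$$
Taking expectation over $S$ and applying the $(\eps,\delta,\ell)$-JL moment property of $S$ to each unit vector $\hat z_j$, i.e. $\E_S\big[\,|\|S \hat z_j\|_2^2 - 1|^\ell\,\big] \le \eps^\ell \delta$, we conclude $\E_S\big[\,|\|(S \times I_k) y\|_2^2 - 1|^\ell\,\big] \le \sum_j \alpha_j \eps^\ell \delta = \eps^\ell \delta$. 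Since $y$ was an arbitrary unit vector, $S \times I_k$ has the $(\eps,\delta,\ell)$-JL moment property.

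I do not expect a real obstacle here; the argument is short. The only points requiring minor care are (i) confirming the blockwise action $(S \times I_k)(z \times e_j) = (Sz) \times e_j$ under the paper's row/column indexing of the Kronecker product, together with the resulting Pythagorean splitting of $\|(S \times I_k) y\|_2^2$, and (ii) bookkeeping for the indices $j$ with $z_j = 0$, which contribute nothing to either side and may simply be discarded before normalizing. The substance of the proof is the convexity/Jensen step, which converts the correlated sum $\sum_j \alpha_j \|S \hat z_j\|_2^2$ — all terms share the same random matrix $S$ — into a convex combination of individually controlled $\ell$-th moments, so no independence across the copies is needed.
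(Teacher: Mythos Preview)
Your argument is correct and complete: the block decomposition $\|(S\times I_k)y\|_2^2=\sum_j\|Sz_j\|_2^2$ followed by Jensen's inequality on $t\mapsto|t|^\ell$ is exactly the right idea, and the bookkeeping for $z_j=0$ is handled properly. Note, however, that the paper does not supply its own proof of this lemma---it is quoted verbatim as Lemma~14 of \cite{akkpvwz20_tensor_sketch_outer_product}---so there is nothing in the present paper to compare against; your proof is essentially the standard one and matches what one finds in that reference.
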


\begin{lemma}[Lemma 15 from \cite{akkpvwz20_tensor_sketch_outer_product}]
\label{lemma:jl_moment_composition_of_jl_maps}
Let $\eps, \delta \in (0, 1)$ and $k \in \N$. Suppose for $i \in [k]$ that $M_i \in \R^{d_{i + 1} \times d_i}$ has the $(\frac{\eps}{\sqrt{2k}}, \delta, \ell)$-JL moment property, and in addition suppose that the $M_i$'s are independent. Then, $M_k M_{k - 1} \ldots M_1$ has the $(\eps, \delta, \ell)$-JL moment property.
\end{lemma}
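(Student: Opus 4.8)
The plan is to fix a single unit vector, track how its squared length is perturbed one factor at a time, and exploit a \emph{martingale} structure so that the perturbations accumulate like their second moments rather than their magnitudes --- this is exactly what makes the $1/\sqrt{2k}$ scaling in the hypothesis the right one, and sidesteps the naive term-by-term expansion, which (as explained below) blows up.

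\noindent\textbf{Setup.} Fix $x$ with $\|x\|_2 = 1$; since $x$ is arbitrary it suffices to prove $\E{|a_k - 1|^\ell} \le \eps^\ell \delta$, where $a_j := \|Q_j x\|_2^2$ and $Q_j := M_j M_{j-1} \cdots M_1$ (with $Q_0 = I$, $a_0 = 1$). Let $\mathcal{H}_{j-1} := \sigma(M_1, \dots, M_{j-1})$, and on the event $Q_{j-1}x \neq 0$ let $u_{j-1} := Q_{j-1}x / \|Q_{j-1}x\|_2$, an $\mathcal{H}_{j-1}$-measurable unit vector; then $a_j = a_{j-1}(1 + W_j)$ with $W_j := \|M_j u_{j-1}\|_2^2 - 1$. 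Since $M_j$ is independent of $M_1, \dots, M_{j-1}$, conditioning on $\mathcal{H}_{j-1}$ and applying the $(\tfrac{\eps}{\sqrt{2k}}, \delta, \ell)$-JL moment property of $M_j$ to the now-fixed unit vector $u_{j-1}$ gives $\E{|W_j|^\ell \mid \mathcal{H}_{j-1}} \le (\tfrac{\eps}{\sqrt{2k}})^\ell \delta$; moreover $\E{W_j \mid \mathcal{H}_{j-1}} = 0$ because the sketching distributions to which we apply this lemma are unbiased, i.e. $\E{\|M_j u\|_2^2} = \|u\|_2^2$ (this is genuinely needed --- with a bias of size $\tfrac{\eps}{\sqrt{2k}}$ per step the mean of $a_k$ alone could drift by $e^{\Theta(\eps\sqrt k)}$). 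Hence the increments $D_m := a_m - a_{m-1} = a_{m-1} W_m$ form a martingale difference sequence for $(\mathcal{H}_m)_m$, with $\E{|D_m|^\ell \mid \mathcal{H}_{m-1}} \le (\tfrac{\eps}{\sqrt{2k}})^\ell \delta \, a_{m-1}^\ell$.

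\noindent\textbf{The martingale estimate.} I would first do $\ell = 2$ to expose the mechanism: orthogonality of the $D_m$ gives $\E{(a_k - 1)^2} = \sum_{m=1}^k \E{D_m^2} \le (\tfrac{\eps}{\sqrt{2k}})^2 \delta \sum_{m=1}^k \E{a_{m-1}^2}$, and since $\E{a_{m-1}} = 1$ we have $\E{a_{m-1}^2} = 1 + \Var{a_{m-1}}$; as $m \mapsto \Var{a_m}$ is nondecreasing, bounding each summand by $1 + \Var{a_k}$ yields $\Var{a_k} \le \tfrac{\eps^2\delta}{2}(1 + \Var{a_k})$, hence $\Var{a_k} \le \eps^2\delta$ using $\eps^2\delta < 1$. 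The $\sqrt{2k}$ is calibrated exactly for this: $k$ per-factor terms of size $\tfrac{\eps^2}{2k}\delta$ sum to $\tfrac{\eps^2}{2}\delta$, leaving a factor-of-$2$ slack to absorb the self-referential term. For general $\ell \ge 2$ I would repeat the argument with $L^\ell$-norms in place of variances, replacing orthogonality by the $2$-uniform smoothness of $L^\ell$ (a Burkholder/Pinelis-type martingale inequality $\big\| \textstyle\sum_m D_m \big\|_{L^\ell}^2 \lesssim \sum_m \|D_m\|_{L^\ell}^2$ for $\ell \ge 2$), using $\E{a_{m-1}^\ell} = 1 + \sum_{i \ge 2} \binom{\ell}{i}\E{(a_{m-1}-1)^i} = 1 + O\big(\|a_{m-1} - 1\|_{L^\ell}^2\big)$ (the $i=1$ term vanishing because $\E{a_{m-1}} = 1$), and then solving the resulting recursion for $b_k := \|a_k - 1\|_{L^\ell}$, again invoking monotonicity of $m \mapsto b_m$ to close it; carefully tracking the absolute constant in the martingale moment inequality is what the $\sqrt{2k}$ (rather than $\sqrt{k}$) leaves room for.

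\noindent\textbf{Where the difficulty lies.} The one thing to avoid is the obvious route: expanding $a_k - 1 = \prod_{j=1}^k (1 + W_j) - 1 = \sum_{\emptyset \neq S \subseteq [k]} \prod_{j \in S} W_j$ and bounding term by term with the triangle inequality in $L^\ell$ gives only $\|a_k - 1\|_{L^\ell} \le \big(1 + \tfrac{\eps}{\sqrt{2k}}\delta^{1/\ell}\big)^k - 1 \approx e^{\eps\sqrt{k/2}\,\delta^{1/\ell}} - 1$, which is exponentially larger than the target $\eps\delta^{1/\ell}$ once $k$ is large. The entire content of the lemma is that consecutive perturbations cancel in conditional expectation, so that it is their variances --- which total $\Theta(\eps^2 \delta)$ and hence contribute $\Theta(\eps\sqrt\delta)$ to a norm --- that accumulate, rather than their magnitudes, which would contribute $\Theta(\eps\sqrt{k})$; the $1/\sqrt{2k}$ normalization is precisely tuned to this. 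Getting the final constant down to exactly $\eps$ (rather than an absolute constant times $\eps$) is the delicate point, and it hinges on not discarding the structure in $\E{a_{m-1}^\ell} = 1 + (\text{lower order})$ and on the monotonicity of the partial moments $\|a_m - 1\|_{L^\ell}$.
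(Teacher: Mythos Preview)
The paper does not give its own proof of this lemma --- it is quoted as Lemma~15 of \cite{akkpvwz20_tensor_sketch_outer_product} --- so there is no in-paper argument to compare against. Your plan (telescope $a_k-1=\sum_m a_{m-1}W_m$, use unbiasedness to make the increments a martingale difference sequence, then sum in $L^\ell$ via a smoothness/Burkholder inequality rather than the triangle inequality) is the standard one and matches what the cited reference does. Your $\ell=2$ computation is complete and correct, and you are right to flag explicitly that $\E{\|M_j u\|_2^2}=\|u\|_2^2$ is an additional hypothesis not literally implied by Definition~\ref{def:jl_moment_property}; every sketch used in this paper (Countsketch, Gaussian, sign matrices) satisfies it, and the cited reference assumes it as well.

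One caution on the general-$\ell$ step: the inequality $\big\|\sum_m D_m\big\|_{L^\ell}^2 \le C \sum_m \|D_m\|_{L^\ell}^2$ for martingale differences holds with $C$ of order $\ell-1$ (this is the sharp $2$-smoothness constant of $L^\ell$ for $\ell\ge 2$), not with an absolute constant $\le 2$, so the slack in $\sqrt{2k}$ versus $\sqrt k$ does \emph{not} absorb it once $\ell>3$; similarly, your expansion $\E{a_{m-1}^\ell} = 1 + O(\|a_{m-1}-1\|_{L^\ell}^2)$ hides an $\ell$-dependent constant in the $O(\cdot)$. So as written, the argument for general $\ell$ yields $(\Theta(\sqrt{\ell})\,\eps,\delta,\ell)$ rather than exactly $(\eps,\delta,\ell)$. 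For the present paper this is immaterial: every invocation of Lemma~\ref{lemma:jl_moment_composition_of_jl_maps} here (inside the proof of Theorem~\ref{thm:analysis_of_tree_network_bicriteria} and in Lemma~\ref{lemma:jl_moment_of_tensor_train_sketch_based_on_S_i}) is with Countsketch matrices and hence $\ell=2$, where your argument is airtight.
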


\begin{lemma}[Corollary of Lemmas \ref{lemma:jl_moment_with_identity_matrix} and \ref{lemma:jl_moment_composition_of_jl_maps}]
\label{lemma:jl_moment_kronecker_products}
Let $\eps, \delta \in (0, 1)$, and $\ell \in \N$. Suppose $r_1, r_2, d_1, d_2 \in \N$, and suppose $S \in \R^{r_1 \times d_1}$ and $T \in \R^{r_2 \times d_2}$ each have the $(\eps/2, \delta, \ell)$-JL moment property. Then, $S \times T$ has the $(\eps, \delta, \ell)$-JL moment property.
\end{lemma}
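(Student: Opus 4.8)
The plan is to realize $S \times T$ as a composition of two maps, each a Kronecker product of one of $S, T$ with an identity matrix, and then invoke Lemmas \ref{lemma:jl_moment_with_identity_matrix} and \ref{lemma:jl_moment_composition_of_jl_maps}. Concretely, by the mixed-product property of Kronecker products,
\[
S \times T = (S \times I_{r_2})(I_{d_1} \times T),
\]
where $S \times I_{r_2} \in \R^{r_1 r_2 \times d_1 r_2}$ and $I_{d_1} \times T \in \R^{d_1 r_2 \times d_1 d_2}$, so the composition is well-defined and lands in $\R^{r_1 r_2 \times d_1 d_2}$, as required. (One could equally factor $S \times T = (I_{r_1} \times T)(S \times I_{d_2})$; either way, one factor carries an identity on the left and the other on the right.)

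Next I would show each factor has the $(\eps/2, \delta, \ell)$-JL moment property. For $S \times I_{r_2}$ this is immediate from Lemma \ref{lemma:jl_moment_with_identity_matrix} applied to $S$. For $I_{d_1} \times T$, note that Lemma \ref{lemma:jl_moment_with_identity_matrix} as stated covers only products of the form $(\text{matrix}) \times I_k$, so I would first reduce to that case: there are permutation (commutation) matrices $K_1, K_2$ with $I_{d_1} \times T = K_1^{-1}(T \times I_{d_1}) K_2$, and since permutation matrices are orthogonal they preserve $\ell_2$ norms exactly; hence for any unit vector $x$ the quantity $\big|\,\|(I_{d_1} \times T)x\|_2^2 - 1\,\big|$ equals $\big|\,\|(T \times I_{d_1})(K_2 x)\|_2^2 - 1\,\big|$ with $K_2 x$ again a unit vector, so the $\ell$-th moment bound for $T \times I_{d_1}$ (which holds by Lemma \ref{lemma:jl_moment_with_identity_matrix} applied to $T$) transfers verbatim. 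Alternatively, one may simply observe that the proof of Lemma \ref{lemma:jl_moment_with_identity_matrix} is symmetric in the two tensor factors.

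Finally, assuming $S$ and $T$ are drawn independently (so that $S \times I_{r_2}$ and $I_{d_1} \times T$ are independent), I would apply Lemma \ref{lemma:jl_moment_composition_of_jl_maps} with $k = 2$, $M_1 = I_{d_1} \times T$, and $M_2 = S \times I_{r_2}$: the hypothesis there asks each $M_i$ to have the $(\eps/\sqrt{2k}, \delta, \ell) = (\eps/2, \delta, \ell)$-JL moment property, which is exactly what the previous step established, and its conclusion is that $M_2 M_1 = S \times T$ has the $(\eps, \delta, \ell)$-JL moment property. The argument is essentially bookkeeping; the only points requiring a moment's care are the orientation of the identity factor in Lemma \ref{lemma:jl_moment_with_identity_matrix} (handled by the commutation-matrix observation) and the fact that independence of $S$ and $T$ is precisely what licenses the use of the composition lemma.
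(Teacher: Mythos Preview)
Your proposal is correct and follows essentially the same approach as the paper: factor $S \times T = (S \times I_{r_2})(I_{d_1} \times T)$, apply Lemma~\ref{lemma:jl_moment_with_identity_matrix} to each factor, then invoke Lemma~\ref{lemma:jl_moment_composition_of_jl_maps} with $k=2$. You are in fact more careful than the paper, which simply asserts that $I_{d_1} \times T$ falls under Lemma~\ref{lemma:jl_moment_with_identity_matrix} without addressing the left-versus-right orientation of the identity factor; your commutation-matrix argument (and the observation about independence of $S$ and $T$) fills that small gap.
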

\begin{proof}
If $S$ and $T$ each have the $(\eps/2, \delta, \ell)$-JL moment property, then $S \times I_{r_2}$ and $I_{d_1} \times T$ have the $(\eps/2, \delta, \ell)$ JL moment property by Lemma \ref{lemma:jl_moment_with_identity_matrix}, meaning that $S \times T = (S \times I_{r_2})(I_{d_1} \times T)$ has the $(\eps, \delta, \ell)$-JL moment property by Lemma \ref{lemma:jl_moment_composition_of_jl_maps}.
\end{proof}

\begin{definition}[Subspace Embeddings --- Definition 2.1 of \cite{woodruff14_sketching_as_a_tool}]
Let $A \in \R^{n \times d}$. Then, a matrix $S \in \R^{r \times n}$ is a $(1 \pm \eps)$ $\ell_2$ subspace embedding for $A$ if, for all $x \in \R^d$, $\|SAx\|_2 = (1 \pm \eps) \|Ax\|_2$.
\end{definition}

\begin{definition}[Affine Embedding --- Theorem 7.10 of \cite{cw17_lra_regression_nnz_time}]
Let $A \in \R^{n \times r}$ and $B \in \R^{n \times d}$, and let $\eps > 0$. Then, we say $S$ is an $\eps$-affine embedding for $A$ and $B$ if, for all $X \in \R^{r \times d}$, $\|S(AX - B)\|_F = (1 \pm \eps) \|AX - B\|_F$.
\end{definition}

\begin{lemma}[JL Moment Property implies Affine Embedding]
\label{lemma:jl_moment_implies_affine_embedding}
Let $A \in \R^{n \times k}$ and $B \in \R^{n \times d}$. Suppose $S \in \R^{r \times n}$ is a random matrix with the $(\frac{\eps}{3k}, \frac{\delta}{3}, \ell)$-JL moment property for some integer $\ell \geq 2$. Then, $S$ is an $\eps$-affine embedding for $A$ and $B$ with probability at least $1 - \delta$.
\end{lemma}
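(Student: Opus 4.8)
The plan is to reduce the affine embedding statement to a subspace embedding statement on an augmented matrix, and then to obtain the subspace embedding from the approximate matrix product property (which follows from the JL moment property via Theorem~\ref{thm:jl_property_implies_approx_matrix_product}). First I would recall the standard fact that $S$ is an $\eps$-affine embedding for $A$ and $B$ as soon as the following two conditions hold: (i) $S$ is a $(1\pm\eps/2)$ subspace embedding for the column span of $A$, and (ii) $S$ satisfies the approximate matrix product property $\|A^TS^TSC - A^TC\|_F \le \frac{\eps}{\sqrt{k}}\,\|A\|_F\,\|C\|_F$ applied to $C = AX^* - B$, where $X^* = \argmin_X \|AX-B\|_F$ and we may assume the columns of $A$ are orthonormal (so $\|A\|_F = \sqrt{k}$). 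Given these, one writes, for arbitrary $X$, $AX - B = A(X - X^*) + (AX^* - B)$, and since the residual $AX^* - B$ is orthogonal to the column span of $A$, both $\|AX-B\|_F^2$ and $\|S(AX-B)\|_F^2$ split; condition (i) controls the $\|SA(X-X^*)\|_F^2$ term and conditions (i) and (ii) together control the cross term $\langle SA(X-X^*), S(AX^*-B)\rangle$, yielding the $(1\pm\eps)$ guarantee after collecting terms. This is the argument in, e.g., \cite{cw17_lra_regression_nnz_time}, and I would reproduce it rather than cite it as a black box.

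The remaining work is to verify that the stated JL moment parameters are strong enough to deliver (i) and (ii) each with failure probability at most $\delta/3$ (plus possibly a third event, giving the total $\delta$). For (ii): by Theorem~\ref{thm:jl_property_implies_approx_matrix_product}, the $(\eps', \delta', \ell)$-JL moment property implies the $(3\eps', \delta')$-approximate matrix product property; taking $\eps' = \frac{\eps}{3k}$ and $\delta' = \frac{\delta}{3}$ gives approximate matrix product with parameter $3\eps' = \frac{\eps}{k} \le \frac{\eps}{\sqrt{k}}$ and failure probability $\delta/3$, which is what condition (ii) needs. For (i): a subspace embedding for a $k$-dimensional subspace can be obtained from approximate matrix product applied to an orthonormal basis $A$ of that subspace against itself — $\|A^TS^TSA - I_k\|_F \le \eps'' \cdot k$ with $A$ having Frobenius norm $\sqrt k$ each side — which bounds the operator norm deviation and hence gives a $(1\pm \eps''k)$ subspace embedding; choosing $\eps'' = \frac{\eps}{3k}$ (from the stated moment property with a union-bound-friendly constant) makes this $(1\pm\eps/2)$. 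The factor of $3$ and the $1/k$ scaling in the hypothesis are exactly calibrated so that both requirements, each consuming a $\delta/3$ slice of the failure budget, go through.

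The step I expect to be the main obstacle — or at least the one requiring the most care — is tracking the constants and the union bound precisely: one must make sure the single JL moment hypothesis with parameters $(\frac{\eps}{3k}, \frac{\delta}{3}, \ell)$ simultaneously yields a subspace embedding with distortion $\le \eps/2$ and an approximate-matrix-product bound with the $\frac{\eps}{\sqrt k}$-scaling, each at failure probability $\le \delta/3$, so that the total failure probability is $\le \delta$. The subspace-embedding-from-AMP reduction is where the $k$ (not $\sqrt k$) loss appears, since $\|A^TS^TSA - I\|_2 \le \|A^TS^TSA - I\|_F \le \eps' \|A\|_F^2 = \eps' k$; this is why the hypothesis scales like $\eps/(3k)$ rather than $\eps/(3\sqrt k)$. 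Everything else is the routine residual-decomposition calculation, which I would carry out but not belabor.
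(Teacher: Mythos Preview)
Your approach matches the paper's: invoke Theorem~\ref{thm:jl_property_implies_approx_matrix_product} to turn the JL moment hypothesis into approximate matrix product, derive the subspace embedding from AMP applied to an orthonormal basis (this is where the $1/k$ scaling in the hypothesis is spent, exactly as you say), and then assemble the affine embedding via the residual decomposition. The paper cites Theorem~7.10 of \cite{cw17_lra_regression_nnz_time} for this last assembly rather than rewriting it, but the substance is the same.

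There is, however, a gap in your decomposition. After writing
\[
\|S(AX-B)\|_F^2 \;=\; \|SA(X-X^*)\|_F^2 \;+\; 2\langle SA(X-X^*),\,S(AX^*-B)\rangle \;+\; \|S(AX^*-B)\|_F^2,
\]
you control the first summand via the subspace embedding and the cross term via AMP, but you never control the third summand $\|S(AX^*-B)\|_F^2$ against $\|AX^*-B\|_F^2$. (Your phrase ``both \ldots\ split'' is misleading here: the sketched expression does \emph{not} split by Pythagoras, since $S$ need not preserve the orthogonality between $A(X-X^*)$ and $AX^*-B$.) This is precisely the ``third event'' you allude to, and it does require an argument: $AX^*-B$ is a fixed \emph{matrix}, not a vector, so the JL moment property of $S$ does not apply to it directly. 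The paper handles this by observing (Lemma~\ref{lemma:jl_moment_with_identity_matrix}) that $S\otimes I_d$ inherits the $(\tfrac{\eps}{3k},\tfrac{\delta}{3},\ell)$-JL moment property, applying it to $\tensorvec(AX^*-B)$, and then using Markov's inequality on the $\ell$-th moment to get $\|S(AX^*-B)\|_F^2 = (1\pm\eps)\|AX^*-B\|_F^2$ with failure probability at most $\delta/3$. With this third event in hand, the union bound over the three $\delta/3$ slices gives the claimed $1-\delta$.
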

\begin{proof}
By Theorem 7.10 of \cite{cw17_lra_regression_nnz_time}, it suffices to have the following:
\begin{itemize}
    \item $S$ should have the $(\frac{\eps}{\sqrt{k}}, \frac{\delta}{3})$ approximate matrix product property.
    \item $S$ should be a $(1 \pm \eps)$ $\ell_2$ subspace embedding for $A$, with probability at least $1 - \frac{\delta}{3}$.
    \item $\|S(AX^* - B)\|_F^2 = (1 \pm \eps) \|AX^* - B\|_F^2$, where $X^* = \argmin_X \|AX - B\|_F$, with probability at least $1 - \frac{\delta}{3}$.
\end{itemize}
Note that by Theorem \ref{thm:jl_property_implies_approx_matrix_product}, if $S$ satisfies the $(\frac{\eps}{3\sqrt{k}}, \frac{\delta}{3}, \ell)$ JL moment property for some $\ell \geq 2$, then $S$ will have the $(\frac{\eps}{\sqrt{k}}, \frac{\delta}{3})$ approximate matrix product property. This is true because $\forall k\in \N, k\geq \sqrt{k}, \frac{\eps}{\sqrt{k}}\leq \frac{\eps}{k}$. In addition, for $S$ to be a $(1 \pm \eps)$ $\ell_2$ subspace embedding for $A$, it suffices to have $\|U^T S^T S U - I\|_2 \leq \eps$ where $U$ is an orthonormal basis for the column span of $A$. However, if $S$ has the $(\tau, \frac{\delta}{3})$ approximate matrix product property for some $\tau > 0$, then
$$\|U^T S^T S U - I\|_2 \leq \|U^T S^T S U - I\|_F \leq \tau \|U\|_F^2 \leq \tau k$$
with probability at least $1 - \frac{\delta}{3}$, and if this holds, then $S$ is a subspace embedding for $A$. Thus, for $S$ to be a subspace embedding for $A$ with probability at least $1 - \frac{\delta}{3}$, it is enough for $S$ to have the $(\frac{\eps}{k}, \frac{\delta}{3})$ approximate matrix product property, and this holds as long as $S$ has the $(\frac{\eps}{3k}, \frac{\delta}{3}, \ell)$ JL moment property by Theorem \ref{thm:jl_property_implies_approx_matrix_product}. Finally, if $S$ has the $(\eps, \frac{\delta}{3}, \ell)$ JL moment property, then by Lemma \ref{lemma:jl_moment_with_identity_matrix}, $S \times I_d$ also has the $(\eps, \frac{\delta}{3}, \ell)$ JL moment property. Therefore, if we let $\widetilde{b} = \tensorvec(AX^* - B)$, then
$$\mathbb{E}[|\|S(AX^* - B)\|_F^2 - \|AX^* - B\|_F^2|^\ell]  = E[|\|(S \times I_d)\widetilde{b}\|_2^2 - \|\widetilde{b}\|_2^2|^\ell] \leq \frac{\eps^\ell \delta \|\widetilde{b}\|_2^{2\ell}}{3} = \frac{\eps^\ell \delta \|AX^* - B\|_F^{2\ell}}{3}$$
Here the first inequality is because $S \times I_d$ has the $(\eps, \frac{\delta}{3}, \ell)$ JL moment property. Therefore, by Markov's inequality,
\begin{equation}
\begin{split}
\PrBig{|\|S(AX^* - B)\|_F^2 &- \|AX^* - B\|_F^2| \leq  \eps \|AX^* - B\|_F^2} \\
& \leq \PrBig{|\|S(AX^* - B)\|_F^2 - \|AX^* - B\|_F^2|^\ell \leq \eps^\ell \|AX^* - B\|_F^{2\ell}} \\
& \leq \frac{\mathbb{E}[|\|S(AX^* - B)\|_F^2 - \|AX^* - B\|_F^2|^\ell]}{\eps^\ell \|AX^* - B\|_F^{2\ell}} \\
& \leq \frac{\eps^\ell \delta \|AX^* - B\|_F^{2\ell}/3}{\eps^\ell \|AX^* - B\|_F^{2\ell}} \\
& = \frac{\delta}{3}
\end{split}
\end{equation}
and therefore, with probability at least $1 - \frac{\delta}{3}$, $\|S(AX^* - B)\|_F^2 = (1 \pm \eps) \|AX^* - B\|_F^2$. In summary, if $S$ has the $(\frac{\eps}{3k}, \frac{\delta}{3}, \ell)$ JL moment property, then $S$ satisfies each of the properties in the statement of Theorem 7.10 of \cite{cw17_lra_regression_nnz_time} with probability at least $1 - \frac{\delta}{3}$ each. Thus, with probability at least $1 - \delta$, $S$ is an $\eps$-affine embedding for $A$ and $B$.
\end{proof}

\subsubsection{Countsketch Matrices} \label{subsubsec:countsketch}

Let us also recall properties of Countsketch, a sketch which can be applied to matrices in input-sparsity time.

\begin{definition}[Countsketch --- Section 3 of \cite{Countsketch2004}]
A matrix $S\in\R^{r\times n}$ is a Countsketch matrix if it is constructed in the following fashion: For each column $S(:, j)$, sample a row number $i \in [r]$ uniformly at random. Assign $\{1, -1\}$ to $S_{ij}$  with equal probability. The remaining entries are $0$.
\end{definition}

If a matrix $A \in \R^{n \times d}$ is represented as a list of its nonzero entries, and $S \in \R^{r \times n}$ is a Countsketch matrix, then $SA$ can be computed in $O(\nnz(A))$ time. \cite{cw17_lra_regression_nnz_time} shows that $SA$ is a $(1\pm \varepsilon) \, \ell_2$ subspace embedding for $A$ with $r = \poly(k/\varepsilon)$ rows. This allows us to use Countsketch to reduce the size of tensor slices to $\poly(k/\varepsilon)$ in several algorithms described below. First note that Countsketch matrices have the JL moment property:

\begin{lemma}[Countsketch has JL Moment Property --- Follows from Lemma 7.11 of \cite{cw17_lra_regression_nnz_time}]
\label{lemma:countsketch_jl_moment_property}
Let $S \in \R^{r \times n}$ be a Countsketch matrix. Then, if $r \in \Omega(\frac{1}{\eps^2 \delta})$, then $S$ has the $(\eps, \delta, 2)$-JL moment property.
\end{lemma}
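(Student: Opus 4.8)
The plan is to establish the $\ell=2$ case directly by a second-moment (variance) computation; this is precisely the content of Lemma 7.11 of \cite{cw17_lra_regression_nnz_time}, and the argument only needs the pairwise independence of the hash and sign functions underlying Countsketch. Fix $x \in \R^n$ with $\|x\|_2 = 1$. Let $h : [n] \to [r]$ be the random hash implicit in the construction of $S$ (so $h(j)$ is the unique row with a nonzero entry in column $j$) and let $\sigma_j \in \{+1,-1\}$ be the associated random signs; the $\sigma_j$ are independent of each other and of $h$, each uniform on $\{+1,-1\}$, and for $j \neq j'$ the values $h(j), h(j')$ are independent and uniform on $[r]$, so $\Pr{h(j) = h(j')} = 1/r$. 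Then $(Sx)_i = \sum_{j : h(j) = i} \sigma_j x_j$, hence
$$\|Sx\|_2^2 = \sum_{i=1}^r \Big(\sum_{j : h(j) = i} \sigma_j x_j\Big)^2 = \sum_{j} x_j^2 + \sum_{j \neq j'} \sigma_j \sigma_{j'} x_j x_{j'}\, \mathbf{1}[h(j) = h(j')],$$
where the last sum runs over ordered pairs.

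First I would take expectations: since the $\sigma_j$ are mean-zero and independent of each other and of $h$, every off-diagonal term $\E[\sigma_j \sigma_{j'} x_j x_{j'} \mathbf{1}[h(j) = h(j')]] = \E[\sigma_j]\,\E[\sigma_{j'}]\, x_j x_{j'}\, \Pr{h(j)=h(j')} = 0$, so $\E[\|Sx\|_2^2] = \|x\|_2^2 = 1$. Next, $\|Sx\|_2^2 - 1 = \sum_{j \neq j'} \sigma_j \sigma_{j'} x_j x_{j'} \mathbf{1}[h(j)=h(j')]$, and squaring and taking expectations, a term indexed by ordered pairs $(j,j')$ and $(l,l')$ (with $j\neq j'$, $l \neq l'$) survives only if each index occurs an even number of times among $\{j,j',l,l'\}$, which forces $\{l,l'\} = \{j,j'\}$; in that case $\sigma_j\sigma_{j'}\sigma_l\sigma_{l'} = 1$ and the indicators coincide. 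Hence
$$\E\big[(\|Sx\|_2^2 - 1)^2\big] = 2 \sum_{j \neq j'} x_j^2 x_{j'}^2 \,\Pr{h(j) = h(j')} = \frac{2}{r} \sum_{j \neq j'} x_j^2 x_{j'}^2 \leq \frac{2}{r}\,\|x\|_2^4 = \frac{2}{r}.$$

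Finally, since $r \in \Omega(\tfrac{1}{\eps^2 \delta})$, for an appropriate choice of the hidden constant we have $r \geq \tfrac{2}{\eps^2\delta}$, and therefore $\E[\,|\|Sx\|_2^2 - 1|^2\,] \leq \tfrac{2}{r} \leq \eps^2 \delta$, which is exactly the $(\eps, \delta, 2)$-JL moment property of Definition \ref{def:jl_moment_property}. The only point requiring a little care, rather than being entirely mechanical, is verifying that pairwise independence of the $\sigma_j$ and of the hash values suffices to annihilate all off-diagonal contributions in both the first- and second-moment expansions — full independence is not needed — but this is a standard feature of the Countsketch analysis.
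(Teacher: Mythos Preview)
Your proof is correct and follows the same approach as the paper: both establish $\E[\|Sx\|_2^2]=1$ and bound the variance $\E[(\|Sx\|_2^2-1)^2]\le O(1/r)$, with the paper citing Lemma~7.11 of \cite{cw17_lra_regression_nnz_time} for the bound (with constant $6$) while you reproduce the computation directly (obtaining the sharper constant $2$). One small caveat: your closing remark that \emph{pairwise} independence of the $\sigma_j$ suffices is not quite right for the second-moment step, since controlling $\E[\sigma_j\sigma_{j'}\sigma_l\sigma_{l'}]$ when all four indices are distinct requires $4$-wise independence of the signs; this does not affect your argument here because the paper's Countsketch definition uses fully independent columns.
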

\begin{proof}
As shown in the proof of Lemma 7.11 in \cite{cw17_lra_regression_nnz_time}, for any matrix $A$ and Countsketch matrix $S$ specified in the Lemma statement, $\Var{\|SA\|_F^2} \leq \frac{6}{r} \|A\|_F^2$, and $\mathbb{E}[\|SA\|_F^2] = \|A\|_F^2$. In particular, for any unit vector $x \in \R^n$, $\mathbb{E}[(\|Sx\|_2^2 - 1)^2] \leq \frac{6}{r}$, and this implies that for $r \in \Omega(\frac{1}{\eps^2 \delta})$, $S$ has the $(\eps, \delta, 2)$-JL moment property.
\end{proof}

As a result, a Countsketch matrix $S$ is a $(1 \pm \eps)$ $\ell_2$ subspace embedding for an $n \times d$ matrix $A$ with $O(\frac{d^2}{\eps^2})$ rows:

\begin{theorem}[Countsketch is a $(1 \pm \eps)$ $\ell_2$ Subspace Embedding --- \cite{nn13_osnap_matrices} Theorem 3, \cite{mm13_input_sparsity_subspace_embeddings} Theorem 1]
\label{thm:countsketch_subspace_embedding}
Let $S \in \R^{r \times n}$ be a Countsketch matrix with $r = O(\frac{d^2}{\eps^2\delta})$, and let $A \in \R^{n \times d}$. Then with probability $1 - \delta$, $S$ is an $\ell_2$ subspace embedding for $A$, i.e. for all $x \in \R^d$, $(1 - \eps) \|Ax\|_2 \leq \|SAx\|_2 \leq (1 + \eps) \|Ax\|_2$.
\end{theorem}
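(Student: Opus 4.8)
The plan is to reduce the statement to the approximate matrix product property, which in turn follows from the JL moment property established in Lemma \ref{lemma:countsketch_jl_moment_property}. First I would let $U \in \R^{n \times d'}$ be an orthonormal basis for the column span of $A$, where $d' = \mathrm{rank}(A) \le d$, so that proving $S$ is a $(1 \pm \eps)$ subspace embedding for $A$ is equivalent to proving $\|U^T S^T S U - I_{d'}\|_2 \le \eps$; indeed, for any $x$ in the column span of $A$ we may write $x = Uy$ and then $\|SUy\|_2^2 - \|Uy\|_2^2 = y^T(U^T S^T S U - I)y$, which is controlled in magnitude by $\|U^T S^T S U - I\|_2 \cdot \|y\|_2^2 = \|U^T S^T S U - I\|_2 \cdot \|x\|_2^2$.

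Next I would bound the spectral norm by the Frobenius norm, $\|U^T S^T S U - I\|_2 \le \|U^T S^T S U - I\|_F$, and apply the approximate matrix product property with $A = B = U$: since $\|U\|_F^2 = d'$, taking the approximate matrix product parameter to be $\eps/d'$ (or simply $\eps/d$) gives $\|U^T S^T S U - I\|_F \le (\eps/d)\|U\|_F \|U\|_F = (\eps/d)\cdot d' \le \eps$ with the desired failure probability. To get this approximate matrix product guarantee with failure probability $\delta$, I would invoke Theorem \ref{thm:jl_property_implies_approx_matrix_product} together with Lemma \ref{lemma:countsketch_jl_moment_property}: a Countsketch matrix with $r = \Omega(\frac{1}{(\eps/d)^2 \delta}) = \Omega(\frac{d^2}{\eps^2 \delta})$ rows has the $(\eps/(3d), \delta, 2)$-JL moment property, hence the $(\eps/d, \delta)$-approximate matrix product property. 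Putting the constants together yields the claimed $r = O(\frac{d^2}{\eps^2 \delta})$.

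The argument is essentially routine given the lemmas already in hand; the only place requiring a little care is the bookkeeping of which $\eps$-scaling to feed into the JL moment property so that the net $\|U^T S^T S U - I\|_F$ bound comes out to $\eps$ rather than $\eps d$ — this is exactly the $\tau \cdot k$ type estimate that already appears in the proof of Lemma \ref{lemma:jl_moment_implies_affine_embedding}, so I would mirror that computation. (One could alternatively cite \cite{nn13_osnap_matrices,mm13_input_sparsity_subspace_embeddings} directly, but since the paper has developed the JL-moment machinery it is cleaner to derive it in two lines from Theorem \ref{thm:jl_property_implies_approx_matrix_product} and Lemma \ref{lemma:countsketch_jl_moment_property}.) I do not expect any genuine obstacle here.
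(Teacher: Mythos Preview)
Your proposal is correct; the paper itself does not supply a proof but simply cites \cite{nn13_osnap_matrices,mm13_input_sparsity_subspace_embeddings}. Your derivation via Lemma~\ref{lemma:countsketch_jl_moment_property} and Theorem~\ref{thm:jl_property_implies_approx_matrix_product} is exactly the argument the paper uses inside the proof of Lemma~\ref{lemma:jl_moment_implies_affine_embedding} to establish the subspace-embedding clause there, so you are in effect reconstructing what the paper treats as a black-box citation using its own machinery.
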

Countsketch also provides an \textit{affine embedding} --- it preserves the error for any solution to \textit{multiple-response} $\ell_2$ regression:

\begin{theorem}[Countsketch is a $(1 \pm \eps)$ $\ell_2$ Affine Embedding --- Theorem 7.13 of \cite{cw17_lra_regression_nnz_time}]
\label{thm:countsketch_affine_embedding}
Let $n, k, d \in \N$, and let $S \in \R^{r \times n}$ be a Countsketch matrix with $r = O(\frac{k^2}{\eps^2 \delta})$. Then, for any $A \in \R^{n \times k}$ and $B \in \R^{n \times d}$, with probability $1 - \delta$, $S$ is a $(1 \pm \eps)$ $\ell_2$ affine embedding for $A$ and $B$.
\end{theorem}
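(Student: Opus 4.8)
The plan is to obtain this as an essentially immediate corollary of the two facts already in hand: Lemma~\ref{lemma:countsketch_jl_moment_property}, which states that a Countsketch matrix with $r \in \Omega(1/(\eps'^2\delta'))$ rows has the $(\eps',\delta',2)$-JL moment property, and Lemma~\ref{lemma:jl_moment_implies_affine_embedding}, which states that the $(\frac{\eps}{3k},\frac{\delta}{3},\ell)$-JL moment property (for any integer $\ell \geq 2$) implies that $S$ is an $\eps$-affine embedding for $A \in \R^{n\times k}$ and $B \in \R^{n\times d}$ with probability at least $1-\delta$. In other words, the hard analytic content --- that approximate matrix product, a subspace embedding for $A$, and preservation of the optimal residual norm together yield an affine embedding (Theorem 7.10 of \cite{cw17_lra_regression_nnz_time}), and that Countsketch has bounded second moment $\Var{\|Sx\|_2^2} \leq 6/r$ --- has already been carried out; what remains is bookkeeping of parameters.

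First I would instantiate Lemma~\ref{lemma:jl_moment_implies_affine_embedding} with $\ell = 2$, which reduces the theorem to showing that the Countsketch matrix $S$ has the $(\frac{\eps}{3k},\frac{\delta}{3},2)$-JL moment property. Next I would apply Lemma~\ref{lemma:countsketch_jl_moment_property} with the choices $\eps' = \frac{\eps}{3k}$ and $\delta' = \frac{\delta}{3}$: this requires $r \in \Omega\big(\frac{1}{(\eps/(3k))^2(\delta/3)}\big) = \Omega\big(\frac{k^2}{\eps^2\delta}\big)$, which is exactly the hypothesis $r = O(\frac{k^2}{\eps^2\delta})$ provided the hidden constant is chosen large enough (at least $27$ times the absolute constant from Lemma~\ref{lemma:countsketch_jl_moment_property}). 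Chaining the two lemmas then gives that $S$ is a $(1\pm\eps)$ $\ell_2$ affine embedding for $A$ and $B$ with probability at least $1-\delta$, as claimed.

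There is no genuine obstacle here: the only thing to be careful about is the constant-factor accounting between the $\Omega(\cdot)$ forced by Lemma~\ref{lemma:countsketch_jl_moment_property} and the $\frac{1}{9k^2}\cdot\frac{1}{3}$ blow-up in the JL-moment parameters introduced by Lemma~\ref{lemma:jl_moment_implies_affine_embedding}; since both statements are stated up to $O(\cdot)$/$\Omega(\cdot)$, this is harmless. I would present the proof as this one-line composition, and also note that it coincides with Theorem 7.13 of \cite{cw17_lra_regression_nnz_time}, which could be cited directly if one prefers to avoid re-deriving it from the JL-moment machinery.
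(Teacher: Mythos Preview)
Your proposal is correct and essentially equivalent to the paper's proof. The paper cites Theorem~7.13 of \cite{cw17_lra_regression_nnz_time} directly and then explains how to sharpen the row bound to $O(k^2/(\eps^2\delta))$ by separately verifying the three ingredients of Theorem~7.10 of \cite{cw17_lra_regression_nnz_time} (subspace embedding via \cite{nn13_osnap_matrices, mm13_input_sparsity_subspace_embeddings}, norm preservation of the optimal residual via Lemma~7.11 of \cite{cw17_lra_regression_nnz_time}, and approximate matrix product via Lemma~7.3 of \cite{cw17_lra_regression_nnz_time}); you instead compose the paper's own Lemmas~\ref{lemma:countsketch_jl_moment_property} and~\ref{lemma:jl_moment_implies_affine_embedding}, but since Lemma~\ref{lemma:jl_moment_implies_affine_embedding} is itself proved by deriving exactly those three ingredients from the JL moment property, the underlying argument is identical --- only the packaging differs.
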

\begin{proof}
This follows from Theorem 7.13 of \cite{cw17_lra_regression_nnz_time}. (Note that Theorem 7.13 of \cite{cw17_lra_regression_nnz_time} gives a bound of $O(\frac{k^2}{\eps^2} \log^7(k/\eps))$ on the required number of rows in $S$, but this can be reduced to $O(\frac{k^2}{\eps^2})$ by using the subspace embedding result of \cite{nn13_osnap_matrices, mm13_input_sparsity_subspace_embeddings}, see Theorem \ref{thm:countsketch_subspace_embedding} above. For the dependence on the failure probability $\delta$, note that $S$ is a $(1 \pm \eps)$ subspace embedding with probability $1 - \delta$ as long as it has at least $O(\frac{k^2}{\eps^2 \delta})$ rows. It also preserves the squared Frobenius norm of a fixed matrix up to a $(1 \pm \eps)$ factor, as long as it has at least $O(\frac{1}{\delta \eps^2})$ rows, by the proof of Lemma 7.11 in \cite{cw17_lra_regression_nnz_time}. Finally, by the proof of Lemma 7.3 in \cite{cw17_lra_regression_nnz_time}, it achieves the desired approximate matrix product property with probability $1 - \delta$ as long as it has at least $O(\frac{1}{\eps^2 \delta})$ rows.
\end{proof}

\subsubsection{Projection-Cost Preserving Sketches (PCPs)}

We now introduce the concept of a projection-cost-preserving sketch (PCP) and prove that Countsketch matrices with $\poly(k/\eps)$ rows/columns can be used as PCPs.

\begin{definition}[Projection-cost Preserving Sketch --- Definition 1 of \cite{cemmp15_pcps}]
\label{def:pcp}
Let $A \in \R^{n \times d}$ and $k \in \N$, $\eps \in (0, 1)$. Then, $\widetilde{A} \in \R^{n \times d'}$ is a rank-$k$ projection-cost preserving sketch of $A$ if for all orthogonal projection matrices $P$ onto subspaces of rank at most $k$, 
$$(1 - \eps) \|A - PA\|_F^2 \leq \|\widetilde{A} - P\widetilde{A}\|_F^2 + c \leq (1 + \eps) \|A - PA\|_F^2$$
for some scalar $c \geq 0$ that is independent of $P$. We also say $\widetilde{A} \in \R^{n \times d'}$ is an $(\eps, c, k)$-projection-cost preserving sketch or $(\eps, c, k)$-PCP of $A$ (as in Definition 1 of \cite{MuscoCameronPCP}).
\end{definition}

Dense Gaussian matrices can be used to obtain $(\eps, 0, k)$-PCPs:

\begin{theorem}[Corollary 7 of \cite{MuscoCameronPCP}]
Let $A \in \R^{n \times d}$ and $k \in \N$, and let $\eps, \delta \in (0, 1)$. If $S \in \R^{m \times d}$ has i.i.d. $\mathcal{N}(0, 1/m)$ entries, where $m \in \Omega(\frac{k + \log(1/\delta)}{\eps^2})$, then $AS^T$ is an $(\eps, 0, k)$-projection cost preserving sketch of $A$ with probability at least $1 - \delta$.
\end{theorem}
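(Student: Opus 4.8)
The statement is precisely the definition of an $(\eps,0,k)$-PCP, so the plan is to verify directly that, with probability at least $1-\delta$, every orthogonal projection $P$ of rank at most $k$ satisfies
$$(1-\eps)\,\|(I-P)A\|_F^2 \;\le\; \|(I-P)AS^T\|_F^2 \;\le\; (1+\eps)\,\|(I-P)A\|_F^2 ,$$
since $\widetilde A - P\widetilde A = (I-P)AS^T$ for $\widetilde A = AS^T$ and we take $c=0$. Fix such a $P$, let $A_k$ be a best rank-$k$ approximation of $A$, and put $E = A - A_k$. Since the singular subspaces of $A_k$ and $E$ are orthogonal we have $A_kE^T = EA_k^T = 0$ and $A_k^TE = 0$, giving the exact Pythagorean split $\|(I-P)A\|_F^2 = \|(I-P)A_k\|_F^2 + \|(I-P)E\|_F^2$ and
$$\|(I-P)AS^T\|_F^2 = \|(I-P)A_kS^T\|_F^2 + \|(I-P)ES^T\|_F^2 + 2\big\langle (I-P)A_kS^T,\, (I-P)ES^T\big\rangle .$$
I would bound each of the three terms on the right by $O(\eps)\,\|(I-P)A\|_F^2$, using throughout the two elementary facts $\|(I-P)A\|_F^2 \ge \|E\|_F^2 = \|A_{\setminus k}\|_F^2$ (a best rank-$k$ projection does no worse than $P$) and $\|(I-P)A_k\|_F^2 \le \|(I-P)A\|_F^2$.

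For the \emph{head term}, the point is that every row of $(I-P)A_k$ lies in the \emph{fixed} $k$-dimensional row space of $A_k$, independent of $P$; so it suffices that $S$ be a two-sided $(1\pm\eps/c)$ subspace embedding for that one $k$-dimensional subspace, which a Gaussian with $m = \Omega((k+\log(1/\delta))/\eps^2)$ rows is with probability $1-\delta/c$, yielding $\|(I-P)A_kS^T\|_F^2 = (1\pm\eps/c)\,\|(I-P)A_k\|_F^2$ for all $P$ simultaneously. For the \emph{cross term}, write $A_k = U_k\Sigma_kV_k^T$; using $EV_k = 0$ one checks that $\langle (I-P)A_kS^T,(I-P)ES^T\rangle = \Tr\big(E(S^TS-I)V_k\cdot R_P\big)$, where $R_P = \Sigma_kU_k^T(I-P)$ carries all the $P$-dependence with $\|R_P\|_F = \|(I-P)A_k\|_F$, while $E(S^TS-I)V_k$ is the approximate-matrix-product error of $S$ on the \emph{fixed} pair $(E^T,V_k)$. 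Since a Gaussian of the stated size has the $(\eps/(c\sqrt k),\delta/c)$-approximate matrix product property and $\|V_k\|_F=\sqrt k$, Cauchy--Schwarz then AM--GM bound the cross term by $O(\eps/c)\,\|E\|_F\,\|(I-P)A_k\|_F = O(\eps/c)\,\|(I-P)A\|_F^2$, again uniformly in $P$.

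The \emph{tail term} is the delicate one. By rotational invariance of the Gaussian, write $E = U_E\Sigma_EV_E^T$ and $G := SV_E$, which again has i.i.d.\ $\mathcal N(0,1/m)$ entries, so that $\|(I-P)ES^T\|_F^2 = \Tr\big((G^TG)Y_P\big)$ and $\|(I-P)E\|_F^2 = \Tr(Y_P)$, where $Y_P := \Sigma_E\,U_E^T(I-P)U_E\,\Sigma_E \succeq 0$ and $\|Y_P\|_2 \le \sigma_{k+1}^2$. As $\mathbb{E}[G^TG]=I$ we need $|\Tr((G^TG-I)Y_P)| \le \eps\,\|(I-P)A\|_F^2$ for every $P$. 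Split off the $P$-independent part: $Y_P = \Sigma_E^2 - K_P$ with $K_P \succeq 0$, $\operatorname{rank}(K_P)\le k$, and $K_P \preceq \Sigma_E^2$. The term $\Tr((G^TG-I)\Sigma_E^2)$ is a single fixed-matrix Bernstein/Hanson--Wright bound (only $m = \Omega(\log(1/\delta)/\eps^2)$ rows are needed, using $\|\Sigma_E^2\|_F \le \sigma_{k+1}\|E\|_F$ and $\|\Sigma_E^2\|_2 \le \sigma_{k+1}^2$), and the term $\Tr((G^TG-I)K_P)$ is handled by a net over rank-$\le k$ PSD matrices dominated by $\Sigma_E^2$: the constraint $K_P \preceq \Sigma_E^2$ confines the range of $K_P$, up to negligible error, to the top $O(k)$ singular directions of $E$, so the net has size independent of $\operatorname{rank}(E)$ and the union bound contributes only $O(k)$ to the exponent --- this is exactly the origin of the ``$+k$'' in $m = \Omega((k+\log(1/\delta))/\eps^2)$. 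Feeding the norm bounds $\|K_P\|_2 \le \sigma_{k+1}^2$, $\Tr(K_P) \le k\sigma_{k+1}^2$ into these estimates and invoking $\|(I-P)E\|_F^2 \le \|(I-P)A\|_F^2$, $\sigma_{k+1}^2 \le \|A_{\setminus k}\|_F^2 \le \|(I-P)A\|_F^2$, and AM--GM, each piece is $O(\eps)\,\|(I-P)A\|_F^2$. Combining the head, cross, and tail bounds and union bounding over the $O(1)$ underlying events proves the theorem.

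The main obstacle throughout is getting these estimates \emph{uniformly over all rank-$k$ projections $P$} rather than for one fixed $P$. The resolution is to reduce every $P$-dependent quantity either to a fixed low-dimensional object --- the $k$-dimensional row space of $A_k$ for the head, a fixed matrix (via the orthogonality relations $A_kE^T = A_k^TE = 0$) for the cross --- or, for the one genuinely high-rank $P$-dependent quantity in the tail, to exploit that its variable part $K_P$ has rank at most $k$ and is dominated by $\Sigma_E^2$, so that it effectively lives in a fixed $O(k)$-dimensional space and can be controlled by a dimension-independent net; the careful spectral bookkeeping behind this last step (including, if one wants the cleanest constants, splitting the spectrum of $A$ at a geometrically smaller index as well as at $k$) is exactly what is carried out in the cited work.
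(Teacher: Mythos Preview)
The paper does not prove this statement; it is quoted directly as Corollary~7 of \cite{MuscoCameronPCP} with no accompanying argument, so there is no in-paper proof to compare against.

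Your head and cross terms are handled correctly --- both reduce to properties of $S$ on the \emph{fixed} $k$-dimensional row space of $A_k$ (a subspace embedding for the head, and $(\eps/\sqrt{k},\delta)$-AMM on the fixed pair $(E^T,V_k)$ for the cross), which a Gaussian of the stated size provides. The tail term, however, contains a genuine gap. You claim that ``the constraint $K_P \preceq \Sigma_E^2$ confines the range of $K_P$, up to negligible error, to the top $O(k)$ singular directions of $E$, so the net has size independent of $\operatorname{rank}(E)$.'' This is false: take a flat tail spectrum $\Sigma_E = \sigma_{k+1} I_r$. Then $K_P = \sigma_{k+1}^2\, U_E^T P U_E$, and as $P$ ranges over rank-$k$ projections, $U_E^T P U_E$ ranges over \emph{all} rank-$\le k$ PSD contractions on $\R^r$, whose range is in no way confined to an $O(k)$-dimensional subspace. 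Any net over that set has cardinality $e^{\Theta(kr)}$, so union-bounding would force $m=\Omega(r)$ rather than $m=\Omega(k)$. The argument in \cite{MuscoCameronPCP} does not net over $P$: as the paper's own proof of the analogous Countsketch PCP (Theorem~\ref{thm:countsketch_pcps_zero_additive}) indicates, one instead verifies a fixed list of conditions from Theorem~2 of \cite{MuscoCameronPCP} --- subspace embedding for $A_k$, two approximate-matrix-product conditions on fixed pairs with parameter $O(\eps/\sqrt{k})$, and Frobenius-norm preservation of a fixed matrix --- and for Gaussians these reduce (via rotational invariance and the spectrum-flattening trick you allude to) to a single operator-norm concentration event that deterministically controls $\Tr((G^TG-I)K_P)$ for every $P$ at once. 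Your closing remark that the ``careful spectral bookkeeping \ldots\ is exactly what is carried out in the cited work'' points at the right place, but the specific net argument you wrote down does not go through.
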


Countsketch matrices can also be used to obtain $(\eps, 0, k)$-PCPs:

\begin{theorem}[Corollary of Theorem 2 of \cite{MuscoCameronPCP}]
\label{thm:countsketch_pcps_zero_additive}
Let $A \in \R^{n \times d}$ and $k \in \N$, and let $\eps, \delta \in (0, 1)$. If $S \in \R^{m \times d}$ is a Countsketch matrix, where $m \in \Omega(\frac{k^2}{\eps^2 \delta})$, then $AS^T$ is an $(\eps, 0, k)$-projection cost preserving sketch of $A$ with probability at least $1 - \delta$.
\end{theorem}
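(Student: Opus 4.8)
The plan is to show that a Countsketch matrix with $m = \Omega(k^2/(\eps^2\delta))$ rows satisfies the hypotheses of Theorem 2 of \cite{MuscoCameronPCP}, which isolates a short list of standard analytic conditions on a sketch $S \in \R^{m \times d}$ under which $AS^T$ is a rank-$k$ $(\eps, 0, k)$-PCP of $A$ (with zero additive term, exactly as in the Gaussian statement of Corollary 7 above, since these conditions hold for oblivious, unbiased sketches). Writing $A = A_k + A_{\setminus k}$ for the split of $A$ into its best rank-$k$ approximation and the residual, those conditions amount (after rescaling $\eps$ by an absolute constant) to: (i) $S$ is a $(1 \pm \eps)$ $\ell_2$ subspace embedding for the row space of $A_k$, a subspace of dimension at most $k$; (ii) $S$ has an approximate matrix product guarantee of the form required by the analysis of \cite{cemmp15_pcps}, with error parameter $O(\eps/\sqrt{k})$, which also controls $\|SA_{\setminus k}^T\|_2$-type quantities; and (iii) $S$ preserves the Frobenius norm of the fixed residual matrix $A_{\setminus k}$ up to a $(1\pm\eps)$ factor. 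The theorem then combines these into the two-sided PCP inequality with $c = 0$.

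It remains to verify (i)--(iii) for a Countsketch matrix $S \in \R^{m \times d}$ with $m = \Omega(k^2/(\eps^2\delta))$. For (i), since the row space of $A_k$ has dimension at most $k$, Theorem \ref{thm:countsketch_subspace_embedding} gives that $m = \Omega(k^2/(\eps^2\delta))$ rows suffice for $S$ to be a $(1\pm\eps)$ subspace embedding for this subspace, with probability at least $1 - \delta/3$. For (ii), Lemma \ref{lemma:countsketch_jl_moment_property} shows that a Countsketch with $\Omega(1/(\eps'^2\delta'))$ rows has the $(\eps', \delta', 2)$-JL moment property, hence by Theorem \ref{thm:jl_property_implies_approx_matrix_product} the $(3\eps', \delta')$-approximate matrix product property; taking $\eps' = \Theta(\eps/\sqrt{k})$ and $\delta' = \delta/3$ requires only $\Omega(k/(\eps^2\delta))$ rows, which is subsumed by $\Omega(k^2/(\eps^2\delta))$. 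For (iii), the second-moment bound established in the proof of Lemma \ref{lemma:countsketch_jl_moment_property} --- namely $\E{\|SB\|_F^2} = \|B\|_F^2$ together with a variance bound of order $\|B\|_F^4/m$ for any fixed matrix $B$ --- combined with Chebyshev's inequality shows that $m = \Omega(1/(\eps^2\delta))$ rows preserve $\|A_{\setminus k}\|_F^2$ up to $(1\pm\eps)$ with probability at least $1 - \delta/3$. A union bound over these three events gives that $S$ satisfies all hypotheses of Theorem 2 of \cite{MuscoCameronPCP} simultaneously with probability at least $1 - \delta$, so $AS^T$ is an $(\eps, 0, k)$-PCP of $A$ with that probability.

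This is a bookkeeping argument rather than one with a genuine obstacle; the only point that warrants care is the $\sqrt{k}$ loss in the approximate matrix product requirement (ii), which is what pushes the number of rows up to $k^2/\eps^2$ (via the subspace-embedding step) rather than $k/\eps^2$, and is the reason the Countsketch bound is quantitatively worse than the Gaussian bound of Corollary 7, which needs only $m = \Omega((k + \log(1/\delta))/\eps^2)$. One should also confirm that the hypotheses of Theorem 2 of \cite{MuscoCameronPCP} are stated abstractly enough to admit a sparse sketch such as Countsketch --- they are, as they reference only the analytic properties above (subspace embedding, approximate matrix product, Frobenius-norm preservation) and not any particular distributional form of $S$.
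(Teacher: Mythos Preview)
Your proposal is correct and takes essentially the same approach as the paper: both verify the hypotheses of Theorem 2 of \cite{MuscoCameronPCP} by checking that Countsketch with $\Omega(k^2/(\eps^2\delta))$ rows gives (i) a subspace embedding for $A_k$ (needing $\Omega(k^2/(\eps^2\delta))$ rows), (ii) the approximate matrix product property with parameter $O(\eps/\sqrt{k})$ (needing $\Omega(k/(\eps^2\delta))$ rows), and (iii) Frobenius norm preservation of the fixed residual (needing $\Omega(1/(\eps^2\delta))$ rows), and then take a union bound. The only cosmetic difference is that you cite the paper's internal statements (Theorem~\ref{thm:countsketch_subspace_embedding}, Lemma~\ref{lemma:countsketch_jl_moment_property}) whereas the paper points directly to Lemmas 7.3 and 7.11 of \cite{cw17_lra_regression_nnz_time}.
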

\begin{proof}
This follows from Theorem 2 of \cite{MuscoCameronPCP}. For the first property of that theorem, note that $S$ is an $\frac{\eps}{3}$ subspace embedding for $A_k$ with probability $1 - O(\delta)$ as long as it has at least $\Omega(\frac{k^2}{\eps^2 \delta})$ rows. For the second and third properties, note that by the proof of Lemma 7.3 in \cite{cw17_lra_regression_nnz_time}, $S$ achieves the desired approximate matrix product property with probability $1 - O(\delta)$ as long as it has at least $O(\frac{k}{\eps^2 \delta})$ rows. Finally, by the proof of Lemma 7.11 in \cite{cw17_lra_regression_nnz_time}, it achieves the desired norm preservation property as long as it has at least $O(\frac{1}{\eps^2 \delta})$ rows.
\end{proof}

It was also previously noted by \cite{cemmp15_pcps} (Lemma 11 and Theorem 12 of that work) that Countsketch matrices could be used to obtain PCPs.

The following lemma is a consequence of the definition of a PCP:

\begin{lemma}[Lemma 3 of \cite{cemmp15_pcps}] \label{lemma:pcp_preserving_approximation_error}
Let $A \in \R^{n \times d}$, $k \in \N$, $\eps \in (0, 1)$. Let $\widetilde{A} \in \R^{n \times d'}$ be a rank-$k$ projection-cost preserving sketch for $A$. Let $P^* = \argmin_P \|A - PA\|_F$ and $\widetilde{P}^* = \argmin_P \|\widetilde{A} - P\widetilde{A}\|_F$, where the argmins are taken over all rank-$k$ orthogonal projection matrices. Suppose $\widetilde{P} \in \R^{n \times n}$ is a rank-$k$ orthogonal projection matrix such that $\|\widetilde{A} - \widetilde{P}\widetilde{A}\|_F^2 \leq \gamma \|\widetilde{A} - \widetilde{P}^* \widetilde{A}\|_F^2$, for some $\gamma \geq 1$. Then, $\|A - \widetilde{P}A\|_F^2 \leq \frac{\gamma \cdot (1 + \eps)}{(1 - \eps)} \|A - P^* A\|_F^2$.
\end{lemma}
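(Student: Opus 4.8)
The plan is to chain the two-sided inequality in the definition of a projection-cost preserving sketch (Definition \ref{def:pcp}) twice: once applied to the projection $\widetilde{P}$ we actually produced, and once applied to the true optimizer $P^*$, with the assumed $\gamma$-approximation guarantee bridging the two. The only subtlety is the bookkeeping of the additive constant $c$, which is handled using $c \geq 0$ and $\gamma \geq 1$.

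First I would apply the lower-bound half of the PCP guarantee with $P = \widetilde{P}$, which gives
$$(1 - \eps)\|A - \widetilde{P}A\|_F^2 \leq \|\widetilde{A} - \widetilde{P}\widetilde{A}\|_F^2 + c.$$
Next I would bound the right-hand side using the hypotheses: by assumption $\|\widetilde{A} - \widetilde{P}\widetilde{A}\|_F^2 \leq \gamma \|\widetilde{A} - \widetilde{P}^*\widetilde{A}\|_F^2$, and since $\widetilde{P}^*$ is the minimizer of $\|\widetilde{A} - P\widetilde{A}\|_F$ over all rank-$k$ orthogonal projections, in particular $\|\widetilde{A} - \widetilde{P}^*\widetilde{A}\|_F^2 \leq \|\widetilde{A} - P^*\widetilde{A}\|_F^2$. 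Hence $\|\widetilde{A} - \widetilde{P}\widetilde{A}\|_F^2 \leq \gamma \|\widetilde{A} - P^*\widetilde{A}\|_F^2$. Adding $c$ and using $\gamma \geq 1$ together with $c \geq 0$, I get $\|\widetilde{A} - \widetilde{P}\widetilde{A}\|_F^2 + c \leq \gamma\|\widetilde{A} - P^*\widetilde{A}\|_F^2 + c \leq \gamma\big(\|\widetilde{A} - P^*\widetilde{A}\|_F^2 + c\big)$.

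Finally I would apply the upper-bound half of the PCP guarantee with $P = P^*$, namely $\|\widetilde{A} - P^*\widetilde{A}\|_F^2 + c \leq (1 + \eps)\|A - P^*A\|_F^2$, and substitute back. Combining everything yields $(1 - \eps)\|A - \widetilde{P}A\|_F^2 \leq \gamma(1 + \eps)\|A - P^*A\|_F^2$, and dividing by $1 - \eps$ gives the claimed bound $\|A - \widetilde{P}A\|_F^2 \leq \frac{\gamma(1 + \eps)}{1 - \eps}\|A - P^*A\|_F^2$.

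There is essentially no hard step here; this is a short deduction directly from the definition. The one place to be careful is ensuring the additive constant $c$ is pulled out correctly — it is the same constant $c$ in both applications of the PCP inequality (it depends only on $\widetilde{A}$, not on the projection), so distributing $\gamma$ over $\|\widetilde{A} - P^*\widetilde{A}\|_F^2 + c$ is valid precisely because $\gamma \geq 1$ and $c \geq 0$.
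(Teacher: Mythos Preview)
Your proof is correct and is exactly the standard argument. The paper does not give its own proof of this lemma; it simply cites it from \cite{cemmp15_pcps} and notes that it is ``a consequence of the definition of a PCP,'' which is precisely the deduction you carry out.
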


\subsubsection{Multiple-Response $\ell_2$ Regression Using Sign Matrices} \label{subsubsec:multiple_response_l2_regression_sign_matrices}

We will use the following result to efficiently perform multiple-response $\ell_2$ regression:

\begin{theorem}[Theorem 3.1 of \cite{cw09_numerical_linear_algebra_streaming}]
\label{theorem:original_sign_matrix_multiple_regression}
Let $\eps, \delta \in (0, 1)$. Let $A \in \R^{n \times d_1}$ and $B \in \R^{n \times d_2}$, with $A$ of rank at most $k$. Let $S \in \R^{m \times n}$ be a matrix whose entries are each chosen from $\{-\frac{1}{\sqrt{m}}, \frac{1}{\sqrt{m}}\}$ uniformly at random, with $m = O(k \log(1/\delta)/\eps)$. If $X^* = \argmin_X \|AX - B\|_F$, and $\widehat{X} = \argmin_X \|SAX - SB\|_F$, then with probability $1 - O(\delta)$, $\|A\widehat{X} - B\|_F \leq (1 + \eps) \|AX^* - B\|_F$. 
\end{theorem}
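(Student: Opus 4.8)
The plan is to follow the standard reduction of multiple-response regression to a \emph{constant-factor} subspace embedding plus a \emph{fine-grained} approximate matrix product bound; this split is precisely what yields the $1/\eps$ (rather than $1/\eps^2$) dependence in $m$. First I would fix a compact SVD $A = U\Sigma V^T$ with $U \in \R^{n \times r}$, $r = \mathrm{rank}(A) \le k$, so that $\mathcal{C} := \mathrm{colspan}(A) = \mathrm{colspan}(U)$, and let $P = UU^T$ be the orthogonal projection onto $\mathcal{C}$. The normal equations give $AX^* = PB$, and I write $B = PB + B_\perp$ with $B_\perp := B - PB$, so $U^TB_\perp = 0$. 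Since the columns of $A\widehat X$ and of $PB$ both lie in $\mathcal{C}$ while $B_\perp \perp \mathcal{C}$, Pythagoras gives
$$\|A\widehat X - B\|_F^2 = \|A\widehat X - PB\|_F^2 + \|B_\perp\|_F^2 = \|A\widehat X - PB\|_F^2 + \|AX^* - B\|_F^2,$$
so it suffices to prove $\|A\widehat X - PB\|_F^2 \le O(\eps)\,\|B_\perp\|_F^2 = O(\eps)\,\|AX^*-B\|_F^2$.

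Next I would establish two properties of the scaled sign matrix $S$, each failing with probability $O(\delta)$ once $m = O(k\log(1/\delta)/\eps)$: (i) $S$ is a $(1\pm 1/2)$ $\ell_2$ subspace embedding for $\mathcal{C}$, which needs only $m = \Omega(k + \log(1/\delta))$ rows, obtained from the distributional JL guarantee of Rademacher matrices (a consequence of their $(\eps', \delta, \ell)$-JL moment property with $\ell = \Theta(\log(1/\delta))$) together with a net argument over the unit sphere of the $r$-dimensional subspace $\mathcal{C}$; and (ii) $S$ has the $(\eps_1, \delta)$-approximate matrix product property with $\eps_1 = c\sqrt{\eps/k}$, which by Theorem~\ref{thm:jl_property_implies_approx_matrix_product} follows from the $(\eps_1/3, \delta, \ell)$-JL moment property, holding for $m = O(\eps_1^{-2}\log(1/\delta)) = O(k\log(1/\delta)/\eps)$ rows. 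Conditioning on (i), all singular values of $SU$ lie in $[1/2, 3/2]$; in particular $SU$ has full column rank $r$, so $\mathrm{colspan}(SA) = \mathrm{colspan}(SU)$ and the least-squares optimum satisfies $SA\widehat X = P_{SU}(SB)$, where $P_{SU} = QQ^T$ is the orthogonal projection onto $\mathrm{colspan}(SU)$ and $SU = QR$ is a reduced QR factorization.

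Then I would compute: since $SPB = SUU^TB \in \mathrm{colspan}(SU)$, we get $SA\widehat X - SPB = P_{SU}(SB) - SPB = P_{SU}(SB - SPB) = P_{SU}(SB_\perp) = QQ^TSB_\perp$, hence
$$\|SA\widehat X - SPB\|_F = \|Q^TSB_\perp\|_F \le \sigma_{\min}(SU)^{-1}\,\|(SU)^TSB_\perp\|_F \le 2\,\|U^TS^TSB_\perp - U^TB_\perp\|_F \le 2\eps_1\|U\|_F\|B_\perp\|_F,$$
using $U^TB_\perp = 0$ and property (ii), and $\|U\|_F = \sqrt{r} \le \sqrt{k}$. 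Finally, since $A\widehat X - PB \in \mathcal{C}$ and $S$ is a $(1\pm 1/2)$ subspace embedding for $\mathcal{C}$ by (i), $\|A\widehat X - PB\|_F \le 2\|SA\widehat X - SPB\|_F \le 4\eps_1\sqrt{k}\,\|B_\perp\|_F = O(\sqrt{\eps})\,\|B_\perp\|_F$ for a suitable choice of the constant $c$ in $\eps_1$. Plugging this into the Pythagorean identity gives $\|A\widehat X - B\|_F^2 \le (1+O(\eps))\|AX^*-B\|_F^2$, and taking square roots and rescaling $\eps$ by a constant yields the claim.

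The crux — the thing that makes the row count $O(k\log(1/\delta)/\eps)$ and not $O(k\log(1/\delta)/\eps^2)$ — is the observation that $S$ should \emph{not} be asked to be a $(1\pm\eps)$ subspace embedding: a constant-distortion embedding is enough for the two ``rotation'' steps, and all of the $\eps$-precision is carried by the approximate matrix product bound, where the loss factor $\|U\|_F = \sqrt{r}\le\sqrt{k}$ is exactly cancelled by taking $\eps_1 \asymp \sqrt{\eps/k}$; since the quantity ultimately controlled is a \emph{squared} Frobenius norm, $\eps_1^2 k \asymp \eps$. The remaining points are routine: verifying that scaled-Rademacher matrices satisfy the stated JL-moment/subspace-embedding bounds with $O(k\log(1/\delta)/\eps)$ rows, and noting that even when $SA$ lacks full column rank the quantities $SA\widehat X$ and $P_{SU}(SB)$ are still well-defined and equal, so the argument goes through unchanged.
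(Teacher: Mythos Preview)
Your proposal is correct and follows essentially the same approach as the paper's own proof (given for the closely related Theorems~\ref{thm:approximate_multiple_l2_regression} and~\ref{thm:general_l2_regression_sketching}): a constant-factor subspace embedding for the column span of $A$, the $(\sqrt{\eps/k},\delta)$ approximate matrix product property, the normal equations for the sketched problem, and the Pythagorean decomposition $\|A\widehat X - B\|_F^2 = \|A(\widehat X - X^*)\|_F^2 + \|AX^*-B\|_F^2$. The only cosmetic difference is that the paper manipulates $\|U^TS^TSU - I\|_2 \le 1/3$ algebraically to bound $\|A(\widehat X - X^*)\|_F$ directly by $\tfrac{3}{2}\|U^TS^TS(B-AX^*)\|_F$, whereas you pass through the sketched norm $\|SA\widehat X - SPB\|_F$ via the QR factorization of $SU$ and then pull back; both routes land on the same quantity $\|U^TS^TSB_\perp\|_F$ and use the same two properties of $S$, and the paper likewise invokes the sign-matrix subspace-embedding and approximate-matrix-product lemmas of \cite{cw09_numerical_linear_algebra_streaming} to verify the needed row count.
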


We will also make use of the following version of Theorem 3.1 of \cite{cw09_numerical_linear_algebra_streaming}.

\begin{theorem}[Strenghtened Version of Theorem 3.1 of \cite{cw09_numerical_linear_algebra_streaming}]
\label{thm:approximate_multiple_l2_regression}
Let $\eps, \delta, \tau \in (0, 1)$. Let $A \in \R^{n \times d_1}$, $B \in \R^{n \times d_2}$, with $A$ of rank at most $k$. Let $S \in \R^{m \times n}$ be a random matrix such that
\begin{itemize}
    \item $S$ is a $(1 \pm \frac{1}{3})$ $\ell_2$ subspace embedding for the column span of $A$ with probability at least $1 - \delta$.
    \item $S$ has the $(\sqrt{\eps/k}, \delta)$-approximate matrix product property.
    \item $E[S_i^T S_j]$ is $1$ if $i = j$ and $0$ otherwise, where $S_i$ denotes the $i^{th}$ column of $S$.
\end{itemize}
Then, with probability $1 - O(\delta)$, if $X^* = \argmin_X \|AX - B\|_F$, then for all $\widehat{X} \in \R^{d_1 \times d_2}$, such that $\|SA\widehat{X} - SB\|_F \leq (1 + \tau) \min_X \|SAX - SB\|_F$, it holds that $\|A\widehat{X} - B\|_F \leq (1 + O(\eps) + O(\tau/\delta)) \|AX^* - B\|_F$. In particular, this holds if $S \in \R^{m \times n}$ is a matrix whose entries are each chosen from $\{-\frac{1}{\sqrt{m}}, \frac{1}{\sqrt{m}}\}$ uniformly at random, with $m = O(k \log(1/\delta)/\eps)$.
\end{theorem}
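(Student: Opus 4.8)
The plan is to follow the standard Pythagorean decomposition for sketched least-squares regression, but to track carefully where the \emph{approximate} (rather than exact) minimality of $\widehat X$ forces a weaker bound than in \cite{cw09_numerical_linear_algebra_streaming}. Write $X^* = \argmin_X \|AX - B\|_F$ and $B^\perp = B - AX^*$, and note that each column of $B^\perp$ is orthogonal to the column span of $A$, so that for every $X$ we have the exact identity $\|AX - B\|_F^2 = \|A(X - X^*)\|_F^2 + \|B^\perp\|_F^2$. Setting $Z = \widehat X - X^*$, it therefore suffices to prove $\|AZ\|_F^2 \le (O(\eps) + O(\tau/\delta))\,\|B^\perp\|_F^2$, since then $\|A\widehat X - B\|_F^2 = \|AZ\|_F^2 + \|B^\perp\|_F^2 \le (1 + O(\eps) + O(\tau/\delta))\,\|AX^* - B\|_F^2$, and taking square roots gives the claim.

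Next I would pass to the sketch space. Since $SA\widehat X - SB = SAZ - SB^\perp$ and $\min_X \|SAX - SB\|_F \le \|SAX^* - SB\|_F = \|SB^\perp\|_F$, the hypothesis on $\widehat X$ gives $\|SAZ - SB^\perp\|_F^2 \le (1+\tau)^2 \|SB^\perp\|_F^2$, hence, expanding the square and using $\tau < 1$, $\|SAZ\|_F^2 - 2\langle SAZ, SB^\perp\rangle \le 3\tau \|SB^\perp\|_F^2$. I would then bound the three quantities here on an event of probability $1 - O(\delta)$. (i) The $(1 \pm \tfrac13)$ subspace embedding property, applied column-by-column to $AZ$ (whose columns lie in the column span of $A$), gives $\|SAZ\|_F^2 \ge \tfrac49 \|AZ\|_F^2$. (ii) Writing $A = UT$ with $U$ an orthonormal basis of the column span of $A$, one has $\langle SAZ, SB^\perp\rangle = \Tr\big((TZ)^T U^T S^T S B^\perp\big)$, so Cauchy--Schwarz together with the $(\sqrt{\eps/k},\delta)$-approximate matrix product property applied to the \emph{fixed} pair $(U, B^\perp)$ (using $U^T B^\perp = 0$) gives $|\langle SAZ, SB^\perp\rangle| \le \|TZ\|_F \cdot \sqrt{\eps/k}\,\|U\|_F \|B^\perp\|_F \le \sqrt{\eps}\,\|AZ\|_F \|B^\perp\|_F$; crucially this single event handles all $Z$ simultaneously. (iii) The unbiasedness hypothesis $\mathbf{E}[S_i^T S_j] = \mathbf{1}[i=j]$ yields $\mathbf{E}[\|SB^\perp\|_F^2] = \|B^\perp\|_F^2$, so Markov's inequality gives $\|SB^\perp\|_F^2 \le \tfrac{1}{\delta}\|B^\perp\|_F^2$ with probability $1 - \delta$.

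Combining, with $a = \|AZ\|_F$ and $b = \|B^\perp\|_F$ we get $\tfrac49 a^2 - 2\sqrt{\eps}\,ab - \tfrac{3\tau}{\delta}b^2 \le 0$; solving this quadratic inequality in $a$ yields $a \le O(\sqrt{\eps} + \sqrt{\tau/\delta})\,b$, i.e. $\|AZ\|_F^2 \le (O(\eps) + O(\tau/\delta))\,\|B^\perp\|_F^2$, as needed, and a union bound over the events in (i)--(iii) gives overall probability $1 - O(\delta)$. For the final ``in particular'' clause I would verify that a $\pm 1/\sqrt m$ sign matrix with $m = O(k\log(1/\delta)/\eps)$ meets all three hypotheses: it is a $(1 \pm \tfrac13)$ subspace embedding of any rank-$k$ subspace once $m = \Omega(k + \log(1/\delta))$; it has the $(\sqrt{\eps/k},\delta)$-approximate matrix product property by the standard sub-Gaussian second-moment bound once $m = \Omega(\tfrac{k}{\eps}\log(1/\delta))$; and $\mathbf{E}[S_i^T S_j] = \mathbf{1}[i=j]$ by a one-line computation using independence of entries and $\mathbf{E}[S_{\ell i}^2] = 1/m$.

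The step I expect to require the most care is (iii) and its consequence: under the weak hypotheses given, the only available control on $\|SB^\perp\|_F^2$ is through its expectation plus Markov, which inflates the slack term by $1/\delta$ and is exactly the source of the $O(\tau/\delta)$ in the conclusion. One must resist bounding $\|SB^\perp\|_F^2$ via approximate matrix product applied to $(B^\perp, B^\perp)$, since converting the resulting Frobenius-norm bound on $(B^\perp)^T S^T S B^\perp - (B^\perp)^T B^\perp$ to a trace bound costs a factor $\sqrt{\mathrm{rank}(B^\perp)}$, reintroducing an unwanted dependence on the (possibly huge) number of columns of $B$. It is also worth noting that for the exact minimizer --- the original statement of \cite{cw09_numerical_linear_algebra_streaming} --- the normal equations $(SA)^T(SAZ - SB^\perp) = 0$ force $\|SAZ\|_F^2 = \langle SAZ, SB^\perp\rangle$, eliminating the $\tau$ slack and hence the $1/\delta$ factor; the content of this strengthened version is precisely that one pays only an additive $O(\tau/\delta)$ to tolerate an approximate minimizer in the sketch space.
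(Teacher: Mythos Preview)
Your proof is correct and uses the same three probabilistic events as the paper (subspace embedding for $A$, approximate matrix product for the fixed pair $(U,B^\perp)$, and Markov's inequality on $\|SB^\perp\|_F^2$), but the organization differs. The paper introduces the \emph{exact} sketch minimizer $\widetilde X = \argmin_X \|SAX - SB\|_F$ as an intermediate point and bounds $\|A(\widetilde X - X^*)\|_F$ and $\|A(\widehat X - \widetilde X)\|_F$ separately: the first via the normal equations $U^TS^TS(A\widetilde X - B)=0$ exactly as in \cite{cw09_numerical_linear_algebra_streaming}, the second via the Pythagorean identity $\|SA\widehat X - SB\|_F^2 = \|SA\widetilde X - SB\|_F^2 + \|SA(\widehat X - \widetilde X)\|_F^2$ together with Markov; the two pieces are then combined by the triangle inequality. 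You bypass $\widetilde X$ entirely, expand $\|SAZ - SB^\perp\|_F^2$ directly, and solve the resulting quadratic in $\|AZ\|_F$. Your route is a bit more streamlined (one decomposition instead of two), while the paper's makes the connection to the original \cite{cw09_numerical_linear_algebra_streaming} argument more transparent by isolating the exact-minimizer case. Both identify $1/\delta$ as arising solely from Markov on $\|SB^\perp\|_F^2$, and your remark that AMP on $(B^\perp,B^\perp)$ would introduce an unwanted rank factor is apt.
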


\begin{remark}
We will apply the above theorem setting $\tau = \eps$ and $\delta = \frac{1}{10}$ --- thus, we do not optimize the $O(\tau/\delta)$ term.
\end{remark}

\begin{proof}
The above theorem is essentially the same as Theorem 3.1 of \cite{cw09_numerical_linear_algebra_streaming}, with the difference being that in Theorem 3.1 of \cite{cw09_numerical_linear_algebra_streaming}, it was stated that if $\widetilde{X}$ is the minimizer of $\|SAX - SB\|_F$, then $\|A\widetilde{X} - B\|_F \leq (1 + \eps) \|AX^* - B\|_F$, while we modify the proof slightly to show that this also holds with $(1 + \eps)$-approximate solutions $\widehat{X}$ in the place of $\widetilde{X}$, and state it slightly more generally to allow us to apply this theorem with matrices $S$ which are not sign matrices. We give a proof for completeness.

Let $\widehat{X} \in \R^{d_1 \times d_2}$ such that $\|SA\widehat{X} - SB\|_F \leq (1 + \tau) \min_X \|SAX - SB\|_F$. In addition, define $\widetilde{X} := \argmin_X \|SAX - SB\|_F$. Finally, let $U \in \R^{n \times k}$ be an orthonormal basis for the column span of $A$. As in \cite{cw09_numerical_linear_algebra_streaming}, first let us show that $\|A(\widetilde{X} - X^*)\|_F \leq O(\sqrt{\eps})\|AX^* - B\|_F$. We can upper bound $\|A(\widetilde{X} - X^*)\|_F$ as follows:
\begin{equation}
\begin{split}
\|A(\widetilde{X} - X^*)\|_F
& = \|UU^T A(\widetilde{X} - X^*)\|_F \\
& = \|U^T A (\widetilde{X} - X^*)\|_F \\
& = \|U^T S^T S U \cdot U^T A(\widetilde{X} - X^*) + (I - U^T S^T S U) \cdot U^T A (\widetilde{X} - X^*)\|_F \\
& \leq \|U^T S^T S U \cdot U^T A(\widetilde{X} - X^*)\|_F + \|(I - U^T S^T S U) \cdot U^T A (\widetilde{X} - X^*)\|_F \\
& \leq \|U^T S^T S U \cdot U^T A(\widetilde{X} - X^*)\|_F + \|I - U^T S^T S U\|_2 \|U^T A (\widetilde{X} - X^*)\|_F \\
& \leq \|U^T S^T S A(\widetilde{X} - X^*)\|_F + \frac{1}{3} \|A(\widetilde{X} - X^*)\|_F
\end{split}
\end{equation}
Here, the first inequality is by the triangle inequality, and the second inequality is by $\|CD\|_F \leq \|C\|_2 \|D\|_F$ with $C = I - U^T S^T S U$ and $D = U^T A (\widetilde{X} - X^*)$. The last inequalitiy holds with probability at least $1 - O(\delta)$ by our first assumption on $S$. Rearranging gives
\begin{equation} \label{eq:first_bound_3.1}
\begin{split}
\|A(\widetilde{X} - X^*)\|_F \leq \frac{3}{2} \|U^T S^T S A (\widetilde{X} - X^*)\|_F
\end{split}
\end{equation}
Thus, it suffices to obtain a bound on $\|U^T S^T SA (\widetilde{X} - X^*)\|_F$. First, observe that by the normal equations for the problem $\min_X \|SAX - SB\|_F$,
$$U^T S^T S (A\widetilde{X} - B) = A^T S^T S (A\widetilde{X} - B) = 0$$
and therefore,
$$U^T S^T S A(\widetilde{X} - X^*) = U^T S^T S (A\widetilde{X} - B + B - AX^*) = U^T S^T S(B - AX^*)$$
Combining this with Equation \ref{eq:first_bound_3.1} and using the fact that $S$ has the $(\sqrt{\eps/k}, \delta)$ approximate matrix product property, we find that
\begin{equation}
\begin{split}
\|A (\widetilde{X} - X^*)\|_F
& \leq \frac{3}{2} \|U^T S^T S (B - AX^*)\|_F \\
& \leq \frac{3}{2} \cdot \sqrt{\frac{\eps}{k}} \|U\|_F \|B - AX^*\|_F \\
& = O(\sqrt{\eps}) \|B - AX^*\|_F
\end{split}
\end{equation}
Here the second inequality holds with probability at least $1 - \delta$ since $S$ has the $(\sqrt{\eps/k}, \delta)$ approximate matrix product property, and the first equality is because $\|U\|_F = \sqrt{k}$.

Next, let us bound $\|A(\widehat{X} - \widetilde{X})\|_F$. By the Pythagorean theorem and the normal equations for the problem $\min_X \|SAX - SB\|_F$,
$$\|SA\widehat{X} - SB\|_F^2 = \|SA\widetilde{X} - SB\|_F^2 + \|SA(\widetilde{X} - \widehat{X})\|_F^2$$
Therefore,
\begin{equation}
\begin{split}
\|SA(\widehat{X} - \widetilde{X})\|_F^2
& = \|SA\widehat{X} - SB\|_F^2 - \|SA\widetilde{X} - SB\|_F^2 \\
& \leq (1 + \tau) \|SA\widetilde{X} - SB\|_F^2 - \|SA\widetilde{X} - SB\|_F^2 \\
& = \tau \|SA\widetilde{X} - SB\|_F^2 \\
& \leq \tau \|SA X^* - SB\|_F^2 \\
& \leq \frac{\tau}{\delta} \|AX^* - B\|_F^2
\end{split}
\end{equation}
Here the first inequality is because $\widehat{X}$ is a $(1 + \tau)$-approximate solution to the problem $\min_X \|SAX - SB\|_F$. The second inequality is because $\widetilde{X} = \argmin_X \|SAX - SB\|_F$. Finally, the third inequality holds with probability $1 - \delta$ by Markov's inequality and the fact that for any matrix $M \in \R^{n \times c}$, $E[\|SM\|_F^2] = E[\Tr(M^T S^T S M)] = \Tr(M^T M) = \|M\|_F^2$, where we used the fact that $E[S_i^T S_j]$ is $1$ if $i = j$ and $0$ otherwise.

In summary,
$$\|SA(\widehat{X} - \widetilde{X})\|_F^2 \leq \frac{\tau}{\delta} \|AX^* - B\|_F^2$$
with probability $1 - \delta$, and since $S$ is a subspace embedding for $A$ with probability $1 - \delta$, this implies that
$$\|A(\widehat{X} - \widetilde{X})\|_F \leq O\Big(\sqrt{\frac{\tau}{\delta}}\Big) \|AX^* - B\|_F$$
Thus, by the triangle inequality,
\begin{equation}
\begin{split}
\|A(\widehat{X} - X^*)\|_F
& \leq \|A(\widehat{X} - \widetilde{X})\|_F + \|A(\widetilde{X} - X^*)\|_F \\
& \leq O\Big(\sqrt{\frac{\tau}{\delta}} + \sqrt{\eps}\Big) \|AX^* - B\|
\end{split}
\end{equation}
Finally, by the normal equations for the original problem $\min_X \|AX - B\|_F$, the columns of $A$ are orthogonal to those of $AX^* - B$, and by the Pythagorean theorem,
\begin{equation}
\begin{split}
\|A\widehat{X} - B\|_F^2
& \leq \|A(\widehat{X} - X^*)\|_F^2 + \|AX^* - B\|_F^2 \\
& \leq O\Big(\sqrt{\eps} + \sqrt{\frac{\tau}{\delta}}\Big)^2 \|AX^* - B\|_F^2 + \|AX^* - B\|_F^2 \\
& \leq (1 + O(\eps) + O(\tau/\delta)) \|AX^* - B\|_F^2
\end{split}
\end{equation}
with probability $1 - O(\delta)$. Taking square roots gives
$$\|A\widehat{X} - B\|_F \leq \sqrt{1 + O(\eps) + O(\tau/\delta)} \|AX^* - B\|_F \leq (1 + O(\eps) + O(\tau/\delta)) \|AX^* - B\|_F$$
as desired.

We additionally show that this holds if $S \in \R^{m \times n}$ is a random matrix whose entries are each chosen from $\{-\frac{1}{\sqrt{m}}, \frac{1}{\sqrt{m}}\}$ uniformly at random, with $m = O(k \log(1/\delta)/ \eps)$. The subspace embedding property holds due to the following lemma (which can be applied with $\eps = \frac{1}{3}$):

\begin{lemma}[Lemma 3.4 of \cite{cw09_numerical_linear_algebra_streaming}]
\label{lemma:sign_matrix_subspace_embedding}
Let $k \in \N$ and $\eps, \delta > 0$, and $m = O(k \log(1/\delta)/\eps^2)$. There is an absolute constant $\eta > 0$ such that if $S \in \R^{m \times n}$ has entries from $\{-\frac{1}{\sqrt{m}}, \frac{1}{\sqrt{m}}\}$ (which are at least $\eta(k + \log(1/\delta))$-wise independent), then for any $U \in \R^{n \times k}$ with orthonormal columns, with probability at least $1 - \delta$, $\|U^T S^T S U - I\|_2 \leq \eps$.
\end{lemma}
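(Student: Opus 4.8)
The plan is the standard net-plus-moments argument for oblivious subspace embeddings, which exploits that the desired bound $\|U^T S^T S U - I\|_2 \le \eps$ only constrains $S$ in roughly $k$ directions. (A black-box route through the $(\eps, \delta, \ell)$-JL moment property, bounding $\|U^T S^T S U - I\|_2 \le \|U^T S^T S U - I\|_F$, would cost an extra factor of $k$ in the number of rows $m$, so a net argument seems necessary to obtain $m = O(k \log(1/\delta)/\eps^2)$.) Since $M := U^T S^T S U - I$ is symmetric, $\|M\|_2 = \sup_{x \in \R^k,\ \|x\|_2 = 1} |x^T M x| = \sup_{y \in \text{colspan}(U),\ \|y\|_2 = 1} \big|\, \|Sy\|_2^2 - 1 \,\big|$, using $\|Ux\|_2 = \|x\|_2$; so it suffices to control the quadratic form $y \mapsto \|Sy\|_2^2$ uniformly over the unit sphere of the $k$-dimensional subspace $\mathcal U := \text{colspan}(U)$.

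First I would fix a $\gamma$-net $\mathcal N$ of the unit sphere of $\mathcal U$ with $\gamma = \Theta(\eps)$, so $|\mathcal N| \le (O(1/\eps))^k$, and invoke the usual blow-up argument: for a unit $y \in \mathcal U$, choosing $y_0 \in \mathcal N$ with $\|y - y_0\|_2 \le \gamma$ gives $\big|\, \|Sy\|_2 - \|Sy_0\|_2 \,\big| \le \|S(y - y_0)\|_2 \le \gamma \sup_{z \in \mathcal U,\ \|z\|_2 = 1}\|Sz\|_2$, and iterating this estimate (on both sides) shows that $\big|\, \|Sy\|_2^2 - 1 \,\big| \le \eps/C$ for every $y \in \mathcal N$ implies $\big|\, \|Sy\|_2^2 - 1 \,\big| \le \eps$ for every unit $y \in \mathcal U$. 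This reduces the lemma to a union bound over $(O(1/\eps))^k$ per-point tail estimates.

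For the per-point estimate, fix a unit vector $y$ and write, with $\sigma_{ij} := \sqrt m\, S_{ij} \in \{-1, 1\}$, $\ \|Sy\|_2^2 - 1 = \tfrac1m \sum_{i=1}^m Z_i$, where $Z_i := \big(\sum_j \sigma_{ij} y_j\big)^2 - 1$ satisfies $\E{Z_i} = 0$ and depends only on the $i$-th row of $S$. For an even integer $\ell$, I would expand $\E{\big(\sum_i Z_i\big)^\ell}$ into monomials, each of which involves at most $2\ell$ of the entries $\sigma_{ij}$; hence provided $2\ell \le \eta(k + \log(1/\delta))$, the hypothesised $\eta(k + \log(1/\delta))$-wise independence lets us treat the variables appearing in each monomial as fully independent. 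The standard JL/Khintchine bookkeeping — only index tuples $(i_1, \dots, i_\ell)$ in which every row index recurs at least twice contribute, since an isolated index makes the corresponding factor conditionally mean-zero — then yields $\E{\big|\, \|Sy\|_2^2 - 1 \,\big|^\ell} \le (C\ell/m)^{\ell/2}$, and Markov's inequality applied to the $\ell$-th moment gives $\Pr{\big|\, \|Sy\|_2^2 - 1 \,\big| > \eps/C} \le (C'\ell/(\eps^2 m))^{\ell/2}$.

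Finally I would set $\ell = \Theta(k + \log(1/\delta))$ — choosing the constant $\eta$ large enough that $2\ell$ stays below the independence parameter — and $m = \Theta((k + \log(1/\delta))/\eps^2) = O(k \log(1/\delta)/\eps^2)$, so that the union bound $(O(1/\eps))^k \cdot (C'\ell/(\eps^2 m))^{\ell/2} \le \delta$ holds; since the lemma is invoked with $\eps$ a constant, the $\log(1/\eps)$ hidden in the net size is harmless. Combining with the blow-up step gives $\|M\|_2 \le \eps$ with probability at least $1 - \delta$. The step I expect to be the main obstacle, with everything else routine, is the limited-independence moment computation: one must verify that each surviving monomial touches at most $\eta(k + \log(1/\delta))$ coordinates of $S$, and carefully count the matchings of the $\ell$ row indices to confirm the $(C\ell/m)^{\ell/2}$ scaling rather than a weaker bound.
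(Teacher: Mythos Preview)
The paper does not supply its own proof of this lemma; it is quoted verbatim from \cite{cw09_numerical_linear_algebra_streaming} (with a remark correcting a typo in the row-count), so there is nothing in the present paper to compare against directly. Your net-plus-moments outline is exactly the argument used in the cited reference: one applies the sign-matrix JL/approximate-matrix-product tail bound (their Theorem~2.2/Lemma~2.3, which is precisely the $(C\ell/m)^{\ell/2}$ moment estimate you describe) to each point of an $\eps$-net of size $\eta^k$ on the unit sphere of $\mathrm{colspan}(U)$, union-bounds by replacing $\delta$ with $\delta/\eta^k$, and then uses the standard net-to-sphere blow-up. The limited-independence consideration you flag is also handled exactly as you say, since each monomial in the $\ell$-th moment expansion touches at most $2\ell$ entries of $S$. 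In short, your proposal is correct and coincides with the original proof.
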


\begin{remark}
Note that there is a typo in Lemma 3.4 of \cite{cw09_numerical_linear_algebra_streaming}, which should say $m = O(k \log(1/\delta)/\eps^2)$. This is because it applies Theorem 2.2 with $\|A\|_F = \|B\|_F = 1$, and $\delta$ replaced by $\frac{\delta}{\eta^k}$ for some constant $\eta > 1$. This does not affect the overall proof, since when applying Lemma 3.4, it is enough to take $\eps = \frac{1}{3}$.
\end{remark}

In addition, the approximate matrix product property is due to the following theorem from \cite{cw09_numerical_linear_algebra_streaming}:

\begin{theorem}[Theorem 2.2 of \cite{cw09_numerical_linear_algebra_streaming}]
\label{thm:sign_matrix_approx_matrix_product}
Let $\delta, \eps \in (0, 1)$, and $A \in \R^{n \times a}$ and $B \in \R^{n \times b}$. Then, for $m = \Theta(\log(1/\delta)/\eps^2)$ and $S \in \R^{m \times n}$ having entries drawn uniformly at random from $\{-\frac{1}{\sqrt{m}}, \frac{1}{\sqrt{m}}\}$ (which are at least $C \log(1/\delta)$-wise independent for some constant $C > 0$), $\PrBig{\|A^T S^T S B - A^T B\|_F \leq \eps \|A\|_F \|B\|_F} \geq 1 - \delta$.
\end{theorem}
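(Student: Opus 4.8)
The plan is a moment-method argument on the entries of the error matrix. Writing $A_i, B_j$ for the columns of $A$ and $B$, one has $\|A^TS^TSB - A^TB\|_F^2 = \sum_{i=1}^a \sum_{j=1}^b Z_{ij}^2$ with $Z_{ij} := (SA_i)^T(SB_j) - A_i^TB_j$. First I would unfold $S = \tfrac{1}{\sqrt m}(\sigma_{kl})$ and sum over rows: the diagonal terms $\sigma_{kl}^2 (A_i)_l (B_j)_l$ collapse (using $\sigma_{kl}^2 = 1$) to exactly $A_i^TB_j$, leaving $Z_{ij} = \tfrac1m \sum_{k=1}^m Y_k^{(ij)}$ where $Y_k^{(ij)} = \sum_{l \ne l'} \sigma_{kl}\sigma_{kl'}(A_i)_l(B_j)_{l'}$, an off-diagonal degree-two Rademacher chaos. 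Expanding $\E{(Y_k^{(ij)})^2}$ and using only $4$-wise independence of the signs, the surviving terms give $\E{(Y_k^{(ij)})^2} = \|A_i\|_2^2\|B_j\|_2^2 + (A_i^TB_j)^2 - 2\sum_l (A_i)_l^2 (B_j)_l^2 \le 2\|A_i\|_2^2\|B_j\|_2^2$, so that $\E{Z_{ij}} = 0$ and $\E{Z_{ij}^2} = \tfrac1m \E{(Y_1^{(ij)})^2} \le \tfrac2m \|A_i\|_2^2\|B_j\|_2^2$.

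Next I would bound the $\ell$-th moment for an even integer $\ell = \Theta(\log(1/\delta))$. Since $Z_{ij}$ is an average of the $m$ i.i.d.\ mean-zero chaos variables $Y_k^{(ij)}$, whose own moments are controlled by $\|Y_k^{(ij)}\|_{L^\ell} \le (\ell-1)\|Y_k^{(ij)}\|_{L^2}$ (degree-two hypercontractivity, which holds verbatim once the signs are $\Omega(\ell)$-wise independent --- this is exactly where the ``$C\log(1/\delta)$-wise independent'' hypothesis is used), a Rosenthal-type moment inequality for sums of i.i.d.\ mean-zero random variables gives, in the regime $m = \Omega(\ell)$ relevant here, $\|Z_{ij}\|_{L^\ell} \le \sqrt{C\ell/m}\,\|A_i\|_2\|B_j\|_2$ for an absolute constant $C$. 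I would then combine the per-entry bounds using Minkowski's inequality in $L^{\ell/2}$:
\[
\bigl\| \textstyle\sum_{i,j} Z_{ij}^2 \bigr\|_{L^{\ell/2}} \le \sum_{i,j} \|Z_{ij}\|_{L^\ell}^2 \le \frac{C\ell}{m} \sum_{i,j} \|A_i\|_2^2 \|B_j\|_2^2 = \frac{C\ell}{m} \|A\|_F^2 \|B\|_F^2,
\]
and apply Markov's inequality to the nonnegative variable $(\|A^TS^TSB - A^TB\|_F^2)^{\ell/2}$ at threshold $\eps^\ell \|A\|_F^\ell \|B\|_F^\ell$, obtaining failure probability at most $(C\ell/(m\eps^2))^{\ell/2}$; taking $m = \Theta(\log(1/\delta)/\eps^2)$ with a sufficiently large constant drives this below $\delta$.

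The only real obstacle is the Rosenthal/chaos moment inequality under limited independence needed to pass from the second moment to the $\Theta(\log(1/\delta))$-th moment while keeping the dependence on $m$ linear rather than quadratic in $\ell$; plain degree-two hypercontractivity applied directly to $Z_{ij}$ already yields the (weaker but still usable) bound $m = \Theta(\log^2(1/\delta)/\eps^2)$. An equivalent way to organize the argument is to observe that the moment bounds above are precisely the statement that a $\pm 1/\sqrt m$ sign matrix with $m = \Theta(\log(1/\delta)/\eps^2)$ rows and $\Omega(\log(1/\delta))$-wise independent entries satisfies the $(\eps', \delta, \ell)$-JL moment property of Definition \ref{def:jl_moment_property} with $\ell = \Theta(\log(1/\delta))$ (the moment form of the Johnson--Lindenstrauss lemma for limited-independence sign matrices), and then to invoke Theorem \ref{thm:jl_property_implies_approx_matrix_product}, which converts the JL moment property into the $(3\eps', \delta)$-approximate matrix product property; rescaling $\eps' \leftarrow \eps/3$ then yields the theorem.
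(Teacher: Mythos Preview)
The paper does not give its own proof of this statement: it is quoted verbatim as Theorem~2.2 of \cite{cw09_numerical_linear_algebra_streaming} and used as a black box. So there is no paper proof to compare against directly.

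That said, your proposal is sound, and in fact your ``equivalent way to organize the argument'' at the end is exactly the route the paper has already laid out for itself: Lemma~\ref{lemma:sign_matrix_jl_moment_property} establishes (again citing \cite{cw09_numerical_linear_algebra_streaming}, Lemma~2.3) the sharp moment bound $(\E[|\|Sx\|_2^2-1|^p])^{1/p} \le O(\sqrt{p/m})$, i.e.\ the $(\eps,\delta,\log(1/\delta))$-JL moment property with $m=\Theta(\log(1/\delta)/\eps^2)$ rows, and Theorem~\ref{thm:jl_property_implies_approx_matrix_product} converts this to approximate matrix product. Your direct per-entry chaos/Rosenthal computation is essentially how that Lemma~2.3 moment bound is obtained in the first place, so both routes coincide.

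One small wording issue: you call the $Y_k^{(ij)}$ ``i.i.d.'', but under the stated hypothesis the entries of $S$ are only $C\log(1/\delta)$-wise independent across \emph{all} of $S$, not fully independent row-to-row. This is harmless for your argument --- expanding $\E[Z_{ij}^\ell]$ involves products of at most $2\ell = O(\log(1/\delta))$ signs, so every expectation matches the fully independent case --- but the justification is limited independence throughout, not just in the hypercontractivity step.
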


\end{proof}

We will also use the following version of the above result to perform multiple-response $\ell_2$ regression, with sketching matrices other than dense sign matrices. This version will be used when we want very low failure probability (and thus cannot afford the $O(\tau/\delta)$ term in the approximation error) but do not need $(1 + \eps)$-approximate solutions of the sketched problem to remain good solutions.

\begin{theorem}[Essentially Theorem 3.1 of \cite{cw09_numerical_linear_algebra_streaming}]
\label{thm:general_l2_regression_sketching}
Let $\eps, \delta \in (0, 1/2)$. Let $A \in \R^{n \times d_1}$, $B \in \R^{n \times d_2}$, with $A$ of rank at most $k$. Let $S \in \R^{m \times n}$ be a random matrix which has the $(\sqrt{\frac{\eps}{k}}, \frac{\delta}{3})$ approximate matrix product property, and is a $(1 \pm \frac{1}{3})$ $\ell_2$ subspace embedding for $A$ with probability at least $1 - \delta$. If $X^* = \argmin_X \|AX - B\|_F$, and $\widehat{X} = \argmin_X \|SAX - SB\|_F$, then with probability $1 - O(\delta)$, $\|A\widehat{X} - B\|_F \leq (1+O(\varepsilon)) \|AX^* - B\|_F$.
\end{theorem}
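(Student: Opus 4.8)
The plan is to mimic the proof of Theorem \ref{thm:approximate_multiple_l2_regression} (equivalently Theorem 3.1 of \cite{cw09_numerical_linear_algebra_streaming}), but specialized to the case $\tau = 0$, i.e. $\widehat{X} = \argmin_X \|SAX - SB\|_F$ is the exact minimizer of the sketched problem. The two ingredients used there were (i) a $(1 \pm 1/3)$ $\ell_2$ subspace embedding for the column span of $A$, (ii) the $(\sqrt{\eps/k}, \delta)$-approximate matrix product property, and additionally (iii) the property $\E[S_i^T S_j] = \mathbf{1}[i = j]$, which was only invoked to control the term $\|SA(\widehat X - \widetilde X)\|_F$. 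When $\widehat X = \widetilde X$ this term vanishes, so property (iii) is no longer needed — this is precisely why this version can be stated with general sketching matrices having only properties (i) and (ii), at the cost of not allowing approximate solutions of the sketched problem.

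First I would invoke the subspace embedding hypothesis: with probability at least $1 - \delta$, $\|U^T S^T S U - I\|_2 \le 1/3$, where $U \in \R^{n \times k}$ is an orthonormal basis for the column span of $A$. Conditioned on this, exactly as in the displayed computation in the proof of Theorem \ref{thm:approximate_multiple_l2_regression}, one obtains
\[
\|A(\widehat X - X^*)\|_F \le \tfrac{3}{2} \|U^T S^T S A(\widehat X - X^*)\|_F.
\]
Next, by the normal equations for the sketched regression problem $\min_X \|SAX - SB\|_F$, we have $A^T S^T S(A\widehat X - B) = 0$, hence $U^T S^T S A(\widehat X - X^*) = U^T S^T S(B - AX^*)$. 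Applying the $(\sqrt{\eps/k}, \delta/3)$-approximate matrix product property to the matrices $U$ and $B - AX^*$ gives, with probability at least $1 - \delta/3$, $\|U^T S^T S(B - AX^*)\|_F \le \sqrt{\eps/k}\,\|U\|_F \|B - AX^*\|_F = O(\sqrt{\eps})\|AX^* - B\|_F$, using $\|U\|_F = \sqrt{k}$. Chaining these bounds yields $\|A(\widehat X - X^*)\|_F \le O(\sqrt{\eps})\|AX^* - B\|_F$.

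Finally, by the normal equations for the original problem $\min_X \|AX - B\|_F$, the columns of $A$ are orthogonal to those of $AX^* - B$, so by the Pythagorean theorem
\[
\|A\widehat X - B\|_F^2 = \|A(\widehat X - X^*)\|_F^2 + \|AX^* - B\|_F^2 \le (1 + O(\eps))\|AX^* - B\|_F^2,
\]
and taking square roots gives $\|A\widehat X - B\|_F \le (1 + O(\eps))\|AX^* - B\|_F$. The total failure probability is $O(\delta)$ by a union bound over the two events. There is no real obstacle here — the only point to be careful about is the bookkeeping of failure probabilities and constants, and noting that since $\widehat X$ is the exact sketched minimizer the term that required the third moment-type hypothesis simply disappears; so this is genuinely just Theorem \ref{thm:approximate_multiple_l2_regression} with $\tau = 0$, stated for completeness with the weaker set of hypotheses on $S$.
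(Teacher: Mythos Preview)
Your proposal is correct and follows essentially the same approach as the paper's own proof: subspace embedding to get $\|A(\widehat X - X^*)\|_F \le \tfrac32\|U^TS^TSA(\widehat X - X^*)\|_F$, normal equations for the sketched problem to rewrite this as $\tfrac32\|U^TS^TS(B - AX^*)\|_F$, approximate matrix product to bound it by $O(\sqrt{\eps})\|AX^* - B\|_F$, and then the Pythagorean theorem. Your observation that this is Theorem~\ref{thm:approximate_multiple_l2_regression} with $\tau = 0$ (so the third hypothesis on $S$ is unnecessary) is exactly the point.
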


\begin{proof}
The proof is essentially the same as that of Theorem 3.1 of \cite{cw09_numerical_linear_algebra_streaming}, with slight modifications --- we give the proof for completeness. First, if $U \in \R^{n \times k}$ is an orthonormal basis for the column span of $A$, then with probability at least $1 - \delta$, $\|U^T S^T S U - I\|_2 \leq \frac{1}{3}$ (by the definition of subspace embedding). Now, let $\widehat{X} = \argmin_X \|SAX - SB\|_F$. As before,
\begin{equation}
\begin{split}
\|A(\widehat{X} - X^*)\|_F
& = \|UU^T A (\widehat{X} - X^*)\|_F \\
& = \|U^T A (\widehat{X} - X^*)\|_F \\
& = \|U^T S^T S U \cdot U^T A(\widehat{X} - X^*) + (I - U^T S^T S U) \cdot U^T A(\widehat{X} - X^*)\|_F \\
& \leq \|U^T S^T S U \cdot U^T A (\widehat{X} - X^*)\|_F + \|(I - U^T S^T S U) \cdot U^T A(\widehat{X} - X^*)\|_F \\
& \leq \|U^T S^T S U U^T A (\widehat{X} - X^*)\|_F + \frac{1}{3} \|U^T A(\widehat{X} - X^*)\|_F \\
& = \|U^T S^T S U U^T A (\widehat{X} - X^*)\|_F + \frac{1}{3} \|A(\widehat{X} - X^*)\|_F
\end{split}
\end{equation}
Here the second inequality is because $S$ is a subspace embedding for $U$. Thus,
\begin{equation}
\begin{split}
\|A(\widehat{X} - X^*)\|_F \leq \frac{3}{2}\|U^T S^T S UU^T A(\widehat{X} - X^*)\|_F = \frac{3}{2}\|U^T S^T S A(\widehat{X} - X^*)\|_F
\end{split}
\end{equation}
where the equality is because $UU^T A = A$. Now, by the normal equations for the sketched problem $\min_X \|SAX - SB\|_F$,
$$U^T S^T S(A\widehat{X} - B) = A^T S^T S A(\widehat{X} - B) = 0$$
meaning that
$$U^T S^T S A(\widehat{X} - X^*) = U^T S^T S(B - A\widehat{X}) + U^T S^T S A(\widehat{X} - X^*) = U^T S^T S(B - AX^*)$$
Therefore,
\begin{equation}
\begin{split}
\|A(\widehat{X} - X^*)\|_F
& \leq \frac{3}{2} \|U^T S^T S A(\widehat{X} - X^*)\|_F \\
& = \frac{3}{2} \|U^T S^T S (B - AX^*)\|_F \\
& \leq \frac{3}{2} \sqrt{\frac{\eps}{k}} \|U\|_F \|B - AX^*\|_F \\
& \leq \frac{3\sqrt{\eps}}{2} \|B - AX^*\|_F
\end{split}
\end{equation}
Here the second inequality holds with probability at least $1 - \frac{\delta}{3}$ since $S$ has the $(\sqrt{\frac{\eps}{k}}, \frac{\delta}{3})$ approximate matrix product property. Thus,
\begin{equation}
\begin{split}
\|A\widehat{X} - B\|_F^2
& = \|A(\widehat{X} - X^*)\|_F^2 + \|AX^* - B\|_F^2 \\
& \leq O(\eps) \|AX^* - B\|_F^2 + \|AX^* - B\|_F^2 \\
& = (1 + O(\eps)) \|AX^* - B\|_F^2
\end{split}
\end{equation}
Here the first equality is by the Pythagorean theorem together with the normal equations for the problem $\min_X \|AX - B\|_F$. This completes the proof.
\end{proof}

We will apply the above lemma with sketching matrices of the form $S \times I$ where $S$ is a dense sign matrix. To show that $S \times I$ is a subspace embedding we will use the following lemma:

\begin{lemma}
\label{lemma:kronecker_product_subspace_embedding}
Let $\eps_1, \eps_2 > 0$, and let $n_1, r_1, s_1, n_2, r_2, s_2 \in \N$ and suppose $S \in \R^{s_1 \times n_1}$ is a $(1 \pm \eps_1)$ $\ell_2$ subspace embedding for $U$. In addition, let $V \in \R^{n_2 \times r_2}$ and suppose $T \in \R^{s_2 \times n_2}$ is a $(1 \pm \eps_2)$ $\ell_2$ subspace embedding for $V$. Then, $S \times T$ is a $(1 \pm \eps_1)(1 \pm \eps_2$ $\ell_2$ subspace embedding for $U \times V$, i.e. 
$$\|(S \times T)(U \times V)x\|_2 = (1 \pm \eps_1)(1 \pm \eps_2) \|(U \times V)x\|_2$$
for all $x \in \R^{r_1r_2}$.
\end{lemma}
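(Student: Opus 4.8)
The plan is to reduce everything to the mixed-product property of Kronecker products, namely $(S \times T)(U \times V) = (SU) \times (TV)$, combined with a simple ``column-by-column'' observation about how an $\ell_2$ subspace embedding behaves inside a Frobenius norm. First I would isolate the auxiliary claim: \emph{if $S$ is a $(1 \pm \eps)$ $\ell_2$ subspace embedding for a matrix $M$, then for every conformable matrix $N$ we have $\|SMN\|_F = (1 \pm \eps)\|MN\|_F$.} This follows by writing $\|SMN\|_F^2 = \sum_\ell \|SM(Ne_\ell)\|_2^2$ and applying the subspace-embedding guarantee to each vector $Ne_\ell$: the $\ell$-th term lies between $(1-\eps)^2$ and $(1+\eps)^2$ times $\|M(Ne_\ell)\|_2^2$, so the whole sum lies in $[(1-\eps)^2, (1+\eps)^2]\cdot\|MN\|_F^2$, and taking square roots gives the claim. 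Since $\|X\|_F = \|X^T\|_F$, the same argument applied to transposes shows that the embedding may also be applied \emph{on the right}: $\|N M^T S^T\|_F = (1 \pm \eps)\|NM^T\|_F$.

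Next I would use the standard vectorization identity $(A \times B)\tensorvec(X) = \tensorvec(A X B^T)$ to convert norms of Kronecker-product actions into Frobenius norms. Writing $x = \tensorvec(X)$ for the appropriately shaped matrix $X$, this gives $\|(U \times V)x\|_2 = \|U X V^T\|_F$, and, by the mixed-product identity, $\|(S \times T)(U \times V)x\|_2 = \|(SU)X(TV)^T\|_F = \|S U X V^T T^T\|_F$.

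The proof then concludes by peeling off the two sketches one at a time using the auxiliary claim. Applying it with the embedding $S$ for $U$ and $N = X V^T T^T$ gives $\|S U X V^T T^T\|_F = (1 \pm \eps_1)\|U X V^T T^T\|_F$; transposing and applying the claim again with the embedding $T$ for $V$ and $N = X^T U^T$ gives $\|U X V^T T^T\|_F = \|T V X^T U^T\|_F = (1 \pm \eps_2)\|V X^T U^T\|_F = (1 \pm \eps_2)\|U X V^T\|_F$. Chaining these two estimates yields $\|(S \times T)(U \times V)x\|_2 = (1 \pm \eps_1)(1 \pm \eps_2)\|U X V^T\|_F = (1 \pm \eps_1)(1 \pm \eps_2)\|(U \times V)x\|_2$, which is exactly the assertion. (Equivalently, one can factor $S \times T = (S \times I_{s_2})(I_{n_1} \times T)$ and observe, by the same column-by-column reasoning, that $I_{n_1} \times T$ is a $(1 \pm \eps_2)$ subspace embedding for $U \times V$ and that $S \times I_{s_2}$ is a $(1 \pm \eps_1)$ subspace embedding for $U \times TV$; composing gives the result.)

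There is essentially no hard step here. The only point requiring care is that the per-column distortion factors need not be uniform, so one must work with the \emph{squared} Frobenius norm and the band $[(1-\eps)^2, (1+\eps)^2]$ rather than trying to extract a single scalar $(1\pm\eps)$ from each term; and one must keep the vectorization convention of the paper (and the transposes on $X$ it induces) consistent throughout.
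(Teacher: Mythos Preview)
Your proposal is correct and follows essentially the same approach as the paper: use the mixed-product identity $(S\times T)(U\times V)=(SU)\times(TV)$, convert to a Frobenius norm via vectorization, and peel off $S$ and $T$ one at a time using the subspace-embedding property applied column-by-column. The paper's proof is just a terser version of exactly this argument (it states the chain $\|SUXV^TT^T\|_F = (1\pm\eps_1)(1\pm\eps_2)\|UXV^T\|_F$ without spelling out the auxiliary claim you carefully isolated).
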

\begin{proof}
Let $x \in \R^{r_1r_2}$, and let $X \in \R^{r_1 \times r_2}$ so that $X_{ij}$ is equal to the entry of $x$ in the coordinate $(i, j)$. Then, 
\begin{equation}
\begin{split}
\|(S \times T)(U \times V)x\|_2 = \|(SU \times TV)x\|_2 = \|SU X V^T T^T\|_2  \\ = (1 \pm \eps_1)(1 \pm \eps_2) \|UXV^T \|_2 = (1 \pm \eps_1)(1 \pm \eps_2) \|(U \times V)x\|_2
\end{split}
\end{equation}
where the third equality is because $S$ is a subspace embedding for $U$, and the fourth equality is because $T$ is a subspace embedding for $V$. This proves the lemma.
\end{proof}

In addition, we will need the fact that dense sign matrices have the JL moment property.

\begin{lemma}[Sign Matrices Have JL Moment Property --- Lemma 2.3 of \cite{cw09_numerical_linear_algebra_streaming}]
\label{lemma:sign_matrix_jl_moment_property}
Let $\eps, \delta \in (0, 1)$, and let $S \in \R^{m \times n}$ have entries which are i.i.d. and uniformly random in $\{-\frac{1}{\sqrt{m}}, \frac{1}{\sqrt{m}}\}$. If $m \in \Omega(\frac{\log(1/\delta)}{\eps^2})$, then $S$ has the $(\eps, \delta, \log(1/\delta))$-JL moment property.
\end{lemma}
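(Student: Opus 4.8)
The plan is to fix a unit vector and reduce the moment bound to a one‑dimensional concentration estimate for a normalized sum of i.i.d.\ random variables. Fix $x \in \R^n$ with $\|x\|_2 = 1$, and for $i \in [m]$ set $Y_i := \sqrt{m}\,(Sx)_i = \sum_{j=1}^n \sigma_{ij}x_j$, where $\sigma_{ij} \in \{-1,1\}$ are i.i.d.\ Rademacher variables (so $S_{ij} = \sigma_{ij}/\sqrt{m}$). Since $\|x\|_2 = 1$ we have $\E{Y_i^2} = 1$, hence with $Z_i := Y_i^2 - 1$,
$$\|Sx\|_2^2 - 1 \;=\; \frac{1}{m}\sum_{i=1}^m Z_i,$$
where the $Z_i$ are i.i.d.\ with mean zero. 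Writing $W := \|Sx\|_2^2 - 1$ and $\ell := \lceil \log(1/\delta)\rceil$, the $(\eps,\delta,\ell)$‑JL moment property for this fixed $x$ is exactly the inequality $\E{|W|^\ell} \le \eps^\ell\delta$, and since the property concerns a single fixed $x$ there is no union bound to take. (We may assume $\ell \ge 2$, which is the only regime relevant to the applications in this paper.)

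The next step is to record two facts about the summands. Expanding the fourth moment of a Rademacher sum gives $\E{Y_i^4} = 3 - 2\sum_j x_j^4 \le 3$, so $\E{Z_i^2} = \E{Y_i^4} - 1 \le 2$. Moreover $Z_i$ is sub‑exponential with an absolute‑constant parameter: hypercontractivity (or Khintchine's inequality) for the degree‑one Rademacher chaos $Y_i$ gives $\|Y_i\|_p \le \sqrt{p-1}\,\|Y_i\|_2 = \sqrt{p-1}$ for $p \ge 2$, so $\|Z_i\|_p \le \|Y_i^2\|_p + 1 = \|Y_i\|_{2p}^2 + 1 \le 2p$ for all $p \ge 1$. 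Feeding $\sigma^2 := \E{Z_i^2} \le 2$ and this sub‑exponential bound into the moment form of Bernstein's inequality (equivalently, a Rosenthal‑type inequality) for a sum of i.i.d.\ mean‑zero sub‑exponential variables produces an absolute constant $A$ with
$$\|W\|_p \;=\; \frac{1}{m}\Bigl\|\sum_{i=1}^m Z_i\Bigr\|_p \;\le\; A\Bigl(\sqrt{\tfrac{p}{m}} + \tfrac{p}{m}\Bigr) \qquad \text{for all real } p \ge 2.$$

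To conclude, set $p = \ell$. Using $\ell \le 2\log(1/\delta)$ (valid since $\ell \ge 2$), the hypothesis $m = \Omega(\log(1/\delta)/\eps^2)$ gives $m \ge C\ell/\eps^2$ for an absolute constant $C$ that we may take as large as we like; then $\sqrt{\ell/m} \le \eps/\sqrt{C}$ and $\ell/m \le \eps^2/C \le \eps/C$ (using $\eps < 1$), so $\|W\|_\ell \le 2A\eps/\sqrt{C}$. Choosing $C \ge (2Ae)^2$ gives $\|W\|_\ell \le \eps/e$, and therefore
$$\E{|W|^\ell} \;=\; \|W\|_\ell^\ell \;\le\; \Bigl(\tfrac{\eps}{e}\Bigr)^{\ell} \;=\; \eps^\ell e^{-\ell} \;\le\; \eps^\ell e^{-\log(1/\delta)} \;=\; \eps^\ell\delta,$$
which is the claimed $(\eps,\delta,\ell)$‑JL moment property. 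I expect the main obstacle to be exactly the moment inequality in the second paragraph: applying degree‑two hypercontractivity directly to the quadratic form $\sum_i Z_i = \sum_{i}\sum_{j\ne k}\sigma_{ij}\sigma_{ik}x_jx_k$ only yields $\|W\|_p \le (p-1)\sqrt{2/m}$, which forces $m = \Omega(\log^2(1/\delta)/\eps^2)$; obtaining the stated $m = \Omega(\log(1/\delta)/\eps^2)$ requires the sharper Bernstein/Rosenthal bound that separates the sub‑Gaussian term $\sqrt{p/m}$ from the sub‑exponential term $p/m$.
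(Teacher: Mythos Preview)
Your proof is correct and follows essentially the same route as the paper: establish the moment bound $\|W\|_p \lesssim \sqrt{p/m}$ for $W = \|Sx\|_2^2 - 1$, then specialize to $p = \ell = \log(1/\delta)$ and check the definition. The paper simply black-boxes both steps---it cites Lemma~2.3 of \cite{cw09_numerical_linear_algebra_streaming} for the moment bound $(\E[|W|^{2p}])^{1/p} \le O(p/m)$, observes this is the Strong $(\eps,\delta)$-JL Moment Property of \cite{akkpvwz20_tensor_sketch_outer_product}, and then invokes Remark~1 of that paper to pass to the ordinary $(\eps,\delta,\log(1/\delta))$-JL moment property. You instead derive the moment bound from scratch via Khintchine on $Y_i$ plus a Bernstein/Rosenthal inequality on the sub-exponential summands $Z_i$, and then verify the definition directly at $p=\ell$ without the Strong-JL detour. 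Your version is more self-contained and makes transparent exactly where the $\sqrt{p/m}$ versus $p/m$ separation is needed to hit $m = \Omega(\log(1/\delta)/\eps^2)$ rather than $\Omega(\log^2(1/\delta)/\eps^2)$; the paper's version is shorter but relies on two external citations.
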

\begin{proof}
By Lemma 2.3 of \cite{cw09_numerical_linear_algebra_streaming}, if $S \in \R^{m \times n}$ is a rescaled dense sign matrix, then for any unit vector $x \in \R^n$, and $p \in \N$,
$$\left(\mathbb{E}[|\|Sx\|_2^2 - 1|^{2p}]\right)^{1/p} \leq O\Big(\frac{p}{m}\Big)$$
Thus, for any $p \in \N$,
$$\left(\mathbb{E}[|\|Sx\|_2^2 - 1|^p]\right)^{1/p} \leq (\mathbb{E}[|\|Sx\|_2^2 - 1|^{2p}])^{1/(2p)} \leq O\Big(\sqrt{\frac{p}{m}}\Big)$$
Thus, if $m \geq \Theta(\frac{\log(1/\delta)}{\eps^2})$, then for all $p \in \N$,
$$\left(\mathbb{E}[|\|Sx\|_2^2 - 1|^p]\right)^{1/p} \leq O\Big(\eps \sqrt{\frac{p}{\log(1/\delta)}}\Big)$$
Thus, $S$ in fact has the Strong $(\eps, \delta)$-JL Moment Property (see Definition 19 in \cite{akkpvwz20_tensor_sketch_outer_product}). By Remark 1 of \cite{akkpvwz20_tensor_sketch_outer_product}, $S$ therefore has the $(\eps, \delta, \log(1/\delta))$-JL moment property as long as it has at least $\Theta(\frac{\log(1/\delta)}{\eps^2})$ rows.
\end{proof}


\subsubsection{Kronecker Product Regression in $\ell_2$ Norm} \label{subsubsec:kronecker_product_regression}

Recall that Tucker-$q$ decomposition is equivalent to finding $q$ factors $U^{1}, \ldots, U^{q}$ and a core tensor $G$ such that $\|(U^{1} \times \ldots \times U^{q})\tensorvec(G) - A\|_2^2$ is minimized. Note that if the factors $U^{1}, \ldots, U^{q}$ are known, then this is an $\ell_2$ regression problem, where the design matrix is a Kronecker product of $q$ $n \times k$ matrices. This $\ell_2$ regression problem can be solved very efficiently:

\begin{theorem}[$\ell_2$ Kronecker Product Regression --- Theorem 3.1 and Algorithm 1 of \cite{djssw19_kronecker_product_regression}]
\label{thm:kronecker_product_regression}
Let $A_1, A_2, \ldots, A_q$, where $A_i \in \R^{n_i \times d_i}$. Let $n = \prod_i n_i$ and $d = \prod_i d_i$. Let $b \in \R^n$. Then, there is an algorithm which, in running time $\sum_{i = 1}^q \nnz(A_i) + \poly(d/(\eps\delta))$ and with success probability $1 - \delta$, returns $\widehat{x} \in \R^d$ such that $\|(A_1 \times \ldots \times A_q)\widehat{x} - b\|_2 \leq (1 + \eps) \min_x \|(A_1 \times \ldots \times A_q)x - b\|_2$, and also returns $\widehat{e} = (1 \pm \eps)\|(A_1 \times \ldots \times A_q)\widehat{x} - b\|_2 - \|b\|_2$ (where we use the notation $a = (1 \pm \eps)b$ to indicate that $a \in [(1 - \eps)b, (1 + \eps)b]$).
\end{theorem}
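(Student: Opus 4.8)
This theorem is quoted verbatim from \cite{djssw19_kronecker_product_regression}, so I only sketch the underlying argument, which rests on leverage-score sampling together with the fact that the leverage scores of a Kronecker product factorize. Write $A = A_1 \times \cdots \times A_q \in \R^{n \times d}$, and let $A_i = Q_i R_i$ be a thin QR factorization, so that $Q_i$ has orthonormal columns. Then $A = (Q_1 \times \cdots \times Q_q)(R_1 \times \cdots \times R_q)$, and $Q_1 \times \cdots \times Q_q$ again has orthonormal columns, since $(Q_1 \times \cdots \times Q_q)^T(Q_1 \times \cdots \times Q_q) = (Q_1^T Q_1) \times \cdots \times (Q_q^T Q_q) = I_d$. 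Hence the leverage score of the row of $A$ indexed by $(j_1, \ldots, j_q)$ equals $\prod_{i = 1}^q \ell_{j_i}(A_i)$, the product of the leverage scores of row $j_i$ of $A_i$. In particular, sampling a row of $A$ proportionally to its leverage score is the same as, independently for each $i$, sampling a row index $j_i$ of $A_i$ proportionally to $\ell_{j_i}(A_i)$.

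The algorithm is then: (i) compute constant-factor overestimates $\widetilde{\ell}_j(A_i)$ of the leverage scores of each $A_i$ in $\nnz(A_i) + \poly(d_i)$ time by the standard sketch-then-project approach; (ii) draw $m = \poly(d/(\eps\delta))$ independent row multi-indices, each obtained by sampling each coordinate from the corresponding normalized $\widetilde{\ell}(A_i)$-distribution, and form the associated $m \times n$ sampling-and-rescaling matrix $S$; (iii) output the exact minimizer $\widehat{x}$ of $\min_x \|SAx - Sb\|_2$. Since the $\widetilde{\ell}_j(A_i)$ are within a constant factor of the true leverage scores, the product distribution overestimates the true leverage-score distribution of $A$ by a constant factor, so the standard leverage-score sampling bounds (see \cite{woodruff14_sketching_as_a_tool}) imply that, with probability $1 - \delta$, $S$ is simultaneously a $(1 \pm \tfrac13)$ $\ell_2$ subspace embedding for the column span of $A$ and has the $(\sqrt{\eps/d}, \delta)$-approximate matrix product property, as long as $m = \poly(d/(\eps\delta))$. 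By the sketch-and-solve argument for $\ell_2$ regression, applied exactly as in the proof of Theorem \ref{thm:general_l2_regression_sketching} with $k = d$, this gives $\|A\widehat{x} - b\|_2 \leq (1 + \eps)\min_x \|Ax - b\|_2$.

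For the running time, $SA = S(A_1 \times \cdots \times A_q)$ is an $m \times d$ matrix whose $t$-th row is the (rescaled) Kronecker product of the sampled rows of $A_1, \ldots, A_q$, so it is formed in $\poly(d, m)$ time; $Sb$ touches only the $m$ sampled entries of $b$; and solving the resulting $m \times d$ least-squares problem takes $\poly(d/(\eps\delta))$ time, for a total of $\sum_i \nnz(A_i) + \poly(d/(\eps\delta))$. To produce $\widehat{e}$ without reading all of $b$, the algorithm returns (a normalization of) $\|SA\widehat{x} - Sb\|_2$: the subspace-embedding property controls $\|SA(\widehat{x} - x^*)\|_2$ where $x^*$ is the true optimum, the norm-preservation of $S$ controls $\|S(Ax^* - b)\|_2$, and approximate matrix product kills the cross term, so $\|SA\widehat{x} - Sb\|_2 = (1 \pm \eps)\|A\widehat{x} - b\|_2$, which after subtracting $\|b\|_2$ matches the stated guarantee.

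The two points that require care are the leverage-score factorization above and the bookkeeping of constants, so that the approximate-matrix-product parameter fed into the regression analysis is small enough ($\sqrt{\eps/d}$, not $\sqrt{\eps}$) to yield a genuine $(1+\eps)$-approximation rather than a weaker factor; I expect the factorization lemma and getting the sampling implementable within the $\nnz$ budget (observing that zero rows, having leverage score $0$, are never sampled, so one only builds CDFs over nonzero rows) to be the main conceptual content, with everything else following from standard sketching machinery already recalled in Section \ref{sec:preliminaries}.
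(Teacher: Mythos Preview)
Your sketch of the regression guarantee is correct and matches what the paper does: both rely on the leverage-score sampling scheme of \cite{djssw19_kronecker_product_regression}, with the factorization of leverage scores across a Kronecker product, to obtain a sampling matrix that is simultaneously a subspace embedding for $A$ and has the $(\sqrt{\eps/d},\delta)$-approximate matrix product property, after which the sketch-and-solve argument gives the $(1+\eps)$ bound on $\|A\widehat{x}-b\|_2$.

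The gap is in your argument for the error estimate $\widehat{e}$. You claim $\|SA\widehat{x}-Sb\|_2 = (1\pm\eps)\|A\widehat{x}-b\|_2$ by decomposing into $SA(\widehat{x}-x^*)$, $S(Ax^*-b)$, and a cross term, and invoking ``norm-preservation of $S$'' on the middle piece. But leverage-score sampling does \emph{not} preserve $\|S(Ax^*-b)\|_2$: the residual $Ax^*-b$ is orthogonal to the column span of $A$, and its mass can sit entirely on rows with tiny (even zero) leverage score, which are never sampled. Concretely, if $A=e_1\in\R^{n}$ and $b=(0,1,\ldots,1)$, every sample hits row $1$, so $Sb=0$ and $\|SA\widehat{x}-Sb\|_2=0$ while $\|A\widehat{x}-b\|_2=\sqrt{n-1}$. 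Relatedly, your proposed output $\|SA\widehat{x}-Sb\|_2-\|b\|_2$ is not computable in the stated budget, since $\|b\|_2$ requires reading all of $b$.

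The paper's fix is exactly the missing ingredient: it does not try to establish a full affine embedding, but instead invokes the \emph{weak} affine embedding of Theorem~7.10 in \cite{cw17_lra_regression_nnz_time}, which follows from only the two properties you already have (subspace embedding and approximate matrix product with parameter $\eps/\sqrt{d}$). This yields, for all $x$, a relation of the form $\|D(Ax-b)\|_2 - \|Db\|_2 = (1\pm\eps)\|Ax-b\|_2 - \|b\|_2$, so the algorithm outputs $\widehat{e}=\|DA\widehat{x}-Db\|_2-\|Db\|_2$ (note $\|Db\|_2$, not $\|b\|_2$), which is computable from the sketch alone and matches the stated guarantee.
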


\begin{proof}
This follows from Theorem 3.1 in \cite{djssw19_kronecker_product_regression} with minor modifications. The main difference is that we desire an algorithm which also returns an estimate of the error $\|(A_1 \times \ldots \times A_q)\widehat{x} - b\|_2$, in addition to returning the solution $\widehat{x}$ itself. To achieve the desired result, it suffices to show that if $D$ is a leverage score sampling matrix as generated by \cite{djssw19_kronecker_product_regression}, then it gives a weak $\eps$-affine embedding, as defined in Theorem 7.10 of \cite{cw17_lra_regression_nnz_time}. Observe that if $D$ has $\poly(d/(\eps\delta))$ nonzero entries, then it is a subspace embedding (as shown in Theorem 2.11 of \cite{woodruff14_sketching_as_a_tool}). In addition, by Proposition 3.6 of \cite{djssw19_kronecker_product_regression}, $D$ satisfies the approximate matrix product property with error parameter $\frac{\eps}{\sqrt{d}}$ with probability $1 - \delta$, as long as $D$ has $\poly(d/(\eps \delta))$ nonzero entries. Thus, both of the hypotheses in Theorem 7.10 of \cite{cw17_lra_regression_nnz_time} are satisfied by $D$ with probability $1 - \delta$, meaning that with probability $1 - \delta$, $S$ is a weak $\eps$-affine embedding for $(A_1 \times \ldots \times A_q, b)$, i.e., for all $x \in \R^d$, $\|D(A_1 \times \ldots \times A_q)x - Db\|_2 - \|Db\|_2 = (1 \pm \eps)\|(A_1 \times \ldots \times A_q)x - b\|_2 - \|b\|_2$. Thus, $\widehat{x}$ can be found using the pseudo-inverse of $D(A_1 \times \ldots \times A_q)$ as in \cite{djssw19_kronecker_product_regression}, and $\widehat{e}$ can be computed as $\|D(A_1 \times \ldots \times A_q)x - Db\|_2 - \|Db\|_2$. Note that the running time needed to compute $\widehat{e}$ is $\poly(d/(\eps\delta))$.
\end{proof}

\subsubsection{$\ell_{1, 2}$-norm Low Rank Approximation}

Several algorithms for $\ell_{1, 2}$-norm low rank approximation are given in \cite{cw15_subspace_approximation}. Based on the techniques of \cite{cw15_subspace_approximation}, \cite{jllmw21_distributed_lp_css} also gives an $O(1)$-approximation algorithm for $\ell_{1, 2}$ low rank approximation with bi-criteria rank $k \cdot \poly(\log k)$. The following is a modified version of the result presented in \cite{jllmw21_distributed_lp_css}, with the OSNAP matrices given in Theorem 9 of \cite{nn13_osnap_matrices} replaced by the sparse embeddings with $O(\frac{d \log d}{\eps^2})$ rows given in Theorem 4.2 of \cite{cohen16_nearly_tight_OSE}.

\begin{theorem}[Based on Theorem 1 of \cite{jllmw21_distributed_lp_css}, Theorem 4.2 of \cite{cohen16_nearly_tight_OSE}]
Let $A \in \R^{n \times d}$ and $k \in \N$. Then, there is an algorithm which outputs $U \in \R^{n \times O(k \log^2 k)}$ and $V \in \R^{O(k \log^2 k) \times d}$, such that $\|UV - A\|_{1, 2} \leq O(1) \min_{B \text{ rank }k} \|B - A\|_{1, 2}$. The running time of this algorithm is $(\nnz(A) + d^2) \cdot k \cdot \poly(\log k)$.
\end{theorem}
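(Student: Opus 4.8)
The plan is to run the $O(1)$-approximation, bicriteria-rank-$O(k\log^2 k)$ algorithm for $\ell_{1,2}$-norm low rank approximation from \cite{jllmw21_distributed_lp_css} — which is built on the robust subspace approximation framework of \cite{cw15_subspace_approximation} — essentially verbatim, and to change only the oblivious subspace embedding used as a subroutine. Recall that the algorithm of \cite{jllmw21_distributed_lp_css} works with the row space of $A$: it left-multiplies $A$ by an oblivious $(1\pm\eps_0)$ $\ell_2$ subspace embedding $S$ for a fixed constant $\eps_0$, obtaining $SA$ whose row space is close to that of $A$, then runs a (recursive) column-subset-selection / residual-sampling procedure on $SA$ to extract $O(k\log^2 k)$ directions spanning a row space $V\in\R^{O(k\log^2 k)\times d}$, and finally fits $U\in\R^{n\times O(k\log^2 k)}$ by an $\ell_{1,2}$ regression of the rows of $A$ onto $V$. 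Its analysis accesses $S$ only through the facts that $S$ is an oblivious $\ell_2$ subspace embedding and that it satisfies an approximate matrix product guarantee (together with the affine embedding consequences these imply, cf. Lemma~\ref{lemma:jl_moment_implies_affine_embedding}); hence its conclusion — an $O(1)$-approximation with bicriteria rank $O(k\log^2 k)$ — holds for any $S$ with these properties.

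First I would instantiate $S$ with the sparse embedding of Theorem 4.2 of \cite{cohen16_nearly_tight_OSE} in place of the OSNAP matrices of Theorem 9 of \cite{nn13_osnap_matrices}: this $S$ has $O(\frac{d\log d}{\eps_0^2})$ rows, $O(\frac{\log d}{\eps_0})$ nonzeros per column, is a $(1\pm\eps_0)$ $\ell_2$ subspace embedding for any fixed $d$-dimensional subspace with failure probability $1/\poly(d)$, and satisfies the requisite approximate matrix product property as sparse oblivious subspace embeddings do. Because the analysis of \cite{jllmw21_distributed_lp_css} is black-box in $S$ as described above, substituting this embedding changes neither the $O(1)$ approximation factor nor the $O(k\log^2 k)$ output rank; I would only check that the polynomially small failure probability suffices for a union bound over the (at most $\poly(k,d)$) invocations of $S$ inside the algorithm, which it does since every relevant subspace has dimension at most $d$.

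Next I would account for the running time. Applying $S$ to $A$ costs $O(\nnz(A)\cdot\frac{\log d}{\eps_0}) = \nnz(A)\cdot\poly(\log d)$ and produces $SA\in\R^{O(d\log d)\times d}$. The remaining work — dense linear algebra on $SA$ for the column-selection / sampling step, and the final $\ell_{1,2}$ regression that fits $U$ against the $d$-dimensional ambient row space $V$ — touches all $d$ columns of $A$ and contributes the $d^2\cdot k\cdot\poly(\log k)$ term, while the parts that see $A$ only through $SA$ or through a $\poly(k)$-column sketch contribute $\nnz(A)\cdot k\cdot\poly(\log k)$. Summing gives the claimed $(\nnz(A)+d^2)\cdot k\cdot\poly(\log k)$ bound.

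The main obstacle, and essentially the only thing to verify, is the claimed modularity: that nowhere does the argument of \cite{jllmw21_distributed_lp_css} (or the pieces it imports from \cite{cw15_subspace_approximation}) exploit a property of OSNAP matrices beyond being an oblivious $\ell_2$ subspace embedding with an approximate matrix product guarantee — for instance, that it does not rely on a specific sketch dimension, sparsity pattern, or higher-moment behavior particular to OSNAP. Once this is confirmed, the substitution is immediate, and everything else is the routine running-time bookkeeping above.
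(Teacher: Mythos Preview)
Your high-level strategy --- swap the OSNAP matrix in \cite{jllmw21_distributed_lp_css} for Cohen's sparse embedding and otherwise run the algorithm verbatim --- is exactly what the paper does. But you have misidentified the dimension parameter fed into Cohen's Theorem~4.2, and this breaks both the bicriteria rank bound and the running-time accounting.

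The sketch $S$ in \cite{jllmw21_distributed_lp_css} is not a subspace embedding for a $d$-dimensional subspace; it is an oblivious embedding for the (unknown) $k$-dimensional column space of the optimal rank-$k$ approximation. Hence when you invoke Cohen's result, the ``$d$'' there becomes $k$, and $S$ has $O(k\log k)$ rows, not $O(d\log d)$. This is precisely what drives the $O(k\log^2 k)$ output rank: $SA\in\R^{O(k\log k)\times d}$, and then the Lewis weight sampling matrix $S'$ (Theorem~5.8 of \cite{jllmw21_distributed_lp_css}) needs $O((k\log k)\cdot\log(k\log k))=O(k\log^2 k)$ columns to give an $\ell_1$ subspace embedding for the row span of $SA$. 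Your proposal never explains where $O(k\log^2 k)$ comes from, and with your choice of $S$ having $O(d\log d)$ rows it would not follow: the subsequent Lewis weight step would need $O(d\log^2 d)$ samples, and the output rank would depend on $d$ rather than $k$.

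This also contaminates your running-time bookkeeping: with the wrong $S$ you get $SA\in\R^{O(d\log d)\times d}$ and pick up $\poly(\log d)$ factors that are not dominated by $\poly(\log k)$. With the correct $S$ having $O(k\log k)$ rows and $O(\log k)$ nonzeros per column, applying $S$ costs $\nnz(A)\cdot O(\log k)$, $SA$ is $O(k\log k)\times d$, and the downstream Lewis weight sampling and regression contribute the $(\nnz(A)+d^2)\cdot k\cdot\poly(\log k)$ term as stated.
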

\begin{proof}
The proof is the same as that of Theorem 1 in \cite{jllmw21_distributed_lp_css}, with the sparse embedding matrix $S$ in the proof of that theorem now being a sparse embedding matrix with $O(k \log k)$ rows, rather than the large $k \cdot \poly(\log k)$ as in Theorem 9 of \cite{nn13_osnap_matrices}. Note that the bicriteria rank is $O(k \log^2 k)$ since the number of rows in $S$ is $O(k \log k)$, and the number of columns in the Lewis weight sampling matrix $S'$ (in the proof of Theorem 5.8 of \cite{jllmw21_distributed_lp_css}) thus has to be $O(k \log^2 k)$, in order for $S'$ to give an $\ell_1$ subspace embedding for the row span of $SA$.
\end{proof}

\newpage
\section{Approximation Algorithm for Tree Networks} \label{sec:tree_networks}

In this section, we consider the following problem:
\begin{problem} \label{problem:tree_networks}
Let $k, n, q \in \N$. Let $\calT = (V, E)$ be a tree, with vertices $V$ and edges $E$, with $|V| = q$. In addition, suppose that for each $v \in V$, $U_v$ is a tensor with $\deg(v) + 1$ modes. Here, for each edge $e$ incident to $v$, $U_v$ has a mode of dimension $k$, and $U_v$ has one additional mode of dimension $n$ that is not connected to any other tensor. Finally, define $\calT(\{U_v \mid v \in V\}) \in \R^{n \times \ldots \times n}$ to be the $q$-mode tensor obtained by contracting the tensors in $\{U_v \mid v \in V\}$ in the manner specified by the tree $\calT$. Then, given a tensor $A \in \R^{n \times \ldots \times n}$ and a tree $\calT = (V, E)$, we wish to find $U_v$ for $v \in V$ minimizing $\|\calT(\{U_v \mid v \in V \}) - A\|_F$.
\end{problem}

Since the input of the algorithm includes $\calT$, a question that one may ask is \textit{how to instantiate our algorithm for tree network decomposition} --- i.e., how to choose the tree $\calT$ and the corresponding rank $k$ given a tensor $A$. We note that one potential way to do this is by considering whether $A$ has low rank according to a more general tensor network (i.e., a graph which may have cycles) --- we define contraction of general tensor networks in Definition \ref{def:tensor_network_contraction}. This also answers the question of \textit{why tensor contraction networks may be of interest} in addition to other notions of decomposition such as CP, Tucker, and Tensor Train decomposition. For instance, the Tensor Ring decomposition on $q$ modes \cite{zhao2016tensor} corresponds to a cycle with $q$ vertices and aims to overcome various limitations of Tensor Train decompositions, such as the asymmetry of the dimensions of the first and last modes versus the internal ones. Also, the Projected Entangled Pairs States (PEPS) network is used as an ansatz for quantum wavefunctions, i.e., the ground state of two-dimensional Hamiltonians, see, e.g.,  \cite{orus2014practical}. The work of \cite{yl18} discusses general tensor networks further, and gives examples of tensors $A$ which have low Tensor Ring rank but much higher Tensor Train rank. In terms of the space complexity used to represent a network, a sparse network might only need $O(q \cdot n \cdot \poly(k))$ parameters, while other decompositions could use space which is exponential in $q$. In Section \ref{sec:general_networks}, we give an algorithm which, given a tensor $A$, a graph $\mathcal{G}$ and a target rank $k$, outputs a tree $\calT$ and a tensor $M$ which has rank at most $\widetilde{O}(\frac{qk^{\deg(\calG)\tw(\calG)}}{\epsilon} )$ according to $\calT$, such that $\|M - A\|_F \leq (1 + \epsilon) \min_T \|T - A\|_F$, where $T$ ranges over all tensors which have rank $k$ with respect to the graph $\calG$.

\subsection{Warm-up: Improving Previous Results of Bicriteria Algorithm for Tensor Train Decomposition}\label{subsection:improving_hsw_17}
First, we describe the Tensor Train decomposition algorithm of \cite{huber2017randomized}. Suppose $A \in \R^{n \times \ldots \times n}$ is the input tensor. Then, in \cite{huber2017randomized}, $U^q$ is found as follows. First, \cite{huber2017randomized} defines a sketching matrix $G \in \R^{n^{q - 1} \times s_q}$, which is instantiated as a Gaussian matrix (but as we discuss below, could also belong to any other family of oblivious sketching matrices which has the affine embedding property). Then, the rows of $U^q \in \R^{s_q \times n}$ form an orthonormal basis for the column span of $M_q(A) G$. Finally, $A$ is replaced by $A \circ U_q^T \in \R^{n \times \ldots \times n \times s_q}$ (i.e., the first $q - 1$ modes are $n$-dimensional).

In general, to find $U^i$ for $i \geq 2$, \cite{huber2017randomized} proceed as follows. Suppose $U^q \in \R^{s_q \times n}$ and $U^j \in \R^{s_j \times n \times s_{j + 1}}$ (for $j$ between $i + 1$ and $q - 1$) have already been found, the rows of $U^q$ form an orthonormal basis, and the rows of $M_1(U^j)$ form an orthonormal basis for $j$ between $i + 1$ and $q - 1$. Furthermore, suppose that at this point $A \in \R^{n \times \ldots \times n \times s_{i + 1}}$ (and $A$ now has $i + 1$ modes). Then $U^i$ is computed as follows --- first, $G \in \R^{n^{i - 1} \times s_i}$ is again an oblivious sketching matrix (which \cite{huber2017randomized} instantiate as a Gaussian matrix). Then, the rows of $M_1(U^i)$ form an orthonormal basis for the column span of $M_{\{i, i + 1\}}(A) G \in \R^{s_{i + 1} n \times s_i}$. In addition, once $U^i$ is found, $A$ is replaced by $A \circ (U^i)^T \in \R^{n \times \ldots \times n \times s_i}$ (i.e., $A$ has $i$ modes, and in the operation $A \circ (U^i)^T$, the last two modes of $A$ are contracted with $(U^i)^T$). Finally, after $U^2, \ldots, U^q$ have been found, $A \in \R^{n \times s_2}$, and \cite{huber2017randomized} defines $U^1 := A$. This completes our description of their algorithm.

\paragraph{Turning the Algorithm of \cite{huber2017randomized} into a $(1 + \eps)$-Approximation Algorithm.}

In \cite{huber2017randomized}, it is shown that this is a $\sqrt{q - 1}$-approximation algorithm, with a running time of $O(q s^2 \cdot \nnz(A) + qs^3 n)$, where $s = \max_{i \in [q]} s_i$ --- note that the bicriteria rank is also $s$. However, with a different analysis, one can in fact show that the algorithm of \cite{huber2017randomized} is a $(1 + \eps)$-approximation algorithm, with bicriteria rank $r = \poly(qk/\eps)$ and $O(qs^2 \cdot \nnz(A) + n \cdot \poly(qk/\eps))$ running time --- we discuss one such analysis now.

First, suppose $U^1_*, \ldots, U^q_*$ form the optimal Tensor Train decomposition of rank $k$ for $A$. Then, $(U^q_*)^T \in \R^{n \times k}$ is the solution to the multiple-response $\ell_2$ regression problem
$$\min_U \|U M_q(U^1_* \circ \cdots \circ U^{q - 1}_*) - M_q(A)\|_F$$
Then, one can multiply on the right by a sketching matrix $G_q$ which is an $\frac{\eps}{q}$ affine embedding for $M_q(U^1_* \circ \cdots \circ U^{q - 1}_*)$ and $M_q(A)$ with failure probability $O(\delta/q)$, meaning that for all $U \in \R^{n \times k}$, 
$$\|U M_q(U^1_* \circ \cdots \circ U^{q - 1}_*) - M_q(A)\|_F = (1 \pm \eps/q) \|U M_q(U^1_* \circ \cdots \circ U^{q - 1}_*) G_q - M_q(A) G_q\|_F$$
Then, if $(U^q)^T$ is defined to be $M_q(A) G_q (M_q(U^1_* \circ \cdots \circ U^{q - 1}_*) G_q)^\dagger$ (or equivalently, if $(U^q)^T$ is defined to be an orthonormal basis for the column span of $M_q(A) G_q$, and the change of basis matrix $(M_q(U^1_* \circ \cdots \circ U^{q - 1}_*) G_q)^\dagger$ is contracted with $U^{q - 1}_*$) then $U^q$ increases the error achieved by $U^1_* \circ \cdots \circ U^q_*$ by a factor of at most $(1 + O(\eps/q))$. Note that this step is exactly as in the algorithm of \cite{huber2017randomized}, which lets $U^q$ be an orthonormal basis for the row span of $G_q^T M_q(A)^T$.

Now, suppose $U^q, \ldots, U^{i + 1}$ have been found. Then, fixing the factors $U^q, \ldots, U^{i + 1}$ and $U^1_*, \ldots, U^{i - 1}_*$, the optimal solution $U^i$ with the other factors fixed is the solution to the regression problem
$$\min_U \|U (M_i(U^1_* \circ \cdots \circ U^{i - 1}_*) \times M_1(U^i \circ \cdots \circ U^q)) - M_i(A) \|_F$$
Now, multiplying $A$ by the sketching matrix $G$, as done by \cite{huber2017randomized}, corresponds to multiplying the matrices in the above objective on the right by $(G \times I)$, and the new objective is
$$\min_U \|U (M_i(U^1_* \circ \cdots \circ U^{i - 1}_*)G \times M_1(U^i \circ \cdots \circ U^q)) - M_i(A) (G \times I)\|_F$$
The sketching matrix $G$ can be selected so that this does not change the objective by more than $(1 \pm \eps/q)$ --- this can be achieved by selecting $G$ to be an affine embedding for $M_i(U^1_* \circ \cdots \circ U^{i - 1}_*)$ and $M_{\{1, \ldots, i - 1\}}(A)^T$, since for all $U \in \R^{k \times n \times k}$,
\begin{equation}
\begin{split}
\|U (M_i(U^1_* \circ \cdots \circ U^{i - 1}_*) & \times M_1(U^i \circ \cdots \circ U^q)) - M_i(A) \|_F  \\
& = \|M_1(U \circ U^i \circ \cdots \circ U^q)^T M_i(U^1_* \circ \cdots \circ U^{i - 1}_*) - M_{\{1, \ldots, i - 1\}}(A)^T \|_F
\end{split}
\end{equation}
and moreover,
\begin{equation}
\begin{split}
\|U (M_i(U^1_* \circ \cdots \circ U^{i - 1}_*)G & \times M_1(U^i \circ \cdots \circ U^q)) - M_i(A) (G \times I)\|_F \\
& = \|M_1(U \circ U^i \circ \cdots \circ U^q)^T M_i(U^1_* \circ \cdots \circ U^{i - 1}_*)G - M_{\{1, \ldots, i - 1\}}(A)^T G \|_F
\end{split}
\end{equation}
Thus, the approximation factor is only increased by a factor of $(1 + O(\eps/q))$ if $U^i$ is chosen to be
$$U^i := M_i(A) (G (M_i(U^1_* \circ \cdots \circ U^{i - 1}_*)G)^\dagger \times M_1(U^i \circ \cdots \circ U^q)^\dagger)$$
We can see that this definition of $U^i$ is equivalent to the definition of $U^i$ in \cite{huber2017randomized}, due to the following:
\begin{itemize}
    \item Note that multiplying $M_i(A)$ by $G \times I$ is equivalent to the usage of $G$ in \cite{huber2017randomized}.
    \item In addition, the key point is that the rows of $M_i(U^i \circ \cdots \circ U^q)$ are orthonormal, due to $U^i, \ldots, U^q$ being orthonormal bases. Thus, the pseudoinverse of $M_1(U^i \circ \cdots \circ U^q)$ is $M_1(U^i \circ \cdots \circ U^q)^T$, and therefore, multiplying $M_i(A) (G \times I)$ by $(I \times M_1(U^i \circ \cdots \circ U^q)^T)$ is equivalent to contracting $A$ with $(U^q)^T$, then with $(U^{q - 1})^T$, up to $(U^i)^T$. This is exactly as is done in \cite{huber2017randomized}.
    \item Finally, note that the change of basis matrix $(M_i(U^1_* \circ \cdots \circ U^{i - 1}_*)G)^\dagger$ can simply be folded into $U^{i - 1}$, and this is equivalent to letting $U^i$ be an orthonormal basis as is done in \cite{huber2017randomized}.
\end{itemize}
Thus, at each iteration of \cite{huber2017randomized}, if the sketching matrix $G$ is chosen properly, then $U^i$ as defined in \cite{huber2017randomized} is in fact a $(1 + \eps/q)$-approximate solution to the $\ell_2$ regression problem defined by $U^q, \ldots, U^{i + 1}$ and $U^1_*, \ldots, U^{i - 1}_*$.

\paragraph{Our $(1 + \eps)$-Approximation Algorithm for Tensor Train Decomposition with Improved Running Time}

The running time of the algorithm of \cite{huber2017randomized} (with the sketching matrix $G$ modified appropriately to obtain $(1 + \eps)$-approximation) is $O(q s^2 \nnz(A)) + n \cdot \poly(qk/\eps)$, where $s = \poly(qk/\eps)$ is the number of rows in $G$, and is also the bicriteria rank of the algorithm. We now show that this running time can be further reduced to $O(q \cdot \nnz(A)) + n \cdot \poly(qk/\eps)$ using our subspace embedding for matrices of the form $M_{\{1, \ldots, i - 1\}}(U^1 \circ \cdots \circ U^{i - 1})$ described above.

In the algorithm of \cite{huber2017randomized}, the main reason for the $O(q s^2 \cdot \nnz(A))$ running time is that in each iteration, $A$ is contracted with $(U^i)^T \in \R^{ns_{i + 1} \times s_i}$ --- thus, as stated in \cite{huber2017randomized}, before this $\circ$j is performed, $A$ has at most $s_{i + 1} \cdot \nnz(A)$ nonzero entries, meaning that the time needed to perform this $\circ$ is at most $O(s_i s_{i + 1} \cdot \nnz(A)) = O(s^2 \cdot \nnz(A))$. Furthermore, after this $\circ$ is performed, $A$ has at most $s_i \cdot \nnz(A)$ nonzero entries, as mentioned in \cite{huber2017randomized}. As we mentioned above, the contraction of $A$ with $(U^i)^T$ is needed as a step in computing the pseudo-inverse of $M_1(U^i \circ \cdots \circ U^q)$ and contracting it with $A$.

In order to improve the running time, we note that computing the pseudo-inverse of $M_1(U^i \circ \cdots \circ U^q)$ is in fact not required, and a weaker guarantee suffices --- that is, it suffices to obtain a subspace embedding for the row span of $M_1(U^i \circ \cdots \circ U^q)$. Note that we in fact need our sketch to be an affine embedding for $M_1(U^{i + 1} \circ \cdots \circ U^q)$ and $(G \times I)M_i(A)$, and we can inductively prove that this holds for our map $\calL$ that we defined earlier in this subsection. We want to show that for all $X$ of the appropriate dimensions
$$\|X M_1(U^{i + 1} \circ \cdots \circ U^q) - (G \times I)M_i(A)\|_F = \|X M_1(U^{i + 1} \circ \cdots \circ U^q) \calL - (G \times I)M_i(A) \calL\|_F$$
Suppose $S_q, \ldots, S_{i + 1}$ are the sketching matrices used when applying $\calL$ to $M_1(U^{i + 1} \circ \cdots \circ U^q)$ --- then, $\calL$ can be written as $(S_q \times I)(S_{q - 1} \times I) \ldots S_{i + 1}$ (where the identity matrices $I$ are of the appropriate dimensions). It suffices to show that the $(S_j \times I)$ are each affine embeddings with error $O(\eps/q^2)$ and failure probability $O(\delta/q^2)$ --- however, by reshaping appropriately, this just holds as long as $S_j$ itself is an affine embedding with error $O(\eps/q^2)$ and failure probability $O(\delta/q^2)$.

The key point is that this removes the $s^2$ factors in front of the running time. Note that we can reuse the $S_j$, i.e., when trying to find $U^i$, we can reuse the $S_q, \ldots, S_{i + 2}$ which were used when trying to find $U^{i + 2}$. Thus, in the iteration where $U^i$ is computed, it suffices to simply contract $A$ with $S_{i + 1}$, which takes $\nnz(A)$ time if $S_{i + 1}$ is a Countsketch matrix (see \cite{woodruff14_sketching_as_a_tool} for more information on Countsketch matrices). Over the course of $q$ iterations, this amounts to a running time of $O(q \cdot \nnz(A))$ instead of $O(q s^2 \cdot \nnz(A))$. Once we have applied $\calL$, we will compute the pseudo-inverse of $M_1(U^{i + 1} \circ \cdots \circ U^q) \calL$ --- however, this is simply a $k \times \poly(qk/\eps)$ matrix, meaning that we only incur an $n \cdot \poly(qk/\eps)$ term in the overall running time as a result of this computation. Thus, the overall running time for Tensor Train decomposition when using our new subspace embedding is $O(q \cdot \nnz(A)) + n \cdot \poly(qk/\eps)$.

\subsection{Bicriteria $(1+\eps)$ Approximation Algorithm for Tree Networks}

We obtain a $(1 + \eps)$-approximation algorithm with bicriteria rank $\poly(qk/\eps)$. The key idea of our algorithm is as follows. We root the tree arbitrarily. Then, we move from the leaves to the root of the tree, computing factors $U_v$ for nodes $v$ in that order. We compute the factor $U_v$ for a node $v$ once all of the factors for its subtree, excluding $v$ itself, have been computed --- aside from this, the order in which we process vertices can be arbitrarily chosen. To allow for efficiently computing $U_v$ given the factors in its subtree, we carefully apply Countsketch matrices in succession. We analyze Algorithm \ref{algorithm:tree_network_bicriteria} below.

\begin{algorithm}
\caption{$(1 + \eps)$-approximation algorithm for tree network decomposition with output rank $O(\frac{qk}{\eps} \log(\frac{q}{\delta}))$. Note that we can assume each of the leaves in $\calT$ has one mode of dimension $n$ --- if a leaf $v$ has no such mode, then $U_v$ is simply a $k$-dimensional vector, and can be contracted with its parent and ignored for the purposes of approximating $A$.}
\label{algorithm:tree_network_bicriteria}
\begin{algorithmic}
\Require $A \in \R^{n \times \ldots \times n}$ with $q$ modes, and a tree $\calT = (V, E)$ with $O(q)$ vertices, such that for any $u, v \in V$ with $(u, v) \in E$, the edge between $u$ and $v$ has rank $k$, and each vertex in $V$ has at most one mode of dimension $n$. Without loss of generality, each leaf in $V$ has one mode of dimension $n$. All vertices of $\calT$ have degree at most $d \in \N$.
\Ensure $\{U_v \mid v \in V\}$ such that for each $v \in V$, the dimension of $U_v$ corresponding to each of its outgoing edges in $\calT$ is at most $t = O(\frac{qk}{\eps} \log(q/\delta))$

\item []
\State // Process the leaves first
\State $L \gets $ A list of all the leaves of $\calT$
\For{$v \in L$}
    \State $T_v \gets $ An $O(\frac{q^3 k^2}{\eps^2 \delta}) \times n^{q - 1}$ Countsketch matrix
    \State $R_v \gets $ A $t \times O(\frac{q^3 k^2}{\eps^2 \delta})$ matrix whose entries are drawn i.i.d. from $\{-\frac{1}{\sqrt{t}}, \frac{1}{\sqrt{t}}\}$
    \State $i \gets $ the mode of $U_v$ of dimension $n$
    \State $\widetilde{U_v} \gets M_i(A) T_v^T R_v^T \in \R^{n \times t}$
    
    \item []
    \State // Store a sketch of $\widetilde{U_v}$, which we denote $M_v$, for future use.
    \State $S_v \gets $ An $s \times n$ Countsketch matrix, where $s = O(\frac{q^4 t^2 d^3}{\eps^2 \delta}) = O(\frac{q^6 k^2 d^3}{\eps^4 \delta} \log^2(\frac{q}{\delta}))$.
    \State $M_v \gets S_v \widetilde{U_v} \in \R^{s \times t}$
    \State $A \gets S_v \circ_i A$
\EndFor

\item []
\State // Now process each vertex $v \in V$ after processing its entire subtree.
\State $\calI \gets $ A list of the vertices in $V \setminus L$ where each vertex appears after the rest of the vertices in its subtree.
\For{$v \in \calI$}
    \State // First compute $\widetilde{U_v}$ using the sketches of the subtrees of the children of $v$.
    \State $M_{v-subtree} \gets \bigotimes_u M_u \in \R^{s^{\deg(u) - 1} \times t^{\deg(u) - 1} }$, where $u$ ranges over the children of $v$
    \State $\calG_1 \gets $ The modes of $A$ corresponding to row dimensions of $M_u$ for the children $u$ of $v$
    \State $\calG_2 \gets $ The remaining modes of $A$
    \State $\widetilde{U_v} \gets M_{v-subtree}^\dagger \circ_{\calG_1} A$
    
    \If{$v$ is not the root of $\calT$}
        \State $T_v \gets $ An $O(\frac{q^3 k^2}{\eps^2 \delta}) \times n^{|\calG_2|}$ Countsketch matrix
        \State $R_v \gets $ A $t \times O(\frac{q^3 k^2}{\eps^2 \delta})$ matrix whose entries are drawn i.i.d. from $\{-\frac{1}{\sqrt{t}}, \frac{1}{\sqrt{t}}\}$
        \State $\widetilde{U_v} \gets R_vT_v \circ_{\calG_2} \widetilde{U_v} \in \R^{t \times \ldots \times t \times t}$
    \EndIf

    \item []
    \State // Next, compute $M_v$.
    \State $M_v \gets M_{v-subtree} \circ \widetilde{U_v} \in \R^{s^{\deg(u) - 1} \times t}$ where the contraction is along the modes of $\widetilde{U_v}$ 
    \State \,\,\,\,\,\,\,\,\,\,\,\,\,\,\,\, corresponding to the edges between $u$ and its children
    \State $S_v \gets $ An $s \times s^{\deg(u) - 1}$ Countsketch matrix.
    \State $M_v \gets S_v M_v$
    \State $A \gets S_v \circ_{\calG_1} A$
\EndFor \\
\\

\Return $\{\widetilde{U_v} \mid v \in V\}$

\end{algorithmic}
\end{algorithm}

\begin{theorem} \label{thm:analysis_of_tree_network_bicriteria}
Let $k, q, n \in \N$, and let $A$ and $\calT$ be as specified in Algorithm \ref{algorithm:tree_network_bicriteria}. Then, Algorithm \ref{algorithm:tree_network_bicriteria} finds $\{U_v \mid v \in V\}$ with bicriteria rank $t = O(\frac{qk}{\eps} \log(\frac{q}{\delta}))$ such that 
$$\|\calT(\{U_v \mid v \in V\}) - A\|_F \leq (1 + \eps) \min_{U_v} \|\calT(\{U_v \mid v \in V\}) - A\|_F$$
with probability at least $1 - \delta$. The running time of Algorithm \ref{algorithm:tree_network_bicriteria} is $O(q \cdot \nnz(A)) + n \cdot (\frac{qk}{\eps\delta})^{O(d)}$.
\end{theorem}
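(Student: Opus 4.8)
The plan is to analyze Algorithm~\ref{algorithm:tree_network_bicriteria} by processing the tree from leaves to root, maintaining the invariant that after processing a vertex $v$, the sketch $M_v$ we have stored is an affine embedding for the subtree rooted at $v$ (more precisely, for the matricization of the optimal subtree contraction against the "rest of the world"). The key structural fact is that, fixing all factors except $U_v$, the optimal choice of $U_v$ solves a multiple-response $\ell_2$ regression problem whose design matrix is a Kronecker-like product of (i) the contraction $\calT_{above}$ of the part of the tree above $v$, (ii) the contractions $\calT_i$ of the subtrees hanging off the children of $v$, and (iii) the free mode of dimension $n$. So the proof is an induction: I would first show (base case) that for a leaf $v$, the composition $T_v$ (Countsketch) followed by $R_v$ (dense sign) applied to $M_i(A)$ produces $\widetilde{U_v}$ that is a $(1\pm\eps/q)$-approximate solution to $\min_U \|U M_i(\text{rest}) - M_i(A)\|_F$ with the claimed failure probability; this uses Lemma~\ref{lemma:jl_moment_implies_affine_embedding} / Theorem~\ref{thm:countsketch_affine_embedding} for the Countsketch part and Theorem~\ref{thm:general_l2_regression_sketching} together with Lemma~\ref{lemma:kronecker_product_subspace_embedding} and Lemma~\ref{lemma:sign_matrix_jl_moment_property} for the sign-matrix part. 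Then (inductive step) for an interior vertex $v$, assuming each child $u$ has a valid sketch $M_u$, the contraction $M_{v\text{-subtree}} \circ \widetilde{U_v}$ against the already-sketched tensor $A'$ solves the regression for $U_v$ up to $(1\pm\eps/q)$, again by the same regression/affine-embedding machinery applied to the Kronecker product $\bigotimes_u M_u \times (R_vT_v) \times I$.

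Concretely, the key steps in order: (1) \textbf{Set up the regression characterization.} Write down, for each vertex $v$ in the processing order, the regression problem $\min_{U_v}\|\calT_{above}(\cdots)\circ U_v\circ(\bigotimes_i \calT_i(\cdots)) - A\|_F$ and note that its minimizer is exactly the optimal $U_v$ given the other factors fixed to their current values; matricize so that $U_v$ becomes the variable matrix in a standard multiple-response regression. (2) \textbf{Show each sketching step is an affine embedding.} For the unknown part $\calT_{above}$ we contract with the oblivious sketch $T_v$ then $R_v$; by the reshaping argument (as in Subsection~\ref{subsection:improving_hsw_17}, a Kronecker product with an identity preserves the JL moment property via Lemma~\ref{lemma:jl_moment_with_identity_matrix}) it suffices that $T_v$ and $R_v R_v$... rather $R_vT_v$ individually is a $(\eps/q^2,\delta/q^2)$-affine embedding, which follows since $T_v$ (Countsketch, $O(q^3k^2/\eps^2\delta)$ rows) gives a subspace embedding and approximate matrix product, and $R_v$ (dense sign, $t = O(\frac{qk}{\eps}\log(q/\delta))$ rows) composes with it via Lemma~\ref{lemma:jl_moment_composition_of_jl_maps} to give the required moment property; for the already-processed subtrees we use that each $M_u$ is, inductively, an affine embedding, and Lemma~\ref{lemma:jl_moment_kronecker_products} to combine them. (3) \textbf{Propagate the error.} Each of the $O(q)$ vertices introduces a $(1+\eps/q)$ multiplicative loss and an $O(\delta/q)$ failure probability; by a union bound and $(1+\eps/q)^{O(q)} = 1+O(\eps)$ (after rescaling $\eps$ by a constant), the final contraction $\calT(\{\widetilde{U_v}\})$ achieves error $\le (1+\eps)\mathrm{OPT}$ with probability $\ge 1-\delta$. (4) \textbf{Bound the running time.} Each leaf step costs $O(\nnz(A))$ to apply $T_v$ (a Countsketch) and then $O(\nnz(A))$ more to apply $S_v$; each interior step applies a Countsketch $S_v$ to the current $A$ in $O(\nnz(A))$ time (crucially $\nnz(A)$ does not grow, since after the first leaf-sketch all remaining modes have been reduced), and the dense-tensor arithmetic on $M_{v\text{-subtree}}$, $\widetilde{U_v}$, and the pseudoinverse lives on objects of size $n\cdot(\tfrac{qk}{\eps\delta})^{O(d)}$ since each vertex has degree $\le d$ and all sketched dimensions are $\mathrm{poly}(qk/\eps\delta)$; summing over $O(q)$ vertices gives $O(q\cdot\nnz(A)) + n\cdot(\tfrac{qk}{\eps\delta})^{O(d)}$. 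Finally, the bicriteria rank is $t = O(\frac{qk}{\eps}\log\frac{q}{\delta})$ by construction of $R_v$ and $S_v$.

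The main obstacle, I expect, will be step (3) — carefully tracking how the affine-embedding guarantee for a child subtree "lifts" to a guarantee for the parent's regression problem, since when we process $v$ the factors in $\calT_{above}$ are a mixture of already-computed (approximate) factors and still-unknown optimal factors, and the subtree contractions $\calT_i$ use the \emph{approximate} $\widetilde{U_w}$ rather than the optimal $U_w^*$. The cleanest way to handle this is a hybrid argument: define a sequence of intermediate tensors where we swap one block of factors at a time from optimal to algorithm-output, and show each swap costs at most $(1+\eps/q)$ using the affine-embedding property of exactly the sketches applied so far; this requires being careful that the sketch $M_u$ was designed to be an affine embedding not just for the optimal subtree but for the subtree with whatever factors the algorithm has actually produced, which is why the algorithm stores $M_v$ as a sketch of $\widetilde{U_v}$ (the computed factor) rather than of any idealized object. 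A secondary subtlety is ensuring the oblivious sketch $T_v$, which must be an affine embedding for a matrix ($M_i(U^1_*\circ\cdots)$ or its tree analogue) whose \emph{row} dimension is the huge $n^{|\calG_2|}$, still works — but this is fine because it need only embed a rank-$\le t$ subspace (the column span of the design matrix), and Countsketch with $\mathrm{poly}(t/\eps\delta) = \mathrm{poly}(qk/\eps\delta)$ rows suffices independent of the ambient dimension.
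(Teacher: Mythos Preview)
Your proposal is correct and follows essentially the same approach as the paper: induction from leaves to root, with each step cast as a multiple-response $\ell_2$ regression solved approximately via Countsketch (affine embedding for the unknown $\calT_{above}$) composed with a dense sign matrix (Theorem~\ref{thm:general_l2_regression_sketching}), errors compounding as $(1+O(\eps/q))^q$, and a union bound over $O(q)$ vertices. One framing point to be careful with: rather than arguing ``each $M_u$ is inductively an affine embedding'' (which is problematic because an affine embedding is specific to a fixed pair $(A,B)$, and the target $B$ changes at each new vertex), the paper instead shows that each $S_v$ has the $(\eps',\delta',2)$-JL moment property, uses Lemma~\ref{lemma:jl_moment_with_identity_matrix} and Lemma~\ref{lemma:jl_moment_composition_of_jl_maps} to conclude that the full product $\prod_{v\in\calS}(S_v\otimes I)$ (reordered by grouping over the subtrees $V_1,\dots,V_{\deg(w)-1},V_{above}$, which commute since they touch disjoint modes) has the JL moment property, and only then invokes Lemma~\ref{lemma:jl_moment_implies_affine_embedding} once for the specific regression at vertex $w$. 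Also, the reason the unknown pseudoinverse $(R_wT_w\circ\calT_{above}(\cdots))^\dagger$ does not obstruct the algorithm is that it is a $t\times t$ matrix which, for the purposes of the analysis, is folded into the parent factor $U_{w'}^*$---this is exactly what forces the edge rank up from $k$ to $t$ and is the source of the bicriteria rank.
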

\begin{proof}
For convenience, let $U_v^*$ for $v \in V$ be such that
$$\|\calT(\{U_v^* \mid v \in V\}) - A\|_F = \min_{\{U_v \mid v \in V\}} \|\calT(\{U_v \mid v \in V\}) - A\|_F$$
and the dimension of $U_v^*$ is $k$ along the modes corresponding to the edges incident to $v$ (along with potentially one mode of dimension $n$). Throughout the proof, we will let $\calS$ be the set of vertices $v \in V$ for which $\widetilde{U_v}$ has already been computed so far --- in our proof, we show by induction on $|\calS|$ that with probability $1 - O(\frac{|\calS| \delta}{|V|})$,
$$\|\calT(\{\widetilde{U_v} \mid v \in S\} \cup \{U_v^* \mid v \in V \setminus \calS\}) - A\|_F \leq \Big(1 + O\Big(\frac{\eps}{|V|}\Big)\Big)^{|\calS|} \|\calT(\{U^*_v \mid v \in V\}) - A\|_F$$
Note that $A$ changes throughout the course of Algorithm \ref{algorithm:tree_network_bicriteria} since we contract it repeatedly with Countsketch matrices --- for clarity, in this proof, we let $A$ denote the original value of $A$ provided at the time of input, and we let $A_v$ denote the value of $A$ when we are attempting to compute $\widetilde{U_v}$. 

Let $w \in V$ be the vertex that is currently being processed (i.e. if we are in the step where we process the leaves in $L$, then all the leaves prior to $w$ in $L$ have been processed, and otherwise if $w \in \calI$, then all the vertices in $\calI$ before $w$ have been processed). Assume that
\begin{equation}
\begin{split}
\|\calT(\{\widetilde{U_v} \mid v \in S\} \cup \{U_v^* \mid v \in V \setminus \calS\}) - A\|_F
& \leq \Big(1 + O\Big(\frac{\eps}{|V|}\Big)\Big)^{|\calS|} \|\calT(\{U_v^* \mid v \in V\}) - A\|_F
\end{split}
\end{equation}
Without loss of generality, assume that $w$ is not a leaf of $\calT$ --- the case where $w$ is a leaf is similar but simpler. Then, the optimal $U_w$ (with $\widetilde{U_v}$ for $v \in S$ and $U_v^*$ for $v \in V \setminus (S \cup \{w\})$ held fixed) is the minimizer for the multiple-response $\ell_2$ regression problem, defined as follows.

First, let $\calT_{above}$ denote the tree which is obtained by removing the subtree rooted at $w$ from $\calT$, and let $V_{above}$ denote the vertices of $\calT_{above}$. In addition, let $\calT_1, \ldots, \calT_{\deg(w) - 1}$ denote the subtrees rooted at each of the children of $w$, and let $V_i$ denote the vertices of $\calT_i$. Then,
\begin{equation}
\begin{split}
\calT(\{\widetilde{U_v} & \mid v \in \calS\} \cup \{U_v^* \mid v \in V \setminus \calS\}) \\
& = \calT_{above}(\{U_v \mid v \in V_{above}\}) \circ U_w \circ \Big(\bigotimes_{i = 1}^{\deg(w) - 1} \calT_i(\{U_v \mid v \in V_i\}) \Big)
\end{split}
\end{equation}
On the right hand side, for the sake of simplicity, we use $U_v$ to denote $\widetilde{U_v}$ for $v \in \calS$ and $U_v^*$ for $v \in V \setminus \calS$. Here, the contraction between $\calT_{above}(\{U_v \mid v \in V_{above}\})$ and $U_w$ happens along the mode corresponding to the edge between $w$ and its parent, and the contraction between $U_w$ and $\calT_i(\{U_v \mid v \in V_i\})$ happens along the mode corresponding to the edge between $w$ and its $i^{th}$ child (where the children can be ordered arbitrarily). Then, redefining $U_w^*$ as
\begin{equation}
\begin{split}
U_w^* := \argmin_{U_w} \Big\|\calT_{above}(\{U_v \mid v \in V_{above}\}) \circ U_w \circ \Big(\bigotimes_{i = 1}^{\deg(w) - 1} \calT_i(\{U_v \mid v \in V_i\}) \Big) - A\Big\|_F
\end{split}
\end{equation}
clearly can only decrease the error achieved by $\{\widetilde{U_v} \mid v \in \calS\} \cup \{U_v^* \mid v \in V \setminus \calS\}$. Thus, it suffices to show that for fixed values of $\widetilde{U_v}$ (for $v \neq w$) and $U_v^*$, $\widetilde{U_w}$ is a $(1 + O(\frac{\eps}{|V|}))$-approximate solution to this multiple-response $\ell_2$ regression problem with probability $1 - O(\frac{\delta}{|V|})$.

First, we analyze the effect of applying the Countsketch matrices $S_v$ for $v \in \calS$:

\begin{lemma}
With probability $1 - O(\frac{\delta}{|V|})$, for all tensors $U_w$ of the appropriate dimensions,
\begin{equation} \label{eq:affine_embedding_dp_countsketch}
\begin{split}
&\Big\|\calT_{above}(\{U_v \mid v \in V_{above}\})  \circ U_w \circ \Big(\bigotimes_{i = 1}^{\deg(w) - 1} M_i\Big) - A_v \Big\|_F \\
& = \Big(1 + O\Big(\frac{\eps}{|V|}\Big)\Big)
\Big\|\calT_{above}(\{U_v \mid v \in V_{above}\}) \circ U_w \circ \Big(\bigotimes_{i = 1}^{\deg(w) - 1} \calT_i(\{U_v \mid v \in V_i\}) \Big) - A \Big\|_F
\end{split}
\end{equation}
where if $v_i$ is the $i^{th}$ child of $w$, then we write $M_{v_i}$ as $M_i$ for simplicity. Here, for simplicity, we also write $\widetilde{U_v}$ as $U_v$ for $v \in \calS$, and we write $U_v^*$ as $U_v$ for $v \in V \setminus \calS$.
\end{lemma}
\begin{proof}
Note that when we contract a Countsketch matrix $S_v$ with $A$ and with $\calT(\{U_v \mid v \in V\})$, this has the same effect as matricizing $A$ and $\calT(\{U_v \mid v \in V\})$ along the appropriate modes and multiplying both by a matrix of the form $S_v \otimes I$ (where $I$ is an identity matrix of the appropriate dimensions). Thus, replacing $A$ with $A_v$, and replacing $\calT_i(\{U_v \mid v \in V_i\})$ with $M_i$ for $1 \leq i \leq \deg(w) - 1$, is equivalent to matricizing and then multiplying by a sequence of matrices of the form $S_v \otimes I$ where $v$ ranges over $\calS$.

Now, observe that for each $v \in \calS$, $S_v$ has the $(O(\frac{\eps}{q^{3/2}t \deg(w)}), O(\frac{\delta}{q \deg(w)}), 2)$ JL moment property. This is because by Lemma \ref{lemma:countsketch_jl_moment_property}, in order for a Countsketch matrix to have the \\ $(O(\frac{\eps}{q^{3/2}t \deg(w)}), O(\frac{\delta}{q \deg(w)}), 2)$ JL moment property, it must have  $\Omega(\frac{q^3 t^2 \deg(w)^2}{\eps^2} \cdot \frac{q \deg(w)}{\delta}) $ = \\ $\Omega(\frac{q^6 k^2 \deg(w)^3}{\eps^4 \delta} \log^2(\frac{q}{\delta}))$ rows --- in particular, $S_v$ has $O(\frac{q^6 k^2 d^3}{\eps^4 \delta} \log^2(\frac{q}{\delta}))$ rows where $d$ is the maximum degree of any vertex in $\calT$, which suffices. In addition, by Lemma \ref{lemma:jl_moment_with_identity_matrix}, it follows that $S_v \otimes I$ also has the $(O(\frac{\eps}{q^{3/2}t \deg(w)}), O(\frac{\delta}{q \deg(w)}), 2)$ JL moment property. \footnotemark \footnotetext{For convenience, we write this map as $S_v \otimes I$. However, this does not imply that $S_v$ is only applied to the first few modes of $A$ --- the subset of modes to which $S_v$ is being applied can depend on $v$.}

Now, observe that if $v_1 \in V_i$ and $v_2 \in V_j$ for $i \neq j$, then $S_{v_1} \otimes I$ and $S_{v_2} \otimes I$ commute, since the subset of modes of $A$ to which $S_{v_1}$ is applied is disjoint from the subset of modes of $A$ to which $S_{v_2}$ is applied. The linear map used to obtain $A_v$ from $A$ and $\bigotimes_{i = 1}^{\deg(w) - 1} M_i$ from $\bigotimes_{i = 1}^{\deg(w) - 1} \calT_i(\{U_v \mid v \in V_i\})$ can be written as $\prod_{v \in \calS} (S_v \otimes I)$. By the prior discussion, this map can also be written as
$$\prod_{i = 1}^{\deg(w) - 1} \Big(\prod_{v \in V_i} (S_v \otimes I) \Big) \prod_{v \in V_{above} \cap \calS} (S_v \otimes I)$$
Now, by Lemma \ref{lemma:jl_moment_composition_of_jl_maps}, for $1 \leq i \leq \deg(w) - 1$, $\prod_{v \in V_i} (S_v \otimes I)$ has the $(O(\frac{\eps}{q t \deg(w)}), O(\frac{\delta}{q \deg(w)}), 2)$ JL moment property, since $S_v \otimes I$ has the $(O(\frac{\eps}{q^{3/t} t \deg(w)}), O(\frac{\delta}{q \deg(w)}), 2)$ JL moment property for each $v \in V_i$, and moreover $|V_i| \leq |\calS| \leq |V| = O(q)$. Similarly, $\prod_{v \in V_{above} \cap \calS} (S_v \otimes I)$ has the $(O(\frac{\eps}{qt \deg(w)}), O(\frac{\delta}{q \deg(w)}), 2)$ JL moment property. Thus, each of $\prod_{v \in V_1} (S_v \otimes I), \ldots, \prod_{v \in V_{\deg(w) - 1}} (S_v \otimes I), \prod_{v \in V_{above} \cap \calS} (S_v \otimes I)$ has the $(O(\frac{\eps}{qt\deg(w)}), O(\frac{\delta}{q \deg(w)}), 2)$ JL moment property. 

Finally, by Lemma \ref{lemma:jl_moment_implies_affine_embedding}, with probability at least $1 - O(\frac{\delta}{q \deg(w)})$, $\prod_{v \in V_i} (S_v \otimes I)$ is a $(1 \pm O(\frac{\eps}{q \deg(w)}))$ affine embedding for $\calT_i(\{U_v \mid v \in V_i\})$ (matricized appropriately). Note here that we used the fact that the edge between $w$ and its $i^{th}$ child $v_i$ has rank $t$, meaning that $\prod_{v \in V_i} (S_v \otimes I)$ simply has to be an affine embedding for a subspace of dimension $t$. Similarly, $\prod_{v \in V_{above} \cap \calS} (S_v \otimes I)$ is a $(1 \pm O(\frac{\eps}{q \deg(w)}))$ affine embedding for $\calT_{above}(\{U_v \mid v \in V_{above}\})$ with probability at least $1 - O(\frac{\delta}{q \deg(w)})$. Thus, by a union bound, with probability at least $1 - O(\frac{\delta}{q})$, for all $U_w$ of appropriate dimensions,
\begin{equation}
\begin{split}
& \Big\|\calT_{above}(\{U_v  \mid v \in V_{above}\}) \circ U_w \circ \Big(\bigotimes_{i = 1}^{\deg(w) - 1} M_i\Big) - A_v \Big\|_F \\
& = \Big(1 + O\Big(\frac{\eps}{q\deg(w)}\Big)\Big)^{\deg(w)}
\Big\|\calT_{above}(\{U_v \mid v \in V_{above}\}) \circ U_w \circ \Big(\bigotimes_{i = 1}^{\deg(w) - 1} \calT_i(\{U_v \mid v \in V_i\}) \Big) - A \Big\|_F \\
& = \Big(1 + O\Big(\frac{\eps}{q}\Big)\Big)
\Big\|\calT_{above}(\{U_v \mid v \in V_{above}\}) \circ U_w \circ \Big(\bigotimes_{i = 1}^{\deg(w) - 1} \calT_i(\{U_v \mid v \in V_i\}) \Big) - A \Big\|_F
\end{split}
\end{equation}
as desired (note that this implies the lemma since $|V| = \Theta(q)$). 
\end{proof}

Next, we analyze the effect of applying the Countsketch matrix $T_v$ to the modes of $A$ corresponding to $\calT_{above}$:

\begin{lemma}
With probability $1 - O(\frac{\delta}{|V|})$, for all tensors $U_w$ of the appropriate dimensions,
\begin{equation}
\begin{split}
\Big\|T_v \circ \calT_{above}(\{U_V & \mid v \in V_{above}\})  \circ U_w \circ \Big(\bigotimes_{i = 1}^{\deg(w) - 1} M_i \Big) - T_v \circ A_v \Big\|_F \\
& = \Big(1 + O\Big(\frac{\eps}{|V|}\Big)\Big) \Big\|\calT_{above}(\{U_v \mid v \in V_{above}) \circ U_w \circ \Big(\bigotimes_{i = 1}^{\deg(w) - 1} M_i\Big) - A_v \Big\|_F
\end{split}
\end{equation}
\end{lemma}
\begin{proof}
After matricizing $\calT_{above}(\{U_v \mid v \in V_{above}\})$ and $A_v$ appropriately, it is sufficient to show that $T_v$ is a $(1 \pm O(\frac{\eps}{q}))$ affine embedding for a $k$-dimensional subspace (and $A_v$, matricized appropriately) with probability at least $1 - O(\frac{\delta}{q})$. However, since $T_v$ has $O(\frac{q^3 k^2}{\eps^2 \delta})$ rows, it has the $(O(\frac{\eps}{qk}), O(\frac{\delta}{q}), 2)$ JL moment property by Lemma \ref{lemma:countsketch_jl_moment_property}. It thus follows from Lemma \ref{lemma:jl_moment_implies_affine_embedding} that $T_v$ is a $(1 \pm O(\frac{\eps}{q}))$-approximation for $\calT_{above}(\{U_v \mid v \in V_{above}\})$ and $A_v$, matricized appropriately.
\end{proof}

Finally, applying the dense sign matrix $R_v$ leads to a $(1 + O(\frac{\eps}{|V|}))$-approximate solution to multiple-response $\ell_2$ regression:

\begin{lemma}
Define
\begin{equation}
\begin{split}
\widehat{U_w}
& := \argmin_{U_w} \Big\|R_v T_v \circ \calT_{above}(\{U_v \mid v \in V_{above}\}) \circ U_w \circ \Big( \bigotimes_{i = 1}^{\deg(w) - 1} M_i \Big) - R_v T_v \circ A_v \Big\|_F
\end{split}
\end{equation}
Then, with probability $1 - O(\frac{\delta}{|V|})$,
\begin{equation}
\begin{split}
\Big\|T_v \circ & \calT_{above}(\{U_v \mid v \in V_{above}\}) \circ \widehat{U_w} \circ \Big(\bigotimes_{i = 1}^{\deg(w) - 1} M_i \Big) - T_v \circ A_v \Big\|_F \\
& \leq \Big(1 + O\Big(\frac{\eps}{|V|}\Big)\Big) \min_{U_w} \Big\|T_v \circ \calT_{above}(\{U_v \mid v \in V_{above}\}) \circ U_w \circ \Big(\bigotimes_{i = 1}^{\deg(w) - 1} M_i \Big) - T_v \circ A_v \Big\|_F
\end{split}
\end{equation}
\end{lemma}
\begin{proof}
We wish to show that after contracting $R_v$ with $T_v \circ \calT_{above}(\{U_v \mid v \in V_{above}\})$ and $T_v \circ A_v$, along the mode corresponding to the row dimension of $T_v$, that the minimizer of the regression problem with $R_v$ is a $(1 + O(\frac{\eps}{q}))$-approximate solution to the regression problem without $R_v$. However, the regression problem
\begin{equation}
\begin{split}
\min_{U_w} \Big\|T_v \circ \calT_{above}(\{U_v \mid v \in V_{above}\}) \circ U_w \circ \Big(\bigotimes_{i = 1}^{\deg(w) - 1} M_i \Big) - T_v \circ A_v \Big\|_F
\end{split}
\end{equation}
can be rewritten in the following form:
\begin{equation} \label{eq:kronecker_product_regression_tree_networks}
\begin{split}
\min_{U_w} \Big\|U_w \circ \Big((T_v \circ \calT_{above}(\{U_v \mid v \in V_{above}\})) \otimes \Big(\bigotimes_{i = 1}^{\deg(w) - 1} M_i \Big)\Big) - T_v \circ A_v \Big\|_F
\end{split}
\end{equation}
Without loss of generality, assume $U_w$ has one mode of dimension $n$, in which case \ref{eq:kronecker_product_regression_tree_networks} is a multiple-response $\ell_2$ regression problem. Let $i_w$ be the mode of $U_w$ of dimension $n$ --- then, matricizing the above problem along mode $i_w$ gives
\begin{equation} \label{eq:kronecker_product_regression_tree_networks_matricized}
\begin{split}
\min_{U_w} \Big\|M_{i_w}(U_w)  \Big(M(T_v \circ (\calT_{above}(\{U_v \mid v \in V_{above}\}))) \otimes \Big(\bigotimes_{i = 1}^{\deg(w) - 1} M(M_i) \Big)\Big) - M_{i_w}(T_v \circ A_v) \Big\|_F
\end{split}
\end{equation}
Here, $M(T_v \circ \calT_{above}(\{U_v \mid v \in V_{above}\}))$ and $M(M_i)$ denote the matricizations of $\calT_{above}(\{U_v \mid v \in V_{above}\})$ and $M_i$ along the modes corresponding to their edges connecting them to $U_w$. By the definition of $T_v$, $M(T_v \circ \calT_{above}(\{U_v \mid v \in V_{above}\}))$ is a $k \times O(\frac{q^3 k^2}{\eps^2 \delta})$ matrix.

Now, it suffices to show that the conclusion of Theorem \ref{thm:general_l2_regression_sketching} holds for the problem
\begin{equation} \label{eq:kronecker_product_regression_tree_networks_dense_sign}
\begin{split}
\min_{U_w} \Big\|M_{i_w}(U_w) \Big(M(T_w \circ (\calT_{above}(\{U_v &\mid v \in V_{above}\})))  \\ &\otimes \Big(\bigotimes_{i = 1}^{\deg(w) - 1} M(M_i) \Big)\Big)(R_w \otimes I) - M_{i_w}(T_w \circ A_w)(R_w \otimes I) \Big\|_F
\end{split}
\end{equation}
i.e. that the minimizer of \ref{eq:kronecker_product_regression_tree_networks_dense_sign} is a $(1 + O(\frac{\eps}{q}))$-approximate solution to \ref{eq:kronecker_product_regression_tree_networks_matricized}. By Theorem \ref{thm:general_l2_regression_sketching}, it suffices to show that $R_w \otimes I$ has the $(O(\sqrt{\frac{\eps}{qk}}), O(\frac{\delta}{q}))$ approximate matrix product property, and that it is a $(1 \pm \frac{1}{3})$ subspace embedding for $\Big(M(T_w \circ (\calT_{above}(\{U_v \mid v \in V_{above}\}))) \otimes \Big(\bigotimes_{i = 1}^{\deg(w) - 1} M(M_i) \Big)\Big)$ with probability at least $1 - O(\frac{\delta}{q})$.

By Lemma \ref{lemma:sign_matrix_jl_moment_property}, $R_w$ has the $(O(\sqrt{\frac{\eps}{qk}}), O(\frac{\delta}{q}), O(\log(\frac{q}{\delta})))$ JL moment property as long as it has $O(\frac{qk}{\eps} \log(\frac{q}{\delta}))$ rows, which it does --- by Lemma \ref{lemma:jl_moment_with_identity_matrix}, this implies that $R_w \otimes I$ also has the $(O(\sqrt{\frac{\eps}{qk}}), O(\frac{\delta}{q}), O(\log(\frac{q}{\delta})))$ JL moment property. Thus by Theorem \ref{thm:jl_property_implies_approx_matrix_product}, $R_w \otimes I$ has the $(O(\sqrt{\frac{\eps}{qk}}), O(\frac{\delta}{q}))$ approximate matrix product property. In addition, by Lemma \ref{lemma:kronecker_product_subspace_embedding}, it suffices for $R_w$ to be a $(1 \pm \frac{1}{3})$ subspace embedding for $M(T_w \circ \calT_{above}(\{U_v \mid v \in V_{above}\}))$ which is a rank $k$ matrix --- by Lemma \ref{lemma:sign_matrix_subspace_embedding}, this holds as long as $R_w$ has $O(k \log(q/\delta))$ rows. This completes the proof of the lemma.
\end{proof}

Combining the previous three lemmas, we find that if
\begin{equation} \label{eq:Uw_regression_problem_tree_networks}
\begin{split}
\widehat{U_w}
& := \argmin_{U_w} \Big\|R_w T_w \circ \calT_{above}(\{U_v \mid v \in V_{above}\}) \circ U_w \circ \Big( \bigotimes_{i = 1}^{\deg(w) - 1} M_i \Big) - R_w T_w \circ A_v \Big\|_F
\end{split}
\end{equation}
then with probability $1 - O(\frac{\delta}{|V|})$,
\begin{equation}
\begin{split}
& \Big\|\calT_{above}(\{U_v  \mid v \in V_{above}\}) \circ \widehat{U_w} \circ \Big(\bigotimes_{i = 1}^{\deg(w) - 1} \calT_i(\{U_v \mid v \in V_i\}) \Big) - A \Big\|_F \\
& = \Big(1 + O\Big(\frac{\eps}{|V|}\Big)\Big)
\min_{U_w} \Big\|\calT_{above}(\{U_v \mid v \in V_{above}\}) \circ U_w \circ \Big(\bigotimes_{i = 1}^{\deg(w) - 1} \calT_i(\{U_v \mid v \in V_i\}) \Big) - A \Big\|_F
\end{split}
\end{equation}
However, the regression problem \ref{eq:Uw_regression_problem_tree_networks} can be rewritten as the Kronecker product regression problem
\begin{equation}
\begin{split}
\widehat{U_w}
 := \argmin_{U_w} \Big\| U_w \circ \Big(M(R_w T_w \circ \calT_{above}(\{U_v &\mid v \in V_{above}\})) \\ &\otimes \Big(\bigotimes_{i = 1}^{\deg(w) - 1} M(M_i) \Big) \Big) - M(R_w T_w \circ A_w) \Big\|_F
\end{split}
\end{equation}
Thus, the minimizer $\widehat{U_w}$ can be obtained by multiplying $M(R_w T_w \circ A_w)$ on the right by the pseudo-inverse of the Kronecker product $\Big(M(R_w T_w \circ \calT_{above}(\{U_v \mid v \in V_{above}\})) \otimes \Big(\bigotimes_{i = 1}^{\deg(w) - 1} M(M_i) \Big) \Big)$, which in turn is the Kronecker product of the pseudo-inverse of the individual components. Thus,
\begin{equation}
\begin{split} 
\widehat{U_w} 
& = (R_w T_w \circ \calT_{above}(\{U_v \mid v \in V_{above}\})^\dagger \circ (R_w T_w \circ A_w) \circ \Big(\bigotimes_{i = 1}^{\deg(w) - 1} M_i^\dagger \Big)
\end{split}
\end{equation}
Observe that $M_i$ and therefore $M_i^\dagger$ is known. $U_v$ is not known for all $v \in V_{above}$, meaning that $(R_w T_w \circ \calT_{above}(\{U_v \mid v \in V_{above}\}))^\dagger$ cannot be computed efficiently. However, $(R_w T_w \circ \calT_{above}(\{U_v \mid v \in V_{above}\}))^\dagger$ is a $t \times t$ matrix where $t = O(\frac{qk}{\eps} \log(\frac{q}{\delta}))$. Observe that, since we define
$$\widetilde{U_w} = (R_w T_w \circ A_w) \circ \Big(\bigotimes_{i = 1}^{\deg(w) - 1} M_i^\dagger\Big)$$
we have $\widehat{U_w} = (R_w T_w \circ \calT_{above}(\{U_v \mid v \in V_{above}\}))^\dagger \circ \widetilde{U_w}$. Thus, instead of contracting $(R_w T_w \circ \calT_{above}(\{U_v \mid v \in V_{above}\}))^\dagger$ with $\widetilde{U_w}$, we can contract it with $U_{w'}^*$, where $w'$ is the parent of $w$. In other words, we redefine $U_{w'}^*$ as $U_{w'}^* \circ (R_w T_w \circ \calT_{above}(\{U_v \mid v \in V_{above}\}))^\dagger$, where the contraction occurs along the mode of $U_{w'}^*$ corresponding to the edge between $w'$ and $w$. This has the effect that the edge between $w$ and $w'$ now has rank $t = O(\frac{qk}{\eps} \log(\frac{q}{\delta}))$ instead of $k$. In summary, if we define $\widetilde{U_w}$ and redefine $U_{w'}^*$ as above, and append $w$ to $\calS$, then it still holds that
$$\|\calT(\{\widetilde{U_v} \mid v \in S\} \cup \{U_v^* \mid v \in V \setminus \calS\}) - A\|_F \leq \Big(1 + O\Big(\frac{\eps}{|V|}\Big)\Big)^{|\calS|} \|\calT(\{U^*_v \mid v \in V\}) - A\|_F$$
with probability $1 - O(\frac{\delta |\calS|}{|V|})$, as desired. This completes the induction.

Finally, we calculate the running time of Algorithm \ref{algorithm:tree_network_bicriteria}. First, we calculate the time required to process a leaf $v \in L$. The time needed to compute $\widetilde{U_v} = M_i(A) T_v^T R_v^T$ is $\nnz(A) + n \cdot O(\frac{q^3 k^2}{\eps^2 \delta}) \cdot t$, which is $\nnz(A) + n \cdot \poly(\frac{qk}{\eps})$. We then compute $M_v = S_v \widetilde{U_v}$, which takes $O(snt) = n \cdot \poly(\frac{qk}{\eps})$ time. We then perform the update $A \gets S_v \circ_i A$, which takes $\nnz(A)$ time --- this is because over the course of the algorithm, $\nnz(A_v)$ is at most $\nnz(A)$, since $\nnz(A_v)$ is obtained by contracting $A$ with a series of Countsketch matrices. Since there are $|L| = O(q)$ leaves, the total time required to process the leaves is $O(q \cdot \nnz(A)) + n \cdot \poly(\frac{qk}{\eps})$. 

Next, we calculate the time needed to process vertices $v \in V \setminus L$. If $v$ is not the root of $T$, then $\widetilde{U_v}$ is defined as $(M^\dagger_{v-subtree} \circ_{\calG_1} A \circ_{G_2} (R_vT_v)$. We can efficiently perform this computation in the following manner:
\begin{enumerate}
    \item Contract $A$ with $T_v$, which takes $\nnz(A)$ time. After contracting $A$ with $T_v$, we obtain a tensor with dimensions $s \times \ldots \times s \times O(\frac{q^3 k^2}{\eps^2 \delta}) \times n$, i.e. it has dimension $s$ on the $\deg(w) - 1$ modes corresponding to the children of $w$. (Without loss of generality, we are assuming that $w$ has a single mode of dimension $n$.)
    \item Contract $A \circ T_v$ with $R_v$, which takes time $s^{\deg(w) - 1} \cdot O(\frac{q^3k^2}{\eps^2 \delta}) \cdot n \cdot t = n \cdot \poly(\frac{qk}{\eps \delta})^{O(\deg(w))} = n \cdot (\frac{qk}{\eps\delta})^{O(d)}$.
    \item Contract $A \circ (R_vT_v)$ with $M_{v-subtree}^\dagger$. This can be done by separately contracting $A \circ (R_vT_v)$ with $M_u$ for each child $u$ of $v$. There are $O(d)$ such contractions, and since $M_u \in \R^{s \times t}$ and $A \circ (R_vT_v)$ has dimension $s \times \ldots \times s \times t \times n$, each contraction takes $O(t \cdot s^{\deg(v) - 1} \cdot t \cdot n) = n \cdot \poly(\frac{qk}{\eps \delta})^{\deg(v)} = n \cdot (\frac{qk}{\eps\delta})^{O(d)}$ time. Thus, this step overall takes $n \cdot (\frac{qk}{\eps \delta})^{O(d)}$ time.
\end{enumerate}
Thus, for a single vertex $v \in V \setminus L$, computing $\widetilde{U_v}$ takes $\nnz(A) + n \cdot (\frac{qk}{\eps\delta})^{O(d)}$ time. Next, we compute $M_v$. This is done by contracting $M_{v-subtree}$ with $\widetilde{U_v}$ --- this takes at most $n \cdot (\frac{qk}{\eps\delta})^{O(d)}$ time. Finally, contracting $A$ with $S_v$ takes $\nnz(A)$ time. Thus, processing a single vertex $v \in V \setminus L$ takes $\nnz(A) + n \cdot (\frac{qk}{\eps\delta})^{O(d)}$ time, and the overall running time of Algorithm \ref{algorithm:tree_network_bicriteria} is $O(q \cdot \nnz(A)) + n \cdot (\frac{qk}{\eps\delta})^{O(d)}$.
\end{proof}

\newpage
\section{Tensors with Low General Tensor Network Rank} \label{sec:general_networks}

In this section, we consider the case when the $q$-mode tensor $A \in \R^{n \times \ldots \times n}$ can be approximated well by a tensor network with rank $k$ according to a graph $\calG$. More specifically, suppose we are given a graph $\calG = (V, E)$ with $|V| = q$, and $k \in \N$. Then, for a collection of tensors $\{U_v \mid v \in V\}$, such that for each $v \in V$, $U_v$ has one mode of dimension $n$ and $\deg(v)$ modes of dimension $k$, we define $\calG(\{U_v \mid v \in V\})$ to be the tensor obtained by contracting the $U_v$ in the manner specified by the edges of $\calG$. In this section, we give algorithms for approximating $A$ by a tree network of small rank, which achieves error not much larger than $\min_{U_v} \|\calG(\{U_v \mid v \in V\}) - A\|_F$.

\subsection{Existence of a Binary Tree Network with Low Relative Error and Rank $k^{O(\deg(G) \tw(G))}$}
\label{subsec:binary_tree_existential_result}

First, we show the following existential result. Suppose $\calG = (V, E)$ is a graph, and $\{U_v \mid v \in V\}$ is a collection of tensors such that $U_v$ has one mode of dimension $n$ and $\deg(v)$ modes of dimension $k$. Then, there exists a binary tree $\calT = (V_\calT, E_\calT)$, and a collection of tensors $\{\widehat{U_v} \mid v \in V_\calT\}$ such that $\widehat{U_v}$ has one mode of dimension $n$, as well as $\deg_{\calT}(v)$ modes of dimension $k^t$ (where $t$ is small as long as the maximum degree and treewidth of $\calG$ are small), such that
$$\calG(\{U_v \mid v \in V\}) = \calT(\{\widehat{U_v} \mid v \in V_\calT\})$$
First, we formally define contraction of an edge in a graph:
\begin{definition}[Edge Contraction]
Let $G = (V, E)$ be a graph, and let $(u, v) \in E$. Then, contracting the edge $(u, v)$ results in a new graph $G' = (V', E')$, defined as follows. $V'$ is equal to $V$, but with $u$ and $v$ removed and replaced with a new vertex $w \not\in V$. In addition, all edges in $E$ which are incident to $u$ and $v$ in the original graph $G$ are now incident to $w$ instead. (Note that if $G$ has a vertex $z \neq u, v$ such that $(z, u), (z, v) \in E$, then in the new graph $G'$, we consider the edge $(z, w)$ to occur twice.)
\end{definition}

As part of the construction of the binary tree $\calT$, we contract all of the edges of $\calG$ in sequence (and form new vertices which will be members of $\calT$) until only one vertex of $\calG$ remains. In order to have $\calT(\{\widehat{U_v} \mid v \in V_\calT\}) = \calG(\{U_v \mid v \in V\})$ with a rank of at most $k^t$ on all the edges of $\calT$, it is enough for the degree of all the vertices of $G$ to be at most $t$ throughout the contraction process. The following result from \cite{ms08_simulating_quantum_tensor_contraction} achieves this with $t = O(\deg(G) \tw(G))$, where $\deg(G)$ denotes the maximum degree of any vertex of $G$, and $\tw(G)$ denotes the treewidth of $G$.

\begin{lemma} [Proposition 4.2 of \cite{ms08_simulating_quantum_tensor_contraction}]
\label{lemma:contraction_order}
Let $G = (V, E)$ be a graph. There exists an ordering $\pi$ of $E$ such that if the edges of $G$ are contracted in the order specified by $\pi$, then the degree of any vertex in $G$ at any point in the contraction process (including parallel edges, and counting self-loops as $1$ edge) is at most $O(\deg(G) \tw(G))$. Such an ordering can be computed in $|V|^{O(1)} e^{O(\deg(G) \tw(G))}$ time.
\end{lemma}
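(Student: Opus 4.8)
The plan is to read off the contraction order from a tree decomposition of $G$ of width $w := \tw(G)$. First I would compute such a decomposition $(\calT, \{B_x\}_{x \in V(\calT)})$ with $|V(\calT)| = O(|V|)$; by Bodlaender's algorithm this costs $2^{O(w)}\poly(|V|)$ time, which is within the claimed budget since $e^{O(\deg(G)\tw(G))} \ge 2^{\Omega(\tw(G))}$ (an approximate decomposition of width $O(w)$ would also suffice and is faster). Root $\calT$ arbitrarily, and for each edge $e \in E$ let $\mu(e)$ be the node closest to the root among those whose bag contains both endpoints of $e$; this is well-defined because the nodes whose bag contains a fixed vertex induce a subtree of $\calT$. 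The ordering $\pi$ is then obtained by traversing $\calT$ in post-order and, upon visiting a node $x$, contracting (in arbitrary internal order) all edges $e$ with $\mu(e) = x$. Every edge is consumed exactly once, so $\pi$ is a valid ordering of $E$, and computing $\mu$ and the traversal takes only $\poly(|V|)$ additional time.

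The core of the argument is to bound the degree of every super-vertex (blob) of the current contracted multigraph at every intermediate step. Here I would contract parallel edges simultaneously and trace out self-loops as soon as they appear, so that the degree of a blob equals the number of $G$-edges with exactly one endpoint in it. I would first establish, by induction on the sequence of contractions, the structural invariant that for every blob $P$ that ever appears, the set $\calY(P) := \{x \in V(\calT) : B_x \cap P \neq \emptyset\}$ is a subtree of $\calT$: blobs start as singletons $\{v\}$ with $\calY(\{v\})$ equal to the (connected) subtree of bags containing $v$, and when we contract an edge $e$ merging blobs $P'$ and $Q'$ we have $\mu(e) \in \calY(P') \cap \calY(Q')$, so $\calY(P' \cup Q') = \calY(P') \cup \calY(Q')$ is a union of two subtrees sharing a node, hence connected. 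Letting $z(P)$ be the root-most node of $\calY(P)$, it follows that $\calY(P)$ lies in the subtree rooted at $z(P)$ and $P \subseteq V_{z(P)} := \bigcup_{y \preceq z(P)} B_y$.

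The second, and main, claim is that every $G$-edge with exactly one endpoint in a blob $P$ is incident to a vertex of $B_{z(P)}$; granting this, the degree of $P$ is at most $|B_{z(P)}| \cdot \deg(G) \le (w+1)\deg(G) = O(\deg(G)\tw(G))$, which finishes the proof (and self-loops, being internal to $P$, do not contribute). To prove the claim, take an edge $e = \{a, b\}$ with $a \in P$, $b \notin P$, and suppose neither endpoint lies in $B_{z(P)}$. From $a \in V_{z(P)} \setminus B_{z(P)}$ and connectivity of the bag-subtree of $a$, every bag containing $a$, hence $\mu(e)$, lies strictly below $z(P)$ within a single branch of the subtree rooted at $z(P)$ (with a symmetric conclusion if $b \in V_{z(P)}$). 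One then argues that by the time the post-order traversal has built the blob $P$, the node $\mu(e)$ has already been visited, so $e$ is already contracted and $a, b$ lie in the same blob — contradicting $b \notin P$ — while in the complementary case (where $\mu(e)$ sits in a branch not yet visited, or $b \notin V_{z(P)}$) one checks that $a$ must still be a singleton, in which case the boundary of $P$ has size $\deg(a) \le \deg(G)$ anyway.

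The step I expect to be the main obstacle is exactly this last case analysis: carefully matching the post-order traversal against the branch structure of $\calT$ below $z(P)$, and using the connectivity property of tree decompositions to certify that each crossing edge is eliminated before the offending blob can form. Everything else — the $\calY(P)$-is-a-subtree invariant, the reduction of "degree" to "edge boundary," and the running-time accounting (dominated by the $2^{O(w)}\poly(|V|) = |V|^{O(1)} e^{O(\deg(G)\tw(G))}$ decomposition step) — should be routine once that localization lemma is in place.
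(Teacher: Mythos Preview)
Your approach is genuinely different from the paper's. The paper does not give a self-contained argument: it invokes Proposition~4.2 of \cite{ms08_simulating_quantum_tensor_contraction} to bound the maximum degree during contraction by $\tw(G^*)$, where $G^*$ is the \emph{line graph} of $G$; it then computes a tree decomposition of $G^*$ (which has at most $|V|^2$ vertices) in time $|V|^{O(1)}e^{O(\tw(G^*))}$, and finally bounds $\tw(G^*)\le \deg(G)(\tw(G)+1)-1$ via Lemma~4.4 of \cite{ms08_simulating_quantum_tensor_contraction}. So the paper routes everything through the line graph and existing results, whereas you work directly with a tree decomposition of $G$ and read the contraction order off a post-order traversal. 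Your route is more elementary and avoids constructing $G^*$ altogether, at the cost of having to prove the degree bound by hand; the paper's route is shorter precisely because it outsources that combinatorics to \cite{ms08_simulating_quantum_tensor_contraction}.

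There is, however, a real gap in your final case analysis, and it is exactly where you suspect. Your dichotomy is ``$\mu(e)$ already visited'' versus ``$\mu(e)$ sits in a branch not yet visited,'' and in the second case you conclude that $a$ must be a singleton. But there is a third regime you do not cover: the post-order traversal may be \emph{partway through} the branch $\mathrm{subtree}(c)$ (the child-subtree of $z(P)$ containing all of $a$'s bags). In that regime, some nodes of $a$'s bag-subtree have been visited and some have not. An edge $f$ incident to $a$ with $\mu(f)$ in the visited part has already been contracted, so $a$ sits in a non-singleton blob; yet the crossing edge $\{a,b\}$ can have $\mu(\{a,b\})$ in the unvisited part of the same branch. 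Your assertion ``$a$ must still be a singleton'' is therefore unjustified in this case. To close the gap you need an additional argument, for instance showing that whenever $P$ is non-singleton with $a\in P\setminus B_{z(P)}$, the fact that $P$ already contains some $v\in B_{z(P)}$ forces the entire subtree containing $a$'s bags to have been processed (trace the contracted path from $a$ to $v$ and use the separator property of $B_c$); alternatively, switch to an elimination-order argument on $G^*$, which is effectively what the paper's cited result does.
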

\begin{proof}
By Proposition 4.2 of \cite{ms08_simulating_quantum_tensor_contraction}, the maximum degree over the course of the contraction process is upper bounded by the treewidth $\tw(G^*)$ of the line graph $G^*$ of $G$, and such an ordering can be computed given a tree decomposition with this width. Furthermore, since $G^*$ has at most $|V|^2$ vertices, a tree decomposition of $G^*$ with treewidth $O(\tw(G^*))$ can be computed in $|V|^{O(1)} e^{O(\tw(G^*))}$ time, by Theorem 4.3 of \cite{ms08_simulating_quantum_tensor_contraction}. Finally, by Lemma 4.4 of \cite{ms08_simulating_quantum_tensor_contraction}, $\tw(G^*) \leq \deg(G) (\tw(G) + 1) - 1$, leading to the desired running time.
\end{proof}

\begin{algorithm}
\caption{Converting a general tensor network of rank $k$ into a tree network of rank $k^t$ given a contraction ordering with maximum degree $t$}
\label{algorithm:contract_general_network_into_tree}
\begin{algorithmic}
\Require A graph $\calG = (V, E)$ with a contraction ordering with maximum degree $t$, $\{U_v \mid v \in V\}$ such that each $U_v$ has one mode of dimension $n$ and $\deg(v)$ modes of dimension $k$.
\Ensure A binary tree $\calT = (V_\calT, E_\calT)$ and $\{\widehat{U_v} \mid v \in V_\calT\}$ such that each $\widehat{U_v}$ has at most one mode of dimension $n$, and each mode of $\widehat{U_v}$ corresponding to an edge of $\calT$ has dimension at most $k^t$

\item []
\State // Initialization
\State $\pi \gets $ contraction ordering of $\calG$ with maximum degree $t$
\For{$v \in V$}
    \State $V_v \gets \{v\}$
    \State $E_v \gets \varnothing$
    \State $\calT_v \gets (V_v, E_v)$ (i.e. the singleton tree consisting of $v$)
\EndFor

\item []
\State // Performing edge contractions
\For{$e = (u, v) \in \pi$}
    \State $L_{1, u} \gets $ the modes of $U_u$ which correspond to edges in $\calG$
    \State $L_{2, u} \gets $ the remaining modes of $U_u$
    \State $U_u, U_u' \gets $ tensors such that 
    \State \,\,\,\,\,\,\,\,\,\,\,\,\,\,\,\, (1) $U_u \circ U_u'$ is equal to the original $U_u$
    \State \,\,\,\,\,\,\,\,\,\,\,\,\,\,\,\, (2) The edge connecting $U_u$ and $U_u'$ is of rank $k^t$
    \State \,\,\,\,\,\,\,\,\,\,\,\,\,\,\,\, (3) The remaining modes of $U_u$ correspond to the modes in $L_{1, u}$
    \State \,\,\,\,\,\,\,\,\,\,\,\,\,\,\,\, (4) The remaining modes of $U_u'$ correspond to the modes in $L_{2, u}$
    \State $L_{1, v} \gets $ the modes of $U_v$ which correspond to edges in $\calG$
    \State $L_{2, v} \gets $ the remaining modes of $U_v$
    \State $U_v, U_v' \gets $ tensors such that 
    \State \,\,\,\,\,\,\,\,\,\,\,\,\,\,\,\, (1) $U_v \circ U_v'$ is equal to the original $U_v$
    \State \,\,\,\,\,\,\,\,\,\,\,\,\,\,\,\, (2) The edge connecting $U_v$ and $U_v'$ is of rank $k^t$
    \State \,\,\,\,\,\,\,\,\,\,\,\,\,\,\,\, (3) The remaining modes of $U_v$ correspond to the modes in $L_{1, v}$
    \State \,\,\,\,\,\,\,\,\,\,\,\,\,\,\,\, (4) The remaining modes of $U_v'$ correspond to the modes in $L_{2, v}$
    \State $w \gets $ the vertex resulting from the contraction of $u$ and $v$
    \State $U_w \gets U_u \circ U_v$ where the contraction occurs along the edge $(u, v)$
    \State $\calT_u \gets \calT_u$ with $u$ replaced by $u'$
    \State $\calT_v \gets \calT_v$ with $v$ replaced by $v'$
    \State $\calT_w \gets $ the tree such that
    \State \,\,\,\,\,\,\,\,\,\,\,\,\,\,\,\, (1) $w$ is the root
    \State \,\,\,\,\,\,\,\,\,\,\,\,\,\,\,\, (2) The children of $w$ are $u'$ and $v'$
    \State \,\,\,\,\,\,\,\,\,\,\,\,\,\,\,\, (3) The subtrees of $u'$ and $v'$ are $\calT_u$ and $\calT_v$ respectively
\EndFor

\State $w \gets $ the single vertex of $\calG$ resulting from the contraction
\State \Return $\calT_w$, $\{U_v \mid v \text{ is a vertex of } \calT_w\}$
\end{algorithmic}
\end{algorithm}

\begin{theorem} \label{thm:contract_general_network_into_tree}
Let $\calG = (V, E)$ be a graph, with corresponding factors $\{U_v \mid v \in V\}$. Then the tree network $\calT_w(\{U_v \mid v \text{ is a vertex of }\calT_w\})$ returned by Algorithm \ref{algorithm:contract_general_network_into_tree} is equal to $\calG(\{U_v \mid v \in V\})$.
\end{theorem}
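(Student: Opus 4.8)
The plan is to prove the statement by induction on the number of (multi-)edges of $\calG$ that have been contracted so far, tracking a single invariant. At each point in the main loop of Algorithm \ref{algorithm:contract_general_network_into_tree} we have a graph $G$ (a contraction-minor of $\calG$) together with, for each vertex $x$ of $G$, a current factor $U_x$ and a rooted tree $\calT_x$ whose root is $x$, whose root-factor is $U_x$, and whose non-root vertices carry the primed factors created in earlier iterations. Let $H$ denote the tensor network obtained by \emph{expanding} $G$: its vertex set is $\bigsqcup_x V(\calT_x)$, its edges are all edges internal to the individual $\calT_x$ together with, for each $e=(x,y)\in E(G)$, the edge joining the root of $\calT_x$ to the root of $\calT_y$ along the two modes that $e$ designated, and its only free modes are the $q$ modes of dimension $n$. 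The invariant I would maintain is $H(\{\text{all factors}\}) = \calG(\{U_v \mid v\in V\})$. The base case is immediate: before any contraction, $G=\calG$, every $\calT_x=\{x\}$ with factor $U_x$, and $H=\calG$ with the original factors.

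For the inductive step, consider contracting $e=(u,v)$. I would break the iteration into three elementary moves and check that each preserves the contracted tensor. (i) \emph{Splitting:} the algorithm replaces $U_u$ by a pair $U_u,U_u'$ with $U_u\circ U_u'$ equal to the old $U_u$ and a connecting edge of rank $k^t$; such a decomposition exists because $U_u$ matricized with the modes of $L_{1,u}$ (those corresponding to edges of $G$ at $u$) on one side has at most $k^{\deg_G(u)}$ columns, and $\deg_G(u)\le t$ throughout the process by Lemma \ref{lemma:contraction_order}, so the rank is at most $k^t$ (pad to exactly $k^t$). In $H$ this move replaces one vertex by two joined by an internal edge, and since contraction is associative and $U_u\circ U_u'$ is literally the old tensor, the value of $H(\cdot)$ is unchanged; the $n$-mode, lying in $L_{2,u}$, migrates to $U_u'$. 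The same holds for $U_v$. (ii) \emph{Merging:} setting $U_w=U_u\circ U_v$ along the mode of $e$ is, by Definition \ref{def:tensor_network_contraction}(d), exactly a tensor-network contraction step applied to $H$, hence preserves $H(\cdot)$. (iii) \emph{Relabeling:} after (i)--(ii) the modes of $U_w$ are precisely the $G$-edge modes of $u$ and $v$ other than $e$, together with the two new rank-$k^t$ edges to $U_u'$ and $U_v'$; declaring $w$ the replacement vertex of the contracted graph $G'$, with tree $\calT_w$ rooted at $w$ and children $u',v'$ whose subtrees are the old $\calT_u,\calT_v$ with roots renamed, is a bijective relabeling of the post-(ii) network that matches the definition of edge contraction, so the expanded network of $(G',\{\calT_x\})$ coincides with it and the invariant is restored.

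After all the edges of $\calG$ have been contracted, $G$ is a single vertex $w$, so the expanded network $H$ is exactly $\calT_w$ together with the factors $\{U_x \mid x \text{ a vertex of }\calT_w\}$, and the invariant gives $\calT_w(\{U_x\}) = \calG(\{U_v \mid v\in V\})$, which is the claim; as a byproduct, every edge of $\calT_w$ is one of the split edges and hence has dimension at most $k^t$, recovering the ``Ensure'' guarantee. The main thing to get right --- and the only real subtlety --- is the bookkeeping in move (iii): one must check carefully that the modes are reattached exactly as edge contraction prescribes (in particular that edges among $u$, $v$ and a common neighbor become parallel edges at $w$, and that the only free modes remain the $q$ modes of dimension $n$), since the algebraic content of (i) and (ii) is otherwise routine. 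If $\calG$ is allowed to have parallel edges one should also fix the convention that a contraction step contracts a single designated copy of $e$, and apply the same convention to $\calG(\cdot)$ and to $\calT_w(\cdot)$.
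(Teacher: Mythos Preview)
Your proposal is correct and follows essentially the same approach as the paper: both arguments hinge on the feasibility of the rank-$k^t$ split (via the degree bound from Lemma \ref{lemma:contraction_order}) and the fact that splitting a factor into $U_u \circ U_u'$ and then contracting along an edge are value-preserving operations. The paper's proof is much terser---it states only the feasibility of the split and leaves the preservation-of-value invariant implicit---whereas you spell out the invariant and the three moves explicitly; your version is more careful but not conceptually different.
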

\begin{proof}
It suffices to show that at any point in Algorithm \ref{algorithm:contract_general_network_into_tree}, when contracting an edge $(u, v)$, it is possible to split $U_v$ into two tensors $U_v$ and $U_{v'}$ such that the edge connecting $U_v$ and $U_{v'}$ has rank at most $k^t$. This follows from the fact that (excluding the vertices in $\calT_v$) $v$ has degree at most $t$ within $\calG$. Thus, the modes of $U_v$ corresponding to the edges incident to $v$ in $\calG$ together have dimension at most $k^t$ (since the rank on any edge in $\calG$ is $k$). The desired result then follows from matricizing $U_v$ with the row dimension corresponding to these modes.
\end{proof}

\subsection{Approximating $A$ with the Resulting Tree Network}

We use the following procedure when contracting $\calG$. At each point in this procedure, each vertex $u$ in $\calG$ has a binary tree $\calT_u$ attached to it ($\calT_u$ is only connected to $\calG$ through $u$ and initially consists only of $u$). When contracting an edge $(u, v)$, we create vertices $u'$ and $v'$ such that $u$ is connected to $u'$ and $u'$ is connected to $T_u$ (i.e., $u$ is no longer directly connected to the rest of $T_u$), and the same is true of $v$ and $v'$. If $U_u$ has any modes of dimension $n$, then those modes will now be a part of the factor $U_{u'}$ corresponding to $u'$, and all the modes that initially connected $U_u$ to $\calT_u$ will now belong to $U_{u'}$. This means that aside from the edge connecting $u$ and $u'$, the only modes of $U_u$ are the ones corresponding to the edges between $u$ and the rest of $\calG$. Since $u$ has degree at most $O(\deg(\calG)\tw(\calG))$, this means that these modes together have dimension $k^{O(\deg(\calG) \tw(\calG))}$, meaning $U_u$ has rank $k^{O(\deg(\calG) \tw(\calG))}$ when matricized along these modes, and we can let the edge between $u$ and $u'$ have rank $k^{O(\deg(\calG) \tw(\calG))}$. The final tree output by our algorithm will only consist of edges of the form $(u, u')$ obtained in this way, meaning that all of the edges in this tree will have rank $k^{O(\deg(\calG)\tw(\calG))}$. Thus, we can approximate $A$ using our bicriteria algorithm for tree networks, with $\calT$ as the target tree and $k^{O(\deg(\calG)\tw(\calG))}$ as the target rank.

\begin{algorithm}
\caption{Algorithm to approximate $A$ using a binary tree network with ranks $k^{O(\deg(\calG) \tw(\calG))}$, achieving at most the error of a general network according to a graph $\calG$ (up to a $1 + \eps$ factor). Note that in the second step of this algorithm, we apply Algorithm \ref{algorithm:contract_general_network_into_tree} --- Algorithm \ref{algorithm:contract_general_network_into_tree} can simply use the graph $\calG$ together with the contraction order to output the binary tree $\calT$, by ignoring the steps involving tensor contraction.}
\label{algorithm:approx_general_network_with_tree}
\begin{algorithmic}
\Require $A \in \R^{n \times \ldots \times n}$, a graph $\calG = (V, E)$ such that each vertex has one mode of dimension $n$, and each edge has dimension $k$
\Ensure A binary tree $\calT$ with $O(q)$ vertices, $\{U_v \mid v \text{ is a vertex of }\calT\}$ such that $U_v$ has at most one mode of dimension $n$ and every other mode of $U_v$ has dimension $O(\frac{qk^{\deg(\calG)\tw(\calG)}}{\eps} \log(\frac{q}{\delta}))$

\State $\pi \gets $ contraction order of $\calG$ with maximum degree $O(\deg(\calG) \tw(\calG))$
\State $\calT \gets $ the tree that is output by Algorithm \ref{algorithm:contract_general_network_into_tree} given $\calG$ and $\pi$
\State $\{U_v \mid v \text{ is a vertex of }\calT\} \gets $ the output of Algorithm \ref{algorithm:tree_network_bicriteria} with $A$ as the input tensor and $\calT$ as the tree, and target rank $k^{\deg(\calG)\tw(\calG)}$
\State \Return $\calT$, $\{U_v \mid v \text{ is a vertex of }\calT\}$
\end{algorithmic}
\end{algorithm}

\begin{theorem}
Let $A \in \R^{n \times \ldots \times n}$ be a $q$-mode tensor, and let $\calG = (V, E)$ be a graph with $|V| = q$ such that each vertex has one mode of dimension $n$, and the edges each have dimension $k$. Then, Algorithm \ref{algorithm:approx_general_network_with_tree} outputs a binary tree $\calT$ and factors $\{U_v \mid v \text{ is a vertex of } \calT\}$ such that
\begin{equation}
\begin{split}
\|\calT(\{U_v \mid v \text{ is a vertex of }\calT\}) - A\|_F
& \leq (1 + \eps) \|\calG(\{U_v \mid v \in V\}) - A\|_F
\end{split}
\end{equation}
with probability at least $1 - \delta$. The running time of Algorithm \ref{algorithm:approx_general_network_with_tree} is $O(q \cdot \nnz(A)) + n \cdot \poly(\frac{q}{\eps\delta}) k^{O(\deg(\calG)\tw(\calG))}$.
\end{theorem}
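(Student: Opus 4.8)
The plan is to obtain this theorem by bolting together the two results proved earlier in this section: the exact topology change of Theorem~\ref{thm:contract_general_network_into_tree} and the bicriteria tree–network algorithm of Theorem~\ref{thm:analysis_of_tree_network_bicriteria}. First I would invoke Lemma~\ref{lemma:contraction_order} to compute, in $|V|^{O(1)} e^{O(\deg(\calG)\tw(\calG))}$ time, a contraction order $\pi$ of the edges of $\calG$ under which every vertex (counting parallel edges and self-loops) has degree at most $t := O(\deg(\calG)\tw(\calG))$ throughout the contraction. Feeding $\calG$ and $\pi$ into Algorithm~\ref{algorithm:contract_general_network_into_tree} in its structure-only mode --- ignoring the tensor operations, as noted in the caption of Algorithm~\ref{algorithm:approx_general_network_with_tree} --- produces a tree $\calT$ on $O(q)$ vertices. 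I would record three structural facts about $\calT$ that make it an admissible input to Algorithm~\ref{algorithm:tree_network_bicriteria}: $\calT$ is binary (each contraction creates a vertex with exactly two children $u',v'$ and is itself absorbed by a later contraction, so every vertex of $\calT$ has degree at most $3$), each internal vertex of $\calT$ has no mode of dimension $n$ while each leaf has exactly one, and every edge of $\calT$ carries a rank parameter of at most $k^{t}$, since at the moment $U_v$ is split its surviving $\calG$-edge modes together have dimension at most $k^{t}$ by the degree bound from $\pi$.

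Next I would establish the error guarantee. Let $\{U_v^*\mid v\in V\}$ be factors on $\calG$ achieving (or, if the infimum is not attained, coming within an arbitrarily small additive $\gamma$ of) $\min_{U_v}\|\calG(\{U_v\})-A\|_F$. Since the tree produced by Algorithm~\ref{algorithm:contract_general_network_into_tree} depends only on $\calG$ and $\pi$ and not on the factor values, running that algorithm with the \emph{actual} factors $\{U_v^*\}$ yields the same tree $\calT$ together with factors $\{\widehat U_v\}$ of edge rank at most $k^{t}$, and Theorem~\ref{thm:contract_general_network_into_tree} gives
\[
\calT(\{\widehat U_v\}) = \calG(\{U_v^*\}).
\]
Consequently the best rank-$k^{t}$ tensor with topology $\calT$ is at least as good as the best rank-$k$ tensor with topology $\calG$:
\[
\min_{T \text{ of rank } k^{t} \text{ w.r.t. } \calT}\|T-A\|_F \;\le\; \|\calG(\{U_v^*\})-A\|_F \;=\; \min_{U_v}\|\calG(\{U_v\})-A\|_F .
\]
I would then run Algorithm~\ref{algorithm:tree_network_bicriteria} on $A$, the tree $\calT$, and target rank $k^{t}$ (padding edge ranks up to $k^{t}$ if needed, which only helps). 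By Theorem~\ref{thm:analysis_of_tree_network_bicriteria}, with probability at least $1-\delta$ it returns factors $\{U_v\}$ of bicriteria rank $O\!\left(\frac{q k^{t}}{\eps}\log\frac{q}{\delta}\right)=O\!\left(\frac{q\,k^{O(\deg(\calG)\tw(\calG))}}{\eps}\log\frac{q}{\delta}\right)$ satisfying
\[
\|\calT(\{U_v\})-A\|_F \;\le\; (1+\eps)\min_{T \text{ of rank } k^{t} \text{ w.r.t. } \calT}\|T-A\|_F \;\le\; (1+\eps)\min_{U_v}\|\calG(\{U_v\})-A\|_F ,
\]
which is the claimed bound (a nonzero $\gamma$, if introduced above, is absorbed into the $(1+\eps)$ factor in the standard way, as in \cite{swz19_tensor_low_rank}). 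No union bound is needed beyond the $1-\delta$ of Theorem~\ref{thm:analysis_of_tree_network_bicriteria}, since the preprocessing is deterministic.

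For the running time I would add up the pieces: computing $\pi$ costs $\poly(q)\,e^{O(\deg(\calG)\tw(\calG))}$ and assembling $\calT$ from $\calG$ and $\pi$ costs $\poly(q)$, both lower order; the dominant term is the single call to Algorithm~\ref{algorithm:tree_network_bicriteria} on $\calT$, which has maximum degree $d=O(1)$ (because $\calT$ is binary) and target rank $k^{t}$, giving
\[
O(q\cdot\nnz(A)) + n\cdot\left(\frac{q k^{t}}{\eps\delta}\right)^{O(d)} \;=\; O(q\cdot\nnz(A)) + n\cdot\poly\!\left(\frac{q}{\eps\delta}\right) k^{O(\deg(\calG)\tw(\calG))} ,
\]
as claimed. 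The genuine content of this theorem is already carried by Theorems~\ref{thm:contract_general_network_into_tree} and \ref{thm:analysis_of_tree_network_bicriteria}; the part I expect to require the most care is the bookkeeping around $\calT$ --- verifying that it really has bounded degree (so that the $(\cdot)^{O(d)}$ factor in the tree-network running time collapses to a polynomial and the per-factor parameter count stays $n\cdot k^{O(\deg(\calG)\tw(\calG))}$), that every vertex of $\calT$ has at most one mode of dimension $n$, and that the edge ranks produced by Algorithm~\ref{algorithm:contract_general_network_into_tree} are exactly what Algorithm~\ref{algorithm:tree_network_bicriteria} is fed as its target rank.
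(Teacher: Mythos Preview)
Your proposal is correct and follows essentially the same approach as the paper: invoke Lemma~\ref{lemma:contraction_order} to get the contraction order, use Algorithm~\ref{algorithm:contract_general_network_into_tree}/Theorem~\ref{thm:contract_general_network_into_tree} to obtain the binary tree $\calT$ of rank $k^{O(\deg(\calG)\tw(\calG))}$, then apply Theorem~\ref{thm:analysis_of_tree_network_bicriteria} with that target rank to get both the approximation guarantee and the running time bound. The paper's own proof is a terse two-sentence version of exactly this argument; your additional bookkeeping (bounded degree of $\calT$, at most one $n$-mode per vertex, padding edge ranks) is sound and arguably fills in details the paper leaves implicit.
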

\begin{proof}
The approximation guarantee follows directly from Theorem \ref{thm:contract_general_network_into_tree} and Theorem \ref{thm:analysis_of_tree_network_bicriteria}. We now calculate the running time. By Lemma \ref{lemma:contraction_order}, calculating the contraction order $\pi$ can be done in $\poly(q) e^{O(\deg(\calG)\tw(\calG)}$ time. In addition, given the contraction ordering, the remaining steps of Algorithm \ref{algorithm:contract_general_network_into_tree} take $\poly(q)$ time. Finally, applying Algorithm \ref{algorithm:tree_network_bicriteria} with $A$ and $\calT$ as inputs, and target rank $k^{O(\deg(\calG)\tw(\calG)}$, takes $O(q \cdot \nnz(A)) + n \cdot \Big(\frac{qk^{\deg(\calG)\tw(\calG)}}{\eps \delta}\Big)^{O(1)}$ time (note that the maximum degree of $\calT$ is $O(1)$ since it is a binary tree). Thus, the overall running time of Algorithm \ref{algorithm:approx_general_network_with_tree} is $O(q \cdot \nnz(A)) + n \cdot \poly(\frac{q}{\eps\delta}) k^{O(\deg(\calG)\tw(\calG))}$.
\end{proof}

\newpage
\section{Fixed Parameter Tractable (FPT)  Algorithms}
\label{sec:frobenius_norm_1_plus_eps_rank_exactly_k}

\subsection{Preliminaries on Gaussian Matrices with Exactly $k$ Rows}

In this section, we first develop the properties of random matrices with exactly $k$ rows and i.i.d. $N(0, 1/k)$ entries. We show that these can be used to obtain subspace embeddings and perform multiple-response $\ell_2$ regression with $e^{-\poly(k/\eps)}$ success probability --- our algorithm for Tucker-$(p, q)$ decomposition shown in the next subsection will instantiate these $\ell_2$ regression sketches $e^{\poly(k/\eps)}$ times to obtain a constant probability of success. We start by showing the subspace embedding property.

\begin{lemma} [$e^{-\poly(k/\eps)}$ Success Probability Subspace Embedding]
\label{lemma:gaussian_with_exactly_k_rows_o(1)_subspace_embedding}
Let $A \in \R^{n \times k}$ have rank $k$, and let $S \in \R^{k \times n}$ have i.i.d. $N(0, 1/k)$ entries. Then, with probability at least $e^{-\Theta(k^2 \log k)}$, for all $x \in \R^k$, $\frac{2}{3} \|Ax\|_2 \leq \|SAx\|_2 \leq \frac{2}{3} \|Ax\|_2$.
\end{lemma}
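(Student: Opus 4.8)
The plan is to reduce to a square $k \times k$ Gaussian matrix and then lower-bound a small-ball event coordinatewise. First I would write a thin $QR$ (equivalently thin SVD) factorization $A = QR$, where $Q \in \R^{n \times k}$ has orthonormal columns and $R \in \R^{k \times k}$ is invertible (this is where $\mathrm{rank}(A) = k$ is used). Then $\|Ax\|_2 = \|Rx\|_2$ and $\|SAx\|_2 = \|SQ\,(Rx)\|_2$ for every $x$, and since $x \mapsto Rx$ is a bijection of $\R^k$, it suffices to show that $G := SQ \in \R^{k\times k}$ satisfies $\tfrac23\|y\|_2 \le \|Gy\|_2 \le \tfrac43\|y\|_2$ for all $y \in \R^k$ with probability at least $e^{-\Theta(k^2\log k)}$ (I read the upper bound in the statement, written as $\tfrac23\|Ax\|_2$, as a typo for $\tfrac43\|Ax\|_2$, i.e.\ a $(1\pm\tfrac13)$ embedding). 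Because $Q$ has orthonormal columns and $S$ has i.i.d.\ $N(0,1/k)$ entries, rotational invariance of the Gaussian shows $G$ itself has i.i.d.\ $N(0,1/k)$ entries: each row $S_i$ of $S$ is distributed as $N(0,\tfrac1k I_n)$, so $S_i Q \sim N(0,\tfrac1k Q^\top Q) = N(0,\tfrac1k I_k)$, and distinct rows are independent.

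Next I would reduce the subspace-embedding property to a Frobenius-norm bound: if $\|G - I_k\|_F \le \tfrac13$, then since $\|Gy - y\|_2 \le \|G-I_k\|_F\,\|y\|_2$ we get $\|Gy\|_2 \in [\tfrac23\|y\|_2, \tfrac43\|y\|_2]$ for all $y$, which is exactly the claim. So it is enough to lower-bound $\Pr{\|G - I_k\|_F \le \tfrac13}$. I would do this by forcing each entry of $G$ into a small window: require $|G_{ii} - 1| \le \tfrac{1}{\sqrt{18k}}$ for each of the $k$ diagonal entries and $|G_{ij}| \le \tfrac{1}{\sqrt{18}\,k}$ for each of the $k^2 - k$ off-diagonal entries. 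On this event $\|G-I_k\|_F^2 \le k\cdot\tfrac{1}{18k} + (k^2-k)\cdot\tfrac{1}{18k^2} \le \tfrac19$, as desired. Since the entries of $G$ are independent, the probability of this event factors into a product of single-entry probabilities, so it remains only to lower-bound those.

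For the single-entry estimates I would use the elementary bound $\Pr{a \le Z \le b} \ge (b-a)\,\tfrac{1}{\sqrt{2\pi}}\,e^{-\max(a^2,b^2)/2}$ for $Z \sim N(0,1)$ (the density is at least its minimum over $[a,b]$). Writing $Z_{ij} := \sqrt{k}\,G_{ij} \sim N(0,1)$: a diagonal entry requires $|Z_{ii} - \sqrt k| \le \tfrac{1}{\sqrt{18}}$, an interval of width $\tfrac{2}{\sqrt{18}}$ near $\sqrt k$, which has probability at least $\tfrac{2}{\sqrt{18}}\cdot\tfrac{1}{\sqrt{2\pi}}\,e^{-(\sqrt k + 1/\sqrt{18})^2/2} \ge e^{-k}$ once $k$ exceeds an absolute constant, contributing a factor $\ge e^{-k^2}$ over the $k$ diagonal entries; an off-diagonal entry requires $|Z_{ij}| \le \tfrac{1}{\sqrt{18 k}}$, which has probability at least $\tfrac{2}{\sqrt{18k}}\cdot\tfrac{1}{\sqrt{2\pi}}\,e^{-1/(36k)} \ge \tfrac{c}{\sqrt k}$ for an absolute constant $c>0$, contributing a factor $\ge (c/\sqrt k)^{k^2} = e^{-\Theta(k^2\log k)}$ over the at most $k^2$ off-diagonal entries. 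Multiplying, $\Pr{\|G-I_k\|_F\le\tfrac13} \ge e^{-k^2}\cdot e^{-\Theta(k^2\log k)} = e^{-\Theta(k^2\log k)}$, which finishes the proof (for $k$ below the absolute constant one simply notes the probability is bounded below by a positive constant). The only genuine work is the anti-concentration bookkeeping — especially the diagonal estimate, where the shift by $\sqrt k$ forces an $e^{-\Theta(k)}$ cost per coordinate and hence an $e^{-\Theta(k^2)}$ cost in total; the rest ($QR$ reduction, rotational invariance, Frobenius $\Rightarrow$ operator norm) is routine, and I do not expect any obstacle beyond keeping the numerical constants mutually consistent.
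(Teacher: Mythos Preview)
Your proof is correct and follows essentially the same approach as the paper: reduce via an orthonormal basis and rotational invariance to a $k\times k$ Gaussian matrix $G$, then lower-bound the probability that $G$ is entrywise close to $I_k$ by multiplying independent single-entry small-ball bounds, yielding $e^{-\Theta(k^2\log k)}$. The only cosmetic difference is that the paper imposes a uniform $\ell_\infty$ threshold $E/k^d$ on all entries and passes through $\|T\|_F \le k\|T\|_\infty$, whereas you choose separate thresholds for diagonal and off-diagonal entries and bound $\|G-I_k\|_F$ directly; both routes give the same exponent.
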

\begin{proof}
Let $U \in \R^{n \times k}$ be an orthonormal basis for the column span of $A$ --- then it suffices to show that with the desired probability, for all $x \in \R^k$, $\frac{2}{3} \|x\|_2 \leq \|SUx\|_2 \leq \frac{4}{3} \|x\|_2$. First observe that by the rotational invariance of the Gaussian distribution, $SU \in \R^{k \times k}$ has i.i.d. $N(0, 1/k)$ entries. Let $M = SU$, and let $d > 1$ be an absolute constant. For $i \in [k]$ and $j \in [k]$, note that the probability density function of $M_{ij}$ is $$f_X(x) = \frac{1}{\sigma\sqrt{2\pi}} e^{-\frac{(x - \mu)^2}{2\sigma^2}}$$
For $i\in [k], j\in[k]$, note that the probability density function of $M_{ij}$ is $f(x) = \sqrt{\frac{k}{2\pi}}e^{-\frac{kx^2}{2}}$. Thus, for fixed $i, j\in [k], i\neq j$, and a small constant $E < \frac 1 2$, $$\PrBig{|M_{ij}|\leq \frac{E}{k^d}}\geq \sqrt{\frac 2 \pi}\cdot\frac{Ee^{-\frac 1 2}}{k^d}$$ This is true since for $x\in [-E/k^d, E/k^d]$, because $\mathcal{N}(0, \frac 1 k)$ has $\mu = 0$, $f(x)\geq f(E/k^d)$. Expressing the cumulative density function with the Riemann integral, since the p.d.f is non-negative,
$$\PrBig{-E/k^d \leq M_{ij}\leq E/k^d}\geq 2\cdot\frac{E}{k^d}\cdot f(E/k^d) $$
and $e^{-\frac{kE^2}{2k^{2d - 1}}} \geq e^{-1/2}$ when $k, d\geq 1$ and $E < 1$.
\begin{align*}
    \PrBig{|M_{ij}|\leq \frac{E}{k^d}}&\geq \sqrt{\frac{k}{2\pi}}e^{-\frac{kx^2}{2}}\cdot \frac{2E}{k^d} \\
    &= \sqrt{\frac 2 \pi}e^{-\frac{k(E/k^d)^2}{2}}\cdot \frac{E}{k^{d - \frac 1 2}} \\
    &= \sqrt{\frac 2 \pi}e^{-\frac{E^2}{2k^{2d - 1}}}\cdot \frac{E}{k^{d - \frac 1 2}} \\
    &\geq \sqrt{\frac 2 \pi}\frac{Ee^{-1/2}}{k^d}
\end{align*}
The last inequality is true because $k^{d - \frac 1 2}\leq k^d$ for $k, d\geq 1$.
Similarly,
\begin{align*}
    \PrBig{M_{ii} = 1\pm \frac{E}{k^d}} &\geq f(2)\cdot \frac{2E}{k^d} \\
    &\geq \sqrt{\frac 2 \pi}\cdot e^{-2k}\cdot \frac{E}{k^d}
\end{align*}

Now, let $\calE$ denote the event that $\|M - I\|_\infty \leq \frac{E}{k^d}$ --- here, $I \in \R^{k \times k}$ is the identity matrix, and $\|\cdot \|_\infty$ denotes the entrywise $\ell_\infty$ norm of a matrix. Then, by the above,
\begin{equation}
\begin{split}
\PrBig{\calE}
& \geq \PrBig{|M_{ij}|  \geq \frac{E}{k^d} \text{ for } i \neq j \text{ and } |M_{ii}| = 1 \pm \frac{E}{k^d} \text{ for } i \in [k]} \\
& = \prod_{i \neq j} \PrBig{|M_{ij}| \geq \frac{E}{k^d}} \prod_{i \in [k]} \PrBig{|M_{ii}| = 1 \pm \frac{E}{k^d}} \\
& \geq \prod_{i \neq j} \sqrt{\frac 2\pi}\frac{Ee^{-1/2}}{k^d} \prod_{i \in [k]}\sqrt{\frac 2\pi} \frac{Ee^{-2k}}{k^d} \\
& \geq \Omega\Big(\frac{1}{k^d}\Big)^{k^2} \cdot \Omega\Big(\frac{e^{-2k}}{k^d}\Big)^{k} \\
& \geq e^{-\Theta(k^2 \log k)} \cdot e^{-\Theta(k^2)} e^{-\Theta(k \log k)} \\
& \geq e^{-\Theta(k^2 \log k)}
\end{split}
\end{equation}
Furthermore, note that if $\calE$ holds, then we can write $M = I + T$, where $\|T\|_\infty \leq \frac{E}{k^d}$, meaning that
$$\|T\|_2 \leq \|T\|_F \leq k \|T\|_\infty \leq \frac{E}{k^{d - 1}}$$
meaning that for all $x \in \R^k$,
$$|\|SUx\|_2 - \|x\|_2| \leq \|Tx\|_2 \leq \frac{E}{k^{d - 1}} \|x\|_2 \leq E \|x\|_2 \leq \frac{1}{3} \|x\|_2$$
where the second-to-last inequality holds for $d \geq 1$, and the last inequality holds for $E \leq \frac{1}{3}$. In summary, with probability at least $e^{-\Theta(k^2 \log k)}$, for all $x \in \R^k$,
$$\frac{2}{3} \|Ax\|_2 \leq \|SAx\|_2 \leq \frac{4}{3} \|Ax\|_2$$
This completes the proof of the lemma.
\end{proof}

Next, we show that $S \in \R^{k \times n}$ with i.i.d. $N(0, 1/k)$ entries also satisfies a special case of the approximate matrix product property in Definition \ref{def:approx_matrix_product}, where the success probability is now $e^{-\poly(k/\eps)}$ instead of $1 - \delta$. 

\begin{lemma}[$e^{-\poly(kt/\eps)}$ Success Probability Approximate Matrix Product]
\label{lemma:gaussian_with_exactly_k_rows_approximate_matrix_product}
Let $n, k, t \in \N$ and $\eps > 0$, and $A \in \R^{n \times k}$ and $B \in \R^{n \times t}$. Assume that for $i, j \in [k]$ distinct, $\langle A_i, A_j \rangle = 0$, and for $i, j \in [t]$ distinct, $\langle B_i, B_j \rangle = 0$. Further assume that for all $i \in [k]$ and $j \in [t]$, $\langle A_i, B_j \rangle = 0$. Let $S \in \R^{k \times n}$ have i.i.d. $N(0, 1/k)$ entries. Then, with probability $e^{-\Theta(k^2 \log k)} \cdot (\frac{\eps}{k})^{O(kt)}$, the following hold:
\begin{itemize}
    \item $\|A^T S^T S B\|_F^2 \leq \frac{\eps}{k} \|A\|_F^2 \|B\|_F^2$, and
    \item For all $x \in \R^k$, $\frac{2}{3} \|Ax\|_2 \leq \|SAx\|_2 \leq \frac{4}{3} \|Ax\|_2$.
\end{itemize}
\end{lemma}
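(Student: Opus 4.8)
The plan is to reduce everything to the behavior of a single $k\times k$ Gaussian matrix, by exploiting the orthogonality hypotheses. Let $U\in\R^{n\times k}$ be obtained by normalizing the columns of $A$, so $U$ has orthonormal columns and $A = U D_A$ where $D_A$ is diagonal with entries $\|A_i\|_2$; similarly let $V\in\R^{n\times t}$ have orthonormal columns with $B = V D_B$. The hypotheses say the columns of $A$ are pairwise orthogonal, the columns of $B$ are pairwise orthogonal, and every column of $A$ is orthogonal to every column of $B$; hence $[U\ V]\in\R^{n\times(k+t)}$ has orthonormal columns. By rotational invariance of the Gaussian, $S[U\ V] = [SU\ SV]$ is a $k\times(k+t)$ matrix with i.i.d. $N(0,1/k)$ entries, i.e. $G := SU\in\R^{k\times k}$ and $H := SV\in\R^{k\times t}$ are independent Gaussian matrices with i.i.d. $N(0,1/k)$ entries. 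Then $A^TS^TSB = D_A G^T H D_B$, so $\|A^TS^TSB\|_F \le \|D_A\|_2\,\|G^TH\|_F\,\|D_B\|_2 \le \|A\|_F\|B\|_F\cdot\|G^TH\|_F$ (using $\|D_A\|_2 = \max_i\|A_i\|_2 \le \|A\|_F$ and similarly for $D_B$); and the second bullet, the subspace-embedding conclusion for $A$, is exactly the event analyzed in Lemma \ref{lemma:gaussian_with_exactly_k_rows_o(1)_subspace_embedding}, which depends only on $G = SU$.

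So it suffices to show that, with probability $e^{-\Theta(k^2\log k)}\cdot(\eps/k)^{O(kt)}$, simultaneously (i) $\|G^TH\|_F^2 \le \eps/k$ and (ii) $G$ is close to the identity in the sense of Lemma \ref{lemma:gaussian_with_exactly_k_rows_o(1)_subspace_embedding}. Since $G$ and $H$ are independent, I would condition on $G$ lying in the good event $\calE$ from that lemma's proof (namely $\|G - I\|_\infty \le E/k^d$), which has probability $e^{-\Theta(k^2\log k)}$ and in particular forces $\|G\|_2 \le 2$. On that event, $\|G^TH\|_F \le \|G\|_2\|H\|_F \le 2\|H\|_F$, so it is enough to force $\|H\|_F^2 \le \eps/(4k)$. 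Now $H$ has $kt$ i.i.d. $N(0,1/k)$ entries, so $\|H\|_F^2$ is $(1/k)$ times a $\chi^2_{kt}$ variable; the event that each of the $kt$ entries has magnitude at most, say, $\sqrt{\eps}/(4k\sqrt{t})$ is a product of $kt$ independent events, each of probability $\Omega(\sqrt{\eps}/(k\sqrt{t}))$ by the same density-at-a-point estimate used for $M_{ij}$ in the previous lemma (the $N(0,1/k)$ density at the origin is $\sqrt{k/2\pi}$, and the target interval has width $\Theta(\sqrt{\eps}/(k\sqrt t))$). Multiplying, this has probability $\big(\Omega(\sqrt{\eps}/(k\sqrt t))\big)^{kt} = (\eps/(k^2t))^{O(kt)} = (\eps/k)^{O(kt)}$ (absorbing the $\sqrt t$ and the extra $k$ into the exponent's constant, since $t < n$ and these are polynomial factors in the base). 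On this sub-event, $\|H\|_F^2 \le kt\cdot \eps/(16k^2t) \le \eps/(4k)$, as needed.

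Combining: intersect the event $\calE$ on $G$ (probability $e^{-\Theta(k^2\log k)}$) with the small-entries event on $H$ (probability $(\eps/k)^{O(kt)}$); by independence the intersection has probability $e^{-\Theta(k^2\log k)}\cdot(\eps/k)^{O(kt)}$, and on it both conclusions hold — the first bullet via $\|A^TS^TSB\|_F \le \|A\|_F\|B\|_F\cdot 2\|H\|_F \le \|A\|_F\|B\|_F\cdot\sqrt{\eps/k}$, giving $\|A^TS^TSB\|_F^2 \le (\eps/k)\|A\|_F^2\|B\|_F^2$, and the second bullet directly from Lemma \ref{lemma:gaussian_with_exactly_k_rows_o(1)_subspace_embedding} since $\calE$ is its defining event. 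The main obstacle — really the only nontrivial point — is bookkeeping the exponents so that the product of the two probabilities is genuinely of the claimed form $e^{-\Theta(k^2\log k)}(\eps/k)^{O(kt)}$; in particular one must check that using the crude bound $\|H\|_F \le \sqrt{kt}\cdot\max_{i,j}|H_{ij}|$ (rather than the true $\chi^2$ concentration) still leaves the per-entry interval width polynomially large in $1/k,1/t,\sqrt\eps$, so that its log is only $O(kt\log(k/\eps))$, which it is. Everything else is a direct application of the density-at-a-point lower bound already developed in the proof of the preceding lemma, plus independence of $G$ and $H$.
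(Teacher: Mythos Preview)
Your approach is essentially the paper's: exploit the orthogonality to make $G:=SU$ and $H:=SV$ independent Gaussian matrices, condition on the event $\calE$ from Lemma~\ref{lemma:gaussian_with_exactly_k_rows_o(1)_subspace_embedding} (so the subspace-embedding bullet is free), and then force the entries of $H$ to be small. The paper does exactly this without the explicit normalization step, writing $SA_i,SB_j$ in place of your $G_i,H_j$.

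There is, however, a genuine quantitative slip in your bookkeeping. You bound $\|D_A G^TH D_B\|_F \le \|A\|_F\|B\|_F\cdot\|G^TH\|_F \le 2\|A\|_F\|B\|_F\|H\|_F$ and then demand $\|H\|_F^2 \le \eps/(4k)$ for the \emph{whole} matrix $H$. That forces the per-entry threshold to scale like $\sqrt{\eps}/(k\sqrt t)$, and the resulting product probability is $(\eps/(kt))^{\Theta(kt)}$, not $(\eps/k)^{O(kt)}$. Your claim that the extra $\sqrt t$ in the base can be ``absorbed into the exponent's constant'' is false in general: $(\eps/(kt))^{kt} = (\eps/k)^{kt}\cdot t^{-kt}$, and $t^{-kt} \ge (\eps/k)^{(C-1)kt}$ would require $t \le (k/\eps)^{C-1}$, which the hypotheses do not provide.

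The fix is exactly what the paper does: work column by column. Since $\|D_A G^TH D_B\|_F^2 = \sum_{i,j}\|A_i\|_2^2\,(G^TH)_{ij}^2\,\|B_j\|_2^2$, it suffices that $(G^TH)_{ij}^2 \le \eps/k$ for every $i,j$. On $\calE$ one has $|(G^TH)_{ij}| \le \|G_i\|_2\|H_j\|_2 \le \tfrac{4}{3}\|H_j\|_2$, so it is enough that $\|H_j\|_2^2 \le \Theta(\eps/k)$ for each $j\in[t]$ separately. That only needs each of the $k$ entries of $H_j$ to be at most $\Theta(\sqrt\eps/k)$ in magnitude, a threshold independent of $t$; the per-entry probability is then $\Theta(\sqrt{\eps/k})$, the per-column probability is $(\eps/k)^{O(k)}$, and multiplying over the $t$ independent columns gives the stated $(\eps/k)^{O(kt)}$.
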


\begin{remark}
Note that the second property is the conclusion of Lemma \ref{lemma:gaussian_with_exactly_k_rows_o(1)_subspace_embedding} --- here we restate it to clarify that the approximate matrix product and the subspace embedding property hold simultaneously (note that we cannot union bound to show this, as in usual arguments, since the success probabilities are exponentially small).
\end{remark}

\begin{proof}
Observe that if we let $M = [A, B]$, then the columns of $M$ are orthogonal, meaning the entries of $SM$ are jointly independent. In particular, let $\calE$ be the event that for all $x \in \R^k$, $\frac{2}{3} \|Ax\|_2 \leq \|SAx\|_2 \leq \frac{4}{3} \|Ax\|_2$ --- by Lemma \ref{lemma:gaussian_with_exactly_k_rows_o(1)_subspace_embedding}, this holds with probability $e^{-\Theta(k^2 \log k)}$. Since $SA$ and $SB$ are independent, if we condition on $\calE$, the entries of $SB$ are jointly independent.

Now, it suffices to show that, for each $i \in [k]$ and $j \in [t]$, if we condition on $\calE$, then with probability $e^{-\poly(kt/\eps)}$, $|A_i^T S^T S B_j|^2 \leq \frac{\eps}{k} \|A_i\|_2^2 \|B_j\|_2^2$. Note that if $\calE$ holds, then $\|SA_i\|_2 \leq \frac{4}{3}\|A_i\|_2$, meaning that
$$|A_i^T S^T S B_j|^2 \leq \|SA_i\|_2^2 \|SB_j\|_2^2 \leq O(1) \|A_i\|_2^2 \|SB_j\|_2^2$$
where the first inequality is by the Cauchy-Schwarz inequality and the second is by conditioning on $\calE$.

Let us now analyze the probability that $\|SB_j\|_2^2 \leq \Theta(\frac{\eps}{k}) \|B_j\|_2^2$. Since the Gaussian distribution is $2$-stable, $SB_j$ is a random vector in $\R^k$ with $N(0, \frac{\|B_j\|_2^2}{k})$ entries. For convenience, let $X = SB_j$. Then, for $i \in [k]$, $Y_i = \frac{X_i}{\|B_i\|_2/\sqrt{k}}$ is drawn from $N(0, 1)$, meaning
$$\PrBig{|X_i| \leq \Theta\Big(\frac{\sqrt{\eps}}{k}\Big)\|B_j\|_2} \geq \PrBig{|Y_i| \leq \Theta\Big(\frac{\sqrt{\eps}}{\sqrt{k}}\Big)} \geq \Theta\Big(\frac{\sqrt{\eps}}{\sqrt{k}}\Big)$$
Thus,
$$\PrBig{|X_i| \leq \Theta\Big(\frac{\sqrt{\eps}}{k}\Big) \|B_j\|_2 \text{ for all }i \in [k]} \geq \Theta\Big(\sqrt{\frac{\eps}{k}}\Big)^k \geq \Big(\frac{\eps}{k}\Big)^{O(k)}$$
Let $\calE_j$ denote the event that each coordinate of $SB_j$ is at most $O(\frac{\sqrt{\eps}}{k}) \|B_j\|_2$ in absolute value --- we have shown that $\Pr{\calE_j} \geq (\eps/k)^{O(k)}$. Note that if $\calE_j$ holds, then
$$\|SB_j\|_2 \leq \sqrt{k} \|SB_j\|_\infty \leq \sqrt{k} \cdot O\Big(\frac{\sqrt{\eps}}{k}\Big) \|B_j\|_2 \leq O\Big(\sqrt{\frac{\eps}{k}}\Big) \|B_j\|_2$$
In summary, if $\calE$ and $\calE_j$ hold, then for all $i \in [k]$,
$$|A_i^T S^T S B_j|^2 \leq O(1) \|A_i\|_2^2 \|SB_j\|_2^2 \leq O\Big(\frac{\eps}{k}\Big) \|A_i\|_2^2 \|B_j\|_2^2$$
meaning that if $\calE$ holds and $\calE_j$ holds for all $j \in [t]$, then
$$\|A^T S^T S B\|_F^2 = \sum_{i \in [k]} \sum_{j \in [t]} |A_i^T S^T S B_j|^2 \leq \sum_{i \in [k]} \sum_{j \in [t]} O\Big(\frac{\eps}{k}\Big) \|A_i\|_2^2 \|B_j\|_2^2 \leq O\Big(\frac{\eps}{k}\Big) \|A\|_F^2 \|B\|_F^2$$
The probability that $\calE$ holds and $\calE_j$ holds for all $j \in [t]$ is
$$\PrBig{\calE \cap \calE_1 \cap \ldots \cap \calE_t} = \PrBig{\calE} \cdot \prod_{i = 1}^t \PrBig{\calE_i} \geq e^{-\Theta(k^2 \log k)} \cdot \prod_{i = 1}^t \Big(\frac{\eps}{k}\Big)^{O(k)} = e^{-\Theta(k^2 \log k)} \cdot \Big(\frac{\eps}{k}\Big)^{O(kt)}$$
since the entries of $SB$ are jointly independent when conditioning on $\calE$. This proves the lemma.
\end{proof}

Using the above two lemmas, we show that random matrices with exactly $k$ rows and i.i.d. $N(0, 1/k)$ entries can be used for multiple-response $\ell_2$ regression with success probability $e^{-\poly(k/\eps)}$, following the proof of Theorem 3.1 in \cite{cw09_numerical_linear_algebra_streaming}.

\begin{lemma}[$e^{-\poly(kt/\eps)}$ Success Probability for Multiple-Response $\ell_2$ Regression] \label{lemma:gaussian_with_exactly_k_rows_multiple_response_regression}
Let $n, k, s, t \in \N$, $\eps > 0$, and $k, t < n$. Suppose $A \in \R^{n \times k}$ has rank $k$ and $B \in \R^{n \times s}$ has rank $t$. Let $R \in \R^{k \times n}$ have i.i.d. $N(0, 1/k)$ entries. If $X^* = \argmin \|AX - B\|_F^2$ and $\widehat{X} = \argmin \|RAX - RB\|_F^2$, then $\|A\widehat{X} - B\|_F^2 \leq (1 + \eps) \|AX^* - B\|_F^2$ with probability at least $e^{-\Theta(k^2 \log k)} (\frac{\eps}{k})^{O(kt)}$.
\end{lemma}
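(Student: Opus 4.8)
The plan is to mimic the proof of Theorem 3.1 of \cite{cw09_numerical_linear_algebra_streaming}, which was reproduced in the proof of Theorem \ref{thm:general_l2_regression_sketching} above, but now using Lemma \ref{lemma:gaussian_with_exactly_k_rows_approximate_matrix_product} (which bundles together the $e^{-\poly(kt/\eps)}$-probability subspace embedding and approximate matrix product guarantees) in place of the usual $1 - \delta$ probability guarantees. The key point is that in the argument in Theorem \ref{thm:general_l2_regression_sketching}, the only two properties of $S$ that are used are (i) that $S$ is a $(1 \pm \frac13)$ $\ell_2$ subspace embedding for the column span of $A$, and (ii) that $S$ satisfies a $(\sqrt{\eps/k}, \cdot)$ approximate matrix product bound, and crucially these are applied to the \emph{specific} matrices $A$ and $B - AX^*$, whose columns span orthogonal subspaces by the normal equations for $\min_X \|AX - B\|_F$. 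So the structure I would follow is: first reduce to showing that $R$ simultaneously (a) is a $(1 \pm \frac13)$ subspace embedding for $\mathrm{colspan}(A)$ and (b) satisfies $\|A^T R^T R (B - AX^*)\|_F \le \frac{\eps}{k} \|A\|_F \|B - AX^*\|_F$ — and argue that conditioned on this good event, the rest of the argument in Theorem \ref{thm:general_l2_regression_sketching} goes through verbatim with $\tau = 0$, giving $\|A\widehat X - B\|_F \le (1 + O(\eps))\|AX^* - B\|_F$.

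The main obstacle is applying Lemma \ref{lemma:gaussian_with_exactly_k_rows_approximate_matrix_product} correctly, since that lemma requires the \emph{columns of $A$ to be mutually orthogonal, the columns of $B$ to be mutually orthogonal, and every column of $A$ to be orthogonal to every column of $B$}, whereas here $A$ is merely a rank-$k$ matrix and $B$ is merely rank-$t$. To handle this, I would preprocess by writing $A = Q_A R_A$ where $Q_A \in \R^{n \times k}$ has orthonormal columns ($R_A$ being invertible since $A$ has rank $k$), and writing $B - AX^* = Q_B R_B$ where $Q_B \in \R^{n \times t'}$ has orthonormal columns spanning $\mathrm{colspan}(B - AX^*)$, with $t' \le t$. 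By the normal equations for the regression problem $\min_X \|AX - B\|_F$, the columns of $B - AX^*$ are orthogonal to those of $A$, hence $Q_A^T Q_B = 0$. Now $Q_A$ has orthonormal (in particular mutually orthogonal) columns, $Q_B$ has orthonormal columns, and the columns of $Q_A$ are orthogonal to those of $Q_B$, so Lemma \ref{lemma:gaussian_with_exactly_k_rows_approximate_matrix_product} applies with $A \leftarrow Q_A$, $B \leftarrow Q_B$: with probability $e^{-\Theta(k^2 \log k)}(\eps/k)^{O(kt')}$, we get both $\|Q_A^T R^T R Q_B\|_F^2 \le \frac{\eps}{k}\|Q_A\|_F^2\|Q_B\|_F^2 = \frac{\eps t'}{1}$ (after noting $\|Q_A\|_F^2 = k$, $\|Q_B\|_F^2 = t'$, so the bound is $\eps t'$, which one then rescales), and simultaneously that $R$ is a $(1 \pm \frac13)$ subspace embedding for $\mathrm{colspan}(Q_A) = \mathrm{colspan}(A)$.

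Having obtained this, I would translate these guarantees back to $A$ and $B - AX^*$: the subspace embedding property for $\mathrm{colspan}(A)$ is immediate since $\mathrm{colspan}(Q_A) = \mathrm{colspan}(A)$; and the approximate matrix product bound transfers since $\|A^T R^T R (B - AX^*)\|_F = \|R_A^T Q_A^T R^T R Q_B R_B\|_F \le \|R_A\|_2 \|Q_A^T R^T R Q_B\|_F \|R_B\|_2$, and one checks that $\|R_A\|_2 \le \|A\|_F$-type bounds (or more carefully, works with the substitution directly so that the $\|A\|_F \|B - AX^*\|_F$ normalization comes out right — since the leverage-score-free argument in Theorem \ref{thm:general_l2_regression_sketching} only needs $\|U^T R^T R (B - AX^*)\|_F \le \sqrt{\eps/k}\|U\|_F \|B - AX^*\|_F$ for $U = Q_A$ an orthonormal basis, this is exactly what Lemma \ref{lemma:gaussian_with_exactly_k_rows_approximate_matrix_product} gives with the roles $A \leftarrow Q_A$, $B \leftarrow Q_B R_B = B - AX^*$). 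Then I would run through the chain of inequalities exactly as in the proof of Theorem \ref{thm:general_l2_regression_sketching}: bound $\|A(\widehat X - X^*)\|_F \le \frac32 \|Q_A^T R^T R A (\widehat X - X^*)\|_F$ using the subspace embedding, use the normal equations for the sketched problem $\min_X \|RAX - RB\|_F$ to replace $A(\widehat X - X^*)$ by $B - AX^*$ inside, apply the approximate matrix product bound to get $\|A(\widehat X - X^*)\|_F \le O(\sqrt{\eps})\|AX^* - B\|_F$, and finish with the Pythagorean theorem to conclude $\|A\widehat X - B\|_F^2 \le (1 + O(\eps))\|AX^* - B\|_F^2$. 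A final cosmetic step: replace $\eps$ by $c\eps$ for a suitable constant $c$ throughout so the conclusion reads $(1 + \eps)$ rather than $(1 + O(\eps))$, and absorb $t'$ into $t$ in the probability bound since $t' \le t$ and $(\eps/k)^{O(kt')} \ge (\eps/k)^{O(kt)}$.
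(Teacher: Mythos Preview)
Your proposal is correct and follows essentially the same route as the paper: both arguments reduce to the CW09 Theorem~3.1 framework, invoke Lemma~\ref{lemma:gaussian_with_exactly_k_rows_approximate_matrix_product} to obtain the simultaneous subspace-embedding and approximate-matrix-product guarantees (using that the normal equations make $\mathrm{colspan}(A) \perp \mathrm{colspan}(B - AX^*)$), and then run the standard chain of inequalities to bound $\|A(\widehat X - X^*)\|_F$ and finish via the Pythagorean theorem. Your treatment is in fact slightly more careful than the paper's in that you explicitly orthonormalize $B - AX^*$ via $Q_B R_B$ before invoking Lemma~\ref{lemma:gaussian_with_exactly_k_rows_approximate_matrix_product} (whose hypotheses require mutually orthogonal columns in both matrices), whereas the paper applies the lemma directly to $AX^* - B$ without spelling this out.
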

\begin{proof}
The proof follows that of Theorem 3.1 in \cite{cw09_numerical_linear_algebra_streaming}. By the normal equations and the Pythagorean theorem,
$$\|A\widehat{X} - B\|_F^2 = \|A(\widehat{X} - X^*) + AX^* - B\|_F^2 = \|A(\widehat{X} - X^*)\|_F^2 + \|AX^* - B\|_F^2$$
since the columns of $AX^* - B$ are orthogonal to those of $A$. Now we bound $\|A(\widehat{X} - X^*)\|_F$ as in the proof of Lemma 3.3 of \cite{cw09_numerical_linear_algebra_streaming}. Let $U \in \R^{n \times k}$ be an orthonormal basis for the column span of $A$. Condition on the event $\calE$, that $\|U^T R^T R (AX^* - B)\|_F^2 \leq \frac{\eps}{k} \|U\|_F^2 \|AX^* - B\|_F^2 = \eps \|AX^* - B\|_F^2$ and for all $x \in \R^k$, $\frac{2}{3} \|x\|_2 \leq \|RUx\|_2 \leq \frac{4}{3} \|x\|_2$ --- by Lemma \ref{lemma:gaussian_with_exactly_k_rows_approximate_matrix_product}, $\calE$ occurs with probability at least $e^{-\Theta(k^2 \log k)} e^{-k(k + t)}$. Then,
\begin{equation}
\begin{split}
\|A(\widehat{X} - X^*)\|_F
& = \|UU^T A(\widehat{X} - X^*)\|_F \\
& = \|U^T A(\widehat{X} - X^*)\|_F \\
& = \|(U^T R^T R U) U^T A(\widehat{X} - X^*) + (I - U^T R^T R U) U^T A(\widehat{X} - X^*)\|_F \\
& \leq \|U^T R^T R UU^T A(\widehat{X} - X^*)\|_F + \|(I - U^T R^T R U) U^T A (\widehat{X} - X^*)\|_F \\
& = \|U^T R^T R A(\widehat{X} - X^*)\|_F + \|I - U^T R^T R U\|_2 \|U^T A (\widehat{X} - X^*)\|_F \\
& \leq \|U^T R^T R A(\widehat{X} - X^*)\|_F + \frac{1}{3} \|U^T A (\widehat{X} - X^*)\|_F \\
& = \|U^T R^T R A(\widehat{X} - X^*)\|_F + \frac{1}{3} \|A (\widehat{X} - X^*)\|_F
\end{split}
\end{equation}
where the last inequality is because $R$ is a $(1 \pm \frac{1}{3})$ $\ell_2$ subspace embedding for $A$. Rearranging gives
$$\|A(\widehat{X} - X^*)\|_F \leq \frac{3}{2} \|U^T R^T R A (\widehat{X} - X^*)\|_F$$
We bound the right hand side using the inequality $\|U^T R^T R (AX^* - B)\|_F^2 \leq \eps \|AX^* - B\|_F^2$ (recall that we are conditioning on $\calE$). By the normal equations for the problem $\min_X \|RAX - RB\|_F$,
$$U^T R^T R (A\widehat{X} - B) = A^T R^T R (A\widehat{X} - B) = 0$$
meaning that
$$U^T R^T R A(\widehat{X} - X^*) = U^T R^T R (A\widehat{X} - B + B - AX^*) = U^T R^T R (B - AX^*)$$
and therefore, conditioning on $\calE$,
$$\|A(\widehat{X} - X^*)\|_F \leq \frac{3}{2} \|U^T R^T R A(\widehat{X} - X^*)\|_F = \frac{3}{2} \|U^T R^T R (B - AX^*)\|_F \leq O(\eps) \|AX^* - B\|_F^2$$
Therefore,
\begin{equation}
\begin{split}
\|A\widehat{X} - B\|_F^2
& = \|A(\widehat{X} - X^*)\|_F^2 + \|AX^* - B\|_F^2 \\
& = O(\eps) \|AX^* - B\|_F^2 + \|AX^* - B\|_F^2 \\
& = (1 + O(\eps)) \|AX^* - B\|_F^2
\end{split}
\end{equation}
and taking square roots gives
$$\|A\widehat{X} - B\|_F \leq \sqrt{1 + O(\eps)} \|AX^* - B\|_F \leq (1 + O(\eps)) \|AX^* - B\|_F$$
This completes the proof of the lemma.
\end{proof}
In the algorithms described below, the previous two lemmas will be used with $t\in O(\frac{k}{\eps})$

\subsection{FPT Tucker-$(p, q)$ Decomposition}
\label{subsection:fpt_tucker_rank_exactly_k}

We now prove the approximation error and running time guarantees of Algorithm \ref{algorithm:fpt_rank_exactly_k_1_plus_eps_approximation}.

\begin{proof}[Proof of Theorem \ref{thm:strongly_fpt_time_algorithm_tucker}]
We split the analysis of Algorithm \ref{algorithm:fpt_rank_exactly_k_1_plus_eps_approximation} into three parts: (1) we show that among the guesses $U^1, U^2, \ldots, U^p$, there is a good solution for Tucker-$(p, q)$, (2) we then show that, using the Kronecker product regression algorithm of \cite{djssw19_kronecker_product_regression} (with very minor modifications) we can identify that good solution $U^1, U^2, \ldots, U^p$, and (3) we analyze the running time.

\paragraph{Part 1: Existence of a Good Guess:} Let $T$ be the number of trials per mode in Algorithm \ref{algorithm:fpt_rank_exactly_k_1_plus_eps_approximation} --- that is, the number of guesses we make for each of $U^1, U^2, \ldots, U^p$. We will determine how to select a sufficiently large value of $T$ at the end of the algorithm. For the purpose of our analysis, for a mode $m \in [p]$, we use the notation $U^m_t$ to refer to the guess for $U^m$ that is generated on the $t^{th}$ trial. Throughout the proof, fix an optimal solution $U^1_*, U^2_*, \ldots, U^p_* \in \R^{n \times k}$, and $G_* \in \R^{k \times \ldots \times k \times n \times \ldots \times n}$. We will use an argument reminiscent of the ``iterative existential argument'' of \cite{swz19_tensor_low_rank} to show that with the desired success probability, there exist $\widehat{U^1}, \ldots \widehat{U^p}$, with $\widehat{U^1}$ among the guesses for $U^1$, $\widehat{U^2}$ among the guesses for $U^2$, and so on, such that
\begin{equation*}
\begin{split}
\min_G \|(\widehat{U^1} \times \ldots \times \widehat{U^p}  &\times I_n \times \ldots \times I_n) \tensorvec(G) - \tensorvec(A)\|_F  \\ &\leq (1 + \eps) \|(U^1_* \times \ldots \times U^p_* \times I_n \times \ldots \times I_n) \tensorvec(G_*) - \tensorvec(A)\|_F
\end{split}
\end{equation*}
For each mode $m \in [p]$, let $\calS_m \subset \R^{n \times k}$ be the set of all guesses made for $U^m$ over the course of Algorithm \ref{algorithm:fpt_rank_exactly_k_1_plus_eps_approximation}. Define
\begin{equation}
\begin{split}
\widehat{U^1} 
& = \argmin_{U \in \calS_1} \|(U \times U^2_* \times \ldots \times U^p_* \times I_n \times \ldots \times I_n) \tensorvec(G_*) - \tensorvec(A) \|_F \\
& = \argmin_{U \in \calS_1} \|U M_1(G_*) (U^2_* \times \ldots \times U^p_* \times I_n \times \ldots \times I_n)^T - M_1(A)\|_F
\end{split}
\end{equation}
For each $m \in [p]$, $m \neq 1, p$, define
\begin{equation} 
\begin{split} 
\widehat{U^m} 
& = \argmin_{U \in \calS_m} \|(\widehat{U^1} \times \ldots \times \widehat{U^{m - 1}} \times U \times U^{m + 1}_* \times \ldots \times U^p_* \times I_n \times \ldots \times I_n) \tensorvec(G_m) - \tensorvec(A)\|_F \\
& = \argmin_{U \in \calS_m} \|U M_m(G_m) (\widehat{U^1} \times \ldots \times \widehat{U^{m - 1}} \times U^{m + 1}_* \times \ldots \times U^p_* \times I_n \times \ldots \times I_n)^T - M_m(A)\|_F
\end{split}
\end{equation}
where $G_m$ is the optimal core tensor corresponding to $\widehat{U^1}, \ldots, \widehat{U^{m - 1}}, U^m_*, \ldots, U^p_*$, and define
\begin{equation}
\begin{split}
\widehat{U^p} 
& = \argmin_{U \in \calS_p} \|(\widehat{U^1} \times \ldots \times \widehat{U^{p - 1}} \times U \times I_n \times \ldots \times I_n) \tensorvec(G_p) - \tensorvec(A)\|_F \\
& = \argmin_{U \in \calS_p} \|U M_p(G_p) (\widehat{U^1} \times \ldots \times \widehat{U^{p - 1}} \times I_n \times \ldots \times I_n)^T - M_p(A)\|_F
\end{split}
\end{equation}
For convenience, define $W_1 = M_1(G_*) (U^2_* \times \ldots \times U^p_* \times I_n \times \ldots \times I_n)^T$, for $m \in [p]$, $m \neq 1, p$, define $W_m = M_m(G_m) (\widehat{U^1} \times \ldots \times \widehat{U^{m - 1}} \times U^{m + 1}_* \times \ldots \times U^p_* \times I_n \times \ldots \times I_n)^T$, and define $W_p = M_p(G_p) (\widehat{U^1} \times \ldots \times \widehat{U^{p - 1}} \times I_n \times \ldots \times I_n)^T$.

Now, fix a mode $m \in [p]$. First, let $\calE^m$ be the event that the Countsketch matrix $S_m$ is an affine embedding for $W_m$ and $M_m(A)$, that is, for all $U \in \R^{n \times k}$,
$$\|UW_m S_m^T - M_m(A) S_m^T\|_F = \Big(1 \pm O\Big(\frac{\eps}{p}\Big)\Big) \|UW_m - M_m(A)\|_F$$
Observe that if $S_m$ has $O(\frac{p^3k^2}{\eps^2})$ rows, then by Theorem \ref{thm:countsketch_affine_embedding}, $\calE^m$ holds with probability at least $1 - O(1/p)$ (note that $S_m$ is independent of $W_m$). Next, let $\calF^m$ be the event that the random sign matrix $T_m$ preserves $(1 + \frac{\eps}{p^2})$-approximate $\ell_2$ regression solutions to the problem $\min_U \|UW_m S_m^T - M_m(A) S_m^T\|_F$. In other words, let $\calF^m$ be the event that for any $U \in \R^{n \times k}$ for which
$$\|UW_mS_m^T T_m^T - M_m(A) S_m^T T_m^T\|_F \leq \Big(1 + O\Big(\frac{\eps}{p^2}\Big)\Big) \min_U \|UW_mS_m^T T_m^T - M_m(A) S_m^T T_m^T\|_F$$
it also holds that
$$\|U W_m S_m^T - M_m(A) S_m^T\|_F \leq \Big(1 + O\Big(\frac{\eps}{p}\Big)\Big) \min_U \|U W_m S_m^T - M_m(A) S_m^T\|_F$$
Then, by Lemma \ref{thm:approximate_multiple_l2_regression} (with $\eps$ replaced by $\frac{\eps}{p}$, $\tau$ replaced by $\frac{\eps}{p^2}$, and $\delta = \frac{1}{p}$), $\calF^m$ also holds with probability at least $1 - O(1/p)$ as long as $T_m$ has at least $O((pk \log p)/\eps)$ rows (note again that $T_m$ is independent of $S_m$ and $W_m$).

Now, for any $t \in [T]$, let $\calG^m_t$ be the event that the Gaussian matrix $R_{m, t}$ with $k$ rows and i.i.d. $N(0, 1/k)$ entries, produced on the $t^{th}$ trial, obtains a $(1 + \frac{\eps}{p^2})$-approximate $\ell_2$ regression solution to the above problem (note that the Gaussian matrix is denoted $R_m$ in Algorithm \ref{algorithm:fpt_rank_exactly_k_1_plus_eps_approximation}, but for the purposes of the analysis we denote it as $R_{m, t}$). In other words, $\calG^m_t$ is the event that if we define
$$U^m_t = \argmin_U \|U W_m S_m^T T_m^T R_{m, t}^T - M_m(A) S_m^T T_m^T R_{m, t}^T \|_F = M_m(A) S_m^T T_m^T R_{m, t}^T (W_mS_m^T T_m^T R_{m, t}^T)^\dagger$$
then
$$\|U^m_t W_m S_m^T T_m^T - M_m(A) S_m^T T_m^T\|_F \leq \Big(1 + O\Big(\frac{\eps}{p^2}\Big)\Big) \min_U \|U W_m S_m^T T_m^T - M_m(A) S_m^T T_m^T\|_F$$
By Lemma \ref{lemma:gaussian_with_exactly_k_rows_multiple_response_regression}, for a fixed $m \in [p]$ and $t \in [T]$, $\calG^m_t$ holds with probability at least $(\frac{\eps}{pk})^{O(\frac{pk^2 \log p}{\eps})}$, where we replaced $\eps$ with $\frac{\eps}{p^2}$ in the statement of Lemma \ref{lemma:gaussian_with_exactly_k_rows_multiple_response_regression}, and also used the fact that $T_m$ has $t = O(\frac{p k \log p}{\eps})$ rows. (Note that above we have defined $U^m_t = M_m(A) S_m^T T_m^T R_{m, t}^T (W_m S_m^T T_m^T R_{m, t}^T)^\dagger$, whereas in Algorithm \ref{algorithm:fpt_rank_exactly_k_1_plus_eps_approximation} we define $U^m_t = M_m(A) S_m^T T_m^T R_{m, t}^T$. These are essentially equivalent, since the factor $(W_m S_m^T T_m^T R_{m, t}^T)^\dagger \in \R^{k \times k}$ just amounts to a change of basis --- thus, for the purpose of the algorithm, this factor is not needed (note that it anyways cannot be computed efficiently) and for the purpose of the analysis, we can simply redefine $G_{m + 1}$ later in the analysis, using the value $U^m_t = M_m(A) S_m^T T_m^T R_{m, t}^T$ instead of $M_m(A) S_m^T T_m^T R_{m, t}^T (W_mS_m^T T_m^T R_{m, t}^T)^\dagger$. Let $f = (\frac{\eps}{pk})^{O(\frac{pk^2 \log p}{\eps})}$. Observe that if we select $T = O(\frac{p}{f}) = (\frac{pk}{\eps})^{O(\frac{pk^2 \log p}{\eps})}$, then with probability at least
$$1 - (1 - f)^T = 1 - (1 - f)^{O(p/f)} \geq 1 - e^{-O(p)}$$
there exists at least one $t \in [T]$ such that $\calG^m_t$ holds.

Let $\calG^m$ be the event that $\calG^m_t$ holds for at least one $t \in [T]$. Now, let $\calQ^m = \calE^m \cap \calF^m \cap \calG^m$. Observe that $\calQ^m$ holds with probability at least $(1 - O(1/p))^2 (1 - e^{-O(p)}) \geq (1 - O(1/p))$. If $\calQ^m$ holds, then $\calG^m$ holds, meaning that for some $t \in [T]$,
$$\|U^m_t W_m S_m^T T_m^T - M_m(A) S_m^T T_m^T\|_F \leq \Big(1 + O\Big(\frac{\eps}{p^2}\Big)\Big) \min_U \|U W_m S_m^T T_m^T - M_m(A) S_m^T T_m^T\|_F$$
In addition $\calF^m$ holds, meaning that
$$\|U^m_t W_m S_m^T - M_m(A) S_m^T\|_F \leq \Big(1 + O\Big(\frac{\eps}{p}\Big) \Big) \min_U \|U W_m S_m^T - M_m(A) S_m^T\|_F$$
Finally, $\calE^m$ holds, meaning that $S_m$ is a $(1 \pm O(\eps/p))$ $\ell_2$ affine embedding for $W_m$ and $M_m(A)$, and
$$\|U^m_t W_m - M_m(A)\|_F \leq \Big(1 + O\Big(\frac{\eps}{p}\Big)\Big) \min_U \|U W_m - M_m(A) \|_F$$
In summary, $\calQ^m$ holds with probability at least $1 - O(1/p)$ and implies that
$$\|U^m_t W_m - M_m(A)\|_F \leq \Big(1 + O\Big(\frac{\eps}{p}\Big)\Big) \min_U \|U W_m - M_m(A) \|_F$$
for some $t \in [T]$ --- in particular, by the definition of $\widehat{U^m}$, this implies that
\begin{equation} \label{eq:event_qm}
\begin{split}
\|\widehat{U^m} W_m - M_m(A)\|_F \leq \Big(1 + O\Big(\frac{\eps}{p}\Big)\Big) \min_U \|U W_m - M_m(A) \|_F
\end{split}
\end{equation}
Finally, observe that by reshaping and induction on $m$, we can show that if $\calQ^1, \ldots, \calQ^{m - 1}$ hold, then
$$\min_U \|UW_m - M_m(A)\|_F \leq \Big(1 + O\Big(\frac{\eps}{p}\Big)\Big)^{m - 1} \|(U^1_* \times \ldots \times U^p_* \times I_n \times \ldots \times I_n) \tensorvec(G_*) - \tensorvec(A)\|_F$$
To see why, note that we can iteratively replace $U^m_*$ with $\widehat{U^m}$ to lose a $(1 + \frac{\eps}{p})$ factor in the error (this follows from Equation \ref{eq:event_qm}, which holds when $\calQ^m$ holds for any $m \in [p]$) and replacing $G_{m - 1}$ with $G_m$ can only decrease the error. Thus, if $\calQ^1, \ldots \calQ^p$ all hold, then
\begin{equation} \label{eq:existence_of_solution}
\begin{split}
\|(\widehat{U^1} \times \ldots \times \widehat{U^p} & \times I_n \times \ldots \times I_n)  \tensorvec(G_p) - \tensorvec(A) \|_F \\
& \leq \Big(1 + O\Big(\frac{\eps}{p}\Big)\Big)^p \|(U^1_* \times \ldots \times U^p_* \times I_n \times \ldots \times I_n) \tensorvec(G_*) - \tensorvec(A)\|_F \\
& \leq (1 + O(\eps)) \|(U^1_* \times \ldots \times U^p_* \times I_n \times \ldots \times I_n) \tensorvec(G_*) - \tensorvec(A)\|_F
\end{split}
\end{equation}
where the second inequality is because $(1 + O(\eps/p))^p \leq e^{\frac{\eps}{p} \cdot p} = e^\eps$, and by the inequality $e^x \leq \frac{1}{1 - x}$ for $x > 0$. Let $\calQ$ be the event that Equation \ref{eq:existence_of_solution} holds --- then, $\calQ$ occurs with probability at least
$$\PrBig{\calQ^1 \cap \ldots \cap \calQ^p} \geq \prod_{i = 1}^p \PrBig{\calQ^i} \geq \Big(1 - O\Big(\frac{1}{p}\Big)\Big)^p \geq \frac{9}{10}$$
Here, the last inequality holds as long as $\calQ^m$ occurs with probability at least $1 - \frac{C}{p}$ for a sufficiently small constant $C$, and this can be achieved by setting the failure probabilities sufficiently small in $\calE^m$ and $\calF^m$, and multiplying the number of trials $T$ by a sufficiently large constant.

\paragraph{Part 2: Selecting the Best (or nearly best) Guess Using Kronecker Product Regression:} To identify $\widehat{U^1}, \ldots \widehat{U^p}$ defined above, we try all possible $p$-tuples of guesses. In other words, if for any mode $m \in [p]$, $\calS_m \subset \R^{n \times k}$ is the set of all guesses made by algorithm for $U^m$, then we try all $p$-tuples in $\calS_1 \times \ldots \times \calS_p$ (here the $\times$ symbol denotes the Cartesian product). Each $\calS_m$ has $T$ elements, meaning that there are overall $T^p$ guesses --- thus, we will instantiate the algorithm of \cite{djssw19_kronecker_product_regression} (with the parameters changed so that it can output a good estimate of the error, in addition to outputting a near-optimal solution --- the changes are described in Theorem \ref{thm:kronecker_product_regression}) with failure probability $\delta = \frac{1}{CT^p}$, where $C$ is an approprately large constant.

Note that if we directly apply the algorithm of \cite{djssw19_kronecker_product_regression} with $U^1 \times \ldots \times U^p \times \ldots I_n \times \ldots \times I_n$, the running time of this subroutine will be at least $n^{q - p}$, meaning the overall running time will be at least $n^{q - p} \cdot T^p$. Thus, we define the matrix $A_p$ and use a Countsketch matrix to obtain a PCP of $A_p$ as defined in Algorithm \ref{algorithm:fpt_rank_exactly_k_1_plus_eps_approximation}. Specifically, let $A_p \in \R^{n^p \times n^{q - p}}$ so that the entry of $A_p$ in the index $((i_1, i_2, \ldots, i_p), (i_{p + 1}, \ldots, i_q))$ is $A(i_1, i_2, \ldots, i_q)$, as in Algorithm \ref{algorithm:fpt_rank_exactly_k_1_plus_eps_approximation}. Then, for any tuple $(U^1, \ldots, U^p)$, note that the Kronecker product regression problem
$$\min_G \|(U^1 \times \ldots \times U^p \times I_n \times \ldots \times I_n) \tensorvec(G) - \tensorvec(A) \|_F$$
is equivalent to
$$\min_{X\in \R^{k^p\times n^{q-p}}} \|(U^1 \times \ldots \times U^p)X - A_p\|_F$$
Observe also that by Theorem \ref{thm:countsketch_pcps_zero_additive}, if $S_{\textsc{PROJ}}$ is a Countsketch matrix having $O(\frac{k^{2p}}{\eps^2})$ rows, then with probability at least $\frac{999}{1000}$, $S_{\textsc{PROJ}}$ preserves the error when projecting $A_p$ onto any $k^p$-dimensional subspace up to a $(1 \pm \eps)$ factor, i.e. if $P$ is any rank-$k^p$ projection matrix, then
$$\|A_p S_{\textsc{PROJ}}^T - PA_p S_{\textsc{PROJ}}^T\|_F = (1 \pm \eps) \|A_p - PA_p\|_F$$
Note that regression problem $\min_X \|(U^1 \times \ldots \times U^p) X - A_p\|_F$ is equivalent to computing the error when projecting $A_p$ onto the column span of the $k^p$-dimensional subspace spanned by $U^1 \times \ldots \times U^p$, and thus
$$\min_X \|(U^1 \times \ldots \times U^p)X - A_p S_{\textsc{PROJ}}^T\|_F = (1 \pm \eps) \min_X \|(U^1 \times \ldots \times U^p)X - A_p \|_F$$
For convenience, let $M = A_p S_{\textsc{PROJ}}^T$. Observe that by Theorem \ref{thm:kronecker_product_regression}, for each $i \in [O(k^{2p}/\eps^2)]$, it is possible to obtain an estimate $\widehat{e_i} = (1 \pm O(\eps)) \min_{X_i} \|(U^1 \times \ldots \times U^p)X_i - M_i\|_F^2 - \|M_i\|_F^2$ in $O(nkp) + (k^p/(\eps\delta))^{O(1)} = O(nkp) + (\frac{pk}{\eps})^{O(\frac{p^2k^2 \log p}{\eps})}$ running time (and failure probability $\delta = \frac{1}{CrT^p}$ where $r$ is the number of rows in the Countsketch matrix $S_{\textsc{PROJ}}$). Since $M$ has $O(\frac{k^{2p}}{\eps^2})$ columns, it is therefore possible to obtain (with probability $1 - \frac{1}{CT^p}$) an estimate
$$\widehat{e} = (1 \pm O(\eps)) \min_X \|(U^1 \times \ldots \times U^p) X - M\|_F^2 - \|M\|_F^2$$
in running time $n \cdot (k^p/\eps)^{O(1)} + (\frac{pk}{\eps})^{O(\frac{p^2k^2 \log p}{\eps})}$. Now recall that in Algorithm \ref{algorithm:fpt_rank_exactly_k_1_plus_eps_approximation}, we separately pre-compute $\|M\|_F^2 = \|A_p S_{\textsc{PROJ}}\|_F^2$ before evaluating the errors from each of the tuples $(U^1, \ldots, U^p)$ --- thus, we can use $\widehat{e} + \|M\|_F^2$ as our estimate of the error of $(U^1, \ldots, U^p)$. Observe that with probability $1 - \frac{1}{CT^p}$,
\begin{equation}
\begin{split}
\widehat{e} + \|M\|_F^2
& = (1 \pm O(\eps)) \min_X \|(U^1 \times \ldots \times U^p) X - A_p S_{\textsc{PROJ}} \|_F^2 \\
& = (1 \pm O(\eps)) \min_X \|(U^1 \times \ldots \times U^p) X - A_p\|_F^2 \\
& = (1 \pm O(\eps)) \min_G \|(U^1 \times \ldots \times U^p \times I_n \times \ldots \times I_n) \tensorvec(G) - \tensorvec(A) \|_F^2
\end{split}
\end{equation}
where the second equality is by the consequence of Theorem \ref{thm:countsketch_pcps_zero_additive}, since $S_{\textsc{PROJ}}$ preserves the error when projecting $A_p$ onto the subspace $U^1 \times \ldots \times U^p$. Thus, with probability $1 - \frac{1}{C}$, simultaneously for all tuples $(U^1, \ldots, U^p) \in \calS_1 \times \ldots \times S_p$, if $\widehat{e}$ is the error estimate for $(U^1, \ldots, U^p)$ defined above, then
$$\widehat{e} + \|M\|_F^2 = (1 \pm O(\eps)) \min_G \|(U^1 \times \ldots \times U^p \times I_n \times \ldots \times I_n) \tensorvec(G) - \tensorvec(A) \|_F^2$$
In particular, if $(\widetilde{U^1}, \ldots, \widetilde{U^p})$ is the $p$-tuple which achieves the smallest value of $\widehat{e} + \|M\|_F^2$ over the course of the algorithm, then
\begin{equation}
\begin{split}
\min_G \|(\widetilde{U^1} \times \ldots \times \widetilde{U^p} & \times I_n \times \ldots \times I_n)  \tensorvec(G) - \tensorvec(A) \|_F^2 \\
& \leq (1 + O(\eps)) \min_G \|\widehat{U^1} \times \ldots \times \widehat{U^p} \times I_n \times \ldots \times I_n) \tensorvec(G) - \tensorvec(A) \|_F^2 \\
& \leq (1 + O(\eps)) \|(U^1_* \times \ldots \times U^p_* \times I_n \times \ldots \times I_n) \tensorvec(G_*) - \tensorvec(A) \|_F^2
\end{split}
\end{equation}
Note that the second inequality is by Equation \ref{eq:existence_of_solution} (note that it holds with probability at least $\frac{9}{10}$ as mentioned in the discussion below Equation \ref{eq:existence_of_solution}). In addition, the first inequality holds with probability at least $1 - \frac{1}{C}$, where $C$ can be chosen to be a large constant.

\paragraph{Part 3: Running Time Analysis:} First let us analyze the running time of generating the guesses for $U^m$ and forming the sets $\calS_m$ for $m \in [p]$. Let us first calculate the time needed to form $\calS_m$ --- then, this running time will be multiplied by $p$. First, we multiply $M_m(A) \in \R^{n \times n^{q - 1}}$ on the right by a Countsketch matrix $S_m^T$, which takes $O(\nnz(A))$ time. We then multiply this matrix on the right by a sign matrix $T_m$, and this takes $O(n \cdot \frac{p k \log p}{\eps})$. Next, for $T$ trials, we create a Gaussian matrix $R_{m, t} \in \R^{k \times O(\frac{p k \log p}{\eps}))}$, and then multiply $M_m(A) S_m^T T_m^T \in \R^{n \times O(\frac{p k \log p}{\eps})}$ by $R_{m, t}^T$ --- this operation takes $O(\frac{npk^2 \log p}{\eps})$ time, and over the course of $T$ trials takes $O(\frac{npk^2 \log p}{\eps}) \cdot (\frac{pk}{\eps})^{O(\frac{pk^2 \log p}{\eps})} = n \cdot (\frac{pk}{\eps})^{O(\frac{pk^2 \log p}{\eps})}$ time. Thus, the running time needed to form $\calS_m$ for a single $m \in [p]$ is $O(\nnz(A)) + n \cdot (\frac{pk}{\eps})^{O(\frac{pk^2 \log p}{\eps})}$. The running time needed to form $\calS_m$ for all $m \in [p]$ is thus $O(p \cdot \nnz(A)) + n \cdot (\frac{pk}{\eps})^{O(\frac{pk^2 \log p}{\eps})}$.

Now, we analyze the running time needed to evaluate the error of all the tuples $(U^1, \ldots, U^p)$ in $\calS_1 \times \ldots \calS_p$. The time needed to compute $A_p S_{\textsc{PROJ}}^T$ and initialize $\textsc{MinError} \gets \|A\|_F^2$ is $O(\nnz(A))$. We must also pre-compute $\|M\|_F^2$, which can be done in $\nnz(M)$ time, which is at most $\nnz(A)$ time since there are at most $\nnz(A)$ nonzero entries of $M$. Now, the number of tuples $(U^1, \ldots, U^p) \in \calS_1 \times \ldots \times \calS_p$ is $T^p = (\frac{pk}{\eps})^{O(\frac{p^2 k^2 \log p}{\eps})}$. For each tuple, and for each of the $O(\frac{k^{2p}}{\eps^2})$ columns $M_j$ of $M$, we run $\textsc{ErrorEstimate}(U^1, \ldots, U^p, M_j, \delta = \frac{1}{1000r T^p})$ (the Kronecker product regression algorithm of \cite{djssw19_kronecker_product_regression}), and since the dimension $d$ of the regression problem is $k^p$, the running time of this step is $\sum_{m = 1}^p \nnz(U^m) + (\frac{k^p}{\eps \delta})^{O(1)} = O(nkp) + (\frac{pk}{\eps})^{O(\frac{p^2 k^2 \log p}{\eps})}$. This running time is incurred $T^p \cdot O(\frac{k^{2p}}{\eps^2}) = (\frac{pk}{\eps})^{O(\frac{p^2k^2 \log p}{\eps})}$ times, meaning the overall running time needed to find the best (up to a $1 \pm \eps$ factor) tuple $(U^1, \ldots, U^p)$ is $n \cdot (\frac{pk}{\eps})^{O(\frac{p^2 k^2 \log p}{\eps})}$. Combining the running times from the first and the second steps, we find that the overall running time is $O(p \cdot \nnz(A)) + n \cdot (\frac{pk}{\eps})^{O(\frac{p^2 k^2 \log p}{\eps})}$. This completes the proof.
\end{proof}

\begin{remark}
Note that we can also output a $(1 + \eps)$-approximate core tensor $G$ using the algorithm of \cite{djssw19_kronecker_product_regression}. In the case $p = q$ this will only lead to an additional term of $O(qnk) + k^{O(q)}/\eps^{O(1)}$ in the running time. In general the additional running time would be $n^{q - p} \cdot (O(pnk) + k^{O(p)}/\eps^{O(1)})$.
\end{remark}

\newpage
\subsection{FPT Tensor Train Decomposition}

In this section, we 
obtain a fixed-parameter tractable algorithm for Tensor Train decomposition with output rank exactly $k$. Our algorithm is shown in Algorithm \ref{algorithm:fpt_rank_exactly_k_tensor_train}, and we give guarantees for this algorithm in Theorem \ref{thm:fpt_tensor_train_rank_exactly_k}.

\begin{algorithm}
\caption{$(1 + \eps)$-approximation algorithm for Tensor Train decomposition with output rank exactly $k$ and fixed-parameter tractable running time. This algorithm is similar to our bicriteria algorithm for tree networks (specialized to the case of Tensor Train), but uses an exponential number of trials for each new mode (with Gaussian matrices having $k$ columns) to ensure that the output rank is exactly $k$. \textsc{ApproxBestCandidate} efficiently finds the candidate in $L_q$ that (approximately) minimizes the Frobenius norm distance to $A$, and is shown in Algorithm \ref{algorithm:tensor_train_fpt_find_best_candidate}.}
\label{algorithm:fpt_rank_exactly_k_tensor_train}
\begin{algorithmic}
\Require $A \in \R^{n \times \ldots \times n}$ with $q$ modes, $\eps, \delta \in (0, 1)$, $k \in \N$
\Ensure $U_1 \in \R^{n \times k}$, $U_2, \ldots, U_{q - 1} \in \R^{k \times n \times k}$, $U_q \in \R^{k \times n}$
\State $P_1 \gets $ An $r \times n^{q - 1}$ Countsketch matrix, where $r \in O(\frac{q^3 k^2}{\eps^2 \delta})$.
\State $L_1 \gets \{\}$   // This will contain the candidates for $U^1$.
\State $T \gets \log\Big(\frac{q}{\delta}\Big) \cdot \Big(e^{\Theta(k^2 \log k) + \Theta(qk^2/\eps \cdot \log(qk/\eps))}\Big)$
\For{$t = 1, \ldots, T$}
    \State $R_{1, t} \gets $ A $m \times r$ matrix whose entries are drawn i.i.d. from $\{-\frac{1}{\sqrt{m}}, \frac{1}{\sqrt{m}}\}$, where $m \in O(\frac{qk}{\eps})$
    \State $R_{1, t} \gets $ An $k \times m$ matrix whose entries are drawn i.i.d. from $N(0, \frac{1}{k})$
    \State $\widetilde{U^1_t} \gets M_1(A) P_1^T Q_1^T R_1^T$
    \State $L_1 \gets L_1 \cup \{\widetilde{U^1_t}\}$
\EndFor
\State $S_1 \gets $ An $s_1 \times n$ Countsketch matrix, where $s_1 = O(\frac{q^4 k^2}{\eps^2 \delta})$
\For{$i=2, \ldots, q - 1$}
    \State $P_i \gets $ An $n^{q - i} \times r$ Countsketch matrix, where $r = O(\frac{q^3k^2}{\eps^2 \delta})$
    \State $L_i \gets \{\}$    // This will contain the candidates for $(U^1, \ldots, U^i)$.
    
    \item []
    \State // Iterate through all the candidates in $L_{i - 1}$, and for each one, add $T$ candidates for $U^i$.
    \For{$(U^1, \ldots, U^{i - 1}) \in L_{i - 1}$}
        \For{$t = 1, \ldots, T$}
            \State $Q_{i, t} \gets $ A $m \times r$ matrix whose entries are drawn i.i.d. from $\{-\frac{1}{\sqrt{m}}, \frac{1}{\sqrt{m}}\}$, where 
            \State \,\,\,\,\,\,\,\,\,\,\,\,\,\,\,\,\, $m = O(k^2q/\eps)$
            \State // Here $\widetilde{U^i} \in \R^{n \times k^2}$ is the solution to the regression problem $$\min_{U} \|U (M_i(U^1 \circ \cdots \circ U^{i - 1}) \times M_1(U^{i + 1} \circ \cdots \circ U^q))(\calL_{i - 1}^T \times P_i^T Q_i^T R_i^T) - M_i(A) (\calL_{i - 1}^T \times P_i^T Q_i^T R_i^T)\|_F$$
            \State // and $\calL_{i - 1}\in \R^{s_{i-1}\times n^{i-1}}$ is as in Definition \ref{def:tensor_train_sketch_with_placeholders}, specified by the Countsketch matrices 
            \State // $S_1, \ldots, S_{i - 1}$.
        
            \State $R_{i, t} \gets $ An $k \times m$ matrix whose entries are i.i.d. $N(0, \frac{1}{k})$
            \State $\widetilde{U^i} \gets M_i(A) (\calL_{i - 1}^T \times P_i^T Q_{i, t}^T R_{i, t}^T)((M_i(U^1 \circ \cdots \circ U^{i - 1})\calL_{i - 1}^T)^\dagger \times I_k)$
            \State $L_i \gets L_i \cup \{(U^1, \ldots, U^{i - 1}, \widetilde{U^i})\}$
        \EndFor
    \EndFor
    
    \State $S_i \gets $ An $s_i \times s_{i - 1}n$ Countsketch matrix, where $s_i = O(\frac{q^4 k^2}{\eps^2 \delta})$
\EndFor
\algstore{TT_exact_k}
\end{algorithmic}
\end{algorithm}

\begin{algorithm}
\textbf{Algorithm \ref{algorithm:fpt_rank_exactly_k_tensor_train}} $(1+\eps)-$approximation algorithm for Tensor Train decomposition with output rank exactly $k$ continued.
\begin{algorithmic}
\algrestore{TT_exact_k}
\item []
\State // Now we find $U^q$ for each of the candidates for $(U^1, \ldots, U^{q - 1})$.
\State $L_q \gets \{\}$
\For{$(U^1, \ldots, U^{q - 1}) \in L_{q - 1}$}
    \State // Here $\widetilde{U^q} \in \R^{n \times k}$ is the solution to the regression problem $$\min_U \|U M_q(U^1 \circ \cdots \circ U^{q - 1})\calL_{q - 1}^T - M_q(A) \calL_{q - 1}^T\|_F$$
    \State // which can be solved exactly, and $\widetilde{U^q}$ has $k$ columns since the last mode of $U^{q - 1}$ has dimension $k$.
    \State $\widetilde{U^q} \gets M_q(A) \calL_{q - 1}^T(M_q(U^1\circ \cdots \circ U^{q-1})\calL_{q-1}^T)^\dagger$
    \State $L_q \gets L_q \cup \{(U^1, \ldots, U^{q - 1}, \widetilde{U^q})\}$
\EndFor

\State $(U^1, \ldots, U^q) \gets \textsc{ApproxBestCandidate}(L_q, A)$ \\
\Return $(U^1, \ldots, U^q)$
\end{algorithmic}
\end{algorithm}

\begin{algorithm}
\caption{This is a subroutine in Algorithm \ref{algorithm:fpt_rank_exactly_k_tensor_train}, which we call once we have found a list $L_q$ of $2^{\poly(qk/\eps)}$ candidates $(U^1, \ldots, U^q)$. We find the candidate $(U^1, \ldots, U^q)$ which approximately minimizes $\|U^1 \circ \cdots \circ U^q - A\|_F$, up to a $(1 + \eps)$ factor. This is achieved by first compressing $A$ with Countsketch matrices with $2^{\poly(qk/\eps)}$ rows, which contributes $O(q \cdot \nnz(A))$ to the running time of this subroutine, and then applying those same Countsketch matrices to each of the candidates $(U^1, \ldots, U^q)$ which contributes $|L_q| \cdot 2^{\poly(qk/\eps)} \cdot n$ to the overall running time. The number of rows in the Countsketch matrices is selected so that they approximately preserve $\|U^1 \circ \cdots \circ U^q - A\|_F$ for all candidates $(U^1, \ldots, U^q)$. \AM{Give a name to this subroutine.}}
\label{algorithm:tensor_train_fpt_find_best_candidate}
\begin{algorithmic}
\Require $L_q$ contains elements of the form $(U^1, \ldots, U^q)$ where $U^1 \in \R^{n \times k}, U^2, \ldots, U^{q - 1} \in \R^{k \times n \times k}, U^q \in \R^{k \times n}$, $A \in \R^{n \times \ldots \times n}$ has $q$ modes
\Ensure Returns a candidate $(U^1, \ldots, U^q) \in L_q$

\item[]
\State // First, we compress $A$ using Countsketch matrices.
\State $\widetilde{A} \gets A \in \R^{n \times \ldots \times n}$
\For{$i = 1, \ldots, q$}
    \State $S_i \gets $ An $s_1 \times n$ Countsketch matrix if $i = 1$, and an $s_i \times s_{i - 1}n$ Countsketch matrix otherwise \State \,\,\,\,\,\,\,\,\,\,\,\,\,\,\,\, where $s_i = \poly(qk2^{\poly(qk/\eps)}/\eps)$
    \State $\widetilde{A} \gets S_i \circ \calF(\widetilde{A})$, where $\calF$ is defined as in Definition \ref{def:fusing_first_two_modes}.
\EndFor

\item[]
\State // Next, we evaluate all the candidates. Here, $\calL_q(S_1, \ldots, S_q)$ is as in Definition \ref{def:tensor_train_sketch_with_placeholders}.
\State $(U^1_{best}, \ldots, U^q_{best}) \gets (0, \ldots, 0)$
\State $\textsc{MinCost} \gets \infty$
\For{$(U^1, \ldots, U^q) \in L_q$}
    \State $\textsc{CurrentCost} \gets \|\calL_q(S_1, \ldots, S_q)(U^1 \circ \cdots \circ U^q) - \widetilde{A}\|_F$
    \If{$\textsc{CurrentCost} < \textsc{MinCost}$}
        \State $\textsc{MinCost} \gets \textsc{CurrentCost}$
        \State $(U^1_{best}, \ldots, U^q_{best}) \gets (U^1, \ldots, U^q)$
    \EndIf
\EndFor

\State \Return $(U^1_{best}, \ldots, U^q_{best})$
\end{algorithmic}
\end{algorithm}

\begin{proof}[Proof of Theorem \ref{thm:fpt_tensor_train_rank_exactly_k}]
Define
\begin{equation}
\begin{split}
U^1_*, \ldots, U^q_* := \argmin_{U^1, \ldots, U^q} \|U^1 \circ \cdots \circ U^q - A\|_F
\end{split}
\end{equation}
\paragraph{Base Case: Finding Candidates for $U^1$}
Let $\calE_1$ denote the event that $L_1$ contains a $\widetilde{U^1} \in \R^{n \times k}$ such that
\begin{equation}
\begin{split}
\min_{U^2, \ldots, U^q} \|\widetilde{U^1} \circ U^2 \circ \cdots \circ U^q - A\|_F
& \leq \Big(1 + O\Big(\frac{\eps}{q}\Big)\Big) \|U^1_* \circ \cdots \circ U^q_* - A\|_F
\end{split}
\end{equation}
We will show that $\calE_1$ holds with probability at least $1 - O(\frac{\delta}{q})$. Note that we can write
\begin{equation}
\begin{split}
U^1_* = \argmin_U \|U \circ U^2_* \circ \cdots \circ U^q_* - A\|_F = \argmin_U \|U M_1(U^2_* \circ \cdots \circ U^q_*) - M_1(A)\|_F
\end{split}
\end{equation}
Let us define the following events:
\begin{itemize}
\item Let $\calE_{1, a}$ denote the event that $P_1$ is a $\Big(1 \pm O\Big(\frac{\eps}{q}\Big)\Big)$ affine embedding for the row span of $M_1(U^2_* \circ \cdots \circ U^q_*)$ and $M_1(A)$, that is, for all $U \in \R^{n \times k}$,
\begin{equation} \label{eq:e1a_base}
\begin{split}
\|U M_1(U^2_* \circ \cdots \circ U^q_*) P_1^T - M_1(A) P_1^T\|_F
& = \Big(1 \pm O\Big(\frac{\eps}{q}\Big)\Big) \|UM_1(U^2_* \circ \cdots \circ U^q_*) - M_1(A) \|_F
\end{split}
\end{equation}
By Theorem \ref{thm:countsketch_affine_embedding}, if we let the number of rows of $P_1$ be $r = O(\frac{q^3k^2}{\eps^2 \delta})$, then $\calE_{1, a}$ holds with probability $1 - O(\frac{\delta}{q})$.
\item For $t \in [T]$, let $\calE_{1, b, t}$ denote the event that for all $U$ satisfying
\begin{equation} \label{eq:e1b_base_hypothesis}
\begin{split}
\|U M_1(U^2_* \circ \cdots & \circ U^q_*) P_1^T Q_{1, t}^T -  M_1(A) P_1^T Q_{1, t}^T\|_F \\
& \leq \Big(1 + O\Big(\frac{\eps}{q}\Big)\Big) \min_U \|UM_1(U^2_* \circ \cdots \circ U^q_*) P_1^T Q_{1, t}^T - M_1(A) P_1^T Q_{1, t}^T\|_F
\end{split}
\end{equation}
it also holds that
\begin{equation} \label{eq:e1b_base_conclusion}
\begin{split}
\|U M_1(U^2_* \circ \cdots \circ U^q_*) P_1^T &- M_1(A) P_1^T\|_F \\ &\leq \Big(1 + O\Big(\frac{\eps}{q}\Big) \Big) \min_U \|UM_1(U^2_* \circ \cdots \circ U^q_*) P_1^T - M_1(A) P_1^T\|_F
\end{split}
\end{equation}
Applying Theorem \ref{thm:approximate_multiple_l2_regression} with $\delta = \frac{1}{3}$ and $\eps$ replaced by $O(\frac{\eps}{q})$, we find that $\calE_{1, b, t}$ holds with probability at least $\frac{2}{3}$ as long as $Q_1$ has $m = O(\frac{qk}{\eps})$ rows.

\item Finally, for $t \in [T]$, let $\calE_{1, c, t}$ denote the event that, if we define
\begin{equation} \label{eq:e1c_base_definition}
\begin{split}
\widehat{U^1_t} 
& = \argmin_{U} \|U M_1(U^2_* \circ \cdots \circ U^q_*) P_1^T Q_{1, t}^T R_{1, t}^T - M_1(A) P_1^T Q_{1, t}^T R_{1, t}^T\|_F \\
& = M_1(A) P_1^T Q_{1, t}^T R_{1, t}^T (M_1(U^2_* \circ \cdots \circ U^q_*) P_1^T Q_{1, t}^T R_{1, t}^T)^\dagger \\
& = \widetilde{U^1_t} X
\end{split}
\end{equation}
where $X = (M_1(U^2_* \circ \cdots \circ U^q_*) P_1^T Q_{1, t}^T R_{1, t}^T)^\dagger \in \R^{k \times k}$, then
\begin{equation} \label{eq:e1c_base_conclusion}
\begin{split}
\|\widehat{U^1_t} M_1(U^2_* \circ & \cdots \circ U^q_*) P_1^T Q_{1, t}^T  - M_1(A) P_1^T Q_{1, t}^T\|_F \\
& \leq \Big(1 + O\Big(\frac{\eps}{q}\Big)\Big) \min_U \|U M_1(U^2_* \circ \cdots \circ U^q_*) P_1^T Q_{1, t}^T - M_1(A) P_1^T Q_{1, t}^T\|_F
\end{split}
\end{equation}
By Lemma \ref{lemma:gaussian_with_exactly_k_rows_multiple_response_regression}, and since $M_1(A) P_1^T Q_1^T$ has rank at most $m$, this holds with probability at least $e^{-\Theta(k^2 \log k)} (\frac{\eps}{qk})^{O(km)} = e^{-\Theta(k^2 \log k)} (\frac{\eps}{qk})^{O(qk^2/\eps)}$.
\end{itemize}
Thus, for a fixed $t \in [T]$, $\calE_{1, b, t} \cap \calE_{1, c, t}$ holds with probability at least $e^{-\Theta(k^2 \log k)} (\frac{\eps}{qk})^{O(qk^2/\eps)}$, and the probability that for all $t \in [T]$, $\calE_{1, b, t} \cap \calE_{1, c, t}$ fail is therefore at most
\begin{equation}
\begin{split}
\Big(1 - e^{-\Theta(k^2 \log k)} \Big(\frac{\eps}{qk}\Big)^{O(qk^2/\eps)} \Big)^T
& = \Big(1 - \frac{1}{e^{\Theta(k^2 \log k) + \Theta(qk^2/\eps \cdot \log(qk/\eps))}}\Big)^T \\
& \leq e^{-\frac{T}{e^{\Theta(k^2 \log k) + \Theta(qk^2/\eps \cdot \log(qk/\eps))}}}
\end{split}
\end{equation}
Here in the first inequality we used the fact that $1 - x \leq e^{-x}$ for all $x \in \R$. Therefore, as long as
\begin{equation}
\begin{split}
T \geq \log\Big(\frac{q}{\delta}\Big) \cdot \Big(e^{\Theta(k^2 \log k) + \Theta(qk^2/\eps \cdot \log(qk/\eps))}\Big)
\end{split}
\end{equation}
with probability at least $1 - O(\frac{\delta}{q})$, there exists some $t \in [T]$ such that $\calE_{1, b, t} \cap \calE_{1, c, t}$ holds.

Note that if $\calE_{1, a}$ holds, and $t \in [T]$ is such that $\calE_{1, b, t}$ and $\calE_{1, c, t}$ both hold, then $\calE_1$ holds, since if $t \in [T]$ is such that $\calE_{1, b, t}$ and $\calE_{1, c, t}$ hold, and $\calE_{1, a}$ holds, then
\begin{equation}
\begin{split}
\min_{U^2, \ldots, U^q} \|\widetilde{U^1_t} \circ U^2 \circ \cdots \circ U^q - A\|_F
& \leq \|(\widetilde{U^1_t} X) \circ U^2_* \circ \cdots \circ U^q_* - A\|_F \\
& = \|\widetilde{U^1_t}X M_1(U^2_* \circ \cdots \circ U^q_*) - M_1(A) \|_F \\
& \leq \Big(1 + O\Big(\frac{\eps}{q}\Big)\Big) \|\widehat{U^1_t} M_1(U^2_* \circ \cdots \circ U^q_*)P_1^T - M_1(A) P_1^T\|_F \\
& \leq \Big(1 + O\Big(\frac{\eps}{q}\Big)\Big) \min_U \|U M_1(U^2_* \circ \cdots \circ U^q_*) P_1^T - M_1(A) P_1^T\|_F \\
& \leq \Big(1 + O\Big(\frac{\eps}{q}\Big)\Big) \min_U \|U M_1(U^2_* \circ \cdots \circ U^q_*) - M_1(A)\|_F \\
& = \Big(1 + O\Big(\frac{\eps}{q}\Big)\Big) \min_U \|U \circ U^2_* \circ \cdots \circ U^q_* - A\|_F
\end{split}
\end{equation}
Here the first inequality is because $X \in \R^{k \times k}$. The second inequality is by Equation \ref{eq:e1a_base}, since $\calE_{1, a}$ holds. The third inequality is by Equation \ref{eq:e1b_base_conclusion}, since $\calE_{1, b, t}$ holds (here note that Equation \ref{eq:e1b_base_hypothesis} holds, since Equation \ref{eq:e1c_base_conclusion} holds by our assumption that $\calE_{1, c, t}$ holds). Finally, the fourth inequality is by Equation \ref{eq:e1a_base} (since $\calE_{1, a}$ holds).

In summary, $\calE_1$ holds assuming $\calE_{1, a}$ holds and $\calE_{1, b, t}$ and $\calE_{1, c, t}$ hold simultaneously for some $t \in [T]$. As shown above, this holds with probability at least $1 - O(\frac{\delta}{q})$, as desired.

\paragraph{Inductive Step: Finding Candidates for $U^i$ for $i \in [q] \setminus \{1, q\}$}

In the following, let $C$ be an absolute constant that we choose appropriately. For $i \in [q]$, let $\calE_i$ be the event that there exists some $(\widetilde{U^1}, \ldots, \widetilde{U^i}) \in L_i$ such that
\begin{equation}
\begin{split}
\min_{U^{i + 1}, \ldots, U^q} \|\widetilde{U^1} \circ \cdots \circ \widetilde{U^i} \circ U^{i + 1} \circ \cdots \circ U^q - A\|_F
& \leq \Big(1 + \frac{C\eps}{q}\Big)^i \min_{U^1, \ldots, U^q} \|U^1 \circ \cdots \circ U^q - A\|_F
\end{split}
\end{equation}
We wish to show by induction that $\calE_i$ holds with probability at least $1 - \frac{Ci\delta}{q}$. Thus, assume $\calE_{i - 1}$ holds with probability at least $1 - \frac{C(i - 1) \delta}{q}$, that is, there exists $(\widetilde{U^1}, \ldots, \widetilde{U^{i - 1}}) \in L_{i - 1}$ such that
\begin{equation}
\begin{split}
\min_{U^i, \ldots, U^q} \|\widetilde{U^1} \circ \cdots \circ \widetilde{U^{i - 1}} \circ U^i \circ \cdots \circ U^q - A\|_F
& \leq \Big(1 + \frac{C\eps}{q}\Big)^{i - 1} \min_{U^1, \ldots, U^q} \|U^1 \circ \cdots \circ U^q - A\|_F
\end{split}
\end{equation}
For convenience, in the proof of the inductive step we will define
\begin{equation}
\begin{split}
U^i_*, \ldots, U^q_* := \argmin_{U^i, \ldots, U^q} \|\widetilde{U^1} \circ \cdots \circ \widetilde{U^{i - 1}} \circ U^i \circ \cdots \circ U^q - A\|_F
\end{split}
\end{equation}
Note that we are redefining $U^i_*, \ldots, U^q_*$, compared to our definition in the base case. From this definition, it follows that
\begin{equation}
\begin{split}
U^i_*
& = \argmin_{U} \|U (M_i(\widetilde{U^1} \circ \cdots \circ \widetilde{U^{i - 1}}) \times M_1(U^{i + 1} \circ \cdots \circ U^q)) - M_i(A) \|_F
\end{split}
\end{equation}

We now define the following auxiliary events:
\begin{itemize}
\item Let $\calE_{i, a}$ denote the event that for all $U \in \R^{n \times k^2}$,
\begin{equation}
\begin{split}
\|U (M_i(\widetilde{U^1} \circ & \cdots \circ \widetilde{U^{i - 1}}) \times M_1(U^{i + 1}_* \circ \cdots \circ U^q_*))(I \times P_i^T) - M_i(A) (I \times P_i^T)\|_F \\
& = \Big(1 + O\Big(\frac{\eps}{q}\Big)\Big) \|U (M_i(\widetilde{U^1} \circ \cdots \circ \widetilde{U^{i - 1}}) \times M_1(U^{i + 1}_* \circ \cdots \circ U^q_*)) - M_i(A)\|_F
\end{split}
\end{equation}
Note that the above is equivalent to
\begin{equation}
\begin{split}
\|M_{i + 1}(\widetilde{U^1} & \circ \cdots \circ \widetilde{U^{i - 1}} \circ U)^T M_1(U^{i + 1}_* \circ \cdots \circ U^q_*) P_i^T - M_{\{1, \ldots, i\}}(A) P_i^T\|_F \\
& = \Big(1 + O\Big(\frac{\eps}{q}\Big)\Big) \|M_{i + 1}(\widetilde{U^1} \circ \cdots \circ \widetilde{U^{i - 1}} \circ U)^T M_1(U^{i + 1}_* \circ \cdots \circ U^q_*) - M_{\{1, \ldots, i\}}(A) \|_F
\end{split}
\end{equation}
and thus it suffices for $P_i$ to be a $\Big(1 \pm O\Big(\frac{\eps}{q}\Big)\Big)$ affine embedding for the row span of $M_1(U^{i + 1}_* \circ \cdots \circ U^q_*)$ and $M_{\{1, \ldots, i\}}(A)$. By Theorem \ref{thm:countsketch_affine_embedding}, this holds with probability $1 - O(\frac{\delta}{q})$ as long as $P_i$ has $O(\frac{q^3k^2}{\eps^2 \delta})$ rows.

\item Let $\calE_{i, b}$ denote the event that for all $U \in \R^{n \times k^2}$,
\begin{equation}
\begin{split}
& \|U (M_i(\widetilde{U^1} \circ \cdots  \circ \widetilde{U^{i - 1}})\calL_{i - 1}^T \times M_1(U^{i + 1}_* \circ \cdots \circ U^q_*)P_i^T) - M_i(A)(\calL_{i - 1}^T \times P_i^T)\|_F \\
& = \Big(1 + O\Big(\frac{\eps}{q}\Big)\Big) \|U (M_i(\widetilde{U^1} \circ \cdots \circ \widetilde{U^{i - 1}}) \times M_1(U^{i + 1}_* \circ \cdots \circ U^q_*)P_i^T) - M_i(A)(I \times P_i^T)\|_F
\end{split}
\end{equation}
We can rewrite this as follows:
\begin{equation}
\begin{split}
\|\calL_{i - 1} & M_{\{1, \ldots, i - 1\}}(\widetilde{U^1} \circ \cdots \circ \widetilde{U^{i - 1}}) M_1(U \circ U^{i + 1}_* \circ \cdots \circ U^q_*)(I \times P_i^T) \\
&\,\,\,\,\,\,\,\,\,\,\,\,\,\,\,\,\,\,\,\,\,\,\,\,\,\,\,\,\,\,\,\,\,\,\,\,\,\,\,\,\,\,\,\,\,\,\,\,\,\,\,\,\,\,\,\,\,\,\,\,\,\,\,\,\,\,\,\,\,\,\,\,\,\,\,\,\,\,\,\,\,\,\,\,\,\,\,\,\,\,\,\,\,\,\,\,\,\,\,\,\,\,\,\,\,\,\,\,\,\,\,\,\,\,\,\,\,\,\,\,\,\,\,\,\,\, - \calL_{i - 1} M_{\{1, \ldots, i - 1\}}(A)(I \times P_i^T) \|_F \\
& = \Big(1 \pm O\Big(\frac{\eps}{q}\Big)\Big) \| M_{\{1, \ldots, i - 1\}}(\widetilde{U^1} \circ \cdots \circ \widetilde{U^{i - 1}}) M_1(U \circ U^{i + 1}_* \circ \cdots \circ U^q_*)(I \times P_i^T) \\
&\,\,\,\,\,\,\,\,\,\,\,\,\,\,\,\,\,\,\,\,\,\,\,\,\,\,\,\,\,\,\,\,\,\,\,\,\,\,\,\,\,\,\,\,\,\,\,\,\,\,\,\,\,\,\,\,\,\,\,\,\,\,\,\,\,\,\,\,\,\,\,\,\,\,\,\,\,\,\,\,\,\,\,\,\,\,\,\,\,\,\,\,\,\,\,\,\,\,\,\,\,\,\,\,\,\,\,\,\,\,\,\,\,\,\,\,\,\,\,\,\,\,\,\,\,\, - M_{\{1, \ldots, i - 1\}}(A)(I \times P_i^T) \|_F
\end{split}
\end{equation}
where on the left-hand side, $I$ now denotes the $n \times n$ identity matrix. Thus, it suffices for $\calL_{i - 1}$ to be a $\Big(1 \pm O\Big(\frac{\eps}{q}\Big)\Big)$ affine embedding for the column span of $M_{\{1, \ldots, i - 1\}}(\widetilde{U^1} \circ \cdots \circ \widetilde{U^{i - 1}})$ and $M_{\{1, \ldots, i - 1\}}(A)(I \times P_i^T)$. By Lemma \ref{lemma:jl_moment_implies_affine_embedding}, this holds as long as $\calL_{i - 1}$ has the $(O(\frac{\eps}{qk}), O(\frac{\delta}{q}), 2)$ JL moment property, and by Lemma \ref{lemma:jl_moment_of_tensor_train_sketch_based_on_S_i}, it is sufficient for $S_j$ to have the $(O(\frac{\eps}{q^{3/2}k}), O(\frac{\delta}{q}), 2)$ JL moment property for $1 \leq j \leq i - 1$ --- this holds as long as $S_j$ has $O(\frac{q^4 k^2}{\eps^2 \delta})$ rows. 

\item Now, for $t \in [T]$, let $\calE_{i, c, t}$ denote the event that, for all $U \in \R^{n \times k^2}$ satisfying
\begin{equation}
\begin{split}
\|U & (M_i(\widetilde{U^1} \circ \cdots \circ \widetilde{U^{i - 1}})\calL_{i - 1}^T \times M_1(U^{i + 1}_* \circ \cdots \circ U^q_*)P_i^T)(I \times Q_{i, t}) \\
& \,\,\,\,\,\,\,\,\,\,\,\,\,\,\,\,\,\,\,\,\,\,\,\,\,\,\,\,\,\,\,\,\,\,\,\,\,\,\,\,\,\,\,\,\,\,\,\,\,\,\,\,\,\,\,\,\,\,\,\,\,\,\,\,\,\,\,\,\,\,\,\,\,\,\,\,\,\,\,\,\,\,\,\,\,\,\,\,\,\,\,\,\,\,\,\,\,\,\,\,\,\,\,\,\,\,\,\,\,\,\,\,\,\,\,\,\,\,\,\,\, - M_i(A)(\calL_{i - 1}^T \times P_i^T)(I \times Q_{i, t}) \|_F \\
& \leq \Big(1 + O\Big(\frac{\eps}{q}\Big) \Big) \min_U \|U(M_i(\widetilde{U^1} \circ \cdots \circ \widetilde{U^{i - 1}})\calL_{i - 1}^T \times M_1(U^{i + 1}_* \circ \cdots \circ U^q_*)P_i^T)(I \times Q_{i, t}) \\
& \,\,\,\,\,\,\,\,\,\,\,\,\,\,\,\,\,\,\,\,\,\,\,\,\,\,\,\,\,\,\,\,\,\,\,\,\,\,\,\,\,\,\,\,\,\,\,\,\,\,\,\,\,\,\,\,\,\,\,\,\,\,\,\,\,\,\,\,\,\,\,\,\,\,\,\,\,\,\,\,\,\,\,\,\,\,\,\,\,\,\,\,\,\,\,\,\,\,\,\,\,\,\,\,\,\,\,\,\,\,\,\,\,\,\,\,\,\,\,\,\,  - M_i(A)(\calL_{i - 1}^T \times P_i^T)(I \times Q_{i, t}) \|_F
\end{split}
\end{equation}
it also holds that
\begin{equation}
\begin{split}
\|U & (M_i(\widetilde{U^1} \circ \cdots \circ \widetilde{U^{i - 1}})\calL_{i - 1}^T \times M_1(U^{i + 1}_* \circ \cdots \circ U^q_*)P_i^T) - M_i(A)(\calL_{i - 1}^T \times P_i^T) \|_F \\
& \leq \Big(1 + O\Big(\frac{\eps}{q}\Big) \Big) \min_U \|U(M_i(\widetilde{U^1} \circ \cdots \circ \widetilde{U^{i - 1}})\calL_{i - 1}^T \times M_1(U^{i + 1}_* \circ \cdots \circ U^q_*)P_i^T) \\
& \,\,\,\,\,\,\,\,\,\,\,\,\,\,\,\,\,\,\,\,\,\,\,\,\,\,\,\,\,\,\,\,\,\,\,\,\,\,\,\,\,\,\,\,\,\,\,\,\,\,\,\,\,\,\,\,\,\,\,\,\,\,\,\,\,\,\,\,\,\,\,\,\,\,\,\,\,\,\,\,\,\,\,\,\,\,\,\,\,\,\,\,\,\,\,\,\,\,\,\,\,\,\,\,\,\,\,\,\,\,\,\,\,\,\,\,\,\,\,\,\,\, - M_i(A)(\calL_{i - 1}^T \times P_i^T) \|_F
\end{split}
\end{equation}
For our purposes, it suffices for $\calE_{i, c, t}$ to occur with failure probability $\delta = \frac{1}{3}$. We will apply Theorem \ref{thm:approximate_multiple_l2_regression}, with $S$ being replaced by $I \times Q_{i, t}^T$:
\begin{itemize}
    \item First, we need $I \times Q_{i, t}^T$ to be a $(1 \pm \frac{1}{3})$ $\ell_2$ subspace embedding for the row span of $M_i(\widetilde{U^1} \circ \cdots \circ \widetilde{U^{i - 1}})\calL_{i - 1}^T \times M_i(U^{i + 1}_* \circ \cdots \circ U^q_*)P_i^T$ with probability at least $\frac{2}{3}$. By Lemma \ref{lemma:kronecker_product_subspace_embedding}, it suffices for $Q_{i, t}^T$ to be a $(1 \pm \frac{1}{3})$ $\ell_2$ subspace embedding for the row span of $M_i(U^{i + 1}_* \circ \cdots \circ U^q_*)P_i^T$. By Lemma \ref{lemma:sign_matrix_subspace_embedding}, this holds with probability at least $\frac{999}{1000}$ as long as $Q_{i, t}$ has $O(k)$ rows.
    \item In addition, we need $I \times Q_{i, t}^T$ to have the $(O(\sqrt{\frac{\eps}{qk}}), c)$ approximate matrix product property for a sufficiently small constant $c$. By Theorem \ref{thm:jl_property_implies_approx_matrix_product}, it suffices for $I \times Q_{i, t}^T$ to have the $(O(\sqrt{\frac{\eps}{qk}}), c, \ell)$ JL moment property for some $\ell \geq 2$. Furthermore, by Lemma \ref{lemma:jl_moment_with_identity_matrix}, this holds if $Q_{i, t}^T$ has the $(O(\sqrt{\frac{\eps}{qk}}), c, \ell)$ JL moment property, and by Lemma \ref{lemma:sign_matrix_jl_moment_property}, $Q_{i, t}$ has the $(O(\sqrt{\frac{\eps}{qk}}), c, \log(1/c))$ JL moment property as long as it has $O(\frac{qk}{\eps})$ rows.
    \item Finally, we need $E[(I \times Q_{i, t}^T)(I \times Q_{i, t})]$ to be the identity matrix.
    \begin{equation}
    \begin{split}
    E[(I \times Q_{i, t}^T)(I \times Q_{i, t})]
    & = I \times E[Q_{i, t}^T Q_{i, t}] \\
    & = I \times I \\
    & = I
    \end{split}
    \end{equation}
    where the second equality holds if we select the entries of $Q_{i, t}$ to be i.i.d., since the entries of $Q_{i, t}$ have mean $0$.
\end{itemize}
Thus, by Theorem \ref{thm:approximate_multiple_l2_regression}, as long as $Q_{i, t}$ has $O(\frac{qk}{\eps})$ rows, $\calE_{i, c, t}$ holds with probability at least $\frac{2}{3}$. 

\item Finally, for $t \in [T]$, we define the event $\calE_{i, d, t}$ as follows. First define
\begin{equation}
\begin{split}
& \widehat{U^i_t}
 = \argmin_U \|U(M_i(\widetilde{U^1} \circ \cdots \circ \widetilde{U^{i - 1}})\calL_{i - 1}^T \times M_1(U^{i + 1}_* \circ \cdots \circ U^q_*)P_i^T)(I \times Q_{i, t}^TR_{i, t}^T) \\
& \,\,\,\,\,\,\,\,\,\,\,\,\,\,\,\,\,\,\,\,\,\,\,\,\,\,\,\,\,\,\,\,\,\,\,\,\,\,\,\,\,\,\,\,\,\,\,\,\,\,\,\,\,\,\,\,\,\,\,\,\,\,\,\,\,\,\,\,\,\,\,\, - M_i(A)(\calL_{i - 1}^T \times P_i^T)(I \times Q_{i, t}^T R_{i, t}^T) \|_F \\
& = M_i(A) (\calL_{i - 1}^T \times P_i^T Q_{i, t}^T R_{i, t}^T) ((M_i(\widetilde{U^1} \circ \cdots \circ \widetilde{U^{i - 1}})\calL_{i - 1}^T)^\dagger \times (M_1(U^{i + 1}_* \circ \cdots \circ U^q_*)P_i^T Q_{i, t}^T R_{i, t}^T)^\dagger) \\
& = M_i(A) (\calL_{i - 1}^T \times P_i^T Q_{i, t}^T R_{i, t}^T) ((M_i(\widetilde{U^1} \circ \cdots \circ \widetilde{U^{i - 1}})\calL_{i - 1}^T)^\dagger \times X)
\end{split}
\end{equation}
where $X := (M_1(U^{i + 1}_* \circ \cdots \circ U^q_*)P_i^T Q_{i, t}^T R_{i, t}^T)^\dagger$ is unknown. We then let $\calE_{i, d, t}$ be the event that
\begin{equation}
\begin{split}
& \|\widehat{U^i_t}(M_i(\widetilde{U^1} \circ \cdots \circ \widetilde{U^{i - 1}})\calL_{i - 1}^T \times M_1(U^{i + 1}_* \circ \cdots \circ U^q_*)P_i^T)(I \times Q_{i, t}^T) \\
& \,\,\,\,\,\,\,\,\,\,\,\,\,\,\,\,\,\,\,\,\,\,\,\,\,\,\,\,\,\,\,\,\,\,\,\,\,\,\,\,\,\,\,\,\,\,\,\,\,\,\,\,\,\,\,\,\,\,\,\,\,\,\,\,\,\,\,\,\,\,\,\, - M_i(A)(\calL_{i - 1}^T \times P_i^T)(I \times Q_{i, t}^T) \|_F \\
& \,\,\,\,\,\,\,\, \leq \Big(1 + O\Big(\frac{\eps}{q}\Big)\Big) \min_U \|U(M_i(\widetilde{U^1} \circ \cdots \circ \widetilde{U^{i - 1}})\calL_{i - 1}^T \times M_1(U^{i + 1}_* \circ \cdots \circ U^q_*)P_i^T)(I \times Q_{i, t}^T) \\
& \,\,\,\,\,\,\,\,\,\,\,\,\,\,\,\,\,\,\,\,\,\,\,\,\,\,\,\,\,\,\,\,\,\,\,\,\,\,\,\,\,\,\,\,\,\,\,\,\,\,\,\,\,\,\,\,\,\,\,\,\,\,\,\,\,\,\,\,\,\,\,\, - M_i(A)(\calL_{i - 1}^T \times P_i^T)(I \times Q_{i, t}^T) \|_F \\
\end{split}
\end{equation}
\begin{lemma}[$e^{-\poly(k/\eps)}$ Success Probability for Multiple-Response $\ell_2$ Kronecker Regression] \label{lemma:gaussian_multiple_response_regression_kronecker}
 Let $m, n, k, s, t\in N, \eps > 0$, and $k, t < n$. Suppose $A\in \R^{mn\times k^2}$ has rank k and $B\in \R^{mn\times s}$ has rank $t$. In addition, suppose that $A$ can be written as $A_1\times A_2, A_1\in \R^{m\times k}, A_2\in \R^{n\times k}$. Let $R\in \R^{k\times n}$ have i.i.d. $N(0, 1/k)$ entries. If $X^* = \argmin\| AX - B\|_F^2$ and $\widehat{X} = \argmin \|(I_m\times R)AX - (I_m\times R)B\|_F^2$, then $\|A\widehat{X} - B\|_F^2\leq (1+\eps)\| AX^* - B\|_F^2$ with probability at least $e^{-\Theta(k^2\log k)}(\frac{\eps}{qk^2})^{O(kt)}$
\end{lemma}

\begin{proof}
As in the proof for the general version Lemma \ref{lemma:gaussian_with_exactly_k_rows_multiple_response_regression},
$$\|A\widehat{X} - B\|_F^2 = \|A(\widehat{X} - X^*)\|_F^2 + \|AX^* - B\|_F^2$$
Again, we aim to bound $\|A(\widehat{X} - X^*)\|_F$. Let $U\in \R^{mn\times k}$ be an orthonormal basis for the column span of $A$. As before, substituting $I\times R$ for $S$ in Lemma \ref{lemma:gaussian_with_exactly_k_rows_multiple_response_regression}, property $\calE$ holds  with probability at least $e^{-\Theta(k^2\log k)}e^{-k(k+t)}$. First, $I_m$ is always a subspace embedding for $A_1$, and if $R$ is a subspace embedding for $A_2$, $I\times R$ is a subspace embedding for $A$ by Lemma \ref{lemma:kronecker_product_subspace_embedding}. This happens with probability $e^{-\Theta(k^2\log k)}$ by Lemma \ref{lemma:gaussian_with_exactly_k_rows_multiple_response_regression} because $A_2$ has $k$ columns. Approximate matrix product result from Lemma \ref{lemma:gaussian_with_exactly_k_rows_approximate_matrix_product} also holds given subspace embedding, using the same proof as in Lemma \ref{lemma:gaussian_with_exactly_k_rows_multiple_response_regression}, conditioning on $\|(I\times R)B_j\|_2^2\leq \Theta(\frac \eps k)\|B_j\|_2^2$ instead. This probability is at least that of $\|SB_j\|_2^2\leq \Theta(\frac \eps k)\|B_j\|_2^2$ where $S$ is a regular Gaussian matrix with entries $N(0, 1/k)$, since $I\times R$ is a block matrix with nonzero entries drawn i.i.d. from $N(0, \frac 1 k)$. Using the subspace embedding,

\begin{equation}
    \begin{split}
    &\|A(\widehat{X} - X^*)\|_F = \|UU^TA(\widehat{X} - X^*)\|_F \\
    &= \|U^T A(\widehat{X} - X^*)\|_F \\
    &\leq \| U^T(I_m\times R)^T(I_m\times R)UU^TA(\widehat{X} - X^*)\|_F + \|(I - U^T(I_m\times R)^T(I_m\times R)U)U^TA(\widehat{X} - X^*)\|_F \\
    &= \|U^T (I_m\times R)^T(I_m\times R)A(\widehat{X} - X^*)\|_F + \|I - U^T(I_m\times R)^T(I_m\times R)U\|_2\|U^TA(\widehat{X} - X^*)\|_F \\
    &\leq \|U^T(I_m\times R)^T(I_m\times R)A(\widehat{X} - X^*)\|_F + \frac 1 3 \|U^T A (\widehat{X} - X^*)\|_F\\
    &= \|U^T(I_m\times R)^T(I_m\times R)A(\widehat{X} - X^*)\|_F + \frac 1 3 \|A (\widehat{X} - X^*)\|_F\\
    &\|A(\widehat{X} - X^*)\|_F\leq \frac 3 2 \| U(I_m\times R)^T(I_m\times R)A(\widehat{X} - X^*)\|_F
    \end{split}
\end{equation}
Using the same event $\calE$ as defined in Lemma \ref{lemma:gaussian_with_exactly_k_rows_multiple_response_regression}, noting that $A$ has $k^2$ columns, $\|U^T(I_m\times R)^T(I_m\times R)(A\widehat{X} - B)\|^2_F\leq k\eps\|A X^* - B\|^2|F$. Using normal equations for $\min_X\|(I_m\times R)AX - (I_m\times R)B\|_F$, $$U^T(I_m\times R)^T(I_m\times R)(A\widehat{X} - B) = A^T(I_m\times R)^T(I_m\times R)(A\widehat{X} - B) = 0$$
$$U^T(I_m\times R)^T(I_m\times R)A(\widehat{X} - X^*) = U^T(I_m\times R)^T(I_m\times R)(B - AX^*)$$
Conditioning on $\calE$,
$$\|A(\widehat{X} - X^*)\|_F\leq \frac 3 2 \| U^T(I_m\times R)^T(I_m\times R)(B - AX^*)\|_F\leq O(k\eps)\|AX^* - B\|_F^2$$
And therefore
$$\|A\widehat{X} - B\|_F = (1 + O(k\eps))\|AX^* - B\|_F^2$$
Taking the square root and using $\frac \eps k$ for $\eps$ in the approximate matrix product result gives
$$\|A \widehat{X} - B\|_F\leq (1+ O(\eps))\|AX^* - B\|_F$$ with probability at least $e^{-\Theta(k^2\log k)}(\frac{\eps}{k^2})^{O(kt)}$
\end{proof}

By our definition, $\widehat{U_t^i} = \argmin\| U A'(I\times R^T_{i, t}) - B'(I\times R^T_{i, t})\|_F$, where $A' = (M_i(\widetilde{U^1}\circ \cdots \circ \widetilde{U^{i-1}})\calL^T_{i-1}\times M_1(U_*^{i+1}\circ \cdots \circ U_*^q)P_i^T)(I\times Q_{i, t}^T)$, $B' = M_i(A)(\calL^T_{i-1}\times P^T_i)(I\times Q^T_{i, t})$. Therefore $\calE_{i, d, t}$ is equivalent to $\| \widehat{U_t^i}A' - B'\|_F \leq (1 + O(\eps / q))\min_U \|UA' - B'\|_F$. $A'\in \R^{k^2\times s_{i-1}m}, B'\in \R^{n\times s_{i-1}m}$. We can invoke the transpose of \ref{lemma:gaussian_multiple_response_regression_kronecker} and conclude that $\calE_{i, d, t}$ holds with probability $e^{-\Theta(k^2\log k)}(\frac{\eps}{qk^2})^{O(ks_{i-1}m)}$.

\end{itemize}
Therefore, for a fixed $t\in [T]$, $\calE_{i, c, t}$ and $\calE_{i, d, t}$ simultaneously hold with probability \\ $\frac 2 3 e^{-\Theta(k^2\log k)}(\frac{\eps}{qk^2})^{O(ks_{i-1}m)}$. The probability that for all $t\in [T]$, $\calE_{i, c, t}$ or $\calE_{i, d, t}$ fails is at most
\begin{equation}
    \begin{split}
        \left(1 - \frac 2 3 e^{-\Theta(k^2\log k)(\eps/qk^2)^{O(ks_{i-1}m)}}\right)^T = \left(1 - \frac{1}{\frac 3 2 e^{\Theta(k^2\log k) + \Theta(ks_{i-1}m\log \frac{qk^2}{\eps})}}\right)^T \\
        \leq e^{-\frac{T}{\frac 3 2 e^{\Theta(k^2\log k) + \Theta(ks_{i-1}m\cdot \log \frac{qk^2}{\eps})}}}
    \end{split}
\end{equation}
As long as $T\geq \log(\frac q \delta)\cdot \frac 3 2e^{\Theta(k^2\log k) + \Theta(s_{i-1}km\cdot \log \frac{qk^2}{\eps})}$, with probability $1 - O(\frac \delta q)$, there exists some $t\in [T]$ such that $\calE_{1, c, t}$ and $\calE_{1, d, t}$ both hold, $m\in O(k^2q/\eps)$.
.\\\\ Next we will show that, if $\exists t\in[T]$ such that $\calE_{i, c, t}$,$\calE_{i, d, t}$ simultaneously hold, in addition to $\calE_{i, a}$, $\calE_{i, b}$ both being satisfied, $\calE_i$ holds. Let $A_1 = M_i(\widetilde{U_1}\circ\cdots\circ\widetilde{U^{i-1}})$, $A_2 = M_1(U_*^{i+1}\circ\cdots U_*^q)$.

\begin{equation}
    \begin{split}
        &\min_{U^{i+1}, \cdots, U^q}\|\widetilde{U^1}\circ\cdots\widetilde{U^{i}}\circ U^{i+1}\circ\cdots\circ U^q - A\|_F \leq \|\widetilde{U^1}\circ\cdots\widetilde{U^{i-1}}\circ\widetilde{U^i}X'\circ U_*^{i+1}\circ\cdots U_*^q-A\|_F \\
        &= \| \widehat{U_t^i}\left(A_1 \times A_2 \right) - M_i(A)\|_F \\
        &\leq (1 + O(\frac \eps q))\| \widehat{U_t^i}(A_1\times A_2)(I\times P_i^T) - M_i(A)(I\times P_i^T))\|_F \\
        &\leq (1 + O(\frac \eps q)) \| \widehat{U_t^i}(A_1\calL^T_{i-1}\times A_2 P_i^T) - M_i(A)(\calL_{i-1}^T\times P_i^T)\|_F \\
        &\leq (1 + O(\frac \eps q))\min_U\| U(A_1\calL^T_{i-1}\times A_2P^T_i) - M_i(A)(\calL_{i-1}^T\times P_i^T) \|_F \\
        &\leq (1 + O(\frac \eps q))\min_U \| U(A_1\times A_2) - M_i(A)\|_F \\
        &= (1 + O(\frac \eps q)) \min_{U^i, \cdots, U_q}\|\widetilde{U^1}\circ\cdots\circ\widetilde{U^{i-1}}\circ U^i\circ\cdots U^q\|_F \\
        &\leq (1 + O(\frac \eps q))\left(1 + \frac{C\eps}{q}\right)^{i-1}\min_{U^1\cdots U^q}\|U^1\circ\cdots\circ U^q\|_F\\
        &\leq \left(1 + \frac{C\eps}{q}\right)^{i}\min_{U^1\cdots U^q}\|U^1\circ\cdots\circ U^q\|_F
    \end{split}
\end{equation}

The first inequality is true because $X' = (A_1\calL_{i-1}^T)^{\dagger}X$ where $X = (A_2P_i^TQ_{i, t}^QR_{i, t}^T)^{\dagger}$ is in $\R^{k\times k}$. The following two inequalities are due to $\calE_{i, a}$ and $\calE_{i, b}$. The fourth inequality is true when $\calE_{i, c, t}$ and $\calE_{i, d, t}$ for the $t$ picked (and that this $t$ must be picked by Subroutine \ref{algorithm:tensor_train_fpt_find_best_candidate} when it exists due to minimization). The fifth inequality is true when $\calL_{i-1}$ and $P_i$ are affine embeddings. By Lemma \ref{lemma:jl_moment_of_tensor_train_sketch_based_on_S_i} and Lemma \ref{lemma:jl_moment_implies_affine_embedding}, $\calL_{i-1}$ is an affine embedding with the $(\frac \eps q , \frac \delta q, \ell)$ JL moment property if $S_i$ has the $(\frac{\eps}{\sqrt{2q^3}}, \frac \delta q, \ell)$ JL moment property. $s_i = O(\frac{q^4k^2}{\eps^2\delta})$, $\calL_{i-1}$ is an $(1\pm \frac \eps q)$ affine embedding with probability $1 - O(\frac \delta q)$. $P_i$ is a Countsketch $O(\frac{q^3k^2}{\eps^2\delta})$ columns, therefore is an $1\pm \frac \eps q$ affine embedding with probability $1 - O(\frac \delta q)$ by \ref{thm:countsketch_affine_embedding}. By union bound on the failure probabilities of the aforementioned conditions, $\calE_i$ holds with probability $1 - O(\frac \delta q)$ as desired.

\paragraph{Final Case: Finding $U^q$ for each candidate $(\widetilde{U^1}, \cdots, \widetilde{U^{q-1}})$}.\\\\
By induction, we have proven that there is a candidate such that $$\min_{U^q}\|\widetilde{U_1}\circ\cdots\widetilde{U^{q-1}}\circ U^q\|_F \leq \left(1 + \frac{C\eps}{q}\right)^{q-1}\min_{U^1, \cdots, U^q}\|U^1\circ\cdots U^q - A\|_F.$$
By Algorithm \ref{algorithm:fpt_rank_exactly_k_tensor_train}, for each candidate tuple $(U^1, \cdots, U^q)$, using normal equations, $\widetilde{U^q}$ is the solution to $\min_U\|UM_q(U^1\circ \cdots\circ U^{q-1})\calL^T_{q-1} - M_q(A)\calL^T_{q-1}\|_F$. We need to show that $$\|U^1\circ\cdots\circ U^{q-1}\circ \widetilde{U^q} - M_q(A)\|F = (1\pm O(\frac \eps q)) \min_{U}\|UM_q(U^1\circ\cdots\circ U^{q-1}) - M_q(A)\|_F.$$
This holds when $\calL_{q-1}$ is an affine embedding for $M_q(U^1\circ\cdots\circ U^{q-1})$ and $M_q(A)$ with probability $1 - \frac{\delta}{q}$ by Lemma \ref{lemma:jl_moment_of_tensor_train_sketch_based_on_S_i} and Lemma \ref{lemma:jl_moment_implies_affine_embedding}. Therefore, if a good candidate $(\widetilde{U^1_*}, \cdots, \widetilde{U^{q-1}_*})$ exists in $L_{q-1}$ at the end, and the $\calL_{q-1}$ affine embedding holds, we generate a good candidate in $L_q$:
\begin{equation}
    \begin{split}
        \| \widetilde{U^1_*}\circ \cdots \circ \widetilde{U^q_*} - A\|_F &\leq \left (1 + \frac{C\eps}{q}\right)^q \min_{U^1, \cdots, U^q}\| U^1\circ\cdots\circ U^q - A\|_F \\
        &= \left(1 + O(\eps)\right)\min_{U^1, \cdots, U^q}\| U^1\circ\cdots\circ U^q - A\|_F
    \end{split}
\end{equation}
Each step of the inductive process fails with probability at most $O(\frac \delta q)$, as shown in each sub-event of $\calE_i$. By union bound, the overall failure probability is therefore at most $O(\delta)$.

The subroutine Algorithm \ref{algorithm:tensor_train_fpt_find_best_candidate} estimates the error of each candidate tuple in $L_q$ using the Tensor Train sketch $\calL_q$. We now determine the required sketch sizes of the $S_i$ in order to obtain the best candidate $(U^1, \ldots, U^q)$ up to a $(1 + \eps)$ factor, with a failure probability of at most $O(\delta)$. More precisely, we would like the following to hold with probability $1 - O(\delta)$: $\calL_q$ preserves $\|U^1 \circ \cdots \circ U^q - \widetilde{A}\|_F$ up to a factor of $(1 + O(\eps))$, for all candidates $(U^1, \ldots, U^q) \in L_q$. This means that $\calL_q(S_1, \ldots, S_q)$ should have the $(O(\eps), O(\frac{\delta}{|L_q|}), 2)$ JL moment property. By Lemma \ref{lemma:jl_moment_of_tensor_train_sketch_based_on_S_i}, it suffices for each of the $S_i$ to have the $(O(\frac{\eps}{\sqrt{q}}), O(\frac{\delta}{|L_q|}), 2)$ JL moment property. Since $|L_q| = T^{q - 1}$, this means that it suffices for the $S_i$ to have $O(\frac{q^2}{\eps} \cdot \frac{|L_q|}{\delta}) = O(\frac{q^2 T^{q - 1}}{\eps \delta})$ rows, in order for the candidate $(\widetilde{U^1}, \ldots, \widetilde{U^q})$ found by Algorithm \ref{algorithm:tensor_train_fpt_find_best_candidate} to be a $(1 + O(\eps))$-approximation with constant probability. In other words,

\begin{equation}
    \begin{split}
        \| \widetilde{U^1}\circ \cdots \circ \widetilde{U^q} - A\|_F &\leq (1+O(\eps))\min_{(U^1, \cdots, U^q)\in L_q}\| U^1\circ\cdots \circ U^q - A\|_F \\
        &\leq (1+O(\eps))\min_{U^1, \cdots, U^q}\| U^1\circ\cdots\circ U^q - A\|_F
    \end{split}
\end{equation}

\paragraph{Runtime Analysis}

Note that in each of the $q$ iterations, the CountSketch $P$ and linear sketch for tensor, $\calL$ is only applied once, rather than for each of the $T$ trials.

\begin{itemize}
    \item Generating $L_1$: For each $t$, $\widetilde{U_t^1} = M_1(A)P_1^TQ_{1,t}^TR_{1, t}^T$. $P_1$ is only applied to $A$ once, so the cost of this operation is $O(\nnz(A))$. Then, applying $Q_{1, t}^T R_{1, t}^T$ to the result takes $mr + km$ time for each trial, and the overall runtime of this step is $O(\nnz(A) + T(mr + km))$.
    \item Generating $L_2, \ldots, L_{q - 1}$: For a fixed $i$, applying $\calL_{i - 1}$ to $A$ takes an additional $O(\nnz(A))$ time (since we already have the result of applying $\calL_{i - 2}$ to $A$ from the $(i - 1)^{th}$ iteration). Applying $P_i$ to the result also takes $O(\nnz(A))$ time. For each of $|L_{i - 1}| T = T^i$ trials, applying $Q_{i, t}$, $R_{i, t}$, and $(M_i(U_1 \circ \cdots \circ U_{i - 1}) \calL_{i - 1}^T)^\dagger$ to the result takes $n \cdot \poly(\frac{qk}{\eps})$. Note that by Lemma \ref{lemma:ttsketch_contraction_train_time}, computing $(M_i(U_1 \circ \cdots \circ U_{i - 1}) \calL_{i - 1}^T)^\dagger$ also takes $n \cdot \poly(\frac{qk}{\eps})$ time. Thus, for iterations $2$ through $q - 1$, this step contributes $O(\nnz(A) + n \cdot \poly(\frac{qk}{\eps}) \cdot T^{q - 1})$ to the running time.
    \item Generating $L_q$: The cost of applying the last mode of $\calL_q$ is $O(\nnz(A))$ --- this only needs to be applied to $A$ once.

    \item Generating $L_q$: for each candidate, the additional cost of applying the last mode of $\calL_q$ is $O(\nnz(A))$. The cost of other operations is $O(k^2s_{q-1})$. The total cost is $O(T^{q-1}(k^2s_{q-1})+\nnz(A))$
    \item Selecting candidate: Computing $\widetilde{A}$ is $q\cdot \nnz(A)$ time since after each application of $S_i$, the number of nonzero entries left is still $\nnz(A)$. For each candidate, calculating $\calL_q(S_1, \cdots, S_1)(U^1\circ\cdots\circ U^q)$ is $O(qsnk(s+k))$ time by Lemma \ref{lemma:ttsketch_contraction_train_time}. The cost of computing the Frobenius norm of the error is $s_q$. The total cost is $O(T^q\cdot qsnk(s+k) + q\cdot\nnz(A))$, where $s = O(\frac{q^4 k^2}{\eps^2 \delta})$.
\end{itemize}
Taking $T = \log(\frac q \delta)\cdot e^{\Theta(k^2\log k + \Theta(qk^2/\eps \cdot \log (\frac{qk}{\eps})))}$, $s_i\in O(\frac{q^4k^2}{\eps^2\delta})$ in generation of $L_i$, $m\in O(k^2q/\eps)$, $r\in O(\frac{q^3k^2}{\eps^2\delta})$, we get the overall running time guarantee of $T^q \cdot n \cdot \poly(qk/(\eps\delta)) + q \cdot \nnz(A)$, which is $O(q \cdot \nnz(A)) + n \cdot \poly(\frac{qk}{\eps \delta}) \cdot e^{\Theta(\frac{q^2 k^2}{\eps} \log(\frac{qk}{\eps}))}$.

For space complexity,  note that an $m\times n$ CountSketch matrix can be represented in $O(\log m + \log n)$ using hash functions. Fixing $\delta$ as a constant, $\mathcal{L}$ take $O(q\log\poly(\frac{qk}{\eps}))$ space. $P_i$ take $O(q\log\poly(\frac{qk}{\eps}) + q^2\log n)$ space. $Q_i, R_i$ take $\poly(\frac{qk}{\eps})$ space. Computing each $\widetilde{U_i}$ takes $O(qn\nnz(A)\poly(\frac{qk}{\eps}))$. For finding the best candidate, each $S_i$ takes $O(\poly(\frac{qk}{\eps}) + \log\poly(\frac{qk}{\eps}))$. Computing the cost for each candidate takes $O(qn\nnz(A))$. Organizing the computation using backtrack yields polynomial space overall.

\end{proof}

\paragraph{Linear Sketch for Tensor Train}
This section contains the tools and proofs of the linear sketch $\calL$ used in the FPT Tensor Train algorithm \ref{algorithm:fpt_rank_exactly_k_tensor_train}.

\begin{definition}[Fusing First Two Modes]
\label{def:fusing_first_two_modes}
Let $A \in \R^{n_1 \times \ldots \times n_q}$ be a $q$-mode tensor, where $q \geq 2$. Then, we define $\calF(A) \in \R^{n_1n_2 \times \ldots \times n_q}$ by 
$$(\calF(A))_{(i_1, i_2), i_3, \ldots, i_q} = A_{i_1, \ldots, i_q}$$
Note that $\calF: \R^{n_1 \times \ldots \times n_q} \to \R^{n_1n_2 \times \ldots \times n_q}$ is a linear map.
\end{definition}

Next, we define $\calL_q$:

\begin{definition}[Linear Sketch for Tensor Train]
\label{def:tensor_train_sketch_with_placeholders}
Let $n, q \in \N$. In addition, suppose $S_1 \in \R^{s_1 \times n}$, and for $i \in [q] \setminus \{1\}$, suppose $S_i \in \R^{s_i \times s_{i - 1}n}$. Then, we define the linear map $\calL_q(S_1, \ldots, S_q): \R^{n \times \ldots \times n} \to \R^{s_q}$ as follows. Let $A \in \R^{n \times \ldots \times n}$ be a $q$-mode tensor. First, define $T_1 = S_1 \circ A \in \R^{s_1 \times n \times \ldots \times n}$ (which is a $q$-mode tensor), where $\circ$ denotes tensor contraction. Next, for $i \in [q] \setminus \{1\}$, define $T_i = S_i \circ \calF(T_{i - 1}) \in \R^{s_i \times n \times \ldots \times n}$ (which is a $(q - i + 1)$-mode tensor). Then, we define $\calL_q(S_1, \ldots, S_q)(A) = T_q \in \R^{s_q}$.
\end{definition}

\begin{remark}[Restating Definition \ref{def:tensor_train_sketch_with_placeholders} in terms of Kronecker Products]
\label{remark:tensor_train_sketch_in_terms_of_kronecker_products}
Note that Definition \ref{def:tensor_train_sketch_with_placeholders} can also be expressed in the following equivalent way. We can regard $\calL_q$ as a map defined on vectors in $\R^{n^q}$, i.e. $\calL_q(S_1, \ldots, S_q): \R^{n^q} \to \R^{s_q}$. Given $A \in \R^{n^q}$, we can define $T_1 := (S_1 \times I_{n^{q - 1}})A \in \R^{s_1 \cdot n^{q - 1}}$. Next, for each $i \in [q] \setminus \{2\}$, we can define $T_i = (S_i \times I_{n^{q - i}})T_{i - 1} \in \R^{s_i \times n^{q - i}}$. Thus, $T_q \in \R^{s_q}$, and we define $\calL_q(S_1, \ldots, S_q)(A) = T_q$. Note that if we identify $q$-mode tensors $A \in \R^{n \times \ldots \times n}$ with their vectorizations $\tensorvec(A) \in \R^{n^q}$, then this definition of $\calL_q$ is equivalent to Definition \ref{def:tensor_train_sketch_with_placeholders}.
\end{remark}

Definition \ref{def:tensor_train_sketch_with_placeholders} is useful for showing that $\calL_q$ can be applied to tensors of the form $U_1 \circ \cdots \circ U_q$ very efficiently:

\begin{lemma}
\label{lemma:ttsketch_contraction_train_time}
Let $k, q, n_1, \ldots, n_q \in \N$, and suppose $U_1 \in \R^{n_1 \times k}$, $U_j \in \R^{k \times n_j \times k}$ for $j \in [q] \setminus \{1, q\}$, and $U_q \in \R^{k \times n_q}$. In addition, suppose $S_1 \in \R^{s_1 \times n_1}$, and $S_j \in \R^{s_j \times s_{j - 1}n_j}$ for $j \in [q] \setminus \{1\}$. Then, $\calL_q(S_1, \ldots, S_q)(U_1 \circ \cdots \circ U_q)$ can be computed in $O(q s nk \cdot (s + k))$ time, where $n = \max_{j \in [q]} n_j$ and $s = \max_{i \in [q]} s_i$.
\end{lemma}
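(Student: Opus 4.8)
```latex
\noindent\textbf{Proof plan.} The plan is to unfold the definition of $\calL_q(S_1, \ldots, S_q)$ applied to $A = U_1 \circ \cdots \circ U_q$ and track the cost of each contraction in the recursion $T_1 = S_1 \circ A$, $T_i = S_i \circ \calF(T_{i-1})$. The key observation is that we should never form the tensor $U_1 \circ \cdots \circ U_q$ explicitly (that would have $n^q$ entries); instead we interleave the contractions with the $U_j$'s. Concretely, I would maintain an invariant that after processing $U_1, \ldots, U_j$ and the sketches $S_1, \ldots, S_j$, we hold a tensor $W_j \in \R^{s_j \times k}$ (a matrix), namely $W_j = M_{\{1,\ldots,j\}}\big(S_1 \circ U_1 \circ \cdots\big)$ appropriately sketched down --- this is exactly the matrix $V^{j+1}$ / $W_j$ construction from the subspace-embedding discussion in Subsection~\ref{subsec:contributions}, reused here. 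The base case $j=1$: compute $S_1 U_1 \in \R^{s_1 \times k}$ in $O(s_1 n_1 k)$ time.

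\noindent The inductive step: given $W_{j-1} \in \R^{s_{j-1} \times k}$, first contract with $U_j \in \R^{k \times n_j \times k}$ along the shared mode of dimension $k$ to get a tensor $\widehat{W}_j \in \R^{s_{j-1} \times n_j \times k}$ in $O(s_{j-1} n_j k^2)$ time; then fuse the first two modes via $\calF$ (free, just a reshape) to view it as an $s_{j-1}n_j \times k$ matrix; then apply $S_j \in \R^{s_j \times s_{j-1}n_j}$ on the left to get $W_j = S_j \widehat{W}_j \in \R^{s_j \times k}$ in $O(s_j \cdot s_{j-1} n_j \cdot k)$ time. For the last factor $U_q \in \R^{k \times n_q}$ we contract $W_{q-1}$ with it and apply $S_q$, arriving at $T_q \in \R^{s_q}$. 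I would then argue by a straightforward induction on $j$ that this interleaved computation produces exactly $\calL_q(S_1,\ldots,S_q)(U_1 \circ \cdots \circ U_q)$: this uses only associativity/commutativity of tensor contraction along disjoint modes and the fact that the mode $S_i$ acts on (the ``fused'' first two modes of $T_{i-1}$) is disjoint from the uncontracted $U_{i+1},\ldots,U_q$ legs, so we may pull the later $U$'s contractions past the earlier sketches.

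\noindent Summing costs: step $j$ costs $O(s_{j-1} n_j k^2 + s_j s_{j-1} n_j k)$. With $n = \max_j n_j$ and $s = \max_i s_i$, each term is $O(s n k^2 + s^2 n k) = O(snk(s+k))$, and there are $q$ of them, giving the claimed $O(q s n k (s+k))$ total. I expect no serious obstacle here; the only thing requiring care is the bookkeeping in the inductive correctness claim --- precisely, verifying that ``applying $S_i$ to the fused first two modes of $T_{i-1}$'' commutes with ``contracting in $U_{i+1}, \ldots, U_q$'' in the right order, which amounts to checking that all the contracted index sets involved are pairwise disjoint (so the order of contractions is immaterial) and that the mode $S_i$ expects as input has the correct dimension $s_{i-1} n_i$ after the fusion. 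This is the sort of ``routine but fiddly'' index-chasing that I would state as the invariant and verify in one line per case ($j=1$, $2 \le j \le q-1$, $j = q$), rather than belabor.
```
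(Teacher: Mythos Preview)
Your proposal is correct and is essentially the same argument as the paper's proof: the paper maintains exactly your invariant (it calls your $W_j$ by the name $M_j \in \R^{s_j \times k}$, writes $T_j = M_j \circ U_{j+1} \circ \cdots \circ U_q$, and computes $M_{j+1} = S_{j+1} \circ \calF(M_j \circ U_{j+1})$), with the same per-step cost $O(s_{j-1} n_j k^2 + s_j s_{j-1} n_j k)$ summed over $q$ steps. The only cosmetic difference is that the paper establishes correctness by writing out the algebraic identity $T_j = M_j \circ U_{j+1} \circ \cdots \circ U_q$ at each step rather than appealing to commutativity of contractions on disjoint modes, but these are the same observation.
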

\begin{proof}
We use Definition \ref{def:tensor_train_sketch_with_placeholders}. First note that $T_1 = S_1 \circ (U_1 \circ \cdots \circ U_q)$ is simply $(S_1U_1) \circ U_2 \circ \cdots \circ U_q$, and $S_1 U_1$ can be computed in $O(s_1 n_1 k)$ time. Next, to compute $T_2$, we must first compute $\calF(T_1)$. However,
$$\calF(T_1) = \calF((S_1U_1) \circ U_2) \circ U_3 \circ \cdots \circ U_q$$
meaning that to (implicitly) compute $\calF(T_1)$, it suffices to compute $\calF((S_1 U_1) \circ U_2)$, and computing $\calF((S_1 U_1) \circ U_2)$ takes $O(s_1 n_2 k^2)$ time since $S_1 U_1 \in \R^{s_1 \times k}$ and $U_2 \in \R^{k \times n_2 \times k}$. In addition,
$$T_2 = S_2 \circ \calF(T_1) = S_2 \circ \calF((S_1 U_1) \circ U_2) \circ U_3 \circ \cdots \circ U_q$$
meaning that to compute $T_2$ (implicitly) it suffices to compute $S_2 \circ \calF((S_1 U_1) \circ U_2)$ --- since $S_2 \in \R^{s_2 \times s_1 n_2}$ and $(S_1 U_1) \circ U_2 \in \R^{s_1 \times n_2 \times k}$ (meaning $\calF((S_1 U_1) \circ U_2) \in \R^{s_1 n_2 \times k}$), computing $S_2 \circ \calF((S_1 U_1) \circ U_2) \in \R^{s_2 \times k}$ takes $O(s_2 s_1 n_2 k)$ time.

In general, we can proceed inductively as follows. Suppose we have computed $T_j \in \R^{s_j \times n \times \ldots \times n}$ (where $T_j$ has $(q - j + 1)$ modes) which can be written as
$$T_j = M_j \circ U_{j + 1} \circ \cdots \circ U_q$$
where $M_j \in \R^{s_j \times k}$. Then, let us determine the running time needed to compute $T_{j + 1}$ given $T_j$. Note that
$$T_{j + 1} = S_{j + 1} \circ \calF(T_j) = S_{j + 1} \circ \calF(M_j \circ U_{j + 1}) \circ U_{j + 2} \circ \cdots \circ U_q$$
Thus, if we define $M_{j + 1} = S_{j + 1} \circ \calF(M_j \circ U_{j + 1}) \in \R^{s_{j + 1} \times k}$, then
$$T_{j + 1} = M_{j + 1} \circ U_{j + 2} \circ \cdots \circ U_q$$
and therefore, to compute $T_{j + 1}$, it suffices to compute $M_{j + 1}$. Since $M_j \in \R^{s_j \times k}$ and $U_{j + 1} \in \R^{k \times n_{j + 1} \times k}$, computing $M_j \circ U_{j + 1} \in \R^{s_j \times n_{j + 1} \times k}$ takes $O(s_j n_{j + 1} k^2)$ time. Moreover, since $S_{j + 1} \in \R^{s_{j + 1} \times s_j n_{j + 1}}$ and $\calF(M_j \circ U_{j + 1}) \in \R^{s_j n_{j + 1} \times k}$, computing $M_{j + 1} = S_{j + 1} \circ \calF(M_j \circ U_{j + 1})$ takes $O(s_{j + 1} s_j n_{j + 1} k)$ time (once $\calF(M_j \circ U_{j + 1})$ has already been computed).

In summary, the overall running time needed to compute $\calL_q(S_1, \ldots, S_q)(U_1 \circ \cdots \circ U_q)$ is therefore
$$\sum_{j = 1}^{q - 1} O(s_j n_{j + 1} k^2) + \sum_{j = 1}^{q - 1} O(s_{j + 1} s_j n_{j + 1}k) + O(s_1 n_1 k) $$
Here, the first sum comes from the computation of $\calF(M_j \circ U_{j + 1})$ for $j \in [q - 1]$, and the second sum is the additional cost of computing $S_{j + 1} \circ \calF(M_j \circ U_{j + 1})$ for $j \in [q - 1]$ (once $\calF(M_j \circ U_{j + 1})$ has already been computed). If we let $s = \max_{j \in [q]} s_j$ and $n = \max_{j \in [q]} n_j$, then the overall running time is therefore $O(q \cdot (snk^2 + s^2 nk))$.
\end{proof}

The equivalent definition in Remark \ref{remark:tensor_train_sketch_in_terms_of_kronecker_products} in terms of Kronecker products is useful for showing that $\calL_q$ has the JL moment property if the $S_i$ are chosen properly --- in fact, it suffices for the $S_i$ to have the JL moment property, since we can then apply Lemma \ref{lemma:jl_moment_with_identity_matrix} to conclude that $S_i \times I_{n^{q - i}}$ has the JL moment property, and since $\calL_q$ is the composition of the maps $S_i \times I_{n^{q - i}}$ for $i \in [q]$, we can apply Lemma \ref{lemma:jl_moment_composition_of_jl_maps} to conclude that $\calL_q$ itself has the JL moment property. We make this precise below:

\begin{lemma}
\label{lemma:jl_moment_of_tensor_train_sketch_based_on_S_i}
Suppose that $S_1 \in \R^{s_1 \times n}$, and for $i \in [q] \setminus \{1\}$, $S_i \in \R^{s_i \times s_{i - 1}n}$, and $S_1, \ldots, S_q$ are independent. Moreover, suppose $S_i$ has the $(\frac{\eps}{\sqrt{2q}}, \delta, \ell)$ JL moment property. Then, $\calL_q(S_1, \ldots, S_q)$ has the $(\eps, \delta, \ell)$ JL moment property.
\end{lemma}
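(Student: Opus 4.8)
The plan is to leverage Remark \ref{remark:tensor_train_sketch_in_terms_of_kronecker_products}, which expresses $\calL_q(S_1, \ldots, S_q)$ as a composition of $q$ linear maps, each of which is a Kronecker product of one of the $S_i$ with an identity matrix. Concretely, if we set $M_i := S_i \times I_{n^{q - i}}$ for $i \in [q]$ (with the convention that $M_1 = S_1 \times I_{n^{q-1}}$), then $\calL_q(S_1, \ldots, S_q) = M_q M_{q - 1} \cdots M_1$ as a map from $\R^{n^q}$ to $\R^{s_q}$, where we must double-check that the intermediate dimensions line up: $M_1$ maps $\R^{n^q} \to \R^{s_1 n^{q - 1}}$, and $M_i$ maps $\R^{s_{i - 1} n^{q - i + 1}} \to \R^{s_i n^{q - i}}$, so the composition is well-defined and lands in $\R^{s_q}$, matching Definition \ref{def:tensor_train_sketch_with_placeholders}.

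First I would invoke Lemma \ref{lemma:jl_moment_with_identity_matrix}: since each $S_i$ has the $(\frac{\eps}{\sqrt{2q}}, \delta, \ell)$ JL moment property by hypothesis, the matrix $M_i = S_i \times I_{n^{q - i}}$ also has the $(\frac{\eps}{\sqrt{2q}}, \delta, \ell)$ JL moment property. Next I would apply Lemma \ref{lemma:jl_moment_composition_of_jl_maps}, which states that if $M_1, \ldots, M_q$ are independent and each has the $(\frac{\eps}{\sqrt{2q}}, \delta, \ell)$ JL moment property, then the product $M_q M_{q - 1} \cdots M_1$ has the $(\eps, \delta, \ell)$ JL moment property. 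The independence of $M_1, \ldots, M_q$ follows immediately from the independence of $S_1, \ldots, S_q$, since each $M_i$ is a deterministic function of $S_i$ alone. This gives that $\calL_q(S_1, \ldots, S_q)$ has the $(\eps, \delta, \ell)$ JL moment property, as desired.

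There is no real obstacle here — the statement is essentially a bookkeeping exercise combining two previously established lemmas. The one point requiring minor care is the bookkeeping around the dimensions of the identity matrices (confirming the composition $M_q \cdots M_1$ indeed equals $\calL_q$ as defined via the fusing operation $\calF$ in Definition \ref{def:tensor_train_sketch_with_placeholders}, which is exactly the content of Remark \ref{remark:tensor_train_sketch_in_terms_of_kronecker_products}), and noting that Lemma \ref{lemma:jl_moment_composition_of_jl_maps} is stated with $k$ maps each having the $(\frac{\eps}{\sqrt{2k}}, \delta, \ell)$ property, so we apply it with $k = q$ and the per-map parameter $\frac{\eps}{\sqrt{2q}}$ matches the hypothesis exactly.
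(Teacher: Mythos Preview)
Your proposal is correct and follows essentially the same approach as the paper: both write $\calL_q$ as the composition $S_q (S_{q-1} \times I_n) \cdots (S_1 \times I_{n^{q-1}})$ via Remark~\ref{remark:tensor_train_sketch_in_terms_of_kronecker_products}, invoke Lemma~\ref{lemma:jl_moment_with_identity_matrix} to lift the JL moment property to each factor, and then apply Lemma~\ref{lemma:jl_moment_composition_of_jl_maps}. Your extra care verifying the intermediate dimensions and independence is sound but not strictly necessary.
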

\begin{proof}
By Remark \ref{remark:tensor_train_sketch_in_terms_of_kronecker_products}, we can write $\calL_q(S_1, \ldots, S_q): \R^{n^q} \to \R^{s_q}$ as
$$\calL_q(S_1, \ldots, S_q) = S_q (S_{q - 1} \times I_n) \ldots (S_2 \times I_{n^{q - 2}}) (S_1 \times I_{n^{q - 1}})$$
The factors in the product on the right hand side are independent, and since $S_i$ has the $(\frac{\eps}{\sqrt{2q}}, \delta, \ell)$ JL moment property, by Lemma \ref{lemma:jl_moment_with_identity_matrix}, $(S_i \times I_{n^{q - i}})$ also has the $(\frac{\eps}{\sqrt{2q}}, \delta, \ell)$ JL moment property. By Lemma \ref{lemma:jl_moment_composition_of_jl_maps}, their product $\calL_q(S_1, \ldots, S_q)$ has the $(\eps, \delta, \ell)$ JL moment property. This proves the lemma.
\end{proof}

\newpage
\subsection{FPT Bicriteria CP Decomposition}

In this section, we give an algorithm for q-mode CP decomposition which outputs $U_1, \cdots, U_q\in \R^{n\times k^{q-1}}$ that obtain a $(1+\eps)$-approximation factor under the Frobenius norm. Formally, the algorithm solves the following problem:

\begin{problem}[$q$-mode CP rank-$k$ Decomposition]\label{problem:cp_decomposition_q_mode}
Let $A\in \R^{n\times \cdots \times n}$ be a $q$-mode tensor, and $k\in \N$. We wish to find a $q$-mode tensor $B\in \R^{n\times \cdots \times n}$ for which $\lVert A - B\rVert_F^2$ is as small as possible, such that $B$ has tensor rank at most $k$. In other words, we would like to find matrices $U_1, \cdots, U_q\in \R^{n\times k}$ that minimizes $$\lVert\sum_{i = 1}^{k} {U_1}_i\otimes \cdots \otimes {U_q}_i - A\rVert_F^2$$
\end{problem}

\begin{definition}[Flattening of $q$-mode tensors]\label{def:flattening_q_mode}
Let $A\in \R^{n\times \cdots \times n}$ be a $q$-mode tensor, flattening $A$ along the each mode respectively yields matrices $A_1, \cdots, A_q\in \R^{n\times n^{q-1}}$ such that $\forall j_1, \cdots, j_q\in [n], A_{j_1, \cdots, j_q} = {A_i}_{j_i, M_i}$, where $M_i = \sum_{l\in [q], l\neq i, r = q-2}^{r = 0}n^{r}j_{l}$. We refer to the inverse of a flatten operation as a re-tensorization.
\end{definition}

\begin{algorithm}
\caption{$(1+\eps)$-approximation bicriteria rank $k^{q-1}$ decomposition with $(q(\nnz(A)) + n\poly(k, q/\eps))^{\poly(k, q/\eps)}$}
\label{algorithm:q_mode_bicriteria_CP_decomposition}
\begin{algorithmic}

\Require A $q-$mode tensor $A\in\R^{n\times \cdots\ n}, k\in \N$
\Ensure $\widehat{U_1}, \cdots, \widehat{U_q}\in \R^{n\times k^{q - 1}}$
\State $\eps'\leftarrow \eps/(q^{3/2})$
\State Generate sketching matrices $S_1, \cdots S_{q-1}\in \R^{n^{q-1}\times k}$ by composing CountSketch matrices of size $n^{(q-1)}\times O(k^2/\eps'^2)$ and Gaussian matrices of size $O(k^2/\eps'^2)\times k$
\State Generate CountSketch matrices $T_1, \cdots, T_{q-1}$ of rank $O(k^2/\eps^2)$ to implement TensorSketch
\State $\widehat{U_i}\in \R^{n\times k^{q-1}}\gets$ matrices with $(A_iS_i)_{j_i}$ as columns $(j_1, \cdots, j_{q-1})$ for $i\in[q-1]$.
\State $A_q\in\R^{n\times n^{q-1}}\gets$ flattening of $A$ along the $q^{th}$ mode as in \ref{def:flattening_q_mode}
\State $Z_q\in \R^{k^{q-1}\times n^{q-1}} \gets$ $\vec((A_jS_j)_{i_j})\otimes \cdots\otimes (A_{q-1}S_{q-1})_{i_{q-1}})$ as the $M_i^{th}$ row, where $M_i = \sum_{j = 1,r = q-2}^{j = q-1, r=0}(i_j-1)^r + 1$.
\State Solve for minimizer $\widehat{U_q}$ for $\min_{U_q\in \R^{n\times k^{q-1}}}\lVert YZ_q - A_q\rVert_F^2$ with $(q-1)$-mode TensorSketch, sketching each face and constructing $Z_q$ recursively. \\
\Return $\widehat{U_1}, \cdots \widehat{U_q}$
\end{algorithmic}

\end{algorithm}

\begin{algorithm}
\caption{$(1+\eps)$-approximation rank-$k$ decomposition with $(q\cdot \poly(k/\eps)\cdot \nnz(A))^{\poly(k, q)}$ runtime and constant success probability.}
\label{algorithm:q_mode_rank_k_CP_decomposition}
\begin{algorithmic}
\Require A $q$-mode tensor $A \in \R^{n \times \cdots \times n}$, $k \in \N$
\Ensure $U_1,\cdots, U_q \in \R^{n \times k}$
\State $\eps' \gets \eps / (2q)$
\State Generate sketching matrices $S_1, \cdots, S_q\in \R^{n^{q-1}\times k}$ by composing CountSketch matrices of size $n^{(q-1)}\times O(k^2/\eps'^2)$ and Gaussian matrices of size $O(k^2/\eps'^2)\times k$.
\State $A_1, \cdots, A_q\gets$ flattenings of $A$ along each mode as in Definition \ref{def:flattening_q_mode}.
\State Compute $A_iS_i$ for each $i\in [q]$

\State $A_0' \gets A$
\For{$i = 1\rightarrow q$}
    \State ${A_f}_i\in \R^{n\times k^{i-1}n^{q-i}}\gets$ flattening of $A_{i-1}'$ along mode $i$.
    \State ${A_f'}_i\in \R^{k\times k^{i-1}n^{q-i}}\gets (A_iS_i)^T{A_f}_i$
    \State $A'_i\in \R^{k\times k\times \cdots n\times \cdots\times n}\gets$ retensorize$({A_f'}_i)$ along the $i^{th}$ mode. There are $i$ dimensions of size $k$ and $q-i$ dimensions of size $n$.
\EndFor
\State $A'\in \R^{k\times \cdots\times k} \gets A_q'$, a $q-$mode tensor.
\State $\widehat{X_1}, \cdots \widehat{X_q\gets}$ $\argmin \lVert X_1\otimes \cdots \otimes X_q - A'\rVert_F^2$, where $X_i\in \R^{k\times k}, A'$ is a $q-$mode tensor in $\R^{k\times \cdots\times k}$.\\
\Return $A_1S_1\widehat{X_1}, \cdots, A_qS_q\widehat{X_q}$
\end{algorithmic}
\end{algorithm}

\begin{lemma}\label{lemma:good_cp_sol_in_span}
Let $A\in \R^{n\times \cdots\times n}$, $S_i$ be the sketching matrices defined in Algorithm \ref{algorithm:q_mode_bicriteria_CP_decomposition}. Problem \ref{problem:cp_decomposition_q_mode} could be approximated by finding $X_1, \cdots, X_q\in \R^{k\times k}$ that minimize
$$\lVert \sum_{i = 1}^k (A_1S_1X_1)_i \otimes \cdots \otimes (A_qS_qX_q)_i - A\rVert_F^2$$
Additionally, there exists $X_1, \cdots, X_q$ such that the optimum is at most $(1+\eps) = (1 + \eps')^q$ the optimal solution for Problem \ref{problem:cp_decomposition_q_mode}.
\end{lemma}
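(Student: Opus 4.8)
\textbf{Proof plan for Lemma \ref{lemma:good_cp_sol_in_span}.}

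The plan is to proceed mode-by-mode, replacing one optimal factor at a time by a matrix in the column span of $A_i S_i$, using the fact that each $S_i$ is (the composition of a Countsketch and a Gaussian, hence) an $\eps'$-affine embedding. First I would fix an optimal solution $U_1^*, \ldots, U_q^*$ for Problem \ref{problem:cp_decomposition_q_mode}. Observe that fixing $U_2^*, \ldots, U_q^*$ and minimizing over $U_1$ is a multiple-response $\ell_2$ regression problem: writing the residual in terms of the mode-$1$ flattening, we have $\|\sum_i (U_1)_i \otimes (U_2^*)_i \otimes \cdots \otimes (U_q^*)_i - A\|_F = \|U_1 Z_1 - A_1\|_F$ where $Z_1 \in \R^{k \times n^{q-1}}$ is the matrix whose $i$-th row is $\tensorvec((U_2^*)_i \otimes \cdots \otimes (U_q^*)_i)$, i.e. $Z_1$ has rank at most $k$. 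The key point is that the optimal $U_1$ for this regression lies in the column span of $A_1 Z_1^\dagger$; multiplying on the right by $S_1$, if $S_1$ is an $\eps'$-affine embedding for $Z_1^T$ (which has $k$ columns) and $A_1^T$, then the regression solution $A_1 S_1 (Z_1 S_1)^\dagger$ achieves cost at most $(1 + \eps')$ times the optimum, and this solution can be written as $A_1 S_1 X_1$ for $X_1 = (Z_1 S_1)^\dagger \in \R^{k \times k}$ (after folding any change-of-basis into $X_1$; note $Z_1 S_1$ is $k \times O(k^2/\eps'^2)$ so its pseudoinverse, composed appropriately, is a $k \times k$ transformation on the relevant subspace — more precisely $X_1$ is the $k\times k$ matrix so that $A_1S_1X_1$ is the regression minimizer).

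Next I would iterate this: having replaced $U_1^*, \ldots, U_{m-1}^*$ by $A_1 S_1 X_1, \ldots, A_{m-1} S_{m-1} X_{m-1}$, I replace $U_m^*$ in the same way, now treating the already-replaced factors together with $U_{m+1}^*, \ldots, U_q^*$ as fixed. Each such step is again a $k$-dimensional multiple-response regression (the design matrix $Z_m$ formed from the other factors has rank at most $k$), so an $\eps'$-affine embedding $S_m$ for $Z_m^T$ and $A_m^T$ costs a factor $(1 + \eps')$. The independence of the $S_i$ (or just a union bound with suitably small failure probability absorbed into the implicit constants / $\poly(k,q/\eps)$ sketch sizes) guarantees all $q$ affine-embedding events hold. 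After all $q$ replacements, the cost has grown by at most $(1 + \eps')^q$; since $\eps' = \eps / q^{3/2} \le \eps/q$, we get $(1+\eps')^q \le e^{\eps' q} \le e^\eps \le 1 + O(\eps)$, and by rescaling $\eps$ by a constant this is $(1+\eps)$. This shows existence of $X_1, \ldots, X_q$ with the claimed bound, and in particular that restricting the search to factors of the form $A_i S_i X_i$ loses only a $(1+\eps)$ factor.

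The one technical point to be careful about — and the main obstacle — is verifying that the composition of a Countsketch with $O(k^2/\eps'^2)$ rows and a Gaussian with $k$ columns (i.e. $O(k^2/\eps'^2) \times k$) is an $\eps'$-affine embedding for a rank-$k$ matrix; the Gaussian step reduces the dimension to exactly $k$, so this is where I would invoke the affine-embedding guarantee (Lemma \ref{lemma:jl_moment_implies_affine_embedding} together with the JL moment properties of Countsketch and Gaussian/sign matrices, or directly Theorem \ref{thm:countsketch_affine_embedding} for the Countsketch part composed with a dense affine embedding on the reduced space). I also need that each $Z_m$ genuinely has rank at most $k$ and that $Z_m^T$ is independent of $S_m$ — the former is immediate since $Z_m$ has $k$ rows, and the latter holds because $S_m$ is drawn obliviously. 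The only subtlety is bookkeeping the change-of-basis matrices so that each replacement factor is exactly of the form $A_i S_i X_i$ with $X_i \in \R^{k \times k}$; this is handled exactly as in the warm-up in Subsection \ref{subsection:improving_hsw_17}, where the pseudoinverse factor is folded into a neighboring factor (here, into the next factor to be replaced, or into the final $U_q$).
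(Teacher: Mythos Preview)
Your iterative, mode-by-mode replacement argument is exactly the structure the paper uses. The gap is in the technical tool you invoke at each step. You want the composed sketch $S_m$ to act as an $\eps'$-affine embedding for the rank-$k$ design $Z_m^T$, and you propose to get this from Lemma~\ref{lemma:jl_moment_implies_affine_embedding} (or Theorem~\ref{thm:countsketch_affine_embedding} plus a dense embedding) with failure probability small enough for a union bound over the $q$ modes. This cannot work: the Gaussian factor of $S_m$ has exactly $k$ rows, the same as the rank of $Z_m^T$. A square $k\times k$ Gaussian is \emph{not} a $(1\pm\eps')$ subspace embedding with any constant probability --- its smallest singular value is $O(1/k)$ with constant probability --- so the JL-moment hypothesis of Lemma~\ref{lemma:jl_moment_implies_affine_embedding} fails, and no amount of enlarging the Countsketch dimension helps, since the final dimension is pinned at $k$.

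The paper handles this via Lemma~\ref{lemma:gaussian_with_exactly_k_rows_multiple_response_regression}, which is precisely the ``exactly $k$ rows'' regression guarantee: the solution $\widehat U_m = A_m S_m (Z_m S_m)^\dagger$ is a $(1+\eps')$-approximate minimizer, but only with the exponentially small probability $e^{-\Theta(k^2\log k)}(\eps/(kq))^{O(k^2)}$. Chaining over the $q$ modes gives the stated $(1+\eps')^q$ bound with probability $e^{-\Theta(qk^2\log k)}(\eps/(kq))^{O(qk^2)}$, which is why the surrounding algorithm repeats the sketch that many times. So your plan is correct up to the point where you need the per-step regression guarantee; there you must swap the constant-probability affine-embedding lemma for Lemma~\ref{lemma:gaussian_with_exactly_k_rows_multiple_response_regression}, and accept the exponentially small success probability that comes with it.
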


\begin{proof}
The proof is based on the proof for Theorem C.1. in \cite{swz19_tensor_low_rank}.

Let $A \in \R^{n\times \cdots \times n}$ be a $q-$mode tensor, $U_1^*, \cdots, U_q^* = \argmin_{U_i}\lVert \sum_{i = 1}^k \bigotimes_{j\in [q]} {U_j}_i - A\rVert_F^2$.
Let matrices $Z_i \in \R^{k\times n^{(q-1)}}$ have rows $\text{vec}\left(\bigotimes_{j\in [q], j\neq i}{U_j^*}_i\right)$.

Consider the optimization problem $$\min_{U_1\in \R^{n\times k}}\lVert \sum_{i = 1}^k {U_1}_i\otimes {U_2^*}_i\otimes \cdots \otimes {U_q^*}_i\rVert_F^2$$ By definition of squared Frobenius norm, an equivalent objective function with the same optimum as the original problem is
$$\min_{U_1\in \R^{n\times k}}\lVert U_1Z_1 - A_1\rVert_F^2$$
Applying the matrix $S_1^T\in \R^{k\times n^{q - 1}}$ yields the sketched optimization problem
$$\min_{U_1\in \R^{n\times k}} \lVert U_1Z_1S_1 - A_1S_1\rVert_F^2$$
By Lemma \ref{lemma:gaussian_with_exactly_k_rows_multiple_response_regression}, letting $\widehat{U}_1 = A_1S_1(Z_1S_1)^{\dagger}$ gives
$\lVert \widehat{U_1}Z_1 - A_1\rVert_F^2\leq (1 + \eps')\mathcal{OPT}$ with probability $e^{\Theta(k^2\log k)}(\frac {\eps}{kq})^{O(k^2)}$.
Retensorizing along the first mode gives
$$\lVert \widehat{U_1}\otimes U_2^*\otimes \cdots \otimes U_q^*\rVert_F^2\leq (1 + \eps')\mathcal{OPT}$$
To repeat the process above, in the $i^{th}$ round, we fix $\widehat{U}_1, \cdots, \widehat{U}_{i-1}, U^*_{i+1}, \cdots, U^*_q\in \R^{n\times k}$ and use the sketching matrix $S_i\in \R^{n^{q-1}\times k}$. Let $\widehat{U_i} = A_iS_i(Z_iS_i)^{\dagger}$. We get
$$\lVert \widehat{U}_1\otimes \cdots \otimes \widehat{U}_i\otimes U^*_{i+1}\otimes \cdots U^*_q - A\rVert_F^2\leq (1+\eps')^i\mathcal{OPT}$$
At the end of the $q^{th}$ round, we have
$$\lVert\bigotimes_{j\in [q]}\widehat{U}_j - A\rVert_F^2\leq (1 + \eps')^q\mathcal{OPT}$$
with probability $e^{\Theta(qk^2\log k)}(\frac{\eps}{kq})^{O(qk^2)}$.\\
Replacing $\eps = \eps'\cdot (2q)$, we get $$\lVert\bigotimes_{j\in [q]}\widehat{U}_j - A\rVert_F^2\leq (1 + \eps)\mathcal{OPT}$$
Note that all $\widehat{U}_j$ are in the column spans of 
$A_jS_j$. i.e. $\widehat{U}_j = A_jS_jX_j$ for some unknown $X_j\in \R^{k\times k}$. This completes the proof.
\end{proof}

\begin{lemma}[Input Sparsity Reduction]\label{lemma:cp_decomposition_reduce_to_poly_k}
Let $A\in \R^{n\times \cdots \times n}$ be a $q-$mode tensor. Let $V_j$ be matrices of size $n\times k$ for $j\in [q]$. For every optimization problem
$$\min\lVert \sum_{i = 1}^k (V_1X_1)_i\otimes \cdots \otimes (V_qX_q)_i - A\rVert_F^2$$
with minimizers $X_1^*, \cdots, X_q^*$, there exists $B\in \R^{k\times \cdots \times k}$ a $q-$mode tensor such that
$$X_1^*, \cdots, X_q^* = \argmin_{X_j, j\in [q]}\lVert X_1\otimes \cdots \otimes X_q - B\rVert_F^2$$
\end{lemma}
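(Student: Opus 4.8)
The plan is to show that, as a function of $(X_1,\ldots,X_q)$, the objective equals a fixed constant plus the cost of fitting a \emph{small} $k\times\cdots\times k$ tensor $B$ by a rank-$k$ CP model, so that the two problems have identical minimizers. First I would vectorize everything. Writing $P := V_1\times\cdots\times V_q\in\R^{n^q\times k^q}$ for the Kronecker product of the $V_j$, the mixed-product property together with the identity $\tensorvec((V_1u_1)\otimes\cdots\otimes(V_qu_q)) = P\,\tensorvec(u_1\otimes\cdots\otimes u_q)$ (extended linearly over the $k$ rank-one terms) shows that
\[
\Big\|\sum_{i=1}^k (V_1X_1)_i\otimes\cdots\otimes (V_qX_q)_i - A\Big\|_F^2 = \|Pz - \tensorvec(A)\|_2^2,
\]
where $z := \tensorvec\big(\sum_{i=1}^k (X_1)_i\otimes\cdots\otimes (X_q)_i\big)$ and, as $(X_1,\ldots,X_q)$ ranges over $(\R^{k\times k})^q$, the vector $z$ ranges over the image $\mathcal C$ of the CP parametrization map.

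Next I would reduce to the case in which each $V_j$ has orthonormal columns, passing if necessary to an orthonormal basis of the column space of $V_j$ (padded to $k$ columns); this is exactly the step at which the claimed equality of minimizer-tuples holds on the nose. When the $V_j$ are orthonormal, $P^TP = (V_1^TV_1)\times\cdots\times(V_q^TV_q) = I_{k^q}$, so $P$ itself has orthonormal columns. The splitting $Pz - \tensorvec(A) = P\big(z - P^T\tensorvec(A)\big) + \big(PP^T - I\big)\tensorvec(A)$ is orthogonal, since $P^T(PP^T-I)\tensorvec(A) = 0$, and hence by the Pythagorean theorem and $\|Py\|_2 = \|y\|_2$,
\[
\|Pz - \tensorvec(A)\|_2^2 = \big\|z - P^T\tensorvec(A)\big\|_2^2 + \big\|(I - PP^T)\tensorvec(A)\big\|_2^2 .
\]
The last term is a constant independent of $(X_1,\ldots,X_q)$.

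Finally I would take $B$ to be the $q$-mode tensor with $\tensorvec(B) = P^T\tensorvec(A)$, equivalently $B = A\times_1 V_1^T\times_2\cdots\times_q V_q^T\in\R^{k\times\cdots\times k}$ — this is precisely the contraction built up iteratively in Algorithm \ref{algorithm:q_mode_rank_k_CP_decomposition}, via the Tucker/Kronecker identity $\tensorvec(A\times_1 V_1^T\times\cdots\times_q V_q^T) = (V_1^T\times\cdots\times V_q^T)\tensorvec(A)$. Re-tensorizing turns $\|z - P^T\tensorvec(A)\|_2^2$ into $\big\|\sum_{i=1}^k (X_1)_i\otimes\cdots\otimes (X_q)_i - B\big\|_F^2$, so the original objective and $\|X_1\otimes\cdots\otimes X_q - B\|_F^2$ differ by the fixed constant $\|(I-PP^T)\tensorvec(A)\|_2^2$; they therefore have the same $\argmin$ over $(\R^{k\times k})^q$ (and a minimizer of one exists iff a minimizer of the other does). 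I expect the only real subtlety to be this orthonormality reduction: for a general $V_j$ the restricted objective is the nontrivial quadratic $z^T(P^TP)z - 2z^TP^T\tensorvec(A) + \mathrm{const}$ rather than a squared distance, so the minimizers of the two problems agree only after an invertible linear reparametrization of the $X_j$'s (equivalently, after replacing each $V_j$ by an orthonormal spanning set); I would make this reparametrization explicit at the outset so that the statement holds as written.
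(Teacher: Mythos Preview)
Your proposal is correct and follows essentially the same approach as the paper: reduce to orthonormal $V_j$'s, then use a Pythagorean decomposition to split the objective into a constant plus $\|\sum_i (X_1)_i\otimes\cdots\otimes(X_q)_i - B\|_F^2$ with $B = A\times_1 V_1^T\times_2\cdots\times_q V_q^T$. The only stylistic difference is that the paper carries out the orthogonal splitting iteratively, one mode at a time (flatten along mode $j$, peel off $V_j$, re-tensorize, and repeat), whereas you do it in a single step via the full Kronecker product $P = V_1\times\cdots\times V_q$ and the identity $P^TP = I_{k^q}$; both arrive at the same $B$ and the same constant term.
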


\begin{proof}
Note that one could write $V_j = V_j'Y_j$ where the columns of $V_j'$ are orthonormal bases of the column span of $V_j$, and that $Y_j$ could be folded into the unknown $X_j$. Without losing generality, consider the case where $V_j$'s have orthonormal columns.

Let matrices $Z_i \in \R^{k\times n^{(q-1)}}$ have rows $\text{vec}\left(\bigotimes_{j\in [q], j < i}X_j\otimes\bigotimes_{j\in [q], j> i}V_jX_j\right)$. Let $A_j$ be the flattened $A$ along the $j^{th}$ mode as in definition \ref{def:flattening_q_mode}.
\begin{align*}
    \lVert \sum_{i = 1}^k(V_1X_1)_i \otimes \cdots \otimes (V_qX_q)_i - A\rVert_F^2 &= \lVert V_1X_1Z_1 - A_1\rVert_F^2 \\
    &= \lVert (V_1V_1^T - I_n)A_1 + V_1(X_1Z_1 - V_1^TA_1)\rVert_F^2 \\
    &= \lVert V_1(X_1Z_1 - V_1^TA_1)\rVert_F^2 + \lVert (V_1V_1^T - I_n)A_1\rVert_F^2 & V_1V_1^T\perp V_1 \\
    &= \lVert X_1Z_1 - V_1^TA_1\rVert_F^2 + \lVert (V_1V_1^T - I_n)A_1\rVert_F^2
\end{align*}
The last line is true since $V_1$ has orthonormal columns. Note that $(V_1V_1^T - I_n)A_1$ is a constant term. Therefore, minimizing $\lVert \sum_{i = 1}^k (V_1X_1)_i\otimes \cdots \otimes (V_qX_q)_i - A\rVert_F^2$ is equivalent to minimizing $\lVert X_1Z_1 - V_1^TA_1\rVert_F^2 = \lVert \sum_{i = 1}^k {X_1}_i\otimes (V_2X_2)_i\otimes \cdots \otimes (V_qX_q)_i - V_1^TA_1\rVert_F^2$.

Repeat the process for $j = 2$ by flattening $\lVert \sum_{i = 1}^k {X_1}_i\otimes (V_2X_2)_i\otimes \cdots \otimes (V_qX_q)_i - V_1^TA_1\rVert_F^2$ along the second mode, folding over $V_2^T$, and retensorizing again. Iterate up to $j = q$. With the same argument, one could inductively prove that after the $j^{th}$ iteration, we arrive at an equivalent form of the problem
$$\min_{X_j}\lVert \sum_{i = 1}^k {X_1}_i\otimes \cdots \otimes {X_j}_i\otimes (V_jX_j)_i \otimes \cdots \otimes (V_qX_q)_i - A^{(j)}\rVert_F^2$$
where $A^{(0)} = A, A^{(j)} = \text{Retensorize}(V_j^T(A^{(j-1)})_j)$, where $A^{(j-1)})_j$ is the flattening along the $j^{th}$ mode of $A^{(j-1)}$.

After the $q^{th}$ iteration, we prove the desired equivalence.
\end{proof}

\begin{theorem}[Bicriteria $k^{q-1}$-Rank Approximation for CP Decomposition]
Let $A\in \R^{n\times \cdots \times n}$ be a $q-$mode tensor. Algorithm \ref{algorithm:q_mode_rank_k_CP_decomposition} finds a $(1+\eps)$-approximation in 
$(q(\nnz(A)) + n\poly(k, q/\eps))^{\poly(k, q/\eps)}$ with constant probability.
\end{theorem}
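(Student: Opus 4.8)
\emph{Proof sketch.} Write $\mathcal{OPT}$ for the optimal rank-$k$ CP cost $\min_{U_1,\dots,U_q}\lVert\sum_{i=1}^k (U_1)_i\otimes\cdots\otimes(U_q)_i-A\rVert_F^2$. The plan has three ingredients: (i) Lemma~\ref{lemma:good_cp_sol_in_span} supplies, conditioned on a ``good sketch'' event $\calE$ that holds only with exponentially small probability, a near-optimal solution lying inside the low-dimensional family searched by Algorithm~\ref{algorithm:q_mode_bicriteria_CP_decomposition}; (ii) the final regression in that algorithm is solved approximately by TensorSketch, exploiting that $Z_q^\top$ is a Khatri--Rao product; (iii) the exponentially small success probability is amplified by running $e^{\poly(qk/\eps)}$ independent trials and selecting the best candidate with a low-rank-preserving sketch.

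First I would invoke Lemma~\ref{lemma:good_cp_sol_in_span}: on the event $\calE$ that the sketches $S_1,\dots,S_q$ (each a CountSketch composed with an exactly-$k$-column Gaussian) are simultaneously good --- which, by the proof of that lemma through Lemma~\ref{lemma:gaussian_with_exactly_k_rows_multiple_response_regression}, occurs with probability $p\ge e^{\Theta(qk^2\log k)}(\eps/(kq))^{O(qk^2)}=e^{-\poly(qk/\eps)}$ --- there exist $X_1,\dots,X_q\in\R^{k\times k}$ with $G^\ast:=\sum_{i=1}^k (A_1S_1X_1)_i\otimes\cdots\otimes(A_qS_qX_q)_i$ satisfying $\lVert G^\ast-A\rVert_F^2\le(1+\eps)\mathcal{OPT}$. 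The crucial point is that the mode-$q$ matricization of $G^\ast$ equals $Y'Z_q$ for some $Y'\in\R^{n\times k^{q-1}}$: each $(A_jS_jX_j)_i=A_jS_j(X_j)_i$ is a linear combination of the $k$ columns of $A_jS_j$, so by multilinearity of $\otimes$ the vector $\mathrm{vec}\big((A_1S_1X_1)_i\otimes\cdots\otimes(A_{q-1}S_{q-1}X_{q-1})_i\big)$ lies in the row span of $Z_q$ for each $i$; summing over $i$ and pulling out the mode-$q$ factor $A_qS_qX_q$ gives the claimed form. Since matricization preserves the Frobenius norm, $\min_{Y}\lVert YZ_q-A_q\rVert_F^2\le\lVert Y'Z_q-A_q\rVert_F^2=\lVert G^\ast-A\rVert_F^2\le(1+\eps)\mathcal{OPT}$ on $\calE$.

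Next I would show that Algorithm~\ref{algorithm:q_mode_bicriteria_CP_decomposition} solves this regression approximately. Because $Z_q^\top$ is the Khatri--Rao product $(A_1S_1)\odot\cdots\odot(A_{q-1}S_{q-1})$, the CountSketches $T_1,\dots,T_{q-1}$ compose via the TensorSketch construction of \cite{pham2013fast} into a single oblivious affine embedding $T$ for the column span of $Z_q^\top$ together with $A_q^\top$, provided each $T_i$ has $\poly(k^{q-1},q/\eps)$ rows (enough to embed a rank-$k^{q-1}$ subspace); $TZ_q^\top$ and $TA_q^\top$ are computed in $O(q\cdot\nnz(A))+n\cdot\poly(k^{q-1},q/\eps)$ time by the recursive face-by-face evaluation. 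The minimizer $\widehat{U_q}$ of $\lVert U_qZ_qT^\top-A_qT^\top\rVert_F^2$ then satisfies $\lVert\widehat{U_q}Z_q-A_q\rVert_F^2\le(1+O(\eps))\min_Y\lVert YZ_q-A_q\rVert_F^2\le(1+O(\eps))\mathcal{OPT}$ on $\calE$. Pairing the matrix $\widehat{U_q}\in\R^{n\times k^{q-1}}$ with $\widehat{U_1},\dots,\widehat{U_{q-1}}$, whose columns indexed by $(j_1,\dots,j_{q-1})\in[k]^{q-1}$ equal $(A_iS_i)_{j_i}$, yields a CP decomposition of rank at most $k^{q-1}$ whose cost is exactly $\lVert\widehat{U_q}Z_q-A_q\rVert_F^2\le(1+O(\eps))\mathcal{OPT}$, conditioned on $\calE$; rescaling $\eps$ gives the $(1+\eps)$ bound.

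Finally, to obtain constant overall success probability I would repeat the whole procedure $T=\Theta(1/p)=e^{\poly(qk/\eps)}$ times with independent sketches, so that with probability $1-e^{-\Omega(1)}$ at least one of the $T$ candidate decompositions has cost $\le(1+O(\eps))\mathcal{OPT}$; to select a good one without paying $n^q$ per candidate, apply one oblivious TensorSketch $\Pi$ with $\poly(k^{q-1})\log T/\eps^2$ rows (hence the $(\eps,1/(CT),2)$ JL moment property) to $A$ once in $O(q\cdot\nnz(A))$ time and to each candidate in $n\cdot\poly(k^{q-1},q/\eps)$ time using the CP structure, and return the candidate of smallest sketched residual norm; a union bound over the $T$ candidates shows all residuals are preserved to $(1\pm\eps)$. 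The per-trial cost is $O(q\cdot\nnz(A))+nq\cdot\poly(k/\eps)$ to form the $A_jS_j$ plus $n\cdot\poly(k^{q-1},q/\eps)$ for the regression, and selection adds $O(q\cdot\nnz(A))+T\cdot n\cdot\poly(k^{q-1},q/\eps)$; since $T=e^{\poly(qk/\eps)}$ and $\poly(k^{q-1},q/\eps)\le(qk/\eps)^{O(q)}$, the total is $\big(q\cdot\nnz(A)+n\cdot\poly(k,q/\eps)\big)^{\poly(k,q/\eps)}$. \textbf{Main obstacle:} the step that forces the exponential blow-up --- holding the bicriteria rank at exactly $k^{q-1}$ requires each $A_jS_j$ to have exactly $k$ columns, which is precisely why Lemma~\ref{lemma:good_cp_sol_in_span} succeeds only with probability $e^{-\poly(qk/\eps)}$ and the enumeration over $e^{\poly(qk/\eps)}$ candidates is unavoidable with this approach. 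The secondary subtlety is verifying that the small family $\{YZ_q:Y\in\R^{n\times k^{q-1}}\}$ genuinely contains the good solution of Lemma~\ref{lemma:good_cp_sol_in_span}, i.e.\ that the $k^{q-1}$ Khatri--Rao columns span exactly the right subspace, and that the face-wise TensorSketch is a true affine embedding for that rank-$k^{q-1}$ subspace.
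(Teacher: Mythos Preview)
Your proposal is correct and follows essentially the same route as the paper: invoke Lemma~\ref{lemma:good_cp_sol_in_span} to guarantee (with probability $e^{-\poly(qk/\eps)}$) a near-optimal solution in the column spans of the $A_jS_j$, fold the unknown $X_j$'s into a single regressor $U_q\in\R^{n\times k^{q-1}}$ against the Khatri--Rao matrix $Z_q$, solve the regression with TensorSketch, and amplify by independent repetition with sketched-residual selection. Your treatment is in fact more explicit than the paper's in two places --- the verification that the good solution's mode-$q$ matricization lies in $\{YZ_q\}$, and the candidate-selection step via a JL-moment sketch --- whereas the paper compresses these into one sentence; the only detail you gloss over that the paper mentions is the extra $\sqrt{q}$ distortion from the $(q-1)$-mode TensorSketch, which is why the algorithm sets $\eps'=\eps/q^{3/2}$ rather than just $\eps/q$.
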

\begin{proof}
By Lemma \ref{lemma:good_cp_sol_in_span}, for $\eps' < \frac 1 2$
$$\min_{X_1, \cdots, X_q}\lVert \sum_{l=1}^k \bigotimes_{j=1}^q (A_jS_jX_j)_l\rVert_F^2\leq (1+\eps')^q\mathcal{OPT}\leq (1 + 2q\eps')\mathcal{OPT}$$
Expanding the tensor products, this is equivalent to 
$$\min_{X_1, \cdots, X_q} \lVert \sum_{i_1, \cdots i_q = 1}^k \left(\bigotimes_{j = 1}^{q-1} (A_jS_j)_{i_j}\right) \otimes (\sum_{l = 1}^k(A_qS_qX_q)_l(X_1)_{i, l}\cdots(X_{q-1})_{i, l}) - A\rVert_F^2\leq (1+2q\eps')\mathcal{OPT}$$
Let $Z_q$ be as defined in Algorithm \ref{algorithm:q_mode_bicriteria_CP_decomposition}. $Z_q\in \R^{k^{q-1}\times n^{q-1}}$. Note that $Z_q$ is the matricization of the tensor product of the first $q-1$ sketched faces. As all the $X_j$'s are folded into $$\sum_{l = 1}^k(A_qS_qX_q)_l(X_1)_{i, l}\cdots(X_{q-1})_{i, l}$$, we arrive at an equivalent standard tensor regression problem with $\widehat{U_1}, \cdots, \widehat{U_{q-1}}$ be known, as specified in Algorithm \ref{algorithm:q_mode_bicriteria_CP_decomposition}, and $U_q$ being unknown. Since flattening does not affect squared Frobenius norm due to symmatry, we have
$$\min_{U_q\in \R^{n\times k^{q-1}}}\lVert U_qZ_q - A_q\rVert_F^2\leq (1+2q\eps')\mathcal{OPT}$$
where $Z_q\in \R^{k^{q-1}\times n^{q-1}}, A_q\in \R^{n\times n^{q-1}}$.
However, explicitly solving for this problem introduced a runtime in $\Omega(n^{q-1})$. Therefore, we generate sketching matrices $T_1, \cdots, T_{q-1}$ , sketch each $A_jS_j$ in advance, and construct sketched $Z_q$ and $A_q(T_1, \cdots, T_{q-1}, I)$ iteratively as in \cite{akkpvwz20_tensor_sketch_outer_product}, achieving $O(\nnz(A) \poly(k/\eps))^q$ cost for computing the sketch and solving the sketched problem, with an additional $\sqrt{q}$ factor on the error term. Boosting the success probability of Lemma \ref{lemma:good_cp_sol_in_span} by repeating the sketch and taking the minimum error yields the desired runtime and success probability.
\end{proof}

\begin{remark}
[rank-$k$ Approximation for CP Decomposition]
Algorithm \ref{algorithm:q_mode_rank_k_CP_decomposition} outputs $\widehat{U_1}, \cdots, \widehat{U_q}\in \R^{n\times k}$ such that
$$\lVert \sum_{i = 1}^k (\widehat{U_1})_i\otimes \cdots (\widehat{U_q})_i - A\rVert _F^2\leq (1+\eps)\mathcal{OPT}$$
\end{remark}

\begin{proof}
The theorem immediate follows Lemma \ref{lemma:good_cp_sol_in_span} and \ref{lemma:cp_decomposition_reduce_to_poly_k}, and that $\min_{X_1, \cdots, X_q}\lVert X_1\otimes \cdots\otimes X_q - A\rVert_F^2$ could be solved as a polynomial system in $2^{\Omega(k^q)}$.
\end{proof}

\newpage
\section{Hardness of Rank-$1$ Decomposition with Frobenius Norm} \label{sec:rank_1_hardness}

\begin{definition}
Let $B \in \R^{n \times n}$. Then, we define $\|B\|_{2, 4} = \Big( \sum_{i = 1}^n \|B_{i, :}\|_2^4 \Big)^{1/4}$, where $B_{i, :}$ is the $i^{th}$ row of $B$, and $\|B\|_{2 \to 4} = \sup_{x \in \R^n \setminus \{0\}} \frac{\|Bx\|_4}{\|x\|_2}$.
\end{definition}

\begin{proof}[Proof of Theorem \ref{thm:hardness_rank1}]
Let $B \in \R^{n \times n}$, and define $A \in \R^{n \times n \times n}$ as follows: for all $k \in \N$, $A^k = B_k B_k^T$, where $B_k \in \R^n$ is the transpose of the $k^{th}$ row of $B$. Then, for any unit vectors $u, v \in \R^n$,
\begin{equation} \label{eq:reduce_to_2_to_4_norm_part_1}
\begin{split}
\sum_{k = 1}^n \|uu^T A^k vv^T - A^k \|_F^2
& = \sum_{k = 1}^n \|uu^T A^k vv^T \|_F^2 - 2 \Tr((A^k)^T uu^T A^k vv^T) + \|A^k\|_F^2 \\
& = \sum_{k = 1}^n \|uu^T A^k vv^T \|_F^2 - 2 \Tr(A^k uu^T A^k vv^T) + \|A^k\|_F^2 \\
& = \sum_{k = 1}^n (u^T A^k v)^2 - 2 \Tr(A^k uu^T A^k vv^T) + \|A^k \|_F^2 \\
& = \sum_{k = 1}^n (u^T A^k v)^2 - 2 \Tr(u^T A^k vv^T A^k u) + \|A^k \|_F^2 \\
& = \sum_{k = 1}^n \|A^k \|_F^2 - (u^T A^k v)^2
\end{split}
\end{equation}
The third equality above is because $u$ and $v$ are unit vectors. The fourth equality above is because $\Tr(XYZ) = \Tr(ZXY)$ for any matrices $X, Y, Z$. The fifth equality is because $A^k$ is PSD (meaning that $v^T A^k u = u^T A^k v$). To simplify the final expression in Equation \ref{eq:reduce_to_2_to_4_norm_part_1}, we use the following Lemma:

\begin{lemma} [From \cite{hlz10_quadratic_to_bilinear}, Proposition 6.10 of \cite{mostNPHard}]
\label{lemma:quadratic_to_bilinear}
Let $A_k \in \R^{n \times n}$ be a symmetric matrix for $k \in [m]$. Then,
$$\max_{\|u\|_2 = 1} \sum_{k = 1}^m \langle A_k u, u \rangle^2 = \max_{\|x\|_2 = \|y\|_2 = 1} \sum_{k = 1}^m \langle A_k x, y \rangle^2$$
\end{lemma}
\begin{proof}
This result is shown in the proof of Proposition 2 of \cite{hlz10_quadratic_to_bilinear}.
\end{proof}

Thus, by Equation \ref{eq:reduce_to_2_to_4_norm_part_1},
\begin{equation} \label{eq:reduce_to_2_to_4_norm_part_2}
\begin{split}
\min_{\|u\|_2 = \|v\|_2 = 1} \sum_{k = 1}^n \|uu^T A^k vv^T - A^k \|_F^2
& = \min_{\|u\|_2 = \|v\|_2 = 1} \sum_{k = 1}^n \|B_k\|_2^4 - (u^T A^k v)^2 \\
& = \sum_{k = 1}^n \|B_k\|_2^4 - \max_{\|u\|_2 = \|v\|_2 = 1} \sum_{k = 1}^n (u^T A^k v)^2 \\
& = \sum_{k = 1}^n \|B_k\|_2^4 - \max_{\|u\|_2 = 1} \sum_{k = 1}^n (u^T A^k u)^2 \\
& = \|B\|_{2, 4}^4 - \max_{\|u\|_2 = 1} \sum_{k = 1}^n (u^T B_k B_k^T u)^2 \\
& = \|B\|_{2, 4}^4 - \max_{\|u\|_2 = 1} \sum_{k = 1}^n \langle u, B_k \rangle^2 \langle B_k, u \rangle^2 \\
& = \|B\|_{2, 4}^4 - \max_{\|u\|_2 = 1} \sum_{k = 1}^n \langle B_k, u \rangle^4 \\
& = \|B\|_{2, 4}^4 - \max_{\|u\|_2 = 1} \|Bu\|_4^4 \\
& = \|B\|_{2, 4}^4 - \|B\|_{2 \to 4}^4
\end{split}
\end{equation}
where the third equality is by Lemma \ref{lemma:quadratic_to_bilinear}, since $A^k$ is a PSD matrix.

Finally, we will show that obtaining a $(1 + \Theta(\frac{1}{n}))$-approximate solution $u, v$ to Tucker-$(2, 3)$ decomposition would allow us to compute a $\Theta(1)$-approximation to $\|B\|_{2 \to 4}$. To show this, we compare $\|B\|_{2, 4}$ to $\|B\|_{2 \to 4}$:

\begin{lemma} \label{lemma:2_to_4_norm_relationship}
$\|B\|_{2, 4}^4 \geq \|B\|_{2 \to 4}^4 \geq \frac{\|B\|_{2, 4}^4}{n}$.
\end{lemma}
\begin{proof}
First, for any $k \in [n]$, let $B_k \in \R^n$ be the transpose of the $k^{th}$ row of $B$. Then,
\begin{equation}
\begin{split}
\|B\|_{2 \to 4}^4
& \geq \Big\|B \cdot \frac{B_k}{\|B_k\|_2} \Big\|_4^4 \\
& \geq \frac{\langle B_k, B_k \rangle^4}{\|B_k\|_2^4} \\
& = \|B_k\|_2^4
\end{split}
\end{equation}
Thus,
$$\|B\|_{2 \to 4}^4 \geq \max_{k \in [n]} \|B_k\|_2^4 \geq \frac{1}{n} \sum_{k = 1}^n \|B_k\|_2^4 = \frac{\|B\|_{2, 4}^4}{n}$$
This shows the second inequality in the lemma. The first equality follows from Equation \ref{eq:reduce_to_2_to_4_norm_part_2}.
\end{proof}

Now suppose there is an algorithm which computes unit vectors $u, v$ such that
$$\sum_{k = 1}^n \|uu^T A^k vv^T - A^k \|_F^2 \leq \Big(1 + \frac{1}{Cn}\Big) \min_{\|u_*\|_2 = \|v_*\|_2 = 1} \sum_{k = 1}^n \|u_* u_*^T A^k v_* v_*^T - A^k \|_F^2$$
for some absolute constant $C \geq 1$. Then, this algorithm computes $F$ such that
$$\|B\|_{2, 4}^4 - \|B\|_{2 \to 4}^4 \leq F \leq \Big(1 + \frac{1}{Cn}\Big) \cdot (\|B\|_{2, 4}^4 - \|B\|_{2 \to 4}^4)$$
It follows from the left inequality that
$$\|B\|_{2, 4}^4 - F \leq \|B\|_{2 \to 4}^4$$
and from the right inequality that
\begin{equation}
\begin{split}
\|B\|_{2, 4}^4 - F
& \geq \|B\|_{2, 4}^4 - \Big(1 + \frac{1}{Cn} \Big) \cdot (\|B\|_{2, 4}^4 - \|B\|_{2 \to 4}^4) \\
& = \|B\|_{2 \to 4}^4 - \frac{1}{Cn}\cdot (\|B\|_{2, 4}^4 - \|B\|_{2 \to 4}^4) \\
& = \Big(1 + \frac{1}{Cn}\Big) \|B\|_{2 \to 4}^4 - \frac{\|B\|_{2, 4}^4}{Cn} \\
& \geq \Big(1 + \frac{1}{Cn}\Big) \|B\|_{2 \to 4}^4 - \frac{\|B\|_{2 \to 4}^4}{C} \\
& \geq \Big(1 - \frac{1}{C} + \frac{1}{Cn} \Big) \|B\|_{2 \to 4}^4
\end{split}
\end{equation}
where the second inequality is by Lemma \ref{lemma:2_to_4_norm_relationship}. In summary,
$$\Big(1 - \frac{1}{C} + \frac{1}{Cn}\Big) \|B\|_{2 \to 4}^4  \leq  \|B\|_{2, 4}^4 - F \leq \|B\|_{2 \to 4}^4$$
Thus, $\|B\|_{2 \to 4}^4 - F$ is equal to $\|B\|_{2 \to 4}^4$ up to a constant factor, and assuming Conjecture \ref{conjecture:2_to_4_norm_running_time}, it thus requires at least $2^{\Omega(n)}$ time to compute unit vectors $u, v$ satisfying
$$\sum_{k = 1}^n \|uu^T A^k vv^T - A^k \|_F^2 \leq \Big(1 + \frac{1}{Cn}\Big) \min_{\|u_*\|_2 = \|v_*\|_2 = 1} \sum_{k = 1}^n \|u_* u_*^T A^k v_* v_*^T - A^k \|_F^2$$
In particular, finding a $(1 + \eps)$-approximation for Tucker-$(2, 3)$, for rank $k = 1$, requires at least $2^{\Omega(1/\eps)}$ time assuming Conjecture \ref{conjecture:2_to_4_norm_running_time}.
\end{proof}

\newpage

\section{Tucker-$(p, q)$ Decomposition with A Robust Loss Function} \label{sec:robust_loss_fn}

In this section, we give an algorithm for Tucker-$(p, q)$ decomposition, using the robust loss function $\|\cdot\|_R$, defined in Definition \ref{def:robust_loss_function}. Formally, we solve the following problem:
\begin{problem}[Robust Tucker-$(p, q)$ Decomposition]
\label{problem:robust_tucker_decomposition}
Let $A \in \R^{n \times \ldots \times n}$ be a $q$-mode tensor, with $p < q$, and let $k \in \N$. Then, we wish to find a $q$-mode tensor $X \in \R^{n \times \ldots \times n}$ of multilinear rank at most $(k, \ldots, k, n, \ldots, n)$ (where the first $p$ entries of the tuple are $k$ and the last $q - p$ entries are $n$), such that $\|X - A\|_R$ is minimized.
\end{problem}

For convenience, we will use the following notation in this section:

\begin{definition}
Let $T \in \R^{n \times \ldots \times n}$ be a $q$-mode tensor. Then, we define $T^i := T(:, \ldots, :, i)$, that is, $T^i$ is the slice of $T$ whose index in the last mode is $i$.
\end{definition}

\begin{algorithm}
\caption{Algorithm for robust Tucker-$(p, q)$ decomposition with bicriteria rank $O(k \log^2 k)$ on each mode. Here $\textsc{12NormApprox}(A, k)$ denotes an algorithm for finding the best (up to an $O(1)$ factor) rank-$k$ approximation to $A$ in the $\ell_{1, 2}$ norm, with bicriteria rank $O(k \log^2 k)$.}
\label{algorithm:robust_tucker_p_q_decomposition}
\begin{algorithmic}
\Require A $q$-mode tensor $A \in \R^{n \times \ldots \times n}$, $p < q$, $k \in \N$
\Ensure $U^1, \ldots, U^p \in \R^{n \times O(k \log^2 k)}$

\For{$m = 1 \to p$}
    \For{$i = 1 \to n$}
        \State $S_{m, i} \gets $ An $O(pnk^2 \log^4 k) \times n^{q - 1}$ Countsketch matrix
        \State $T_{m, i} \gets $ An $O(k \log^2 k + \log(np)) \times O(pnk^2 \log^4 k)$ i.i.d. sign matrix
        \State $B_{m, i} \gets M_m(A^i) S_{m, i}^T T_{m, i}^T \in \R^{n \times O(k \log^2 k + \log(np))}$
    \EndFor
    \State $B_m \gets [B_{m, 1}, \ldots, B_{m, n}] \in \R^{n \times n \cdot O(k \log^2 k + \log(np))}$
    \State $U, V \gets \textsc{12NormApprox}(B_m, k)$ (best result after $O(\log p)$ trials)
    \State $\widehat{U^m} \gets U$
\EndFor \\ 
\Return $\widehat{U^1}, \ldots, \widehat{U^p}$
\end{algorithmic}
\end{algorithm}

\begin{theorem}
There is an algorithm for Problem \ref{problem:robust_tucker_decomposition} (Algorithm \ref{algorithm:robust_tucker_p_q_decomposition}) which, given a $q$-mode tensor $A \in \R^{n \times \ldots \times n}$, and $k \in \N$, outputs $\widehat{U^1}, \ldots, \widehat{U^p} \in \R^{n \times O(k \log^2 k)}$ for which, with probability at least $\frac{4}{5}$,
\begin{equation}
\begin{split}
\min_G \Big\|A - \sum_{i_1, \ldots, i_q} G(i_1, \ldots, i_p, i_{p + 1}, & \ldots, i_q)  \widehat{U^1}(:, i_1) \otimes \ldots \otimes \widehat{U^p}(:, i_p) \otimes e_{i_{p + 1}} \otimes \ldots \otimes e_{i_q} \Big\|_R \\
& \leq O(p \cdot (\sqrt{k} \log k + \sqrt{\log (np)})) \cdot \min_{X \text{ rank } (k, \ldots, k, n, \ldots, n)} \|X - A\|_R 
\end{split}
\end{equation}
The running time of Algorithm \ref{algorithm:robust_tucker_p_q_decomposition} is $O(p \cdot \nnz(A)) + n^3 \log n \cdot \poly(k)$.
\end{theorem}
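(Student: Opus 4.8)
The plan is to reduce Problem~\ref{problem:robust_tucker_decomposition} to $p$ independent robust matrix low rank approximation problems, one per mode, and to solve each one with the combination of a projection–cost preserving sketch (to reduce dimension without distorting projection costs), a norm–preserving sign matrix (to turn a sum of Frobenius norms of blocks into an $\ell_{1,2}$ objective with only a $\sqrt{d}$ loss), and the subroutine \textsc{12NormApprox}. First I would carry out the across–mode reduction. Writing $A^i$ for the $i$-th mode-$q$ face, once $\widehat U^1,\dots,\widehat U^p$ are fixed the inner minimization over the core $G$ has closed form $\min_G\|A-\sum_{i_1,\dots,i_q}G(\cdots)\widehat U^1(:,i_1)\otimes\cdots\|_R=\sum_{i}\|(I-Q_1Q_2\cdots Q_p)\tensorvec(A^i)\|_2$, where $P_m=\widehat U^m(\widehat U^m)^\dagger$ and $Q_m=I\otimes\cdots\otimes P_m\otimes\cdots\otimes I$. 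The $Q_m$ commute and are orthogonal projections, so $I-Q_1\cdots Q_p=\sum_{m=1}^p\big(\prod_{m'<m}Q_{m'}\big)(I-Q_m)$, and since each $\prod_{m'<m}Q_{m'}$ is a contraction, the triangle inequality bounds the objective by $\sum_{m=1}^p\sum_i\|M_m(A^i)-P_mM_m(A^i)\|_F$. Conversely $\mathrm{OPT}:=\min_X\|X-A\|_R\ge \mathrm{OPT}_m:=\min_{P\text{ rank }k}\sum_i\|M_m(A^i)-PM_m(A^i)\|_F$ for each $m$, because projecting onto a smaller subspace only increases the residual. Thus it suffices to show that for each $m$ the algorithm's $\widehat U^m$ satisfies $\sum_i\|M_m(A^i)-P_mM_m(A^i)\|_F\le O(\sqrt k\log k+\sqrt{\log(np)})\,\mathrm{OPT}_m$, since summing over $m$ yields the $O(p(\sqrt k\log k+\sqrt{\log(np)}))$ factor.

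Next I would analyze a single mode $m$. Let $r=O(k\log^2k)$ be the bicriteria rank of \textsc{12NormApprox}, $W_i=M_m(A^i)$, $C_i=W_iS_{m,i}^\top$, and $D_i=C_iT_{m,i}^\top=B_{m,i}$. Choosing $S_{m,i}$ with $\Omega(r^2 np)$ columns makes it, by Theorem~\ref{thm:countsketch_pcps_zero_additive} applied with rank parameter $r$ and failure probability $\tfrac1{10np}$ together with a union bound over $i\in[n]$ and $m\in[p]$, a $(\tfrac1{10},0,r)$-PCP: $\sum_i\|C_i-PC_i\|_F=(1\pm\tfrac1{10})\sum_i\|W_i-PW_i\|_F$ for \emph{all} rank-$\le r$ orthogonal projections $P$ simultaneously; in particular $\mathrm{OPT}_m^{S}:=\min_{P\text{ rank }k}\sum_i\|C_i-PC_i\|_F=(1\pm O(1))\,\mathrm{OPT}_m$. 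The sign matrix $T_{m,i}$ has $d=O(r+\log(np))=O(k\log^2k+\log(np))$ rows; for any matrix $M$ fixed independently of $T_{m,i}$ it preserves $\|M\|_F$ to within $(1\pm\tfrac1{10})$ with failure probability $\tfrac1{10np}$. This gives the ``good solution exists'' direction: if $P^*$ is optimal for $\mathrm{OPT}_m^S$, then $\min_{Z\text{ rank }k}\|B_m-Z\|_{1,2}\le\sum_i\|(C_i-P^*C_i)T_{m,i}^\top\|_{1,2}\le\sqrt d\sum_i\|(C_i-P^*C_i)T_{m,i}^\top\|_F\le O(\sqrt d)\,\mathrm{OPT}_m^S=O(\sqrt d)\,\mathrm{OPT}_m$, using $\|N\|_{1,2}\le\sqrt d\,\|N\|_F$ for $d$-column $N$ and that $C_i-P^*C_i$ is fixed before $T_{m,i}$. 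Running \textsc{12NormApprox}$(B_m,k)$ (taking the best of $O(\log p)$ trials, which boosts its success probability to $1-1/\poly(p)$) returns $U\in\R^{n\times O(k\log^2k)}$ and $V$ with $\|B_m-UV\|_{1,2}\le O(1)\min_{Z\text{ rank }k}\|B_m-Z\|_{1,2}$; setting $\widehat U^m=U$ and $P_m=UU^\dagger$, and using that $P_mB_m$ is at least as good in $\ell_{1,2}$ as $UV$ column block by column block, $\|B_m-P_mB_m\|_{1,2}=\sum_i\|(C_i-P_mC_i)T_{m,i}^\top\|_{1,2}\le O(\sqrt d)\,\mathrm{OPT}_m$. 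Going back through $\|\cdot\|_{1,2}\ge\|\cdot\|_F$, then through $T_{m,i}$, then through the PCP property of $S_{m,i}$ (which is uniform over rank-$\le r$ projections, hence applies to $P_m$), yields $\sum_i\|W_i-P_mW_i\|_F\le O(1)\sum_i\|C_i-P_mC_i\|_F\le O(\sqrt d)\,\mathrm{OPT}_m=O(\sqrt k\log k+\sqrt{\log(np)})\,\mathrm{OPT}_m$, as needed.

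For the failure probability I would union bound over the $O(np)$ PCP events, the $O(np)$ norm–preservation events for the $T_{m,i}$, and the $O(p)$ \textsc{12NormApprox} events (each $1/\poly(p)$ after amplification), keeping the total below $1/5$. For the running time, each $B_{m,i}=M_m(A^i)S_{m,i}^\top T_{m,i}^\top$ is computed by applying the Countsketch $S_{m,i}$ in $O(\nnz(A^i))$ time and then the $d$-row sign matrix, so all of mode $m$ costs $O(\nnz(A))+n\cdot\poly(k)$; invoking \textsc{12NormApprox} on the $n\times\big(n\cdot O(k\log^2k+\log(np))\big)$ matrix $B_m$ a further $O(\log p)$ times per mode and over $p$ modes gives the stated $O(p\cdot\nnz(A))+n^3\log n\cdot\poly(k)$.

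The main obstacle is the lower-bound half of the $T_{m,i}$ conversion: I need $\sum_i\|(C_i-P_mC_i)T_{m,i}^\top\|_F\ge\Omega(1)\sum_i\|C_i-P_mC_i\|_F$ for the subspace $P_m=UU^\dagger$ that \textsc{12NormApprox} outputs, and $P_m$ is correlated with the $T_{m,i}$, so one cannot simply condition on the data and the $S_{m,i}$ and invoke norm preservation of a fixed matrix. The resolution I would use is that \textsc{12NormApprox} is a column-subset-selection procedure, so $\mathrm{col}(U)$ is spanned by $r$ of the columns of $B_m$; taking a union bound over the (relatively few) subspaces of this form, and calibrating the $\chi^2$-type concentration of $\|(C_i-PC_i)T_{m,i}^\top\|_F^2$ to that net, gives the bound uniformly over the relevant $P$. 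The delicate point is ensuring that the net size keeps the sketch dimension $d$ at $O(k\log^2k+\log(np))$ rather than forcing an extra $\log n$ factor, which requires being careful about exactly which family of output subspaces must be covered and about the stable-rank dependence in the concentration inequality; everything else (the telescoping identity in the mode reduction, and the $O(np)$-fold bookkeeping of failure probabilities) is routine.
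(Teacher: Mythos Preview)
Your high-level plan matches the paper's: reduce to one problem per mode via the telescoping/triangle-inequality upper bound and the per-mode lower bound on $\mathrm{OPT}$, then for each mode $m$ apply a PCP Countsketch $S_{m,i}$, a dense sign sketch $T_{m,i}$, and \textsc{12NormApprox}, paying $\sqrt{d}$ only once when converting between $\ell_{1,2}$ and Frobenius norms.

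The one place you diverge is in how you treat $T_{m,i}$, and this is exactly where you flag an ``obstacle.'' You use $T_{m,i}$ only as a norm-preserver for a fixed matrix, which creates the dependence problem you describe (the output projection $P_m$ is correlated with the $T_{m,i}$), and you propose to fix it with a net over the possible output subspaces of \textsc{12NormApprox}. The paper avoids this entirely: it uses that a dense sign matrix with $O\big((r+\log(1/\delta))/\eps^2\big)$ rows is an $(\eps,0,r)$-PCP (the Gaussian/sign-matrix PCP result of Cohen et al.). With $r=O(k\log^2 k)$ and $\delta=\Theta(1/(np))$ this gives exactly $d=O(k\log^2 k+\log(np))$ rows, and the PCP guarantee is \emph{uniform over all} rank-$\le r$ projections. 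Hence $\|C_iT_{m,i}^\top - P\,C_iT_{m,i}^\top\|_F=(1\pm\eps)\|C_i-PC_i\|_F$ simultaneously for every such $P$, including both the algorithm's output $P_m$ (rank $O(k\log^2 k)$) and the optimal rank-$k$ comparison $Q^m$. Both directions of the $T_{m,i}$ conversion then go through immediately, with no net argument and no worry about whether \textsc{12NormApprox} is column-subset-selection. So your ``main obstacle'' disappears once you use the PCP property of $T_{m,i}$ rather than mere norm preservation; the rest of your argument is essentially the paper's.
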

\begin{proof}
Define
$$\mathcal{OPT} = \min_{X \text{ rank }(k, \ldots, k, n \ldots, n)} \|X - A\|_R$$
Let $X$ be a tensor of multilinear rank at most $(k, \ldots, k, n, \ldots, n)$. Then, we can write
$$X = \sum_{i_1, i_2, \ldots, i_p \in [p], i_{p + 1}, \ldots, i_q \in [n]} G(i_1, \ldots, i_p, i_{p + 1}, \ldots, i_q) U^1(:, i_1) \otimes \ldots \otimes U^p(:, i_p) \otimes e_{i_{p + 1}} \otimes \ldots \otimes e_{i_q}$$
for some $U^1, \ldots, U^p \in \R^{n \times k}$, where for $j \in [n]$, $e_j$ is the $j^{th}$ standard basis vector. Thus, for any $i \in [q]$, we can write
\begin{equation}
\begin{split}
X^i
& = \sum_{i_1, \ldots, i_p \in [p], i_{p + 1}, \ldots, i_{q - 1} \in [n]} G(i_1, \ldots, i_p, i_{p + 1}, \ldots, i_{q - 1}, i) U^1(:, i_1) \otimes \ldots \otimes U^p(:, i_p) \otimes e_{i_{p + 1}} \otimes \ldots \otimes e_{i_{q - 1}} \\
& = \sum_{i_1, \ldots, i_p \in [p], i_{p + 1}, \ldots, i_{q - 1} \in [n]} G^i(i_1, \ldots, i_p, i_{p + 1}, \ldots, i_{q - 1}) U^1(:, i_1) \otimes \ldots \otimes U^p(:, i_p) \otimes e_{i_{p + 1}} \otimes \ldots \otimes e_{i_{q - 1}}
\end{split}
\end{equation}
In particular, reshaping gives
$$\|X^i - A^i\|_F = \|(U^1 \times \ldots \times U^p \times I_n \times \ldots \times I_n) \tensorvec(G^i) - \tensorvec(A^i)\|_2$$
for each $i \in [n]$, meaning that
$$\|X - A\|_R = \sum_{i \in [n]} \|(U^1 \times \ldots \times U^p \times I_n \times \ldots \times I_n) \tensorvec(G^i) - \tensorvec(A^i)\|_2$$
Now, observe that we can assume, without loss of generality, that $U^m$ has orthonormal columns, for $m \in [p]$ --- otherwise, we can simply modify $G^i$ appropriately for each $i \in [n]$, since
\begin{equation}
\begin{split}
\|X - A\|_R
& = \sum_{i \in [n]} \|(U^1 \times \ldots \times U^p \times I_n \times \ldots \times I_n) \tensorvec(G^i) - \tensorvec(A^i) \|_2 \\
& = \sum_{i \in [n]} \|U^m M_m(G^i) (U^1 \times \ldots \times U^{m - 1} \times U^{m + 1} \times \ldots \times U^p \times I_n \times \ldots \times I_n)^T - M_m(A^i) \|_F
\end{split}
\end{equation}
meaning we can multiply $M_m(G^i)$ on the left by an appropriate $k \times k$ matrix. Next observe the following:

\begin{lemma} \label{lemma:replace_U_with_projection}
Problem \ref{problem:robust_tucker_decomposition} is equivalent to
$$\min_{P^1, \ldots, P^p, G^1, \ldots, G^n} \sum_{i \in [n]} \|(P^1 \times \ldots \times P^p \times I_n \times \ldots \times I_n)\tensorvec(G^i) - \tensorvec(A^i)\|_2 $$
where for $m \in [p]$, $P^m \in \R^{n \times n}$ is a rank-$k$ orthogonal projection matrix, and for $i \in [n]$, $G^i \in \R^{n \times \ldots \times n}$ is a $(q - 1)$-mode tensor.
\end{lemma}
\begin{proof}
As shown above, Problem \ref{problem:robust_tucker_decomposition} is equivalent to
$$\min_{U^1, \ldots, U^p, G^1, \ldots, G^n} \sum_{i \in [n]} \|(U^1 \times \ldots \times U^p \times I_n \times \ldots \times I_n) \tensorvec(G^i) - \tensorvec(A^i) \|_2$$
Fix $U^1, \ldots, U^p \in \R^{n \times k}$, and $G^1, \ldots, G^n$ with the appropriate dimensions. Then, by reshaping, we find that
\begin{equation} \label{eq:from_basis_to_projection_mode_1}
\begin{split}
\sum_{i \in [n]} \|(U^1 \times \ldots \times U^p \times I_n & \times \ldots \times I_n ) \tensorvec(G^i) - \tensorvec(A^i) \|_2 \\
& = \sum_{i \in [n]} \|U^1 M_1(G^i) (U^2 \times \ldots \times U^p \times I_n \times \ldots \times I_n)^T - M_1(A^i) \|_F \\
& = \sum_{i \in [n]} \|U^1 (U^1)^T M_1(\widetilde{G^i}) (U^2 \times \ldots \times U^p \times I_n \times \ldots \times I_n)^T - M_1(A^i) \|_F \\
& = \sum_{i \in [n]} \|P^1 M_1(\widetilde{G^i}) (U^2 \times \ldots \times U^p \times I_n \times \ldots \times I_n)^T - M_1(A^i) \|_F \\
& = \sum_{i \in [n]} \|(P^1 \times U^2 \times \ldots \times U^p \times I_n \times \ldots \times I_n) \tensorvec(\widetilde{G^i}) - \tensorvec(A^i) \|_2
\end{split}
\end{equation}
Here the first equality is by reshaping. For the second equality, first recall our assumption (without loss of generality) that $U^1$ has orthonormal columns. We can also assume without loss of generality that the columns of $M_1(G^i)$ are in the column span of $(U^1)^T$ (otherwise, the portion of those columns that is in the null space of $U^1$ could simply be discarded with no effect). Thus, there exists $\widetilde{G^i}$ such that $M_1(G^i) = (U^1)^T M_1(\widetilde{G^i})$. Note that the dimensions of $\widetilde{G^i}$ are the same as those of $G^i$, except that the dimension of $\widetilde{G^i}$ in the first mode is $n$ instead of $k$. For the third equality, we let $P^1 = U^1(U^1)^T$, which is an orthogonal projection matrix since $U^1$ has orthonormal columns, and the fourth equality is by reshaping again.

In this way, we can repeat this argument for each mode $m \in [p]$ to find that
\begin{equation} \label{eq:from_basis_to_projection_all_modes}
\begin{split}
\sum_{i \in [n]} \|(U^1 \times \ldots \times U^p \times I_n & \times \ldots \times I_n ) \tensorvec(G^i) - \tensorvec(A^i) \|_2 \\
& = \sum_{i \in [n]} \|(P^1 \times P^2 \times \ldots \times P^p \times I_n \times \ldots \times I_n) \tensorvec(\widetilde{G^i}) - \tensorvec(A^i) \|_2
\end{split}
\end{equation}
where for $m \in [p]$, $P^m = U^m (U^m)^T$. Therefore,
\begin{equation}
\begin{split}
\min_{U^m, G^i} \sum_{i \in [n]} & \|(U^1 \times \ldots \times U^p \times I_n \times \ldots \times I_n ) \tensorvec(G^i) - \tensorvec(A^i) \|_2 \\
& \geq \min_{P^m, G^i} \sum_{i \in [n]} \|(P^1 \times P^2 \times \ldots \times P^p \times I_n \times \ldots \times I_n) \tensorvec(\widetilde{G^i}) - \tensorvec(A^i) \|_2
\end{split}
\end{equation}
The reverse inequality can also be obtained by reversing the argument in Equation \ref{eq:from_basis_to_projection_mode_1}, letting $U^1$ be an orthonormal basis for the column span of $P^1$. This proves the lemma.
\end{proof}

We now extend Lemma \ref{lemma:replace_U_with_projection} slightly:

\begin{lemma} \label{lemma:replace_U_with_projection_no_core_tensor}
Problem \ref{problem:robust_tucker_decomposition} is equivalent to
\begin{equation} \label{eq:projection_version_problem_robust}
\begin{split}
\min_{P^1, \ldots, P^p} \sum_{i \in [n]} \|(P^1 \times \ldots \times P^p \times I_n \times \ldots \times I_n)\tensorvec(A^i) - \tensorvec(A^i)\|_2
\end{split}
\end{equation}
where for $m \in [p]$, $P^m \in \R^{n \times n}$ is a rank-$k$ orthogonal projection matrix, and for $i \in [n]$, $G^i \in \R^{n \times \ldots \times n}$ is a $(q - 1)$-mode tensor.
\end{lemma}
\begin{proof}
Note that Lemma \ref{lemma:replace_U_with_projection} states that Problem \ref{problem:robust_tucker_decomposition} is equivalent to
$$\min_{P^1, \ldots, P^p, G^1, \ldots, G^n} \sum_{i \in [n]} \|(P^1 \times \ldots \times P^p \times I_n \times \ldots \times I_n)\tensorvec(G^i) - \tensorvec(A^i)\|_2 $$
where the dimension of $G^i$ is $n$ in every mode. We now show that $G^i$ can be replaced with $A^i$. For each $m \in [p]$, write $P^m = U^m(U^m)^T$, where $U^m \in \R^{n \times k}$ is an orthonormal basis for the column span of $P^m$. Then, by standard properties of the Kronecker product (see page 4 of \cite{sdfhpf17_tensor_decomposition_survey}),
\begin{equation} \label{eq:kronecker_product_projection_matrix}
\begin{split}
P^1 \times \ldots \times P^p \times I_n & \times \ldots \times I_n
 = (U^1 (U^1)^T) \times \ldots \times (U^p (U^p)^T) \times I_n \times \ldots \times I_n \\
& = (U^1 \times \ldots \times U^p \times I_n \times \ldots \times I_n)((U^1)^T \times \ldots \times (U^p)^T \times I_n \times \ldots \times I_n) \\
& = (U^1 \times \ldots \times U^p \times I_n \times \ldots \times I_n)(U^1 \times \ldots \times U_p \times I_n \times \ldots \times I_n)^T
\end{split}
\end{equation}
Here the second equation is by the fact that for matrices $A, B, C, D$ whose dimensions match appropriately, $(A \times B)(C \times D) = (AC \times BD)$, and the third equation is by the fact that $A^T \times B^T = (A \times B)^T$. Now, we recall a result from \cite{djssw19_kronecker_product_regression}:

\begin{lemma}[Proposition 3.2 of \cite{djssw19_kronecker_product_regression}]
Let $U_i \in \R^{n_i \times d_i}$ be an orthonormal basis for $A_i \in \R^{n_i \times d_i}$, for $i \in [q]$. Then, $U_1 \times \ldots \times U_q$ is an orthonormal basis for $A_1 \times \ldots \times A_q$.
\end{lemma}

In particular, it follows from Proposition 3.2 of \cite{djssw19_kronecker_product_regression} that $U^1 \times \ldots \times U^p \times I_n \times \ldots \times I_n$ has orthonormal columns, and therefore, by Equation \ref{eq:kronecker_product_projection_matrix}, $P^1 \times \ldots \times P^p \times I_n \times \ldots \times I_n$ is a projection matrix, meaning that the regression objective below:
$$\|(P^1 \times \ldots \times P^p \times I_n \times \ldots \times I_n) \tensorvec(G^i) - \tensorvec(A^i)\|_2$$
is minimized when $G^i = A^i$, from which the lemma follows.
\end{proof}

Observe that from the proof of the above lemma, it is also clear that $P^i$ corresponds to $U^i (U^i)^T$ in the first formulation of the problem, assuming that $U^i$ has orthonormal columns. Now, using the above lemma, we show a lower bound on $\mathcal{OPT}$ that is useful for bounding the approximation factor of Algorithm \ref{algorithm:robust_tucker_p_q_decomposition}:
\begin{lemma} \label{lemma:opt_lower_bound_robust}
For all $m \in [p]$,
$$\mathcal{OPT} \geq \min_{P^m} \sum_{i \in [n]} \|P^m M_m(A^i) - M_m(A^i)\|_F$$
where the minima are taken over rank-$k$ orthogonal projection matrices $P^m$.
\end{lemma}
\begin{proof}
By Lemma \ref{lemma:replace_U_with_projection_no_core_tensor}, 
$$\mathcal{OPT} = \min_{P^1, \ldots, P^p} \sum_{i \in [n]} \|(P^1 \times \ldots \times P^p \times I_n \times \ldots \times I_n)\tensorvec(A^i) - \tensorvec(A^i)\|_2$$
Fix rank-$k$ orthogonal projection matrices $P^1_*, \ldots, P^p_* \in \R^{n \times n}$ such that
$$\sum_{i \in [n]} \|(P^1_* \times \ldots \times P^p_* \times I_n \times \ldots \times I_n)\tensorvec(A^i) - \tensorvec(A^i)\|_2 = \mathcal{OPT}$$
Then, for any $m \in [p]$,
\begin{equation} \label{eq:first_statement}
\begin{split}
\mathcal{OPT}
& = \sum_{i \in [n]} \|(P^1_* \times \ldots \times P^p_* \times I_n \times \ldots \times I_n) \tensorvec(A^i) - \tensorvec(A^i) \|_2 \\
& = \sum_{i \in [n]} \|P^m_* M_m(A^i) (P^1_* \times \ldots \times P^{m - 1}_* \times P^{m + 1}_* \times \ldots \times P^p_* \times I_n \times \ldots \times I_n)^T - M_m(A^i) \|_F \\
& \geq \sum_{i \in [n]} \|P^m_* M_m(A^i) - M_m(A^i) \|_F
\end{split}
\end{equation}
Here the second equality is by reshaping, and the first inequality is by the Pythagorean theorem (since $P^m_* M_m(A^i)$ is the closest matrix to $M_m(A^i)$, in Frobenius norm, whose columns are in the column span of $P^m_*$). This proves the lemma.
\end{proof}

We next prove an auxiliary lemma that helps us upper bound the error of Algorithm \ref{algorithm:robust_tucker_p_q_decomposition}:

\begin{lemma} \label{lemma:upper_bound_term_decomposition}
Let $P^1, \ldots, P^p \in \R^{n \times n}$ be orthogonal projection matrices. Then,
$$\|(P^1 \times \ldots \times P^p \times I_n \times \ldots \times I_n)\tensorvec(A^i) - \tensorvec(A^i)\|_F \leq \sum_{m \in [p]} \|P^m M_m(A^i) - M_m(A^i)\|_F$$
for any $i \in [n]$.
\end{lemma}
\begin{proof}
Let us show by induction on $t \in [p]$ that
$$\|(P^1 \times \ldots \times P^t \times I_n \times \ldots \times I_n)\tensorvec(A^i) - \tensorvec(A^i) \|_F \leq \sum_{m \in [t]} \|P^m M_m(A^i) - M_m(A^i)\|_F$$
from which the lemma follows. For the base case, observe that
\begin{equation}
\begin{split}
\|(P^1 \times I_n \times \ldots \times I_n)\tensorvec(A^i) - \tensorvec(A^i)\|_F
& = \|P^1 M_1(A^i) - M_1(A^i)\|_F
\end{split}
\end{equation}
by reshaping, as desired. Now, suppose for some $t \in [p]$, with $t < p$, that
$$\|(P^1 \times \ldots \times P^t \times I_n \times \ldots \times I_n) \tensorvec(A^i) - \tensorvec(A^i)\|_F \leq \sum_{m \in [t]} \|P^m M_m(A^i) - M_m(A^i) \|_F$$
Then,
\begin{equation}
\begin{split}
\|(P^1 \times \ldots \times P^{t + 1} \times & I_n \times \ldots \times I_n) \tensorvec(A^i) - \tensorvec(A^i)\|_F \\
& = \|P^{t + 1} M_{t + 1}(A^i) (P^1 \times \ldots \times P^t \times I_n \times \ldots \times I_n)^T - M_{t + 1}(A^i) \|_F \\
& \leq \|P^{t + 1} M_{t + 1}(A^i) (P^1 \times \ldots \times P^t \times I_n \times \ldots \times I_n)^T - P^{t + 1} M_{t + 1}(A^i) \|_F \\
& \,\,\,\,\,\,\,\, + \|P^{t + 1} M_{t + 1}(A^i) - M_{t + 1}(A^i)\|_F \\
& \leq \sum_{m \in [t + 1]} \|P^m M_m(A^i) - M_m(A^i) \|_F
\end{split}
\end{equation}
Here the first equality is by reshaping along the $(t + 1)^{th}$ mode. The first inequality is by the triangle inequality. The second inequality is by the induction hypothesis. This completes the induction, and thus the proof of the lemma.
\end{proof}

Finally, we perform a union bound over $i \in [n]$ to show that $S_{m, i}^T$ and $T_{m, i}^T$ give $(\eps, 0, O(k \log^2 k))$-PCPs for their respective $M_m(A^i)$, with at least constant probability. (Here we let $\eps = \frac{1}{10}$ for convenience.)

\begin{lemma} \label{lemma:robust_pcp_union_bound}
For $m \in [p]$ and $i \in [n]$, let $\calE_{m, i}$ be the event that $S_{m, i}^T$ is an $(\eps, 0, O(k \log^2 k))$-PCP for $M_m(A^i)$, and $\calF_{m, i}$ the event that $T_{m, i}^T$ is an $(\eps, 0, O(k \log^2 k))$-PCP for $M_m(A^i) S_{m, i}^T$. Finally, define $\calE = \bigcap_{m \in [p], i \in [n]} (\calE_{m, i} \cap \calF_{m, i})$. Then, $\calE$ occurs with probability at least $\frac{9}{10}$, as long as $S_{m, i}$ has at least $O(\frac{pnk^2 \log^4 k}{\eps^2})$ rows and $T_{m, i}$ has at least $O(\frac{k \log^2 k + \log(np)}{\eps^2})$ rows.

In addition, let $\calG$ be the event that, for all $m \in [p]$, $$\min_V \|\widehat{U^m} V - B_m\|_{1, 2} \leq O(1) \min_{X \text{ rank }k} \|X - B_m\|_{1, 2}$$
where $\widehat{U^1}, \ldots, \widehat{U^p} \in \R^{n \times O(k \log^2 k)}$ are the factors output by Algorithm \ref{algorithm:robust_tucker_p_q_decomposition}. Then, $\calG$ occurs with probability at least $\frac{9}{10}$.
\end{lemma}
\begin{proof}
Suppose $S_{m, i}$ has at least $O(\frac{pnk^2 \log^4 k}{\eps^2})$ rows. Then, by Theorem \ref{thm:countsketch_pcps_zero_additive} with $\delta$ replaced by $\frac{1}{20 np}$, $\eps$ replaced by $\eps$, and $k$ replaced by $O(k \log^2 k)$, $M_m(A^i)S_{m, i}^T$ is an $(\eps, 0, O(k \log^2 k))$-PCP for $M_m(A^i)$ with probability at least $1 - \frac{1}{20 np}$. Thus, for each $m \in [p]$ and $i \in [n]$, $\calE_{m, i}$ holds with probability at least $1 - \frac{1}{20 np}$. In addition, suppose $T_{m, i}$ has at least $O(\frac{k \log^2 k + \log(np)}{\eps^2})$ rows. Then, with probability at least $1 - \frac{1}{20np}$, $M_m(A^i) S_{m, i}^T T_{m, i}^T$ is an $(\eps, 0, O(k \log^2 k))$-PCP for $M_m(A^i) S_{m, i}^T$. In summary, $\calE_{m, i}$ and $\calF_{m, i}$ hold with probability at least $1 - \frac{1}{20 np}$ for all $m \in [p]$ and $i \in [n]$, and by a union bound, $\calE$ holds with probability at least $\frac{9}{10}$.

In addition, note that for any $m \in [p]$, with probability $1 - \frac{1}{10p}$, $\min_V \|U^m V - B_m\|_{1, 2} \leq O(1) \min_{X \text{ rank }k} \|X - B_m\|_{1, 2}$, since to obtain $U^m$, we apply the subroutine $\textsc{12NormApprox}$ for $O(\log p)$ trials and let $U^m$ be result of the trial which gives the least $\ell_{1, 2}$ norm error. Thus, by a union bound, $\calG$ occurs with probability at least $\frac{9}{10}$.
\end{proof}

Now, we analyze the error of Algorithm \ref{algorithm:robust_tucker_p_q_decomposition}. Let $\widehat{U^1}, \ldots, \widehat{U^p} \in \R^{n \times O(k \log^2 k)}$ be the output of Algorithm \ref{algorithm:robust_tucker_p_q_decomposition}, and let $\widehat{P^1}, \ldots, \widehat{P^p} \in \R^{n \times n}$ be the respective orthogonal projection matrices (i.e. $\widehat{P^m}$ is the orthogonal projection onto the column span of $\widehat{U^m}$). In addition, condition on $\calE$ and $\calG$ (which occur simultaneously with probability at least $\frac{4}{5}$ by Lemma \ref{lemma:robust_pcp_union_bound}). Then, the error obtained by Algorithm \ref{algorithm:robust_tucker_p_q_decomposition} can be bounded as follows. In the following sequence of inequalities, we use $a \lesssim b$ to denote $a \leq Cb$ for some absolute constant $C$ that is independent of $a$ and $b$.
\begin{equation} \label{eq:robust_final_bound_part_1}
\begin{split}
\sum_{i \in [n]} \|(P^1 \times \ldots \times P^p & \times I_n \times \ldots \times I_n) \tensorvec(A^i) - \tensorvec(A^i) \|_F \\
& \leq \sum_{i \in [n]} \sum_{m \in [p]} \|P^m M_m(A^i) - M_m(A^i) \|_F \\
& \lesssim \sum_{i \in [n]} \sum_{m \in [p]} \|P^m M_m(A^i) S_{m, i}^T - M_m(A^i) S_{m, i}^T \|_F \\
& \lesssim \sum_{i \in [n]} \sum_{m \in [p]} \|P^m M_m(A^i) S_{m, i}^T T_{m, i}^T - M_m(A^i) S_{m, i}^T T_{m, i}^T \|_F \\
& \leq \sum_{i \in [n]} \sum_{m \in [p]} \|P^m M_m(A^i) S_{m, i}^T T_{m, i}^T - M_m(A^i) S_{m, i}^T T_{m, i}^T \|_{1, 2} \\
& = \sum_{i \in [n]} \sum_{m \in [p]} \|P^m B_{m, i} - B_{m, i} \|_{1, 2} \\
& = \sum_{m \in [p]} \|P^m B_m - B_m\|_{1, 2}
\end{split}
\end{equation}
Here the first inequality is by Lemma \ref{lemma:upper_bound_term_decomposition}. The second and third inequalities are by conditioning on $\calE$, since $P^m$ is an orthogonal projection matrix of rank at most $O(k \log^2 k)$. The fourth inequality is because, for any matrix $M$, $\|M\|_F \leq \|M\|_{1, 2}$. The first equality is by the definition $B_{m, i} := M_m(A^i) S_{m, i}^T T_{m, i}^T$, and the second equality is by the definition $B_m := [B_{m, 1}, \ldots, B_{m, n}]$ (and the fact that the $\ell_{1, 2}$ norm decomposes across columns).

If we condition on $\calG$, then since $\textsc{12NormApprox}$ is an $O(1)$-approximation algorithm, there exists an absolute constant such that for all $m \in [m]$ and for all rank-$k$ projection matrices $Q^m$,
$$\|P^m B_m - B_m\|_{1, 2} \leq C \|Q^m B_m - B_m\|_{1, 2}$$
Thus, if $r$ is the number of rows in $T_{m, i}$ for $m \in [p]$ and $i \in [n]$, then
\begin{equation} \label{eq:robust_final_bound_part_2}
\begin{split}
\sum_{m \in [p]} \|P^m B_m - B_m \|_{1, 2}
& \lesssim \sum_{m \in [p]} \|Q^m B_m - B_m \|_{1, 2}
 \\
& = \sum_{i \in [n]} \sum_{m \in [p]} \|Q^m B_{m, i} - B_{m, i} \|_{1, 2} \\
& = \sum_{i \in [n]} \sum_{m \in [p]} \|Q^m M_m(A^i) S_{m,i }^T T_{m, i}^T - M_m(A^i) S_{m, i}^T T_{m, i}^T \|_{1, 2} \\
& \lesssim \sqrt{r} \sum_{i \in [n]} \sum_{m \in [p]} \|Q^m M_m(A^i) S_{m,i }^T T_{m, i}^T - M_m(A^i) S_{m, i}^T T_{m, i}^T \|_F \\
& \lesssim \sqrt{r} \sum_{i \in [n]} \sum_{m \in [p]} \|Q^m M_m(A^i) - M_m(A^i) \|_F
\end{split}
\end{equation}
Here, the first inequality is because for any matrix $M$ with $r$ columns, $\|M\|_{1, 2} \leq \sqrt{r} \|M\|_F$. The second inequality is by our assumption that $\calE$ holds.

Thus, combining Equations \ref{eq:robust_final_bound_part_1} and \ref{eq:robust_final_bound_part_2}, there is an absolute constant $C$ such that for all rank-$k$ projection matrices $Q^1, \ldots, Q^p$,
$$\sum_{i \in [n]} \|(P^1 \times \ldots \times P^p \times I_n \times \ldots \times I_n) \tensorvec(A^i) - \tensorvec(A^i) \|_F \leq C\sqrt{r} \sum_{m \in [p]} \sum_{i \in [n]} \|Q^m M_m(A^i) - M_m(A^i) \|_F$$
Taking the infimum over all $Q^1, \ldots, Q^p$ on the right-hand side gives
\begin{equation}
\begin{split}
\sum_{i \in [n]} \|(P^1 \times \ldots \times P^p & \times I_n \times \ldots \times I_n) \tensorvec(A^i) - \tensorvec(A^i) \|_F \\
& \leq C\sqrt{r} \min_{Q^1, \ldots, Q^p} \sum_{m \in [p]} \sum_{i \in [n]} \|Q^m M_m(A^i) - M_m(A^i) \|_F \\
& = C\sqrt{r} \sum_{m \in [p]} \min_{Q^m} \sum_{i \in [n]} \|Q^m M_m(A^i) - M_m(A^i) \|_F \\
& \leq C\sqrt{r} \sum_{m \in [p]} \mathcal{OPT} \\
& \leq Cp\sqrt{r} \cdot \mathcal{OPT} \\
& \leq O(p \cdot (\sqrt{k} \log k + \sqrt{\log (np)})) \cdot \mathcal{OPT}
\end{split}
\end{equation}
where the second inequality is by Lemma \ref{lemma:opt_lower_bound_robust}. Thus, Algorithm \ref{algorithm:robust_tucker_p_q_decomposition} obtains an $O(p \cdot (\sqrt{k} \log k + \sqrt{\log(np)}))$ approximation factor, and obtains a bicriteria rank of $O(k \log^2 k)$ on each mode.

Let us now analyze the running time of Algorithm \ref{algorithm:robust_tucker_p_q_decomposition}. First, for a single $m \in [p]$, let us determine the running time needed to find $\widehat{U^m}$. For $i \in [n]$, $B_{m, i}$ can be computed in $\nnz(A^i) + n \cdot O(pnk^2 \log^4 k) \cdot O(k\log^2 k + \log(np)) = \nnz(A^i) + n^2 \log n \cdot \poly(pk)$. Thus, $B_m$ can be computed in $\nnz(A) + n^3 \log n \cdot \poly(pk)$. Next, observe that $B_m \in \R^{n \times n \cdot O(k \log^2 k + \log(np))}$. Thus, the running time needed for one single trial of $\textsc{12NormApprox}(B_m, k)$ is at most $n^2 \log^2 n \cdot \poly(pk)$, and the time needed for $O(\log p)$ trials is $n^2 \log^2n \cdot \poly(pk)$ as well. Thus, the overall running time needed to find $\widehat{U^m}$ is $\nnz(A) + n^3 \log n \cdot \poly(pk)$, and the overall running time of Algorithm \ref{algorithm:robust_tucker_p_q_decomposition} is $O(p \cdot \nnz(A)) + n^3 \log n \cdot \poly(pk)$.
\end{proof}

\begin{remark}
The assumption that $p < q$ seems to be required for our technique to work. In other words, the factor $U^q$ must be an identity matrix.
\end{remark}

\newpage
\section{Additional Related Work} \label{sec:additional_related_work}
In this section, we discuss more related works mentioned earlier and compare them to our work in more detail.

The work of \cite{sun_lowranktucker} presents a family of algorithms that performs sketching based on Tucker-decompositions by applying dimensionality reduction to each factor matrix independently. They give a $2$-pass algorithm, proving a $2q$-approximation ratio with constant probability for rank-$k$ approximation for large enough $k$. 

 \cite{pmlr20_tensorized_random_proj} on CP decomposition and Tensor Train decomposition using randomized projections. There are similar prior results, such as \cite{TensorSketch}, that compute a subspace embedding for tensors. However, in many cases, this is too computationally expensive. For example, in \cite{pmlr20_tensorized_random_proj}, the cost of computing the embedding alone is $O(kNd\max(R, \tilde{R})^3)$, where $\tilde{R}$ is the actual best CP or TT rank of the tensor. The results are exponential in the number of modes.

\cite{MinsterRachel2020RAfL} builds upon Higher-Order SVD to perform lower rank approximation under the Tucker decomposition. The approximation error achieved was characterized by the discarded singular values of the matricizations of the tensor along different modes, which gives an $O(\sqrt{q-1})$ approximation.

\cite{oseledets2011tensor} defined Tensor Train decomposition and analyzed the TT-SVD algorithm. It incrementally finds an orthonormal bases for each factor matrix in the Tensor Train by finding the space for the top $k$ singular values and flattening along each mode.
\cite{huber2017randomized} modifies the original TT-SVD algorithm for Tensor Train decomposition studied in \cite{oseledets2011tensor} by performing a Gaussian sketch at each step. This introduces a $\sqrt{q-1}$ dependence in the error term, but is significantly more efficient. \cite{huber2017randomized} only characterizes their result with singular values of matricizations and numerical experiments, and does not show their algorithm gives a $(1+\varepsilon)$-approximation. 

\cite{faster_tt_sparse_2019} proposes an algorithm that uses a similar structure with TT-SVD and unlike most works, does achieve a $(1\pm\eps)$-relative error Tensor Train decomposition. However, their algorithm is very computationally expensive, and does not have a good guarantee of the rank of the output. This is because for each flattening, the algorithm sets $\delta = \frac{\eps}{\sqrt{d-1}}$, and finds a $\delta-$truncated SVD for that matricization. This requires searching for the target rank, which is computationally expensive.

\cite{handschuh2012changing} proposes algorithms to convert a tensor with low Tensor Train rank to a tensor with low Tensor Ring rank, and vice versa. This work also proposes an algorithm to convert a PEPS tensor network to a Tensor Train network. One key difference between our work and \cite{handschuh2012changing} is that we give a technique to convert an arbitrary tensor network (which can be structured according to any graph) to a tree of low rank. Our algorithm is different from that of \cite{handschuh2012changing} --- while they contract various edges of the network and stop when the network has become a tree, we contract the original graph $\calG$ until it is a single vertex, but additionally form two new vertices each time we contract an edge. Finally, the algorithm of \cite{handschuh2012changing} can increase the rank by a factor as large as $n^{O(q)}$ compared to our result.

Other traditional sketching and regression techniques, including leverage score sampling, have also been used to achieve better results in tensor approximation. \cite{fahrbach2021fast_ridge_regression} applies ridge leverage score sampling regression to the Tucker decomposition problem. This technique is based on the earlier work on Kronecker product regression \cite{djssw19_kronecker_product_regression}. However, \cite{fahrbach2021fast_ridge_regression} takes all the components of the Tucker decomposition as input and only solves for the core tensor.
 \cite{leverage_score_tensor_ring} uses leverage score sampling to obtain an approximation in the Tensor Ring format (a decomposition similar to Tensor Train, except that the first and the last mode are also contracted).

In the concurrent and independent work \cite{ms2022_tensor_network_embeddings_concurrent_and_independent},
the goal is to preserve the $\ell_2$ norm of a vector which has tensor network structure. \cite{ms2022_tensor_network_embeddings_concurrent_and_independent} gives embeddings which are tensor networks where each core has i.i.d. Gaussian entries. This is in the same spirit as our embeddings. For Tensor Train decomposition, the embeddings we use can be written as Tensor Trains where each core is formed from a Countsketch matrix, and for tree networks, our embeddings are tree networks (with the same structure as the desired tree network) whose cores are formed from Countsketch matrices. Our analysis is similar, as in \cite{ms2022_tensor_network_embeddings_concurrent_and_independent}, the embeddings are also decomposed into a sequence of sketching matrices, which are Kronecker products of a Gaussian matrix and an identity matrix. Our embeddings themselves are different however, as their embedding consists of a Kronecker product of Gaussian matrices (in order to reduce the dimension of the uncontracted edges of the input tensor) followed by a binary tree-structured network of Gaussian tensors. The structure of the binary tree is determined by a given contraction order for the input tensor. On the other hand, our embedding for tree networks is a tree network but is not a binary tree in general. \cite{ms2022_tensor_network_embeddings_concurrent_and_independent} also gives embeddings for general tensor networks which are not trees, while we do not consider this. We note that \cite{ms2022_tensor_network_embeddings_concurrent_and_independent} has the goal of obtaining a tensor network embedding which satisfies the conclusion of the Johnson Lindenstrauss lemma, and does not consider low rank approximation of tensors.


\bibliographystyle{alpha}
\bibliography{references}

\end{document}